\documentclass{tlp}
\usepackage{amsmath,amssymb}
\usepackage{stmaryrd}
\usepackage[all]{xy}
\usepackage{fancybox}
\usepackage{graphicx}
\usepackage[breaklinks]{hyperref}

\newtheorem{theorem}{Theorem}[section]

\newtheorem{proposition}[theorem]{Proposition}
\newtheorem{example}[theorem]{Example}
\newtheorem{definition}[theorem]{Definition}
\newtheorem{corollary}[theorem]{Corollary}

\let\oldproof=\proof
\renewenvironment{proof}{\intheoremtrue\oldproof}{\hspace*{1em}\hbox{\proofbox}\endtrivlist}

\newcommand{\vcenterbox}[1]{\raisebox{0.5\depth}{#1}}
\newcommand{\exproofbox}{\hspace*{1em}\hbox{\proofbox}}

\newcommand{\dist}{2cm}
\newcommand{\dista}{1.5cm}
\newcommand{\ovalee}[1]{\ovalbox{%
\parbox[c][.4cm][c]{.7cm}{\begin{center}$#1$\end{center}}}}
\newcommand{\ovalemidi}[2]{\ovalbox{%
\parbox[c][.4cm][c]{.85cm}{\begin{center}$#1 (#2)$\end{center}}}}
\newcommand{\oneedge}[2]{\xymatrix{#1  \ar@{-}[r]^{1} & #2}}
\newcommand{\twoedge}[2]{\xymatrix{#1  \ar@{=}[r]^{2} & #2}}

\newcommand{\fun}{\mathop{\rightarrow}}

\newcommand{\card}[1]{|#1|}
\newcommand{\fwp}{\wp_f}
\newcommand{\Nat}{\mathbb{N}}
\newcommand{\Natzero}{\mathbb{N^+}}

\newcommand{\mwp}{\wp_m}
\newcommand{\supp}[1]{\llfloor #1 \rrfloor}
\newcommand{\emptymulti}{\multil \multir}
\newcommand{\multil}{\{\!\!\{}
\newcommand{\multir}{\}\!\!\}}
\newcommand{\multisum}{\uplus}
\newcommand{\bigmultisum}{\biguplus}
\newcommand{\calS}{\mathcal{R}}

\newcommand{\lub}{\sqcup}

\newcommand{\biglub}{\bigsqcup}

\newcommand{\var}{\mathcal{V}}
\newcommand{\Pos}{\mathrm{\Xi}}
\newcommand{\subst}{\mathit{Subst}}
\newcommand{\Isubst}{\mathit{ISubst}}
\newcommand{\Ren}{\mathit{Ren}}
\newcommand{\vars}{\mathit{vars}}
\newcommand{\occ}{\mathit{occ}}
\newcommand{\rng}{\mathrm{rng}}
\newcommand{\dom}{\mathrm{dom}}
\newcommand{\mgu}{\mathrm{mgu}}
\newcommand{\eq}{\mathrm{Eq}}

\newcommand{\Sharing}{\mathtt{Sharing}}
\newcommand{\ASub}{\mathtt{ASub}}
\newcommand{\Lin}{\mathtt{Lin}}
\newcommand{\Free}{\mathtt{Free}}
\newcommand{\Con}{\mathtt{Con}}
\newcommand{\Def}{\mathtt{Def}}
\newcommand{\Prop}{\mathtt{Pos}}
\newcommand{\PSD}{{\mathit{SS}^\rho}}

\newcommand{\Linp}{\mathtt{ShLin}^{\omega}}
\newcommand{\lp}{\mathrm{\omega}}
\newcommand{\indeg}{\mathit{indeg}}
\newcommand{\outdeg}{\mathit{outdeg}}
\newcommand{\res}{\mathit{res}}
\newcommand{\relev}{\mathit{rel}}
\newcommand{\mG}{{\mathcal G}}
\newcommand{\src}{\mathrm{src}}
\newcommand{\tgt}{\mathrm{tgt}}

\newcommand{\ShLinp}{\mathtt{ShLin}^2}
\newcommand{\Andysh}{\mathit{Sg}^2}
\newcommand{\an}{\mathrm{2}}
\newcommand{\andybin}{\uplus}
\newcommand{\Andybin}{\biguplus}
\newcommand{\downclo}{{\mathop{\downarrow}}}

\newcommand{\ShLin}{\mathtt{ShLin}}
\newcommand{\shl}{\mathit{sl}}
\newcommand{\bin}{\mathrm{bin}}

\newcommand{\PROLOG}{\texttt{PROLOG}}
\newcommand{\difflist}{\mathit{difflist}}

\newcommand{\wrt}{w.r.t.~}
\newcommand{\ie}{i.e., }
\newcommand{\eg}{e.g.\ }
\newcommand{\ra}{\rightarrow}

\newcommand{\Ra}{\Rightarrow}
\newcommand{\seml}{\llbracket}
\newcommand{\semr}{\rrbracket}

\newcommand{\ext}{\mathrm{ext}}

\allowdisplaybreaks

\title{On the interaction between sharing and linearity}

\author[G. Amato and F. Scozzari]
{GIANLUCA AMATO  and FRANCESCA SCOZZARI\\
Dipartimento di Science, Universit\`a ``G. d'Annunzio'' di Chieti-Pescara, Pescara, Italy.\\
\email{amato@sci.unich.it, scozzari@sci.unich.it}}

\submitted {26 March 2008}
\revised {3 July 2008, 15 July 2009}
\accepted{27 July 2009}

\begin{document}

\maketitle

\label{firstpage}

\begin{abstract}
In the analysis of logic programs, abstract domains for detecting sharing and
linearity information are widely used. Devising abstract unification algorithms
for such domains has proved to be rather hard.
At the moment, the available algorithms are correct but not optimal, \ie they cannot fully exploit
the information conveyed by the abstract domains. In this paper, we define
a new (infinite) domain $\Linp$ which can be thought of as a general
framework from which other domains can be easily derived by abstraction.
$\Linp$ makes the interaction between sharing and linearity explicit. We provide
a constructive characterization of the
optimal abstract unification operator on $\Linp$ and we lift it to two
well-known abstractions of $\Linp$. Namely, to the classical
$\Sharing \times \Lin$ abstract domain and to the more precise $\ShLinp$ abstract domain
by Andy King.  In the case of single binding substitutions, we obtain optimal abstract unification
algorithms for such domains.
\end{abstract}

\begin{keywords}
Static analysis, abstract interpretation, sharing, linearity, unification.
\end{keywords}

\section{Introduction}

In the analysis of logic programs, the theory of abstract
interpretation \cite{CousotC79,CousotC92lp} has been widely used to
design new analyses and to improve existing ones.
Given a concrete semantics working over a concrete
domain, an abstract interpretation formalizes an  analysis by
providing an abstract domain and an abstract semantics (working on the
abstract domain), and relating them to their concrete counterparts.
An abstract domain is a collection of abstract objects which
encode the property to analyze.  The
concrete and abstract domains are related by means of abstraction and
concretization maps, which allow each concrete object to be abstracted into an abstract object which describes it.
The abstract semantics, in most cases,
is given by a set of abstract operators on the abstract domain, which are the counterparts of the concrete ones.  For example, in the
case of logic programs, one can individuate in the concrete semantics
the main operations (unification, projection, union), and an abstract
semantics can be specified by giving the abstract unification,
abstract projection and abstract union operations. The theory of
abstract interpretation assures us that, for any concrete operator,
there exists a best abstract operator, called the optimal operator.
It computes the most precise result among all possible correct
operators, on a given abstract domain.  Designing the optimal abstract
counterpart of each concrete operator is often a very difficult task.
In fact, even if the definition of the optimal operator for any
abstract domain is known from the theory of abstract interpretation
(as a composition of the concrete operator and the abstraction map),
the hard task is to provide an explicit definition of the abstract operators and
to devise  algorithms on the abstract domain which
compute them.

\subsubsection{The context}
 The property of sharing has been the subject of many papers
\cite{JacobsL92,HW92,MuthukumarH92,CodishSS99,BagnaraHZ-ta-TCS}, 
from the both theoretical and practical point of view. Typical applications
of sharing analysis are in the fields of optimization of unification
\cite{S86} and parallelization of logic programs
\cite{HermenegildoR95}.
The goal of (set) sharing analysis is to detect sets of variables
which share a common variable in the answer substitutions.  For
instance, consider the substitution $\{x/f(u,v), y/g(u,u,u), z/v\}$.
We say that $x$ and $y$ share the variable $u$, while $x$ and $z$
share the variable $v$, and no single variable is shared by $x,y$ and $z$.  Many domains concerning sharing properties also consider
linearity in order to improve the precision of the
analysis. We say that a term is linear if it does not contain
multiple occurrences of the same variable. For instance, the term
$f(x,f(y,z))$ is linear, while $f(x,f(y,x))$ is not, since $x$ occurs twice.

\subsubsection{The problem}
It is now widely recognized that the original domain proposed for
sharing analysis, namely, $\Sharing$ \cite{Langen90,JacobsL92},
 is not very precise, so that it is often combined
with other domains for handling freeness, linearity, groundness or
structural information (see \citeNP{BagnaraZH05TPLP} for a comparative
evaluation). In particular, adding some kind of linearity information
seems to be very profitable, both for the gain in precision and speed
which can be obtained, and for the fact that it can be easily and
elegantly embedded inside the sharing groups (see \citeNP{King94}).  In
the literature, many authors have proposed abstract unification
operators (\eg \citeNP{CDY91,HW92,MuthukumarH92,King94}) for domains of
sharing properties, encoding different amounts of linearity
information.  However, optimal operators for combined analysis of
sharing and linearity have never been devised, neither for the domain
$\ShLinp$ \cite{King94}, nor for the more broadly adopted
$\Sharing \times \Lin$ \cite{HW92,MuthukumarH92}.

With the lack of optimal operators, the analysis loses precision and might even be slower. The latter
is typical of sharing analysis, where abstract domains are usually
defined in such a way that, the less information we have, the more complex the
abstract objects are.  This is not the case for other kinds of
analyses, such as groundness analysis, where the complexity of
abstract objects may grow accordingly to the amount of groundness
information they encode.

The lack of optimal operators is due to the fact that the role played
by linearity in the unification process has never been fully
clarified. The traditional domains which combine sharing and linearity
information are too abstract to capture in a clean way the effect of
repeated occurrences of a variable in a term and most of the effects
of (non-)linearity are obscured by the abstraction process.

\subsubsection{The solution}
We propose an abstract domain $\Linp$ which is able to encode the
\emph{amount} of non-linearity, \ie which keeps track of the exact
number of occurrences of the same variable in a term.  Consider again
the substitution $\theta=\{x/f(u,v), y/g(u,u,u), z/v\}$.  Intuitively,
to each variable $w$ in the range of the substitution, we associate the
multiset of domain variables which are bound to a term where $w$ occurs,
and call it an $\omega$-sharing group.  For instance, we associate, to the variable
$u$, the $\omega$-sharing group $\{x,y,y,y\}$, to denote that $u$
appears once in $\theta(x)$ and three times in $\theta(y)$.  To the
variable $v$, we associate the $\omega$-sharing group $\{x,z\}$, to
denote that $v$ appears once in $\theta(x)$ and once in $\theta(z)$.  Then
we consider the collection of all the multisets so obtained $\{
\{x,y,y,y\}, \{x,z\} \}$, which describes both the sharing property
and the exact amount of non-linearity in the given substitution.  The
domain we obtain is conceptually simple, but cannot be
directly used for static analysis, without a widening operator \cite{CousotC92wide}, since it contains infinite ascending chains.
However, in this domain the role played by (non-)linearity is
manifest, and we can provide a constructive characterization of the optimal abstract unification operator.  The cornerstone of the abstract unification is the
concept of \emph{sharing graph} which plays the same role as
alternating paths \cite{S86,King00jlp} for pair-sharing
analysis.
We use sharing graphs to combine different $\omega$-sharing groups during unification.  The use of sharing
graphs offers a new perspective for looking at variables in the
process of unification, and simplifies the proofs of correctness and
optimality of the abstract operators.

We prove that sharing graphs yield an optimal abstract unification operator for single binding substitutions. We also provide a purely
algebraic characterization of the unification process, which should help in
implementing the domain through widening operators and in
devising abstract operators for further abstractions of $\Linp$.


\subsubsection{The applications}
We consider two well-known domains for sharing properties,
namely, the reduced product \cite{CousotC79} $\Sharing \times \Lin$ and the more
precise domain $\ShLinp$ by Andy King, and show that they can be
immediately obtained as abstractions of $\Linp$.  By exploiting the
unification operator on $\Linp$, we provide the optimal abstract unification operators, in the case of
single binding substitutions, for both domains. We show that we gain in precision \wrt any previous attempt to design an abstract unification operator on these domains. This is the first time that abstract unification  has been provided optimal for a domain including sharing and linearity information.

Surprisingly, the optimal abstract operators are able to improve not only aliasing and linearity information, but also groundness. We show that, in certain cases, we improve over $\Prop$ \cite{ArmstrongMSS94}. This is mainly due to the fact that our operators exploits the occur-check condition. As far as we know, there is no abstract unification operator in the literature, for a domain dealing with sharing, freeness and linearity, which is more precise than $\Def$ for groundness.

Unification for multi-binding substitutions is usually computed by considering one binding at a time. For instance, the unification of a substitution $\theta$ with $\{x_1/t_1,\linebreak x_2/t_2,\ldots,x_n/t_n\}$ is performed by first computing the unification of $\theta$ with $\{x_1/t_1\}$, and then unifying the result with $\{x_2/t_2,\ldots,x_n/t_n\}$. Actually, computing abstract unification one binding at a time is optimal in $\Linp$ \cite{AmatoS05tr-2}. We show that this is not the case for $\ShLinp$ and $\Sharing \times \Lin$.  This means that the classical schema of computing unification iteratively on the number of bindings cannot be used when looking for optimality with multi-binding substitutions, at least with these two domains.

\subsubsection{Structure of the article}
In Section \ref{sec:notation} we recall some basic notions and the notations about substitutions, multisets and abstract interpretation. In Section \ref{sec:existential} we briefly recall the domain of existential substitutions and its operators, which will be used throughout the article.  In Section \ref{sec:shlinomega} we define the domain $\Linp$, together with the  unification operator, we show the optimality result and give an alternative algebraic characterization of the unification operator. In Section \ref{sec:practical}  we exploit our results to devise the optimal unification operators for $\ShLinp$ and $\Sharing \times \Lin$, in the case of single binding substitutions. Section \ref{sec:practice} gives some evidence that there are practical advantages in using the optimal unification operators for $\ShLinp$ and $\Sharing \times \Lin$. In Section \ref{sec:related} we compare our domains and operators with those known in the literature.  We conclude with some open questions for future work.
The proofs of the main results of the paper are in Appendix \ref{sec:mainproof}, and the proofs of the results in Section \ref{sec:practical} are in Appendix \ref{sec:proofs}.

The paper is a substantial expansion of \cite{AmatoS03lopstr}, which introduces preliminary results of optimality for domains involving sharing and linearity properties.


\section{Notation}
\label{sec:notation}
Given a set $A$, let $\wp(A)$ be the powerset of $A$ and $\fwp(A)$ be
the set of finite subsets of $A$. Given two posets $(A,\leq_A)$ and
$(B,\leq_B)$, we denote by $A \fun B$ the poset of monotonic functions
from $A$ to $B$ ordered pointwise. We use $\leq_{A \ra B}$ to denote
the order relation over $A \fun B$.  When an order for $A$ or $B$ is
not specified, we assume the least informative order ($x \leq y \iff
x=y$). We also use $A \uplus B$ to denote disjoint union and
$\card{A}$ for the cardinality of the set $A$.

\subsection{Terms and substitutions}

In the following, we fix a first order signature and a denumerable set
of variables $\var$. Given a term or other syntactic object $o$, we
denote by $\vars(o)$ the set of variables occurring in $o$ and by
$\occ(v,o)$ the number of occurrences of $v$ in $o$. When it does not
cause ambiguities, we abuse the notation and prefer to use $o$ itself
in the place of $\vars(o)$. For example, if $t$ is a term and $x \in
\var$, then $x \in t$ should be read as $x \in \vars(t)$.

We denote by $\epsilon$ the empty substitution, by $\{x_1/t_1,
\ldots, x_n/t_n\}$ a substitution $\theta$ with $\theta(x_i)=t_i\neq
x_i$, by $\dom(\theta)=\{x\in\var \mid \theta(x)\neq x\}$ and
$\rng(\theta)=\cup_{x\in\dom(\theta)}\vars(\theta(x))$ the domain and
range of $\theta$ respectively. Let $\vars(\theta)$ be the set $\dom(\theta)\cup
\rng(\theta)$ and, given $U\in \fwp(\var)$, let $\theta|_{U}$ be the
projection of $\theta$ over $U$, \ie the unique substitution such that
$\theta|_U(x)=\theta(x)$ if $x\in U$ and $\theta|_U(x)=x$
otherwise.
Given $\theta_1$ and $\theta_2$ two substitutions with disjoint
domains, we denote by $\theta_1 \uplus \theta_2$ the substitution
$\theta$ such that $\dom(\theta)=\dom(\theta_1) \cup \dom(\theta_2)$
and $\theta(x)=\theta_i(x)$ if $x \in \dom(\theta_i)$, for each $i \in
\{1,2\}$.  The application of a substitution $\theta$ to a term $t$ is
written as $t \theta$ or $\theta(t)$. Given two substitutions $\theta$
and $\delta$, their composition, denoted by $\theta\circ \delta$, is
given by $(\theta \circ \delta)(x)=\theta(\delta(x))$. A substitution $\theta$ is idempotent when $\theta \circ \theta=\theta$ or, equivalently, when $\dom(\theta)\cap\rng(\theta)=\emptyset$.
A substitution
$\rho$ is called renaming if it is a bijection from $\var$ to $\var$
(this is equivalent to saying that there exists a substitution
$\rho^{-1}$ such that $\rho \circ \rho^{-1} = \rho^{-1} \circ \rho =
\epsilon$). Instantiation induces a preorder on substitutions:
$\theta$ is more general than $\delta$, denoted by $\delta
\leq\theta$, if there exists $\sigma$ such that $\sigma \circ \theta =
\delta$. If $\approx$ is the equivalence relation induced by $\leq$,
we say that $\sigma$ and $\theta$ are equal up to renaming when
$\sigma \approx \theta$.
The sets of substitutions, idempotent
substitutions and renamings are denoted by $\subst$, $\Isubst$ and
$\Ren$ respectively. Given a set of equations $E$, we write
$\sigma=\mgu(E)$ to denote that $\sigma$ is a most general unifier of
$E$.
Any idempotent substitution $\sigma$ is a most general unifier of the
corresponding set of equations $\eq(\sigma)=\{ x = \sigma(x) \mid
x\in\dom(\sigma)\}$. In the following, we will abuse the notation and
denote by $\mgu(\sigma_1,\ldots,\sigma_n)$ the
substitution $\mgu(\eq(\sigma_1) \cup \ldots \cup \eq(\sigma_n))$, when it exists.
In spite of a single binding substitution $\{x/t\}$ we often use just the
\emph{binding} $x/t$. In the rest of the paper we assume that a binding $x/t$
is idempotent, namely,that $x \notin \vars(t)$.

A \emph{position} is a sequence of positive natural numbers. We
denote with $\Pos$ the set of all positions and with $\Natzero$  the set
of all positive natural numbers. Given a term $t$ and a position
$\xi$, we define $t(\xi)$ inductively as follows:
\[
\begin{split}
  t(\epsilon) &= t \qquad \text{(where $\epsilon$
    denotes the empty sequence)}\\
  t(i \cdot \xi')&= \begin{cases} t_i(\xi') & \text{if $t \equiv
      f(t_1, \ldots, t_n)$ and $i \leq n$;}\\
    \text{undefined} & \text{otherwise.}
  \end{cases}
\end{split}
\]
For any variable $x$, an \emph{occurrence} of $x$ in
$t$ is a position $\xi$ such that $t(\xi)=x$.

In the rest of the paper, we use: $U$, $V$, $W$ to denote finite sets
of variables; $h,k,u,v,w,x,y,z$ for variables; $t$ for terms; $f, r, s$ for term symbols;
$a, b$ for constants; $\eta,\theta,\sigma,\delta$
for substitutions; $\rho$ for renamings.

\subsection{Multisets}

A \emph{multiset} is a set where repetitions are allowed. We denote by
 $\multil x_1, \ldots, x_m \multir$  a multiset, where $x_1,
\ldots, x_m$ is a sequence with (possible) repetitions.
We denote by $\emptymulti$ the empty multiset.
We will often use the polynomial notation $v_1^{i_1} \ldots
v_n^{i_n}$, where $v_1, \ldots, v_n$ is a sequence without
repetitions, to denote a multiset $A$ whose element $v_j$ appears $i_j$ times.
The set $\{v_j \mid i_j > 0\}$  is called the \emph{support} of $A$ and is denoted by $\supp{A}$.
We also use the functional notation $A: \{v_1, \ldots, v_n\} \fun \Nat$, where $A(v_j)=i_j$.

In this paper, we only consider multisets whose support is \emph{finite}.
We denote with $\mwp(X)$ the set of all
the multisets whose support is \emph{any finite subset} of $X$.
For example, both $a^2c^4$ and $a^1b^2c^3$ are elements of
$\mwp(\{a,b,c\})$. The cardinality of a multiset is $|A|= \sum_{v \in \supp{A}}
A(v)$.

The new fundamental operation for multisets
is the \emph{sum}, defined as
\[
  A \multisum B = \lambda v \in \supp{A} \cup \supp{B}. A(v)+B(v)
  \enspace .
\]
Multiset sum is associative, commutative and $\emptymulti$ is the
neutral element. Note that we also use $\multisum$ to denote disjoint
union for standard sets. The context will allow us to identify the
proper semantics of $\uplus$.

Given a multiset $A$ and $X \subseteq \supp{A}$,
the \emph{restriction} of $A$ over $X$, denoted by
$A|_{ X}$, is the only multiset $B$ such that $\supp{B}=X$ and $B(v)=A(v)$
for each $v \in X$.
Finally, if $A \in \mwp(X)$, $E[x]$ is an integer expression
and $x \in X$, we define
\[
  \sum_{x \in A} E[x] = \sum_{x \in \supp{A}} A(x) \cdot E[x] \enspace .
\]
For example, given a multiset $A=\multil 5,5,6,8,8,8 \multir$ then $\sum_{x \in A} x^2=2*5^2+6^2+3*8^2=278$.

\subsection{Abstract interpretation}

Given two sets $C$ and $A$ of concrete and abstract objects
respectively, an \emph{abstract interpretation} \cite{CousotC92fr} is
given by an approximation relation $\rightslice \subseteq A \times C$.
When $a \rightslice c$ holds, this means that $a$ is a correct
abstraction of $c$. In particular, we are interested in the case when
$(A,\leq_A)$ is a poset and $a \leq_A a'$ means that $a$ is more
precise than $a'$.  In this case we require that, if $a \rightslice c$
and $a \leq_A a'$, then $a' \rightslice c$, too. In more detail, we
require what \citeN{CousotC92fr} call the \emph{existence of a best
  abstract approximation assumption}, \ie the existence of a map
$\alpha: C \ra A$ such that for all $a \in A, c \in C$, it holds that
$a \rightslice c \iff \alpha(c) \leq_A a$. The map $\alpha$ is called
the \emph{abstraction function} and maps each $c$ to its best
approximation in $A$.

Given a (possibly partial) function $f: C \ra C$, we say that
$\tilde{f}: A \ra A$ is a correct abstraction of $f$, and write
$\tilde{f} \rightslice f$, whenever
\[
a \rightslice c  \Rightarrow \tilde{f}(a)
  \rightslice f(c) \enspace ,  
\]
assuming that $\tilde{f}(a) \rightslice f(c)$ is true whenever $f(c)$
is not defined.  We say that $\tilde{f}: A \ra A$ is the
\emph{optimal} abstraction of $f$ when it is the best correct
approximation of $f$, \ie when $\tilde{f} \rightslice f$ and
\[
\forall f':A \ra A.\  f' \rightslice f
  \Rightarrow \tilde{f} \leq_{A \ra A} f' \enspace .  
\]
In some cases, we prefer to deal with a stronger framework, in which
the domain $C$ is also endowed with a partial order $\leq_C$ and
$\alpha: C \ra A$ is a left adjoint to $\gamma: A \ra C$, \ie
\[
  \forall c \in C. \forall a \in A. \alpha(c) \leq_A a \iff c \leq_C
  \gamma(a) \enspace .
\]
The pair $\langle \alpha, \gamma \rangle$ is called a \emph{Galois
  connection}.  In particular, we will only consider the case of
\emph{Galois insertions}, which are Galois connections such that
$\alpha \circ \gamma$ is the identity map. If $\langle \alpha, \gamma
\rangle$ is a Galois insertion and $f: C \ra C$ is a monotone map, the
optimal abstraction $\tilde{f}$ always exists and it is definable as
$\tilde{f}= \alpha \circ f \circ \gamma$.

\section{The domain of existential substitutions}
\label{sec:existential}

The choice of the concrete domain depends on the observable properties
we want to analyze.  Most of the semantics suited for the analysis of logic
programs are based on computed answer substitutions, and most
of the domains are expressed as abstractions of sets of substitutions.
In general, we are not really interested in the substitutions, but in
their quotient-set \wrt an appropriate equivalence relation. Let us
consider a one-clause program $\mathtt{p(x,x)}$, the goal $p(x,y)$,
and the following answer substitutions: $\theta_1=\{y/x\}$,
$\theta_2=\{x/y\}$, $\theta_3=\{x/u, y/u\}$ and
$\theta_4=\{x/v,y/v\}$. Although $\theta_1$ and $\theta_2$ are equal
up to renaming, the same does not hold for $\theta_3$ and $\theta_4$.
Nonetheless, they essentially represent the same answer, since $u$ and
$v$ are just two different variables we chose when renaming apart the
clause $\mathtt{p(x,x)}$ from the goal $p(x,y)$, and therefore are not
relevant to the user.  On the other hand, if $\theta_3$ and $\theta_4$
are answer substitutions for the goal $q(x,y,u)$, then they correspond
to computed answers $q(u,u,u)$ and $q(v,v,u)$ and therefore are
fundamentally different.  As a consequence, the equivalence relation
we need to consider must be coarser then renaming, and must take into
account the set of variables of interest, \ie the set of variables
which appear in the goal.  For these reasons, we think that the best
solution is to use a domain of equivalence classes of substitutions.
Among the various domains proposed in the literature
(\eg \citeNP{JacobsL92,MSJ94,LeviS03}), we adopt the domain of existential
substitutions \cite{AmatoS09sharing}, since it is explicitly defined as
a quotient of a set of substitutions, \wrt a suitable equivalence relation.
Moreover, the domain is equipped with all the necessary operators for
defining a denotational semantics, namely, projection, renaming and
unification.  We briefly recall the basic definitions of the domain
and the unification operator.

Given $\theta_1, \theta_2 \in \subst$ and $U \in \fwp(\var)$, the
preorder $\preceq_U$ is defined as follows:
\[
\theta_1 \preceq_U \theta_2 \iff \exists  \delta \in \subst.
  \forall x \in U.\ \theta_1(x)=\delta(\theta_2(x)) \enspace.  
\]

The notation $\theta_1 \preceq_U \theta_2$ states that $\theta_1$ is an
instance of $\theta_2$ \wrt the variables in
$U$. The equivalence relation induced by the preorder $\preceq_U$
is given by:
\[
  \theta_1 \sim_U \theta_2 \iff \exists \rho \in \Ren.
  \forall x \in U.\ \theta_1(x)=\rho(\theta_2(x)) \enspace .
\]
This relation precisely captures the extended notion of renaming which
is needed to work with computed answer substitutions.

\begin{example}
  It is easy to check that $\{x/w, y/u\} \sim_{\{x,y\}} \epsilon$ by
  choosing the renaming $\rho=\{x/w, \allowbreak w/x, y/u, u/y\}$. Note that $\sim_U$ is
  coarser than the standard equivalence relation $\approx$: there is
  no renaming $\rho$ such that $\epsilon = \rho \circ \{x/w, y/u\}$.
  As it happens for $\preceq$, if we enlarge the set of variables of
  interest, not all equivalences between substitutions are preserved:
  for instance, $\{x/w, y/u\} \not\sim_{\{w,x,y\}} \epsilon$.
  \exproofbox
\end{example}

Let $\Isubst_{\sim_U}$ be the quotient set of $\Isubst$ \wrt
$\sim_U$. The domain $\Isubst_\sim$ of \emph{existential
  substitutions} is defined as the disjoint union of all the
$\Isubst_{\sim_U}$ for $U\in \fwp(\var)$, namely:
\[
\Isubst_\sim = \biguplus_{U\in \fwp(\var)} \Isubst_{\sim_U} \enspace .
\]
In the following we write $[\theta]_{U}$ for the equivalence class of
$\theta$ \wrt $\sim_U$.  The partial order $\preceq$ over
$\Isubst_\sim$ is given by:
\[
[\theta]_U \preceq [\theta']_V \iff U \supseteq V \wedge \theta
\preceq_V \theta' \enspace .
\]
Intuitively, $[\theta]_U \preceq [\theta']_V$ means that $\theta$ is
an instance of $\theta'$ \wrt the variables in $V$, provided that
they are all variables of interest of $\theta$.

To ease notation, we often omit braces from the sets of variables of
interest when they are given extensionally. So we write
$[\theta]_{x,y}$ instead of $[\theta]_{\{x,y\}}$ and $\sim_{x,y,z}$
instead of $\sim_{\{x,y,z\}}$.  When the set of variables of interest
is clear from the context or when it is not relevant, it will be omitted.
Finally, we omit the braces which enclose the bindings of a
substitution when the latter occurs inside an equivalence class, \ie we write
$[x/y]_U$ instead of $[\{x/y\}]_U$.

\subsubsection{Unification}

Given $U,V \in \fwp(\var)$, $[\theta_1]_U, [\theta_2]_V \in
\Isubst_{\sim}$, the most general unifier between these two classes is
defined as the mgu of suitably chosen representatives, where variables
not of interest are renamed apart. In formulas:
\begin{equation}
  \label{eq:mgu}
  \mgu([\theta_1]_U,[\theta_2]_V)=
  [\mgu(\theta'_1, \theta'_2)]_{U \cup V} \enspace,
\end{equation}
where $\theta_1 \sim_U \theta'_1 \in \Isubst$, $\theta_2 \sim_V
\theta'_2 \in \Isubst$ and $(U \cup \vars(\theta'_1)) \cap (V \cup
\vars(\theta'_2)) \subseteq U \cap V$. The last condition is needed to
avoid variable clashes between the chosen representatives $\theta'_1$
and $\theta'_2$. Moreover, $\mgu$ is the greatest lower bound of
$\Isubst_{\sim}$ ordered by $\preceq$.

\begin{example}
  Let $\theta_1=\{ x/a, y/r(v_1,v_1,v_2)\}$ and $\theta_2=\{
  y/r(a,v_2,v_1), z/b \}$. Then
  \[
  \mgu([\theta_1]_{x,y},[\theta_2]_{y,z})=[x/a,y/r(a,a,v),z/b]_{x,y,z} \enspace,
  \]
  by choosing $\theta'_1=\theta_1$ and $\theta'_2=\{y/r(a,w,v),z/b\}$.
  In this case we have
  \begin{multline*}
      \{x/a,y/r(a,a,v),z/b\} \sim_{x,y,z} \\
      \mgu(\theta'_1,\theta'_2)=\{ x/a, y/r(a,a,v), z/b, v_1/a, w/a, v_2/v
      \} \enspace . \mathproofbox
  \end{multline*}
\end{example}
A different version of unification is obtained when one of the two
arguments is an existential substitution, and the other one is a
standard substitution. In this case, the latter argument may be viewed
as an existential substitution where all the variables are of
interest:
\begin{equation}
  \label{eq:mixmgu}
  \mgu([\theta]_U,\delta)=
  \mgu([\theta]_U,[\delta]_{\vars(\delta)}) \enspace .
\end{equation}
Note that deriving the general unification in \eqref{eq:mgu} from the
special case in \eqref{eq:mixmgu} is not possible. This is because
there are elements in $\Isubst_\sim$ which cannot be obtained as
$[\delta]_{\vars(\delta)}$ for any $\delta \in \Isubst$ (see
Example \ref{ex:sim}).

This is the form of unification which is better suited for analysis of logic programs, where existential substitutions are the denotations of programs while standard substitutions are the result of unification between goals and heads of clauses. Therefore, the rest of the paper will be concerned with the problem of devising optimal abstract operators corresponding to \eqref{eq:mixmgu}, for three different abstract domains. Of course, unification is not the only operator needed to give semantics to logic programs: we also need projection, renaming and union. However, providing optimal abstract counterparts for these operators is generally a trivial task, and will not be considered here.

We want to conclude the section with a small remark about our choice
of the concrete domain. By adopting existential substitutions and the
corresponding notion of unification, we greatly simplify all the semantic definitions which are heavily based on renaming variables apart. This is because all the details concerning
renamings are shifted towards the inner level of the semantic domain, where
they are more easily managed \cite{JacobsL92,AmatoS09sharing}.

\section{The abstract domain $\Linp$}
\label{sec:shlinomega}

The domain $\Sharing \times \Lin$ is one of the best known domains in the literature which combine sharing and linearity information.
The domain $\Sharing$ records the information of variable aliasing, by abstracting the substitution
$\theta=\{x/f(u,v),\linebreak y/g(u,u,u), z/v\}$ into the set $\{uxy,vxz\}$. The object $uxy$, called a \emph{sharing group}, states that $\theta(u),\theta(x)$ and $\theta(y)$ do share some variable (the variable $u$ in this case). Analogously, the sharing group $vxz$ states that $\theta(v),\theta(x)$ and $\theta(z)$ do share (in this case the variable $v$).
One of the simplest way of adding linearity information is to record, in a separate object, the set of variables $w$ such that $\theta(w)$ is a linear term. In our example, only $\theta(y)$ is not linear. Thus the substitution is abstracted into the pair $(\{uxy,vxz\},\{u,v,x,z\})$.
Another known domain in the literature is $\ASub$ whose main difference \wrt $\Sharing \times \Lin$ is that it only records sharing information between pairs of variables. Thus, in $\ASub$, each sharing group has at most two elements. Developing optimal unification operators for such abstract domains is
a difficult  task. In our opinion, this is because the gap between the substitutions and $\Sharing \times
\Lin$ (or $\ASub$) is too wide and the combined effect of aliasing and
linearity is difficult to grasp.

We solve this problem by defining a new abstract domain $\Linp$ which can be
used to approximate $\Isubst_\sim$.  Since $\Linp$ has infinite ascending chains, in
most cases it cannot be directly used for the analysis. It
should be thought of as a general framework from which other domains
can be easily derived by abstraction.
In this sense, $\Linp$ closes the gap between the concrete domain of
substitutions and the abstractions like $\Sharing \times \Lin$ or
$\ASub$. The structure of $\Linp$ has made it possible to develop clean and optimal
abstract unification operators. From these, optimal operators for the simpler
domains are easy to obtain, at least for single binding substitutions.

The idea underlying $\Linp$ is to count the exact number of
occurrences of the same variable in a term. It extends
the standard domain $\Sharing$ by recording, for each $v \in \mathcal
V$ and $\theta \in \Isubst$, not only the set $\{w \in \var \mid v \in
\theta(w)\}$ but the multiset $\lambda w \in \var. \occ(v,\theta(w))$.

\begin{definition}[$\omega$-Sharing Group]
  An \emph{$\omega$-sharing group} is a multiset of variables, \ie
  an element of $\mwp(\var)$.
\end{definition}

\begin{example}
Given variables $u,v,w,x,y\in \var$, examples of $\omega$-sharing groups are $u^2v^3x^{19}$, $xyz$ and $u^{23}vwx^2y^3$.
\exproofbox
\end{example}


\begin{definition}
\label{eq:theta-1}
  Given a substitution $\theta$ and a variable $v \in \var$, we define
  \[
    \theta^{-1}(v)=\lambda w.\  \occ(v,\theta(w)) \enspace .
  \]
\end{definition}
Intuitively, $\theta^{-1}(v)$ is an $\omega$-sharing group which maps each variable $w$ to the number of
occurrences of $v$ in $\theta(w)$.

\begin{example}
Given $\theta=\{x/f(u,u,u), y/g(u,v), z/f(u,v,v)\}$, we have that $\theta^{-1}(u)=ux^3yz$, $\theta^{-1}(v)=vyz^2$,  $\theta^{-1}(w) = w$, and $\theta^{-1}(x)=\emptymulti$.
\exproofbox
\end{example}

\begin{definition}[Correct Approximation]
  Given a set of variables $U$ and a set of $\omega$-sharing groups $S$ (\ie $S \subseteq \mwp(U)$), we say that the pair $(S,U)$ correctly
  approximates a substitution $[\theta]_W$ if $U=W$ and for each $v
  \in \var$, $\theta^{-1}(v)|_W \in S$. In the following we denote by $[S]_U$ the pair $(S,U)$ and write $[S]_U \rightslice
  [\theta]_W$ to mean that $[S]_U$ correctly approximates $[\theta]_W$.
  
\end{definition}
Therefore, $[S]_U$ correctly approximates $[\theta]_U$ when $S$
contains at least all the $\omega$-sharing groups which may arise in
$\theta$, restricted to the variables $U$. Note that $[\theta]_U$ is an equivalence class of substitutions, as defined in Section  \ref{sec:existential}, while $[S]_U$ is just a symbol to denote the pair of objects $(S,U)$. We prefer this notation for the sake of uniformity with substitutions.

\begin{theorem}
  \label{th:wdapprox}
  The relation $\rightslice$ is well defined.
\end{theorem}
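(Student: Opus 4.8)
The relation $\rightslice$ relates pairs $[S]_U$, which are genuine objects $(S,U)$ with no quotient involved, to existential substitutions $[\theta]_W$, which are equivalence classes modulo $\sim_W$. Since the defining clause ``$U=W$ and $\theta^{-1}(v)|_W \in S$ for every $v \in \var$'' is phrased through a chosen representative $\theta$ of the class, the only thing to establish is \emph{representative independence}: whenever $\theta \sim_W \theta'$, the clause holds for $\theta$ if and only if it holds for $\theta'$. So the plan is to fix two representatives $\theta \sim_W \theta'$ and compare the families of $\omega$-sharing groups $\{\theta^{-1}(v)|_W \mid v \in \var\}$ and $\{{\theta'}^{-1}(v')|_W \mid v' \in \var\}$; it suffices to show that these coincide as \emph{sets} of $\omega$-sharing groups, because the condition in the definition is exactly the inclusion of such a set in $S$.

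From $\theta \sim_W \theta'$ there is a renaming $\rho \in \Ren$ with $\theta(x) = \rho(\theta'(x))$ for every $x \in W$. The key computation is the behaviour of occurrence counts under a renaming: since $\rho$ is a bijection on $\var$, applying $\rho$ to a term permutes its variables without merging or duplicating them, so $\occ(v, \rho(t)) = \occ(\rho^{-1}(v), t)$ for every variable $v$ and every term $t$. Instantiating $t = \theta'(x)$ gives, for all $x \in W$,
\[
\theta^{-1}(v)(x) = \occ(v,\theta(x)) = \occ(v,\rho(\theta'(x))) = \occ(\rho^{-1}(v),\theta'(x)) = {\theta'}^{-1}(\rho^{-1}(v))(x).
\]
Restricting to $W$ then yields the identity $\theta^{-1}(v)|_W = {\theta'}^{-1}(\rho^{-1}(v))|_W$ for every $v \in \var$. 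One should also check that both sides are honest elements of $\mwp(W)$: this holds because $\theta^{-1}(v)$ and ${\theta'}^{-1}(\rho^{-1}(v))$ have finite support (each $\theta$ is idempotent, hence moves only finitely many variables) and the restriction keeps the multiplicities of the variables in $W$ while discarding the rest.

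Finally, because $\rho^{-1}$ is a bijection of $\var$ onto itself, letting $v$ range over all variables makes $\rho^{-1}(v)$ range over all variables as well; hence $\{\theta^{-1}(v)|_W \mid v \in \var\} = \{{\theta'}^{-1}(v')|_W \mid v' \in \var\}$. Consequently ``$\theta^{-1}(v)|_W \in S$ for every $v$'' and ``${\theta'}^{-1}(v')|_W \in S$ for every $v'$'' are literally the same condition, so the defining clause of $\rightslice$ is independent of the chosen representative, which is what well-definedness requires. The main (and essentially only) obstacle is the occurrence-count identity under renaming; once that is in hand the rest is bookkeeping, with the bijection $\rho^{-1}$ re-indexing the variable whose occurrences are being counted, and I expect no difficulty beyond checking that the restricted groups stay within $\mwp(W)$.
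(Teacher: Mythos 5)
Your proof is correct and follows essentially the same route as the paper's: both reduce well-definedness to the set equality $\{\theta^{-1}(v)|_W \mid v \in \var\} = \{{\theta'}^{-1}(v')|_W \mid v' \in \var\}$ and establish it by using the renaming to re-index the generating variable (the paper writes $w=\rho(v)$ where you write $\rho^{-1}(v)$, which is the same bijection argument). Your extra details (the occurrence-count identity under renaming and the finite-support check) merely make explicit what the paper leaves implicit.
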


We can now define the domain $\Linp$ of $\omega$-sharing groups.
\begin{definition}[$\Linp$]
The domain $\Linp$ is defined as
\[
   \Linp =\{ [S]_U \mid U \in \fwp(\var), S \subseteq \mwp(U),
  S \neq \emptyset \Rightarrow \emptymulti \in S\} \enspace ,
\]
and ordered by $[S_1]_{U_1} \leq_\omega [S_2]_{U_2}$ iff $U_1=U_2$ and
$S_1 \subseteq S_2$.
\end{definition}
The order relation corresponds to the
approximation ordering, since bigger (w.r.t $\leq_\omega$) elements
correctly approximate a larger number of substitutions than smaller
elements.  The existence of the empty multiset, when $S$ is not empty, is required in order to obtain a Galois insertion, instead of a Galois connection.
In order to simplify the notation, in the following we
write an object $[\{\emptymulti, B_1,\ldots, B_n\}]_U \in \Linp$ as $[B_1,\ldots,
B_n]_U$ by omitting the braces and the empty multiset.  Moreover, if $X \in \Linp$, we write
$B \in X$ in place of $X=[S]_U \wedge B \in S$.

\begin{definition}[Abstraction for $\Linp$]
We define the abstraction for a substitution $[\theta]_U$ as
\begin{equation*}
  \alpha_\omega([\theta]_U)=[\{ \theta^{-1}(v)|_U \mid v \in \var \}]_U
  \enspace .
\end{equation*}
\end{definition}
This is the least element of $\Linp$ which correctly approximates
$[\theta]_U$. Note that by the proof of Theorem~\ref{th:wdapprox}
it immediately follows that $\alpha_\omega$ is well defined, \ie  it does
not depend from the choice of the representative for $[\theta]_U$.

\begin{example}
  Given $\theta=\{x/r(y,u,u), z/y, v/u\}$ and $U=\{w,x,y,z\}$, we have
  $\theta^{-1}(u)=x^2vu$, $\theta^{-1}(y)= xyz$,
  $\theta^{-1}(z)=\theta^{-1}(v)=\theta^{-1}(x)=\emptymulti$ and
  $\theta^{-1}(s)=s$ for all the other variables (included $w$).
  Projecting over $U$ we obtain
  $\alpha_\omega([\theta]_U)=[x^2,xyz,w]_U$.
  \exproofbox
\end{example}
\begin{example}
  \label{ex:sim}
  As we have said in Section \ref{sec:existential}, we show an element of $\Isubst_\sim$, namely the
  existential substitution $[x/r(v,v)]_{x}$, which cannot
  be obtained as $[\delta]_{\vars(\delta)}$ for any substitution
  $\delta$. In fact, consider any $\omega$-sharing group  $B=\delta^{-1}(u)|_{\vars(\delta)} \in \alpha_{\omega}([\delta]_{\vars(\delta)})$. Then
  either $u \notin \rng(\delta)$ and $B=\multil \multir$ or $u \in
  \rng(\delta)$ and $B(u)=1$. However, $\alpha([x/r(v,v)]_{x})= [x^2]_{x}$ and $x^2$ does
  not contain any variable with multiplicity one. \exproofbox
\end{example}

%
%



\subsection{Multigraphs}
In order to define an abstract unification operator over $\Linp$, we
need to introduce the concept of multigraph. We call (directed) \emph{multigraph} a graph where
multiple distinguished edges are allowed between nodes.  We use the
definition of multigraph which is customary in category theory
\cite{MacLane88}.

\begin{definition}[Multigraph]
  A \emph{multigraph} $G$ is a tuple $\langle N_G, E_G, \src_G, \tgt_G
  \rangle$ where $N_G \neq \emptyset$ and $E_G$ are the sets of
  \emph{nodes} and \emph{edges} respectively, $\src_G: E_G \fun N_G$
  is the \emph{source} function which maps each edge to its start
  node, and $\tgt_G: E_G \fun N_G$ is the \emph{target} function which
  maps each edge to its end node.

  A \emph{labeled} multigraph $G$ is a multigraph equipped with a
  \emph{labelling} function $l_G: N_G \ra L_G$ which maps each node to
  its \emph{label} in the given set $L_G$.
\end{definition}

We write $e: n_1 \ra n_2 \in G$ to denote the edge $e \in E_G$ such
that $\src_G(e)=n_1$ and $\tgt_G(e)=n_2$. We also write $n_1 \ra n_2
\in G$ to denote any edge $e \in E_G$ such that $\src_G(e)=n_1$ and
$\tgt_G(e)=n_2$.  Moreover, with $| n_1 \ra n_2 \in G |$ we denote the
cardinality of the set $\{ e \in E_G \mid \src_G(e)=n_1 \wedge
\tgt_G(e)=n_2 \}$. In the notation above, we omit ``$\in G$'' whenever
the multigraph $G$ is clear from the context.

We call \emph{in-degree} (respectively \emph{out-degree}) of a node
$n$ the cardinality of the set $\{ e \in E_G \mid \tgt(e)=n \}$
(respectively $\{ e \in E_G \mid \src(e)=n \}$).


Given a multigraph $G$, a \emph{path} $\pi$ is a \emph{non-empty}
sequence of nodes $n_1 \ldots n_k$ such that, for each $i \in
\{1,\ldots,k-1\}$, there is either an edge $n_i \ra n_{i+1} \in G$ or
an edge $n_{i+1} \ra n_{i} \in G$. Nodes $n_1$ and $n_k$ are the
\emph{endpoints} of $\pi$, and we say that $\pi$ \emph{connects} $n_1$
and $n_k$. A multigraph is \emph{connected} when all pairs of nodes are
connected by at least one path.

\subsection{Abstract unification}

We need to find the abstract counterpart of $\mgu$ over $\Linp$, \ie
an operation $\mgu_{\omega}$ such that, if $[S]_U \rightslice
[\theta]_U$, then
\begin{equation}
  \label{eq:prop-absmgu}
   \mgu_\omega([S]_U,\delta) \rightslice \mgu([\theta]_U,\delta)
\end{equation}
for each $\delta \in \Isubst$. Note that we are looking for an abstract
counterpart to the mixed unification in \eqref{eq:mixmgu}, where one
of the two arguments is a plain substitution. In particular, we would like to find an operator which is the minimum
element that satisfies the condition in \eqref{eq:prop-absmgu}, \ie the
\emph{optimal} abstract counterpart of $\mgu$.  Observe that, for a fixed
$U$, the set of all the elements $[S]_U\in \Linp$ is a complete lattice w.r.t.~$\leq_\omega$ with the top element
given by $[\mwp(U)]_U$ and the meet operator given by
\[
  \textstyle \bigwedge_\omega \{ [S_i]_U \mid i \in I \} =
  \big[\bigcap_{i \in I} S_i\big]_U \enspace ,
\]
for any family $\{[S_i]_U \mid i \in I\}$ of elements of $\Linp$.
Moreover, the relation $\rightslice$ is meet-preserving on the left,
since if $[S_i]_U \rightslice [\theta]_U$ for each $i \in I$, then
$\bigwedge_\omega \{ [S_i]_U \mid i \in I\} \rightslice [\theta]_U$.
Therefore, we may define the abstract $\mgu$ as follows
\begin{equation*}
  \mgu_\omega([S]_U,\delta)=
  {\bigwedge}_\omega \big\{[S']_{U'} \mid \forall \theta.
  [S]_U \rightslice [\theta]_U \Rightarrow [S']_{U'} \rightslice
  \mgu([\theta]_U, \delta)\big\} \enspace ,
\end{equation*}
where the definitions of $\rightslice$ and $\mgu$ force $U'$ to be $U \cup \vars(\delta)$. Note that this is just a translation of the general definition of
optimal operator in \cite{CousotC92fr} and it satisfies \eqref{eq:prop-absmgu}.

This definition is completely non-constructive. The rest of this section
is devoted to providing a constructive characterization for
$\mgu_\omega([S]_U,\delta)$. We begin to characterize the
operation of abstract unification by means of graph theoretic notions.

\begin{definition}[Multiplicity of $\omega$-sharing groups]
The \emph{multiplicity} of an $\omega$-sharing
group $B$ in a term $t$ is defined as:
\[
  \chi(B,t)=\sum_{v \in B} \occ(v,t) =
  \sum_{v \in \supp{B}} B(v) \cdot \occ(v,t)  \enspace .
\]
\end{definition}
For instance,
$\chi(x^3yz^4,r(x,y,f(x,y,z)))=3 \cdot 2+1\cdot 2+ 4\cdot 1 = 12$. The
meaning of the map $\chi$ is made clear by the following proposition.

\begin{proposition}
  \label{prop:chi}
  Given a substitution $\theta$, a
  variable $v$ and a term $t$, we have that
  $\chi(\theta^{-1}(v),t)=\occ(v,\theta(t))$. Moreover, given a set of variables $U$, when $\vars(t) \subseteq U$, it holds that $\chi(\theta^{-1}(v)|_U,t)=\occ(v,\theta(t))$.
\end{proposition}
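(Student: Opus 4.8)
The plan is to reduce the statement to a single combinatorial identity about occurrence counts and then read off both claims from it. Unfolding the definitions of $\chi$ and of $\theta^{-1}(v)$, the first claim amounts to showing that
\[
  \chi(\theta^{-1}(v),t)=\sum_{w \in \supp{\theta^{-1}(v)}} \occ(v,\theta(w)) \cdot \occ(w,t)
\]
equals $\occ(v,\theta(t))$. Since every summand vanishes when $\occ(v,\theta(w))=0$, I would first extend the sum to range over all $w \in \var$ without changing its value, thereby reducing everything to proving
\[
  \occ(v,\theta(t)) = \sum_{w \in \var} \occ(w,t) \cdot \occ(v,\theta(w)) \enspace .
\]
This is the core identity: each occurrence of a variable $w$ in $t$ is replaced by $\theta(w)$, which contributes exactly $\occ(v,\theta(w))$ copies of $v$ to $\theta(t)$, so summing over all such occurrences counts the occurrences of $v$ in $\theta(t)$.

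I would prove this identity by structural induction on $t$. In the base cases, if $t$ is a constant then both sides are $0$, and if $t$ is a variable $x$ then $\occ(w,x)$ is $1$ for $w=x$ and $0$ otherwise, so the right-hand side collapses to $\occ(v,\theta(x))=\occ(v,\theta(t))$. For the inductive step $t \equiv f(t_1,\ldots,t_n)$, I would use that occurrence counts are additive over immediate subterms, namely $\occ(v,\theta(t))=\sum_{i=1}^n \occ(v,\theta(t_i))$ and $\occ(w,t)=\sum_{i=1}^n \occ(w,t_i)$; applying the induction hypothesis to each $t_i$ and exchanging the two finite sums yields the claim.

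For the \emph{moreover} part, I would observe that restricting $\theta^{-1}(v)$ to $U$ only removes the summands indexed by $w \notin U$. Since $\vars(t) \subseteq U$, every such $w$ satisfies $\occ(w,t)=0$, so those summands are already zero and the restricted sum $\chi(\theta^{-1}(v)|_U,t)=\sum_{w \in \supp{\theta^{-1}(v)} \cap U}\occ(v,\theta(w))\cdot\occ(w,t)$ agrees with the full sum, giving $\occ(v,\theta(t))$ once more. I do not expect any genuine obstacle here: the only point requiring care is the bookkeeping when passing between the support-indexed sum in the definition of $\chi$ and the $\var$-indexed sum used in the induction, which is exactly what the vanishing-summand remark settles.
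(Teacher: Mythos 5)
Your proposal is correct and follows essentially the same route as the paper's own proof: structural induction on $t$ with the same base cases (constants and variables), the same inductive step (additivity of $\occ$ over immediate subterms plus an exchange of finite sums), and the same vanishing-summand argument for the restriction to $U$. The only difference is cosmetic: you first repackage the claim as a sum over all of $\var$ before inducting, whereas the paper inducts directly on $\chi(\theta^{-1}(v),t)=\occ(v,\theta(t))$ with the sum indexed by the support.
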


\begin{example}
  Let $B=xy^2z^3$ and $\theta=\{ y/r(x,x), z/r(x,x,x) \}$,
  so that $\theta^{-1}(x)=\{xy^2z^3\}$. Given $t \equiv s(x,z)$ we have
  \[
  \occ(x,\theta(t))=\occ(x,s(x,r(x,x,x)))=4 \enspace ,
  \]
  and
  \[
  \chi(B,t)=B(x) \occ(x,t) + B(z) \occ(z,t)=1 \cdot 1 + 3 \cdot 1=4
  \enspace .   \mathproofbox
  \]
\end{example}

If $[S]_U \rightslice [\theta]_U$ and we unify $[\theta]_U$ with
$\delta$, some of the $\omega$-sharing groups in $S$ may be glued
together to obtain a bigger resultant group.
It happens that the gluing of the sharing groups during the
unification of $[\theta]_U$ with a single binding substitution
$\{x/t\}$ may be represented by special labelled multigraphs which we call \emph{sharing graphs}.


\begin{example}
  Let $S=\{x^3,y\}$ and $U=\{x,y\}$. We look for a representation of the unification process between any substitution $\theta$ approximated by $S$ and the binding $x/r(y)$. We show that multigraphs can be easily used for this purpose.
For instance, the substitution $\theta=\{x/r(g(u,u,u))\}$ is approximated by $S$. By unifying $\theta$ with $\{x/r(y)\}$ we obtain $\delta=\{x/r(g(u,u,u)),y/g(u,u,u)\}$. Note that any approximation of $\delta$ on the variables $\{x,y\}$ must include the sharing group $x^3y^3$ generated by the variable $u$. Thus, any correct approximation of the unification must also contain $x^3y^3$.

We want to associate to any $\omega$-sharing group $B$ in $\delta$ a special multigraph which represents the way the $\omega$-sharing groups in $S$ have been merged in order to obtain $B$.
The nodes of this multigraph are the $\omega$-sharing groups in $S$ (possibly repeated any number of times). The following  is a  sharing graph for $x/r(y)$ and $S$:
\[
\xybox{
  0 *{\ovalee{x^3}_3^0}="a",
  "a"+<\dist,0cm>*{\ovalee{y}_0^1}="b",
  "b"+<0cm,\dista>*{\ovalee{y}_0^1}="c",
  "b"+<0cm,-\dista>*{\ovalee{y}_0^1}="d",
  {"a" \ar "b"},
  {"a" \ar "c"},
  {"a" \ar "d"}
}
\]
where pedices and apices on a sharing group $B$ are respectively the
values of $\chi(B,x)$ and $\chi(B,r(y))$. For instance, since $\chi(x^3,x)=3$, then we put the pedice $3$ on the node $x^3$ to mean that $x$ is bound to a term containing $3$ occurrences of the same variable.
Symmetrically, since $\chi(x^3,r(y))=0$, then we put the apice $0$ on the node $x^3$.
The in-degree and the out-degree of the nodes reflect the values of apices and pedices.  In this case, we have $3$ out-going edges from $x^3$ and no in-going edges.
Moreover, the multigraph must be connected, in order to guarantee that we can use a single variable to form the sharing group $x^3y^3$.

By summing the labels of all the nodes, namely, $x^3\multisum y \multisum y \multisum y$, we obtain the $\omega$-sharing group $x^3y^3$ which must appear in any correct approximation of the unification.  \exproofbox
\end{example}

Given any labelled multigraph $G$, in the rest of the paper we assume
that the codomain of the labelling function $l_G$ is $\mwp(\var)$, the
set of $\omega$-sharing groups.

\begin{definition}[Sharing Graph]
  A \emph{sharing graph} for the binding $x/t$ and a set of
  $\omega$-sharing groups $S$ is a labelled multigraph $G$ such that
  \begin{enumerate}
  \item $G$ is connected;
  \item for each node $n \in N_G$, $l_G(n) \in S$;
  \item for each node $n \in N_G$, the out-degree of $n$ is equal to
    $\chi(l_G(n),x)$ and the in-degree of $n$ is equal to
    $\chi(l_G(n),t)$.
  \end{enumerate}
The \emph{resultant $\omega$-sharing group} of $G$ is
\[
 \res(G)=\biguplus_{n \in N_G} l_G(n) \enspace .
\]
\end{definition}

\begin{example}\label{ex:shgraph}
  Let $S=\{ux^2,xy,vz,wz,xyz\}$. The following
  is a  sharing graph for $x/r(y,z)$ and $S$:
\[
\xybox{
  0 *{\ovalee{ux^2}_2^0}="a",
  "a"+<\dist,0cm>*{\ovalee{xy}_1^1}="b",
  "b"+<\dist,0cm>*{\ovalee{vz}_0^1}="c",
  "b"+<0cm,-\dista>*{\ovalee{xy}_1^1}="d",
  "b"+<\dist,-\dista>*{\ovalee{wz}_0^1}="e",
  {"a" \ar "b"},
  {"a" \ar "d"},
  {"b" \ar "c"},
  {"b" \ar "c"},
  {"d" \ar "e"}
}
\]
where pedices and apices on a sharing group $B$ are respectively the
value of $\chi(B,x)$ and $\chi(B,r(y,z))$.  Therefore the resultant sharing group is $uvwx^4 y^2
z^2$.   \exproofbox
\end{example}

It is worth noting that, given any set of $\omega$-sharing groups $S$ and binding $x/t$, there exist many different sharing graphs for $x/t$ and $S$. Each sharing graph yields a resultant sharing group which must be included in the result of the abstract unification operator. Of course, different sharing graphs may give the same resultant sharing group. The abstract unification operator is defined by collecting all the resultant sharing groups.

\begin{definition}[Single binding unification]
\label{def:pre-mguomega}
Let $U \in \fwp(\var)$, $S$ be a set of $\omega$-sharing groups with $[S]_U \in \Linp$, $x/t$ be a binding,
and $\vars(x/t) \subseteq U$. The set of resultant $\omega$-sharing groups for $x/t$ and $S$ is
\[
  \mgu_\omega(S,x/t)=\{ \res(G) \mid G
  \text{ is a sharing graph for $S$ and $x/t$} \} \enspace .
\]
We lift $\mgu_\omega$ to an operation over $\Linp$.
\[
  \mgu_\omega([S]_U,x/t)=[\mgu(S,x/t)]_U \enspace .
\]
\end{definition}
This is a particular case of the abstract unification
operator, for single binding substitutions and $\vars(x/t) \subseteq U$.

\begin{example}
  \label{ex:shgraphsimple}
  Let $S$ be as in Example \ref{ex:shgraph}. The
  following is a  sharing graph for $x/r(y,y,z)$ and $S$:
\[
\xybox{
  0 *{\ovalee{ux^2}_2^0}="a",
  "a"+<\dist,0cm>*{\ovalee{xyz}_1^3}="b",
  "b"+<0.5cm,0.1cm>*{}="b1",
  "b"+<0.5cm,-0.1cm>*{}="b2",
  {"a" \ar @/^/  "b"},
  {"a" \ar @/_/  "b"},
  {"b1" \ar @(ur,dr) "b2"},
}
\]
where pedices and apices on a sharing group $B$ are respectively the
value of $\chi(B,x)$ and $\chi(B,r(y,y,z))$.  Therefore $ux^3yz \in
\mgu_\omega(S,x/r(y,y,z))$. Note that this sharing group can
actually be generated by the substitution $\theta=\{x/r(v_1,v_1,v_2),
y/v_2, z/v_2, \linebreak u/v_1, v/a, w/a\}$ where $a$ is a ground term. Let $U=\{u,v,w,x,y,z\}$. It is the
case that $[S]_U \rightslice [\theta]_U$ and
$\mgu([\theta]_U,\{x/r(y,y,z)\})$ performs exactly the variable
aliasings depicted by the sharing graph. Actually $\mgu([\theta]_U,
\{x/r(y,y,z)\}) =[x/r(v_1,v_1,v_1),\linebreak y/v_1,u/v_1,v/a,w/a]_U=[\eta]_U$
and $\eta^{-1}(v_1)|_U = ux^3yz$.   \exproofbox
\end{example}
%


We give here an intuition of the way sharing graphs work.
Assume given a set of $\omega$-sharing groups $[S]_U$ and a binding $x/t$ with
$\vars(x/t) \subseteq U$. We want to compute $[\mgu_\lp(S,x/t)]_U$. To this aim, for any substitution $\theta$ approximated by $[S]_U$, that is, $[S]_U \rightslice [\theta]_U$, we compute $\alpha_\omega(\mgu([\theta]_U,\{x/t\}))$.

For any $B_1,B_2 \in S$, assume that there exist $v_1,v_2 \in \mathcal V$ such that
$B_1=\theta^{-1}(v_1)|_U$ and $B_2=\theta^{-1}(v_2)|_U$.
When unifying $\theta$ with the binding
$x/t$, we use the fact that $\mgu(\eq(\theta) \cup
\{x=t\})=\mgu(\{\theta(x)=\theta(t)\}) \circ \theta$.
By Prop. \ref{prop:chi}, $\theta(x)$ contains $\chi(B_1,x)$
instances of $v_1$ and $\chi(B_2,x)$
instances of $v_2$. Symmetrically,
$\theta(t)$ contains $\chi(B_1,t)$ instances of $v_1$ and $\chi(B_2,t)$ instances of $v_2$.

Assume that $\theta(x)$ and $\theta(t)$ only
differ for the variables occurring in them (and not for the structure of terms).  Then, an arrow from the
sharing group $B_1$ to $B_2$ represents the fact that, in
$\mgu(\{\theta(x)=\theta(t)\})$, one of the copies of $v_1$ is aliased
to one of the copies of $v_2$, \ie that there are corresponding
positions in $\theta(x)$ and $\theta(t)$ where the two terms contain
the variables $v_1$ and $v_2$ respectively. The third condition for
sharing graphs implies that each occurrence of $v_1$ and $v_2$ is aliased to
some other variable.  The first condition (the sharing graph must be connected) ensures that all the variables corresponding to the $\omega$-sharing groups involved in the sharing graph are aliased to each other. In other words, given any two such variables, they are aliased.
Although here we are only considering the case
when $\theta(x)$ and $\theta(t)$ differ for the variables occurring in
them, we will show that it is enough to reach correctness and
optimality. The next example applies this intuition to a concrete case.
\begin{example}
Consider Example \ref{ex:shgraphsimple}, where $\theta=\{x/r(v_1,v_1,v_2),y/v_2,z/v_2,u/v_1,v/a,w/a\}$ and  $U=\{u,v,w,x,y,z\}$.
Let $B_1=ux^2$ and $B_2=xyz$, thus $B_1=\theta^{-1}(v_1)|_U$ and $B_2=\theta^{-1}(v_2)|_U$. When unifying $\theta$ with the binding $x/r(y,y,z)$ we have that $\theta(x)=r(v_1,v_1,v_2)$ and $\theta(r(y,y,z))=r(v_2,v_2,v_2)$.

Note that $\theta(x)$ contains $\chi(ux^2,x)=2$ instances of $v_1$ and $\chi(xyz,x)=1$ instance of $v_2$. Symmetrically, $\theta(r(y,y,z))$ contains $\chi(ux^2,r(y,y,z))=0$ instances of $v_1$ and $\chi(xyz,r(y,y,z))=3$ instances of $v_2$. Moreover, $\theta(x)$ and $\theta(r(y,y,z))$ only differ for the variables occurring in them. Thus, the three edges in the sharing graph of Example \ref{ex:shgraphsimple} correspond to the following aliasings:

\[
\xybox{
  0 *{\theta(x)}="a11",
  "a11"+<1cm,0pt>*{=}="a12",
  "a12"+<1cm,0pt>*{r(v_1,v_1,v_2)}="a13",
  "a11"+<0cm,1cm>*{\theta(r(y,y,z))}="b11",
  "a12"+<0cm,1cm>*{=}="b12",
  "a13"+<0cm,1cm>*{r(v_2,v_2,v_2)}="b13",
   {"a11"+<1.6cm,0.2cm> \ar   "b11"+<1.6cm,-0.2cm>},
   {"a11"+<2.05cm,0.2cm> \ar   "b11"+<2.05cm,-0.2cm>},
   {"a11"+<2.5cm,0.2cm> \ar   "b11"+<2.5cm,-0.2cm>},
}
\]
In particular, the last arrow from $v_2$ to itsself, corresponds to the self-loop in the sharing graph.
\exproofbox
\end{example}

The unification operator $\mgu_\lp([S]_U,x/t)$ can be extended to the case $\vars(x/t)\nsubseteq U$. The idea is to enlarge $S$ by including all the singletons in $\vars(x/t)\setminus U$.

\begin{definition}[Single binding unification with extension]
\label{def:mguomega}
Let $U \in \fwp(\var)$, $S$ be a set of $\omega$-sharing groups with $[S]_U \in \Linp$ and $x/t$ be a binding.
\[
  \mgu_\omega([S]_U,x/t)=\mgu_\omega([S \cup \{ \multil v \multir
  \mid v \in \vars(x/t) \setminus U \} ]_{U \cup \vars(x/t)},x/t)
  \enspace .
\]
\end{definition}

Note that, for a generic abstract domain, the method of extending the
abstract object to include all the variables in the concrete
substitution $\delta$ may result in a non-optimal abstract
unification. For example, this is what happens in the case of the
domain $\Sharing$, as shown in \cite{AmatoS09sharing}. However, we will
prove that, in the case of $\Linp$, the abstract mgu in Definition
\ref{def:mguomega} is optimal.

This operator can be extended to multi-binding substitutions in the obvious way, namely by iterating the single binding operator.
\begin{definition}[Multi-binding unification]
We define $\mgu_\omega([S]_U,\delta)$ with $\delta \in \Isubst$ and $[S]_U \in \Linp$ by induction on the number of bindings:
\[
  \begin{split}
  \mgu_\omega([S]_U,\epsilon)&=[S]_U \enspace ,\\
  \mgu_\omega([S]_U,\{x/t\} \uplus \delta)&=\mgu_\omega(
  \mgu_\omega([S]_U,x/t),\delta)
  \enspace .
  \end{split}
\]
\end{definition}
It is possible to prove that $\mgu_\omega([S]_U,\delta)$ is optimal for multi-binding substitutions \cite{AmatoS05tr-2}. Since optimality of iterative multi-binding unification is not inherited by the abstractions of $\Linp$ (as we show in Section~\ref{sec:multibind}), we will focus on single binding unification. In the rest of the paper, we only consider bindings $x/t$ which are idempotent, namely, such that $x\notin \vars(t)$. 
%

\subsection{Correctness of abstract unification}

We now show that $\mgu_\omega([S]_U,\delta)$ is correct \wrt concrete unification. We show correctness for multi-binding substitutions, since it is a trivial extension of the single binding case. In fact, composition of correct operators is still correct.

First of all, we extend the definition of $\theta^{-1}$ to the case when it is applied to a sharing group $B$.
\begin{definition}
Given $\theta \in \Isubst$ and $B$ an $\omega$-sharing group, we define
\begin{equation*}
  \theta^{-1}(B)=\lambda v \in \var. \chi(B,\theta(v)) \enspace .
\end{equation*}
\end{definition}
In order to prove the correctness of abstract unification, we need the
following auxiliary property.
\begin{proposition}
  \label{prop:thetam1}
  Given substitutions $\theta$, $\eta \in \Isubst$ and an $\omega$-sharing group $B$, we have
  \[
    (\eta \circ \theta)^{-1}(B)=\theta^{-1}(\eta^{-1}(B))
    \enspace .
  \]
\end{proposition}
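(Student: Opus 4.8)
The plan is to compare the two $\omega$-sharing groups pointwise. Both $(\eta\circ\theta)^{-1}(B)$ and $\theta^{-1}(\eta^{-1}(B))$ are elements of $\mwp(\var)$, so it suffices to show they assign the same multiplicity to every variable $v\in\var$. Unfolding the definition of $\theta^{-1}$ on sharing groups, the $v$-th component of the left-hand side is $\chi(B,(\eta\circ\theta)(v))=\chi(B,\eta(\theta(v)))$, while the $v$-th component of the right-hand side is $\chi(\eta^{-1}(B),\theta(v))$. Hence, putting $t=\theta(v)$, the whole proposition reduces to the single identity
\[
\chi(B,\eta(t))=\chi(\eta^{-1}(B),t)
\]
holding for every term $t$ and every $\omega$-sharing group $B$.

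To prove this key identity I would reduce it to the singleton case, where Proposition~\ref{prop:chi} already supplies the answer. First I observe that the two readings of $\eta^{-1}$ agree on singletons: since $\chi(\multil v\multir,t)=\occ(v,t)$, the sharing-group definition gives $\eta^{-1}(\multil v\multir)=\lambda w.\,\occ(v,\eta(w))$, which is exactly $\eta^{-1}(v)$ of Definition~\ref{eq:theta-1}. Next I expand $\eta^{-1}(B)$ through the definition of $\chi$: for every $w\in\var$,
\[
\eta^{-1}(B)(w)=\chi(B,\eta(w))=\sum_{v\in\supp{B}}B(v)\,\occ(v,\eta(w))=\sum_{v\in\supp{B}}B(v)\,\eta^{-1}(v)(w) \enspace .
\]

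Then I compute $\chi(\eta^{-1}(B),t)$, substitute the formula just obtained, and exchange the (finite) order of summation:
\[
\chi(\eta^{-1}(B),t)=\sum_{w}\eta^{-1}(B)(w)\,\occ(w,t)=\sum_{v\in\supp{B}}B(v)\sum_{w}\eta^{-1}(v)(w)\,\occ(w,t)=\sum_{v\in\supp{B}}B(v)\,\chi(\eta^{-1}(v),t) \enspace .
\]
Applying Proposition~\ref{prop:chi} to each inner factor yields $\chi(\eta^{-1}(v),t)=\occ(v,\eta(t))$, so the last sum equals $\sum_{v\in\supp{B}}B(v)\,\occ(v,\eta(t))=\chi(B,\eta(t))$, which closes the argument.

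The only genuinely delicate point is the bookkeeping in the double summation: one must check that the index $w$ ranges over a finite set (it does, as $\occ(w,t)=0$ for all but finitely many $w$ and each $\supp{\eta^{-1}(v)}$ is finite), so that the exchange of summation order is legitimate, and that the two meanings of $\eta^{-1}$ are kept apart. I expect the main obstacle to be purely notational, namely keeping straight $\eta^{-1}$ applied to a variable versus $\eta^{-1}$ applied to an $\omega$-sharing group; the underlying mathematics is a routine unfolding of $\chi$ together with a single appeal to Proposition~\ref{prop:chi}.
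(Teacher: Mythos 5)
Your proof is correct and takes essentially the same route as the paper's: both unfold $\theta^{-1}$ and the definition of $\chi$, exchange a finite double summation, and finish with a single application of Proposition~\ref{prop:chi}. Your isolated key identity $\chi(B,\eta(t))=\chi(\eta^{-1}(B),t)$ is just an explicit repackaging of the computation the paper carries out inline at $t=\theta(w)$.
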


\begin{theorem}[Correctness of $\mgu_\lp$]\label{thm:correctness_mgu_omega}
  The operation $\mgu_\lp$ is correct w.r.t. $\mgu$, \ie
\[
 \forall [S]_U \in \Linp, \delta \in \Isubst.~[S]_U \rightslice [\theta]_{U} \implies
 \mgu_\lp([S]_U,\delta) \rightslice  \mgu([\theta]_{U},\delta) \enspace .
\]

\end{theorem}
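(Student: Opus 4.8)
The plan is to reduce the statement to single-binding unification and then, for each variable, to realise the sharing group it generates in the unified substitution as the resultant of an explicit sharing graph.

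First I would carry out two reductions. Correctness for multi-binding $\delta$ follows from the single-binding case: concrete unification factors as $\mgu([\theta]_U,\{x/t\}\uplus\delta)=\mgu(\mgu([\theta]_U,x/t),\delta)$, $\mgu_\omega$ is defined by the matching iteration, and the composition of correct operators is correct. Next, using Definition~\ref{def:mguomega} I reduce to a binding $x/t$ with $\vars(x/t)\subseteq U$: for a general binding one replaces $[S]_U$ by $[S\cup\{\multil v\multir\mid v\in\vars(x/t)\setminus U\}]_{U\cup\vars(x/t)}$, and choosing a representative $\theta'$ of $[\theta]_U$ whose variables avoid $\vars(x/t)\setminus U$ one checks, via Theorem~\ref{th:wdapprox}, that this enlarged element still approximates $[\theta']_{U\cup\vars(x/t)}$ while the concrete result is unchanged. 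So fix $[S]_U\rightslice[\theta]_U$ and a binding $x/t$ with $\vars(x/t)\subseteq U$; if $\theta(x)=\theta(t)$ is not unifiable there is nothing to prove, so let $\sigma=\mgu(\theta(x)=\theta(t))$, whence $\mgu([\theta]_U,x/t)=[\sigma\circ\theta]_U$ by the standard identity $\mgu(\eq(\theta)\cup\{x=t\})=\sigma\circ\theta$, and put $q=\sigma(\theta(x))=\sigma(\theta(t))$.

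Correctness now reduces to showing that for every variable $v$ the sharing group $(\sigma\circ\theta)^{-1}(v)|_U$ is $\res(G)$ for some sharing graph $G$ for $x/t$ and $S$. I build such a $G=G_v$ as follows. Its nodes are the occurrences of $v$ inside the terms $\sigma(u)$, as $u$ ranges over all variables; the node for an occurrence inside $\sigma(u)$ carries the label $\theta^{-1}(u)|_U$, which lies in $S$ by the hypothesis $[S]_U\rightslice[\theta]_U$, giving condition~(2). Since the label $\theta^{-1}(u)|_U$ is used exactly $\occ(v,\sigma(u))$ times, $\res(G_v)=\biguplus_u\occ(v,\sigma(u))\cdot(\theta^{-1}(u)|_U)$; as $\sigma^{-1}(\multil v\multir)=\lambda u.\,\occ(v,\sigma(u))$, Proposition~\ref{prop:thetam1} gives $\res(G_v)=\theta^{-1}(\sigma^{-1}(\multil v\multir))|_U=(\sigma\circ\theta)^{-1}(v)|_U$, as wanted. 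For the edges I read off $q$: each of its $\occ(v,q)$ occurrences of $v$ contributes one edge. For an occurrence at position $\xi$, the longest prefix $\zeta$ of $\xi$ with $\theta(x)(\zeta)$ a variable $u$ locates this occurrence of $v$ inside a substituted copy of $\sigma(u)$, selecting one node, taken as the source; reading $q$ as $\sigma(\theta(t))$ selects the target symmetrically. Then a node labelled $\theta^{-1}(u)|_U$ has one out-edge per occurrence of $u$ in $\theta(x)$, so its out-degree is $\occ(u,\theta(x))=\chi(\theta^{-1}(u)|_U,x)$ by Proposition~\ref{prop:chi} (here $\vars(x/t)\subseteq U$), and dually its in-degree is $\chi(\theta^{-1}(u)|_U,t)$; this is condition~(3).

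The remaining and, I expect, hardest point is condition~(1), connectedness, which I would derive from the most-generality of $\sigma$. Suppose $G_v$ were disconnected, splitting its nodes into two non-empty parts with no edge between them. Every edge is an occurrence of $v$ in $q$ whose source and target it joins, so these two endpoints always lie in the same part; consequently the operation that, inside each $\sigma(u)$, renames to a fresh variable $v'$ exactly those occurrences of $v$ whose node falls in the second part is consistent on $q$ and produces a substitution $\sigma'$ that still unifies $\theta(x)=\theta(t)$. Since $\sigma$ is recovered from $\sigma'$ by merging $v'$ back into $v$, $\sigma$ is a proper instance of $\sigma'$, contradicting that $\sigma$ is a most general unifier. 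The only degenerate case, $\occ(v,q)=0$, forces every $u$ with $v\in\vars(\sigma(u))$ to lie outside $\vars(\theta(x))\cup\vars(\theta(t))$, hence $u=v$ and $G_v$ is the single node $\theta^{-1}(v)|_U$, which is trivially connected. Thus $G_v$ is a genuine sharing graph with $\res(G_v)=(\sigma\circ\theta)^{-1}(v)|_U$ for every $v$, which together with the two reductions yields the required correctness $\mgu_\omega([S]_U,\delta)\rightslice\mgu([\theta]_U,\delta)$.
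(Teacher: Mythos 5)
Your proof is correct in substance and, while it shares the paper's overall skeleton (the same two reductions: multi-binding to single binding via correctness of composition, and enlargement of $U$ to cover $\vars(x/t)$), the core construction is organized genuinely differently. The paper linearizes $\eta=\mgu(\theta(x)=\theta(t))$ by renaming apart every occurrence of a range variable (producing $\eta'$, the collapsing map $\rho$, and $\beta=\eta'\circ\theta$), builds a single global multigraph on $\vars(\beta(U))$, and then restricts it to $\supp{\rho^{-1}(u)}$; connectedness of that restriction is proved forwards, from the fact that the mgu of the induced variable--variable equations identifies $w_1$ and $w_2$ only when a chain of those equations links them. You dispense with the explicit linearization by taking the occurrences of $v$ inside the terms $\sigma(u)$ as nodes --- your occurrence pairs play exactly the role of the paper's fresh variables, and your per-variable graph $G_v$ is in essence the paper's restricted subgraph --- but your connectedness proof is different in kind: a disconnection would let you split $v$ into $v$ and $v'$, yielding a unifier of $\theta(x)=\theta(t)$ of which $\sigma$ is a proper instance, contradicting most generality. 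This splitting argument is a clean, conceptual use of mgu-hood and avoids the $\eta'$/$\rho$/$\beta$ bookkeeping and the $\sim_U$ manipulations entirely; its only soft spot is that the ``proper instance'' claim deserves one more line (e.g.\ if $\sigma'=\tau\circ\sigma$, then picking one node in each part forces $\tau(v)=v$ and $\tau(v)=v'$ simultaneously, which is absurd).

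One small oversight should be repaired. If $v\in\dom(\sigma)$ then, by idempotence of $\sigma$, $v$ occurs in no $\sigma(u)$, so $G_v$ as you define it has no nodes at all; the paper's definition of multigraph requires $N_G\neq\emptyset$, so your degenerate case ``$G_v$ is the single node $\theta^{-1}(v)|_U$'' does not cover this subcase. The resulting sharing group is then $\emptymulti$, and it must be handled separately, exactly as the paper does at the outset: $\emptymulti\in S$ whenever $S$ approximates anything, and it is the resultant of the one-node graph labelled $\emptymulti$ with no edges. With that two-line patch the argument is complete.
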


\begin{example}
  Let $\theta=\{x/r(s(u,u,u),v,w), y/v', z/w'\}$,
  \allowbreak $\delta=\{x/r(y,y,z)\}$ and $U=\{x,y,z\}$. Therefore $\alpha_\omega([\theta]_U)=
  [x^3,x,y,z]_U$. If we proceed with the concrete unification of
  $[\theta]_U$ with $\delta$, we have
  $\mgu([\theta]_U,\delta)=[\theta']_U$ with
  $\theta'=\mgu(\theta,\delta)=\eta \circ \theta$ and
  $\eta=\mgu(\theta(x)=\theta(r(y,y,z)))$. This gives the following
  results:
  \begin{gather*}
    \eta=\{v'/s(u,u,u), v/s(u,u,u), w'/w\} \enspace ,\\
    \theta'=\{x/r(s(u,u,u),s(u,u,u),w), y/s(u,u,u), z/w, v'/s(u,u,u),
    w'/w\} \enspace ,
  \end{gather*}
  with $[\theta']_U=[\theta]_U$. Now, let $\eta'$ be obtained from $\eta$
  by replacing each occurrence of a variable in $\rng(\eta)$ with a
  different fresh variable, $\beta=\eta' \circ \theta$ and $\rho$ be a
  substitution mapping variables to variables 
  such that
  $\rho(\beta(x))=\theta'(x)$ for each $x \in U$. Note that $\rho$ is not a renaming, since it is not bijective.
   We have:
  \begin{gather*}
    \eta=\{v/s(u_1,u_2,u_3), v'/s(u_4,u_5,u_6), w'/u_7\} \enspace, \\
    \beta=\{x/r(s(u,u,u),s(u_1,u_2,u_3),w), y/s(u_4,u_5,u_6), z/u_7,
    v'/s(u_4,u_5,u_6), w'/u_7 \} \enspace, \\
    \rho=\{u_1/u, u_2/u, u_3/u, u_4/u, u_5/u, u_6/u, u_7/w\} \enspace .
  \end{gather*}
  Following the proof, we build a multigraph $G$ as follows:
  \[
  \xymatrix{
    \ovalemidi{x^3}{u}_3^0 \ar[d] \ar[dr] \ar[drr] &
    \ovalemidi{x}{u_1}_1^0 \ar[dl] & \ovalemidi{x}{u_2}_1^0
    \ar[dl] &
    \ovalemidi{x}{u_3}_1^0  \ar[dl]& \ovalemidi{x}{w}_1^0 \ar[dl]\\
    \ovalemidi{y}{u_4}_0^2 & \ovalemidi{y}{u_5}_0^2 & \ovalemidi{y}{u_6}_0^2 &
    \ovalemidi{z}{u_7}_0^1
  }
  \]
  Note that we have chosen to annotate every sharing group with the corresponding variable in $\vars(\beta(U))$. This is not a
  sharing graph since it is not connected, but if we take
  $Y=\supp{\rho^{-1}(u)}=\{u,u_1,u_2,u_3,u_4,u_5,u_6\}$, the
  restriction of $G$ to the nodes annotated with a variable in $Y$ is
  a sharing graph whose resultant $\omega$-sharing group is $x^6y^3$.
  \exproofbox
\end{example}

\subsection{Optimality of Abstract Unification}

We now prove that $\mgu_\lp$ is not only correct, but also optimal for a single binding substitution, \ie it is the least correct
abstraction.  This means proving that, given a set of $\omega$-sharing groups $[S]_U \in \Linp$, a binding $x/t$, and an $\omega$-sharing group $B \in \mgu_\lp([S]_U,x/t)$, there exists
a substitution $[\delta]_U$ such that $[S]_U \rightslice [\delta]_U$ and $B \in
\alpha_\lp(\mgu([\delta]_U,\{x/t\}))$.  First of all, we
prove optimality of $\mgu_\lp([S]_U,x/t)$ in the special case of
$\vars(x/t) \subseteq U$. Next, we extend this result to the general case.
\begin{example}\label{ex:optimality}
Consider $S=\{xu,xv,y\}$ and the binding $x/s(y,y)$. The following is a sharing graph for $x/s(y,y)$ and $S$ whose resultant $\omega$-sharing group is $x^2uvy$.
\[
  \xymatrix{
    \ovalee{xu}_1^0 \ar[dr] && \ovalee{xv}_1^0 \ar[dl]\\
    & \ovalee{y}_0^2
  }
\]
We show how to find a substitution $[\delta]_U$ such that the $\omega$-sharing group $x^2uvy \in \alpha_\lp(\mgu([\delta]_U,\{x/s(y,y)\}))$. Let $U=\{u,v,x,y\}$. For each node $n$ of the sharing graph, we consider a different fresh variable $w_n$. Assume that the node labelled with $xu$ in the upper-left corner is node 1, and proceed clockwise to number the other nodes.

For each variable $z\in U\setminus \{x\}$, we associate to $\delta(z)$ a term containing all the variables $w_i$ such that the label of the $i$-th node contains the variable $z$.
Thus, we define $\delta(u)=r(w_1)$ where $w_1$ correspond to the node containing $u$. Analogously, we define $\delta(v)=r(w_2)$ and $\delta(y)=r(w_3)$.

We now define $\delta(x)$ in a different way, namely by replacing in $s(y,y)$ each occurrence of the variable $y$ with a term similar to $\delta(y)$, with the difference that $w_3$ is replaced with the variables  $w_1$ and $w_2$. The choice of $w_1$ and $w_2$ is obvious by looking at the sharing graph, since the first and second node are the sources of the two edges targeted at the node three. Therefore we obtain $\delta(x)=s(r(w_1),r(w_2))$.

Summing up, we have:
\[
\delta=\{u/r(w_1), v/r(w_2), x/s(r(w_1),r(w_2)), y/r(w_3)\} \enspace .
\]
It is easy to check that $[S]_U \rightslice [\delta]_U$ and
\begin{multline*}
\mgu(\delta,\{x/s(y,y)\})=\\
\{u/r(w_1), v/r(w_1), x/s(r(w_1),r(w_1)), y/r(w_1), w_2/w_1, w_3/w_1  \} \enspace ,
\end{multline*}
hence $\alpha_\lp([\mgu(\delta,\{x/s(y,y)\})]_U)=[x^2uvy]_U$.
  \exproofbox
\end{example}

In the above example we have shown how to find a special substitution such that its fresh variables are unified according to the arrows in a sharing graph. The same idea is exploited in the next theorem for proving the optimality of the abstract unification operator $\mgu_\lp$. For any $\omega$-sharing group $X\in \mgu_\lp([S]_U,x/t)$, we provide a substitution $\delta$ obtained as in Example \ref{ex:optimality}, such that $[S]_U$ approximates $[\delta]_U$ and
$X \in  \alpha_\lp(\mgu([\delta]_U,\{x/t)\})$.
\begin{theorem}[Optimality of $\mgu_\lp$]
  \label{th:lpopt}
  The single binding unification $\mgu_\lp([S]_U,x/t)$ is optimal \wrt
  $\mgu$, under the assumption that $\vars(x/t) \subseteq U$, \ie:
\[
\forall B\in\mgu_\omega ([S]_U,x/t)~\exists \delta\in  \Isubst.~[S]_U \rightslice [\delta]_U \text{ and } B\in \alpha_\lp(\mgu([\delta]_U,\{x/t\})) \enspace .
 \]

\end{theorem}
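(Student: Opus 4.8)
The plan is to generalize, to an arbitrary sharing graph, the explicit construction carried out in Example~\ref{ex:optimality}. Fix $B \in \mgu_\lp([S]_U,x/t)$. By Definition~\ref{def:pre-mguomega} there is a sharing graph $G$ for $x/t$ and $S$ with $\res(G)=B$; write $N_G=\{n_1,\dots,n_k\}$, set $B_i=l_G(n_i)\in S$, and choose pairwise distinct fresh variables $w_1,\dots,w_k$ not occurring in $U\cup\vars(t)$. The idea is to make $w_i$ behave as ``the range variable that generates $B_i$'', i.e.\ to build an idempotent $\delta$ with $\delta^{-1}(w_i)|_U=B_i$ for every $i$, which amounts to $\occ(w_i,\delta(z))=B_i(z)$ for all $z\in U$. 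For $z\in U\setminus\{x\}$ I simply let $\delta(z)$ be any term built from function symbols and the variables $w_1,\dots,w_k$ in which $w_i$ occurs exactly $B_i(z)$ times (a constant when all these multiplicities vanish). The binding $\delta(x)$ is the delicate part, because its occurrences of the $w_i$ must be positioned so that unifying $\delta(x)$ with $\delta(t)$ reproduces exactly the edges of $G$. Since $x\notin\vars(t)$, the term $\delta(t)=t\delta$ depends only on $\delta|_{U\setminus\{x\}}$, and by Proposition~\ref{prop:chi} (using $\vars(t)\subseteq U$) the number of occurrences of $w_j$ in $\delta(t)$ equals $\chi(B_j,t)=\indeg(n_j)$. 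Hence the fresh-variable leaves of $\delta(t)$ are in bijection with the $|E_G|$ edges of $G$, each in-edge of $n_j$ accounting for one occurrence of $w_j$. Fix such a bijection and define $\delta(x)$ to be $\delta(t)$ with the leaf assigned to an edge $e:n_i\to n_j$ relabelled from $w_j$ to $w_i$. Then $\delta(x)$ and $\delta(t)$ coincide except at these leaves, and $w_i$ occurs in $\delta(x)$ once per out-edge of $n_i$, i.e.\ $\outdeg(n_i)=\chi(B_i,x)=B_i(x)$ times (as $x$ is a variable). Thus $\occ(w_i,\delta(x))=B_i(x)$, so $\delta^{-1}(w_i)|_U=B_i$ holds for all $i$.

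Next I verify that $[S]_U\rightslice[\delta]_U$. Since $\dom(\delta)\subseteq U$ and $\rng(\delta)\subseteq\{w_1,\dots,w_k\}$ are disjoint, $\delta$ is idempotent. For correctness I check $\delta^{-1}(v)|_U\in S$ for every $v$: if $v=w_i$ this multiset is $B_i\in S$; every other variable is absent from $\rng(\delta)$, so $\delta^{-1}(v)|_U=\emptymulti\in S$ (the empty multiset belongs to $S$ because $S\neq\emptyset$, by the definition of $\Linp$). Hence $[S]_U\rightslice[\delta]_U$.

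It remains to compute $\alpha_\lp(\mgu([\delta]_U,\{x/t\}))$ and locate $B$ in it. Because $\vars(x/t)\subseteq U$, equation~\eqref{eq:mixmgu} yields variables of interest $U$ and $\mgu([\delta]_U,\{x/t\})=[\eta\circ\delta]_U$, where, by the standard identity $\mgu(\eq(\delta)\cup\{x=t\})=\eta\circ\delta$ already invoked in the correctness argument, $\eta=\mgu(\delta(x)=\delta(t))$. As $\delta(x)$ and $\delta(t)$ agree everywhere except at the aligned fresh-variable leaves, this mgu exists (no symbol clash and no occur-check failure can occur) and is the most general unifier of the equations $\{w_i=w_j \mid e:n_i\to n_j\in E_G\}$. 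The graph on $\{w_1,\dots,w_k\}$ induced by these equations has the same undirected connectivity as $G$; since $G$ is connected, $\eta$ identifies all of $w_1,\dots,w_k$ with a single variable $w$ and touches no other variable. Consequently, for every $z\in U$,
\[
  \occ(w,(\eta\circ\delta)(z)) = \sum_{i=1}^{k}\occ(w_i,\delta(z)) = \sum_{i=1}^{k}B_i(z),
\]
so $(\eta\circ\delta)^{-1}(w)|_U=\biguplus_{i=1}^{k}B_i=\res(G)=B$. Therefore $B\in\alpha_\lp(\mgu([\delta]_U,\{x/t\}))$, as required.

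The main obstacle is the construction of $\delta(x)$: one must pin down the bijection between the edges of $G$ and the fresh-variable leaves of $\delta(t)$ so that the positional alignment of $\delta(x)$ against $\delta(t)$ produces \emph{exactly} the equations $w_i=w_j$ indexed by the edges — with no spurious identifications and no failure of unification. Intertwined with this is the bookkeeping that each $w_i$ occurs in $\delta(x)$ with multiplicity $\outdeg(n_i)$ and in $\delta(t)$ with multiplicity $\indeg(n_i)$, which is precisely what the third sharing-graph condition (tying in/out-degrees to $\chi(B_i,t)$ and $\chi(B_i,x)$) guarantees; this is where the definition of sharing graph is genuinely used. The remaining points — idempotency, correct approximation, and the degenerate situations (self-loops, labels $B_i$ not mentioning $x$, variables of $U$ with empty contribution) — are routine once the alignment is in place.
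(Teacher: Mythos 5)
Your proof is correct and takes essentially the same route as the paper's: one fresh variable $w_i$ per node of the sharing graph, bindings for $z\in U\setminus\{x\}$ realizing the multiplicities $B_i(z)$, and $\delta(x)$ obtained by relabelling the fresh-variable leaves of $\delta(t)$ along the edges (the in/out-degree condition guaranteeing the counts match), so that $\mgu(\delta(x)=\delta(t))$ generates exactly the equations $w_i=w_j$ indexed by edges and, by connectedness, collapses all the $w_i$ into a single variable whose inverse image is $\res(G)=B$. Your ``fix a bijection between edges and leaf occurrences'' step cleanly packages what the paper spells out via the edge-labelling map $f$ and the partition $M_{n,y,\xi}$, but the underlying construction and verification are the same.
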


The previous proof requires that $\vars(x/t) \subseteq U$. However, the same construction also works when this condition  does not hold.
\begin{example}
  Given $U=\{x,y\}$ and $S=\{x^2,x^2y\}$, we want to compute $\mgu_\lp([S]_U,x/s(y,z))$. By extending the domain of the variables of interests, we obtain $[S']_V= [x^2,x^2y,z]_{x,y,z}$. One of the sharing graphs for $x/s(y,z)$ and $[S']_V$ is
\begin{equation*}
\xymatrix{
  \ovalee{x^2}^0_2 \ar[r]  \ar[d]& \ovalee{x^2y}^1_2 \ar[d] \ar[dr]\\
  \ovalee{z}^1_0 & \ovalee{z}^1_0 & \ovalee{z}^1_0
}
\end{equation*}
Following the proof of the previous theorem, we obtain the substitution
\[
\delta'=\{x/s(r(w_1),r(w_1,w_2,w_2)), y/r(w_2), z/r(w_3,w_4,w_5)\} \enspace,
\]
where $[S']_V \rightslice [\delta']_V$ and $x^4yz^3 \in \alpha_\lp(\mgu([\delta']_V,\{x/s(y,z)\}))$. However  we are looking for  a substitution $\delta$ such that $[S]_U \rightslice [\delta]_U$ and $x^4yz^3 \in \alpha_\lp(\mgu([\delta]_U,\{x/s(y,z)\}))$. Nonetheless, we may choose $\delta=\delta'$ (or, if we prefer, $\delta=\delta'|_{x,y}$) to get the required substitution.
  \exproofbox
\end{example}

This is not a fortuitous coincidence. We may show that it consistently happens every time we apply Theorem \ref{th:lpopt} to an abstract unification where $\vars(x/t) \nsubseteq U$. Therefore, we can prove the main result of the paper.

\begin{theorem}[Optimality of $\mgu_\lp$ with extension]\label{th:optimality_extension}
  The single binding unification $\mgu_\omega$ with extension is optimal \wrt
  $\mgu$.
\end{theorem}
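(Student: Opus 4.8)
The plan is to reduce the general statement to the restricted optimality result of Theorem~\ref{th:lpopt}, which already covers the case $\vars(x/t)\subseteq U$, and then to show that the witnessing substitution produced there also serves over the smaller set of variables of interest $U$. Write $V=U\cup\vars(x/t)$ and $S'=S\cup\{\multil v\multir\mid v\in\vars(x/t)\setminus U\}$, so that $\vars(x/t)\subseteq V$ and, by Definition~\ref{def:mguomega}, $\mgu_\omega([S]_U,x/t)=\mgu_\omega([S']_V,x/t)$. Correctness is already granted by Theorem~\ref{thm:correctness_mgu_omega}, so only optimality is at stake. Fix $B\in\mgu_\omega([S]_U,x/t)=\mgu_\omega([S']_V,x/t)$; the degenerate case $S=\emptyset$ is vacuous, so assume $\emptymulti\in S$. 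Since $\vars(x/t)\subseteq V$, Theorem~\ref{th:lpopt} yields a substitution $\delta$, built exactly as in Example~\ref{ex:optimality}, with $[S']_V\rightslice[\delta]_V$ and $B\in\alpha_\omega(\mgu([\delta]_V,\{x/t\}))$. It then suffices to prove, for this same $\delta$, the two facts $[S]_U\rightslice[\delta]_U$ and $B\in\alpha_\omega(\mgu([\delta]_U,\{x/t\}))$.

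The first fact follows by a restriction argument. For every $v\in\var$ we have $\delta^{-1}(v)|_V\in S'$, and, as $U\subseteq V$, $\delta^{-1}(v)|_U=(\delta^{-1}(v)|_V)|_U$. Every element of $S'$ restricts to $U$ inside $S$: a group already in $S$ has support contained in $U$ (since $S\subseteq\mwp(U)$) and is therefore unchanged by $|_U$, whereas a singleton $\multil v\multir$ with $v\in\vars(x/t)\setminus U$ restricts to $\emptymulti\in S$. Hence $\delta^{-1}(v)|_U\in S$ for every $v$, which is exactly $[S]_U\rightslice[\delta]_U$.

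The second fact is the step I expect to be the main obstacle, because it compares two unifications carried out with different sets of variables of interest, and for an \emph{arbitrary} $\delta$ these need not yield the same existential substitution; in particular a direct appeal to monotonicity of $\alpha_\omega$ along the instantiation order $\preceq$ is not available. I would therefore argue directly, exploiting the shape of the constructed $\delta$. Choosing the representative $\delta|_U$ of $[\delta]_U$ (which simply leaves each $v\in\vars(x/t)\setminus U$ unbound), one has $\eq(\delta|_U)\cup\{x=t\}\subseteq\eq(\delta)\cup\{x=t\}$, so $\mgu([\delta]_U,\{x/t\})$ solves a subset of the equations solved by $\mgu([\delta]_V,\{x/t\})$ and is, in particular, defined whenever the latter is. The crucial observation is that the identifications among the fresh variables of $\delta$ that generate the resultant group $B$ are already forced by the bindings of the variables in $U$ together with the equation $x=t$: in the construction of Example~\ref{ex:optimality} the subterms placed in $\delta(x)$ use only fresh variables disjoint from $\vars(x/t)\setminus U$, while each binding $\delta(v)$ with $v\in\vars(x/t)\setminus U$ introduces only further fresh variables that the unification collapses onto the ones already present. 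Consequently the most general unifier, \emph{restricted to the variables of interest $V$}, is the same whether we start from $[\delta]_U$ or from $[\delta]_V$; the principal variable still generates $B$, so $B\in\alpha_\omega(\mgu([\delta]_U,\{x/t\}))$.

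The heart of the proof is thus the comparison in the previous paragraph: one must verify that forgetting the variables of $V\setminus U\subseteq\vars(x/t)$ from the variables of interest before unification does not disturb the resultant sharing group, precisely because those variables re-enter the variables of interest through $\vars(x/t)$ and their bindings in the constructed $\delta$ are harmless. Everything else---the extension bookkeeping of Definition~\ref{def:mguomega} and the restriction identity $(\cdot|_V)|_U=\cdot|_U$---is routine.
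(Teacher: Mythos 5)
Your proof skeleton coincides with the paper's: reuse the substitution $\delta$ supplied by Theorem~\ref{th:lpopt} over $V=U\cup\vars(x/t)$ and the extended set $S'$, and then establish the two facts $[S]_U\rightslice[\delta]_U$ and $B\in\alpha_\lp(\mgu([\delta]_U,\{x/t\}))$; your restriction argument for the first fact is exactly the paper's. The gap lies in the second fact. Your pivotal claim --- that the most general unifier ``restricted to the variables of interest $V$'' is \emph{the same} whether one unifies $[\delta]_U$ or $[\delta]_V$ with $x/t$ --- is false in general, and it fails precisely when $x\notin U$, a case your argument never separates (your phrase about ``the subterms placed in $\delta(x)$'' tacitly assumes the binding for $x$ survives the passage to $[\delta]_U$, i.e.\ that $x\in U$). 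Concretely, let $U=\{z\}$, $S=\{\emptymulti,\multil z\multir\}$ and take the binding $x/f(y,z)$. The sharing graph with one node labelled $\multil x\multir$, one node labelled $\multil z\multir$ and a single edge between them yields $B=\multil x,z\multir$, and the construction of Theorem~\ref{th:lpopt} produces $\delta=\{x/f(a,r(w_1)),\ y/a,\ z/r(w_2)\}$. Unifying over $V$ gives $\{x/f(a,r(w_1)),\,y/a,\,z/r(w_1),\,w_2/w_1\}$, whose abstraction is $[xz]_V$; unifying $[\delta]_U$ (where the bindings for $x$ \emph{and} $y$ are dropped) gives $\{z/r(w_2),\,x/f(y,r(w_2))\}$, whose abstraction is $[xz,xy]_V$. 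The two unifiers are not $\sim_V$-equivalent ($y$ is ground in the first, aliased to $x$ in the second) and the two abstractions differ, so the equality you rely on does not hold; the theorem survives only because $B=\multil x,z\multir$ still appears on the $[\delta]_U$ side.

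What is true --- and all the theorem needs --- is the one-directional inclusion $\alpha_\lp(\mgu([\delta]_V,\{x/t\}))\leq_\lp\alpha_\lp(\mgu([\delta]_U,\{x/t\}))$, and proving it is where the paper's proof does its real work, split into two cases. For $x\in U$ it shows that the discarded equations $y=t_y$ with $y\in\vars(t)\setminus U$, because each $t_y=\delta(y)$ is linear with variables occurring nowhere else in $\delta$, contribute only a substitution $\beta'$ whose domain is disjoint from $V$, so that the two abstractions in fact coincide; here your intuition is correct, but it needs this argument rather than an assertion. For $x\notin U$ it must additionally dispose of the dropped binding $x/\delta(x)$ (using that $\delta(x)$ is linear and variable-disjoint from the rest of $\delta$) and then prove that every sharing group over $V$ arising from the $[\delta]_V$-unification is already a group of $\gamma=\mgu(\delta|_U,\{x/t\})$ --- for instance, the group generated by a fresh variable of some $t_y$ equals $\gamma^{-1}(y)|_V$, the group generated by $y$ itself --- while $\gamma$ may well have strictly more groups, as $xy$ above shows. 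As written, the step you yourself flag as the heart of the proof rests on an equality that fails, so the proposal needs to be repaired along these lines.
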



\subsection{A characterization for resultant sharing groups}

The domain $\Linp$ has not been designed to be directly
implemented, but some of its abstractions could.
Providing a simpler
definition for the set of resultant $\omega$-sharing groups could help
in developing the abstract operators for its abstractions. We show
that given a set $S$ of $\omega$-sharing groups and a binding $x/t$,
the set of resultant $\omega$-sharing groups has an elegant algebraic
characterization.

By definition of sharing graph, a set of nodes $N$ labelled with $\omega$-sharing groups of $S$ can be turned into a sharing graph for $S$ and $x/t$ if and only if the condition on the out-degree and in-degree is satisfied and the obtained graph is connected. The condition on the degrees says that for each node $s$ labelled with the sharing group $B_s$, the out-degree of $s$ must be equal to $\chi(B_s,x)$. Symmetrically, the in-degree must be equal to $\chi(B_s,t)$. As a consequence, the sum of the out-degrees of all the nodes $\sum_{s\in N} \chi(B_s,x)$ must be equal to the sum of the in-degrees of all the nodes $\sum_{s\in N}\chi(B_s,t)$. This is because each edge  has a source and a target node. Moreover, in order to be connected, any graph needs at least $\card{N}-1$ edges. Since the number of edges is equal to the sum of in-degrees of all the nodes, it turns out that such a sum must be equal to or greater than $\card{N}-1$. Surprisingly, this is enough to construct a sharing graph from $N$.

\begin{theorem}
  \label{th:algebraic}
  Let $S$ be a set of $\omega$-sharing groups and $x/t$ be a binding. Then $B \in \mgu_\lp(S,x/t)$ iff there exist $n\in \Natzero$, $B_1,\ldots,B_n \in S$ which satisfy the following conditions:
  \begin{enumerate}
    \item $B=\multisum_{1 \leq i \leq n} B_i$;
    \item $\sum_{1 \leq i \leq n} \chi(B_i,x) = \sum_{1 \leq i \leq n}
      \chi(B_i,t) \geq n -1$;
    \item either $n=1$ or $\forall 1 \leq i \leq n.\ \chi(B_i,x) +
      \chi(B_i,t) > 0$.
 \end{enumerate}
\end{theorem}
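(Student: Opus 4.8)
The plan is to unfold the definitions so that the statement becomes a purely graph-theoretic realizability question. Recall that $B \in \mgu_\lp(S,x/t)$ holds exactly when there is a sharing graph $G$ for $S$ and $x/t$ with $\res(G)=B$; if $G$ has $n$ nodes labelled $B_1,\dots,B_n$ and we abbreviate $d^+_i=\chi(B_i,x)$ and $d^-_i=\chi(B_i,t)$, then the degree condition in the definition of sharing graph gives node $i$ out-degree $d^+_i$ and in-degree $d^-_i$, while $\res(G)=\multisum_{1\le i\le n} B_i$. For the forward implication I would simply read the three conditions off such a $G$. Condition~(1) is immediate from the definition of $\res$. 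For~(2), every edge has exactly one source and one target, so $\sum_i d^+_i=\card{E_G}=\sum_i d^-_i$, and a connected graph on $n$ nodes has at least $n-1$ edges. For~(3), if $n\ge 2$ no node can be isolated, since an isolated node is joined to the others by no path; hence each node is incident to an edge, i.e. $d^+_i+d^-_i>0$, and the case $n=1$ is the alternative explicitly allowed by~(3).

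The substantial half is the converse: from $B_1,\dots,B_n\in S$ satisfying (1)--(3) I must build a sharing graph with these labels, the prescribed in/out-degrees, and resultant $\multisum_i B_i=B$. The case $n=1$ is trivial, using a single node labelled $B_1$ carrying $d^+_1=d^-_1$ self-loops (equal by~(2)), which is connected by default. For $n\ge 2$ I would proceed in two stages. First I realize the degree sequence by an arbitrary stub-pairing: I equip node $i$ with $d^+_i$ out-stubs and $d^-_i$ in-stubs and, since $\sum_i d^+_i=\sum_i d^-_i$ by~(2), match the out-stubs to the in-stubs by any bijection. This yields a possibly disconnected multigraph with exactly the required degrees and labels, hence with resultant $\multisum_i B_i$.

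The remaining and most delicate task is to make this realization connected without disturbing degrees or labels, and here lies the main obstacle. The tool is the degree-preserving $2$-switch: given two edges $a\ra b$ and $c\ra d$ in two distinct connected components, replace them by $a\ra d$ and $c\ra b$. This leaves every in-degree, out-degree and label (hence $\res$) unchanged, and since $a$ was connected to $b$ and $c$ to $d$, the two components merge, strictly decreasing the number of components. To keep applying the step I only need every component to contain at least one edge, and this is exactly where condition~(3) is used: for $n\ge 2$ it forces $d^+_i+d^-_i>0$ for every $i$, so each node is incident to some edge (a self-loop suffices) and no component is edge-free. Iterating the $2$-switch thus terminates in a connected multigraph, which by construction is a sharing graph for $S$ and $x/t$ with resultant $B$. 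The points that need care are verifying that a $2$-switch between distinct components really lowers the component count, checking that a node carrying only self-loops still counts as edge-incident, and the $n-1$ edge bound invoked in the forward direction; conditions~(2) and~(3) are precisely what make the converse construction succeed.
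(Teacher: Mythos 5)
Your forward direction is correct and is essentially the paper's own argument. The genuine gap is in the converse, at exactly the point you flag but never verify: it is \emph{not} true that an arbitrary $2$-switch between edges in two distinct components strictly decreases the number of components. Your justification (``since $a$ was connected to $b$ and $c$ to $d$, the two components merge'') breaks down because after the switch the edge $a \ra b$ is gone, so $a$ need no longer be connected to $b$. If $a \ra b$ is a bridge of its component $K_1$ and $c \ra d$ a bridge of $K_2$, removing them splits $K_1$ into pieces $A \ni a$, $B \ni b$ and $K_2$ into $C \ni c$, $D \ni d$, and the new edges $a \ra d$, $c \ra b$ reassemble these four pieces into the two components $A \cup D$ and $B \cup C$: the count does not drop. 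This situation occurs inside the theorem's hypotheses: take the binding $x/y$ and $B_1=xu_1$, $B_2=yu_2$, $B_3=xu_3$, $B_4=yu_4$, $B_5=x^2y^2u_5$, so the degree pairs $(\chi(B_i,x),\chi(B_i,y))$ are $(1,0),(0,1),(1,0),(0,1),(2,2)$ and conditions (1)--(3) hold with $m=4=n-1$. The stub-pairing $1 \ra 2$, $3 \ra 4$ plus two self-loops at node $5$ realizes these degrees with three components; switching $1\ra 2$ with $3 \ra 4$ gives $1 \ra 4$, $3 \ra 2$, still three components, and switching again returns to the start, so your iteration can cycle forever. A tell-tale sign of the gap is that your converse uses only the equality in condition (2), never the inequality $\geq n-1$, although that inequality is precisely what separates degree sequences admitting a connected realization from those (like $1\ra 2$, $3 \ra 4$ alone) that are realizable but never connectedly so.

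The repair needs exactly that inequality. Switches preserve the edge count $m=\sum_i \chi(B_i,x) \geq n-1$. If the current realization has $k \geq 2$ components and every component were acyclic in the undirected sense (a self-loop or a pair of parallel edges counting as a cycle), then $m = n - k \leq n-2 < n-1$, a contradiction; hence some component $K$ contains a non-bridge edge $a \ra b$. Choose \emph{that} edge as the first one, and any edge $c \ra d$ of any other component $K'$ (which exists by condition (3)) as the second. Now $K$ minus $a \ra b$ is still connected, and the new edges $a \ra d$, $c \ra b$ attach both pieces of $K'$ to it, so $K \cup K'$ becomes one component and the count strictly decreases. With this choice rule your construction does go through, and it is then a genuinely different proof from the paper's: the paper first builds a connected tree with $n-1$ edges respecting the zero-degree constraints (its ``pre-sharing graph''), then repairs the degrees by an induction on an ``unbalancement factor'' that keeps connectivity, and finally adds the remaining edges; you build the degrees first and repair connectivity afterwards.
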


Following the above theorem, we can give an algebraic characterization
of the abstract unification operator as follows.

\begin{corollary}[Algebraic characterization of $\mgu_\omega$]
  \label{cor:mgu-algebraic}
  Given a set of $\omega$-sharing groups $S$ and a binding $x/t$, we have that
  \[
  \begin{split}
  \mgu_\lp(S,&\ x/t)= (S\setminus \relev(S,x,t)) \cup \\
  & \left\{ \multisum \calS \mid \calS \in
    \mwp(\relev(S,x,t)), \sum_{B \in \calS} \chi(B,x) = \sum_{B
      \in \calS} \chi(B,t) \geq \card{\calS}-1 \right\} ,
  \end{split}
  \]
   where
  \[
    \begin{split}
    \relev(S,x,t) &= \{B\in S.\ \chi(B,x) + \chi(B,t) > 0\} \\
      &=\{ B \in S.\ \supp{B} \cap \vars(x/t) \neq \emptyset \}
      \enspace .
    \end{split}
  \]
\end{corollary}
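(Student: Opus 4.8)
The plan is to derive this characterization directly from Theorem~\ref{th:algebraic}, which already describes membership in $\mgu_\lp(S,x/t)$ through a multiset of generating groups $B_1,\dots,B_n$; what remains is purely to reorganize those three conditions around the notion of relevance. First I would record the elementary computation underlying the two stated descriptions of $\relev(S,x,t)$: since $x$ is a variable, $\chi(B,x)=B(x)$, while $\chi(B,t)=\sum_{v\in\supp{B}}B(v)\cdot\occ(v,t)$ counts the variables of $B$ occurring in $t$. Hence $\chi(B,x)+\chi(B,t)>0$ holds exactly when $\supp{B}$ meets $\{x\}\cup\vars(t)=\vars(x/t)$, which is the second description of $\relev(S,x,t)$. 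In particular an \emph{irrelevant} group $B$ (one not in $\relev(S,x,t)$) satisfies $\chi(B,x)=\chi(B,t)=0$.

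The argument then proceeds by double inclusion against Theorem~\ref{th:algebraic}. For the inclusion of the right-hand side into $\mgu_\lp(S,x/t)$ I would treat the two parts of the union separately. If $B\in S\setminus\relev(S,x,t)$, then taking $n=1$ and $B_1=B$ satisfies the three conditions of Theorem~\ref{th:algebraic}: condition~1 is trivial, condition~2 reads $0=0\geq 0$ because $B$ is irrelevant, and condition~3 holds since $n=1$. If instead $B=\multisum\calS$ for some $\calS\in\mwp(\relev(S,x,t))$ meeting the degree constraint, I would enumerate the elements of $\calS$ with their multiplicities as $B_1,\dots,B_n\in\relev(S,x,t)\subseteq S$, so that $n=\card{\calS}$; conditions~1 and~2 are then exactly the hypotheses on $\calS$, and condition~3 holds because every $B_i$ is relevant, hence $\chi(B_i,x)+\chi(B_i,t)>0$.

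For the reverse inclusion I would start from $B\in\mgu_\lp(S,x/t)$ and take $B_1,\dots,B_n$ witnessing it via Theorem~\ref{th:algebraic}. The case split is on $n$. When $n\geq 2$, condition~3 forces every $B_i$ to be relevant, so $\calS=\multil B_1,\dots,B_n\multir$ lies in $\mwp(\relev(S,x,t))$ and conditions~1--2 place $B$ in the second part of the union. When $n=1$, condition~2 degenerates to $\chi(B_1,x)=\chi(B_1,t)$; if $B_1$ is irrelevant this common value is $0$ and $B=B_1\in S\setminus\relev(S,x,t)$, whereas if $B_1$ is relevant then $\calS=\multil B_1\multir$ satisfies the degree constraint with $\card{\calS}-1=0$ and again lands $B$ in the second part.

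The content here is bookkeeping rather than a new idea, so I do not expect a serious obstacle; the one point requiring care is the $n=1$ case, which must be distributed correctly across the two summands of the union. Specifically, a single relevant group $B$ with $\chi(B,x)=\chi(B,t)>0$ does belong to the result, and it is recovered not as an unchanged element of $S$ but as the degenerate multiset sum $\multisum\multil B\multir$; conversely an irrelevant group is precisely the one that passes through untouched. Checking that these two mechanisms partition the $n=1$ witnesses according to relevance, together with the preliminary identity $\chi(B,x)=B(x)$, is all that the proof really needs.
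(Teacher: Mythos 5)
Your proof is correct and follows exactly the route the paper intends: the corollary is stated there without a separate proof, as the immediate relevance-based repackaging of Theorem~\ref{th:algebraic} that you carry out, including the identity $\chi(B,x)+\chi(B,t)>0 \iff \supp{B}\cap\vars(x/t)\neq\emptyset$ and the correct distribution of the $n=1$ witnesses across the two parts of the union. The only step that technically escapes your enumeration argument is the degenerate multiset $\calS=\emptymulti$, whose sum $\emptymulti$ cannot be matched by Theorem~\ref{th:algebraic} with $n=0$ (the theorem requires $n\in\Natzero$); this is harmless because Definition~\ref{def:pre-mguomega} assumes $[S]_U\in\Linp$, so $\emptymulti\in S$ whenever $S\neq\emptyset$, and $\emptymulti$ then enters $\mgu_\lp(S,x/t)$ through your irrelevant-group mechanism with $n=1$ and $B_1=\emptymulti$.
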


\begin{example}
  Consider $S=\{xa, xb, z^2, zc\}$ and the equation $x=z$.  Then if we
  choose $\calS=\multil xa, xb, z^2 \multir$, we have
  $\sum_{B \in \calS} \chi(B,x)=2=  \sum_{B \in \calS}  \chi(B,z) \geq \card{\calS}-1$. Therefore $x^2 z^2 ab \in
  \mgu_\lp(S,x/z)$. If we take $\calS=\multil xa, xb, zc, zc \multir$,
  although $\sum_{B \in \calS} \chi(B,x)=2=  \sum_{B \in \calS}  \chi(B,z)$, we have $\card{\calS}-1=3$. This only
  proves that $z^2 c^2 x^2 ab$ cannot be obtained by the multiset $\calS$.
  If we check for every possible multiset over $S$, we have that $z^2
  c^2 x^2 ab \notin \mgu_\lp(S,x/z)$.  \exproofbox
\end{example}

This characterization of the abstract mgu will be the key point for devising the optimal abstract unification operators on the abstractions of $\Linp$.
Let $\alpha$ be the abstraction function from $\Isubst_\sim$ to an abstract domain $A$.
If we are able to factor $\alpha$ through a Galois connection $\langle \alpha': \Linp \ra A, \gamma': A \ra \Linp \rangle$ as $\alpha=\alpha'\circ \alpha_\omega$, then the optimal abstract unification for $\alpha$ is exactly $\alpha' ( \mgu_\omega (\gamma'(\cdot), \cdot))$. However, this expression is helpful when it may be simplified in order to use only objects in $A$.  Our algebraic characterization makes the simplification feasible, as we show in the following section.

\section{Practical domains for program analysis}
\label{sec:practical}

We consider two domains for sharing analysis
with linearity information, namely, the domain proposed in
\cite{King94} and the classical reduced product $\Sharing \times \Lin$.
They are defined as abstractions of $\Linp$ through Galois insertions. 
This allows us to design
optimal abstract operators for both of them, by exploiting the results
introduced so far.
By composing each Galois insertion with $\alpha_\lp$, we get the corresponding abstraction function for substitutions \cite[Sect.~4.2.3.1]{CousotC92lp}.

\subsection{King's domain for linearity and aliasing}
\label{sec:practical1}

We first consider the domain for combined analysis of sharing and
linearity in \cite{King94}. The idea is to enhance the domain $\Sharing$ by annotating each sharing group with linearity information on each variable. For instance, the object $xy^\infty z$ represents the sharing group $xyz$ and the information that $y$ may be non-linear (while $x$ and $z$ are definitely linear).
The objects in this domain can be easily viewed as abstraction of $\omega$-sharing groups. Intuitively, in order to abstract an $\omega$-sharing groups, one simply needs to replace each exponent equal to or greater than $2$ with $\infty$. Let us now formalize the domain as an abstraction of $\Linp$.

An $\omega$-sharing group (which is a multiset $\var \fun \Nat$ whose support is finite) is abstracted into a map $o: \var \fun \{0,1,\infty\}$ such that its support $\supp{o}=\{ v \in
\var \mid o(v) \neq 0\}$ is finite. We call such a map the \emph{2-sharing group}.
We use a polynomial notation for 2-sharing groups as
for $\omega$-sharing groups. For instance, $o=xy^\infty z$ denotes
the 2-sharing group whose support is
$\supp{o}=\{x,y,z\}$, such that $o(x)=o(z)=1$ and $o(y)=\infty$. We denote with $\emptyset$ the 2-sharing group with
empty support. Note that in \cite{King94} the number $2$ is used as an
exponent instead of $\infty$, but we prefer this notation to be
coherent with $\omega$-sharing groups.

We denote $\min\{o(x),2\}$ by $o_m(x)$, where the ordering on $\Nat$ is extended in the obvious way, \ie for all $n\in \Nat$ we have that $n < \infty$. A 2-sharing group $o$ represents the sets $\gamma_\an(o)$
of $\omega$-sharing group given by:
\[
\gamma_\an(o)=\{ B \in \mwp(\var) \mid \supp{o}=\supp{B} \wedge \forall
x \in \supp{o}. o_m(x) \leq B(x) \leq o(x) \}\enspace .
\]
For instance, the 2-sharing group $xy^\infty z$ represents the set of $\omega$-sharing groups $\{xy^{2} z,xy^{3} z,xy^{4} z,xy^{5} z,\ldots\}$. The idea is to use 2-sharing groups to keep track of linearity: if
$o(x)=\infty$, it means that the variable $x$ is not linear in the 2-sharing group $o$. In the rest of this section, we use the term ``sharing group'' as a short form of 2-sharing group, when this does not cause ambiguity.

An $\omega$-sharing group $B$ may be abstracted into the 2-sharing group $\alpha_\an(B)$
given by:
\[
\alpha_\an(B)=\lambda v\in \supp{B}.
  \begin{cases}
    1  \text{ if $B(x)=1$,}\\
    \infty  \text{ otherwise.} 
  \end{cases}
\]
The next proposition shows two useful properties of the maps $\alpha_\an$ and $\gamma_\an$. 

\begin{proposition}
\label{prop:abstraction}
The following properties hold:
\begin{enumerate}
\item \label{eq:abstraction1} $\alpha_\an(\bigmultisum \calS)=
  \Andybin \alpha_\an(\calS)$.
\item \label{eq:abstraction2} $\relev(\gamma_\an(S),x,t)) =
  \gamma_\an( \relev(S,x,t)) $.
\end{enumerate}
\end{proposition}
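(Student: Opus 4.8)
The statement comprises two independent facts about the maps $\alpha_\an$ and $\gamma_\an$, and I would prove them separately. The first part asserts that $\alpha_\an$ is essentially a monoid homomorphism from $(\mwp(\var),\multisum)$ to the 2-sharing groups under $\andybin$; the second part is a set-theoretic identity that follows from the fact that $\gamma_\an$ preserves supports.

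For the first part, since both $\multisum$ and $\andybin$ are associative and commutative with neutral elements $\emptymulti$ and $\emptyset$ respectively, and since $\alpha_\an(\emptymulti)=\emptyset$, it suffices to establish the binary case $\alpha_\an(B_1 \multisum B_2) = \alpha_\an(B_1) \andybin \alpha_\an(B_2)$ and then conclude by induction on $\card{\calS}$. The binary case I would verify pointwise: fix a variable $v$ and compare $\alpha_\an(B_1 \multisum B_2)(v)$, which depends only on whether $B_1(v)+B_2(v)$ equals $0$, equals $1$, or is at least $2$, with $(\alpha_\an(B_1) \andybin \alpha_\an(B_2))(v)=\alpha_\an(B_1)(v)+\alpha_\an(B_2)(v)$ computed in the saturating arithmetic of $\{0,1,\infty\}$. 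A short case analysis on the pair $(B_1(v),B_2(v))$ — both zero, exactly one equal to $1$ with the other zero, one at least $2$, or both at least $1$ — shows the two sides agree in every case, and the supports match accordingly.

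For the second part I would unfold both sides using $\gamma_\an(S)=\bigcup_{o \in S}\gamma_\an(o)$ together with the support characterization of relevance recorded in Corollary~\ref{cor:mgu-algebraic}, namely that a group is relevant iff its support meets $\vars(x/t)$. The key observation is that every $B \in \gamma_\an(o)$ satisfies $\supp{B}=\supp{o}$, directly from the definition of $\gamma_\an$. Hence, for any $B \in \gamma_\an(o)$ with $o \in S$, the condition $\supp{B}\cap\vars(x/t)\neq\emptyset$ is equivalent to $\supp{o}\cap\vars(x/t)\neq\emptyset$. A routine double inclusion then closes the argument: a $B$ on the left lies in $\gamma_\an(o)$ for some $o \in S$ whose support meets $\vars(x/t)$, hence $o \in \relev(S,x,t)$ and $B$ lies on the right; the converse is symmetric.

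The proofs are essentially bookkeeping, so the only real care is needed in the first part, where the homomorphism property crucially relies on $\andybin$ being the \emph{saturating} sum — in particular $1 \andybin 1 = \infty$ — so that the information lost by $\alpha_\an$, which cannot distinguish a multiplicity of $2$ from higher ones, is exactly compensated. I would make sure this saturating semantics of $\andybin$ is the one in force, since otherwise the equality would already fail in the case where both $B_1(v)$ and $B_2(v)$ equal $1$.
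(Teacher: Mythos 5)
Your proof is correct and takes essentially the same route as the paper's: for part 1 the paper performs exactly your pointwise case analysis in the saturating arithmetic of $\andybin$, only applied directly to the $n$-ary sum rather than via the binary case plus induction, and for part 2 it uses precisely your key observation that $\gamma_\an$ preserves supports (so relevance of any $B \in \gamma_\an(o)$ is equivalent to relevance of $o$), written as a chain of equalities instead of a double inclusion.
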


Since we do not want to represent definite
non-linearity, we introduce an order relation over sharing groups as
follows:
\[
o \leq o' \iff \supp{o}=\supp{o'} \wedge
  \forall x \in \supp{o}.\ o(x) \leq o'(x)  \enspace ,  
\]
and we restrict our attention to downward closed sets of sharing
groups. We denote by $\Andysh(V)$ the set of 2-sharing groups whose support is
a subset of $V$.  The domain we are interested in is the following:
\[
 \ShLinp=\bigl\{ [S]_U \mid S \in \wp_{\downclo}(\Andysh(U)),
  U \in \fwp(\var), S \neq \emptyset \Rightarrow \emptyset \in S \bigr \}
  \enspace ,
\]
where $\wp_{\downclo}(\Andysh(U))$ is the powerset of downward closed
subsets of $\Andysh(U)$ according to $\leq$ and $[S_1]_{U_1} \leq_\an
[S_2]_{U_2}$ iff $U_1=U_2$ and $S_1 \subseteq S_2$.
For instance, the set $\{xy^\infty z\}$ is not downward closed, while $\{xyz,xy^\infty z\}$ is downward closed.
There is a Galois
insertion of $\ShLinp$ into $\Linp$ given by the pair of adjoint maps
$\gamma_\an: \ShLinp \fun \Linp$  and  $\alpha_\an: \Linp \fun \ShLinp$:
\begin{align*}
  \gamma_\an([S]_U)&=\left[ \bigcup \{ \gamma_\an(o) \mid o \in S \} \right]_U
  \enspace ,\\
  \alpha_\an([S]_U)&=\left[ \downclo \{ \alpha_\an(B) \mid B \in S \} \right]_U
  \enspace .
\end{align*}
With an abuse of notation, we also apply $\gamma_\an$ and $\alpha_\an$ to subsets of $\omega$-sharing groups and 2-sharing groups
respectively, by ignoring the set of variables of interest. For instance, $\gamma_\an(\{xyz,xy^\infty z\})= \{xyz, xy^{2} z,xy^{3} z,xy^{4} z,xy^{5} z,\ldots\}$. 
\begin{theorem}\label{th:linp-galois}
 The pair $\langle \alpha_\an, \gamma_\an \rangle$ is a Galois insertion.
\end{theorem}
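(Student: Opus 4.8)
The plan is to verify the standard characterization of a Galois insertion: that $\alpha_\an$ and $\gamma_\an$ are monotone, that the composite $\gamma_\an\circ\alpha_\an$ is extensive (i.e.\ $\id\leq_\omega\gamma_\an\circ\alpha_\an$), and that $\alpha_\an\circ\gamma_\an=\id$ on $\ShLinp$ — the last identity simultaneously delivers reductivity and the defining property of an \emph{insertion}. Since $\alpha_\an$, $\gamma_\an$ and the order relations $\leq_\omega$, $\leq_\an$ all preserve and only compare objects sharing the same set $U$ of variables of interest, I would fix $U$ once and for all and argue purely about (downward-closed) sets of $\omega$-sharing groups and $2$-sharing groups over $U$. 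With this reduction, monotonicity is immediate: both maps act by applying a fixed pointwise operation to every element of the input set and then taking a union (resp.\ a downward closure), so $S_1\subseteq S_2$ is manifestly preserved in each direction. I would also record here that $\gamma_\an$ lands in $\Linp$ and $\alpha_\an$ lands in $\ShLinp$: the $\downclo$ in the definition of $\alpha_\an$ guarantees downward-closedness, and since $[S]_U\in\Linp$ forces $\emptymulti\in S$ (when $S\neq\emptyset$) and $\alpha_\an(\emptymulti)=\emptyset$, the required $\emptyset\in\alpha_\an([S]_U)$ holds.

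For extensivity I would argue pointwise on a single $\omega$-sharing group $B$ and show $B\in\gamma_\an(\alpha_\an(B))$; lifting this to $[S]_U$ is then automatic, because $\alpha_\an(B)$ belongs to the set $\downclo\{\alpha_\an(B')\mid B'\in S\}$ over which $\gamma_\an$ takes its union. The verification of $B\in\gamma_\an(\alpha_\an(B))$ splits into two cases according to the definition of $\alpha_\an$. If $B(x)=1$ then $\alpha_\an(B)(x)=1$, so $(\alpha_\an(B))_m(x)=1$ and the constraint $1\leq B(x)\leq 1$ holds; if $B(x)\geq 2$ then $\alpha_\an(B)(x)=\infty$, so $(\alpha_\an(B))_m(x)=\min\{\infty,2\}=2$ and the constraint $2\leq B(x)\leq\infty$ holds. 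In both cases $\supp{\alpha_\an(B)}=\supp{B}$, so $B$ satisfies exactly the membership condition defining $\gamma_\an(\alpha_\an(B))$.

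The insertion identity rests on one clean observation, which I would isolate as the core lemma: \emph{for every $2$-sharing group $o$ and every $B\in\gamma_\an(o)$ we have $\alpha_\an(B)=o$.} Indeed, when $o(x)=1$ the bounds $o_m(x)=1\leq B(x)\leq 1$ force $B(x)=1$, hence $\alpha_\an(B)(x)=1=o(x)$; when $o(x)=\infty$ the bound $B(x)\geq o_m(x)=2$ gives $\alpha_\an(B)(x)=\infty=o(x)$. Together with the fact that $\gamma_\an(o)$ is always nonempty (the assignment $B(x)=o_m(x)$ is a witness), this yields $\{\alpha_\an(B)\mid B\in\gamma_\an(o)\}=\{o\}$. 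Summing over $T$ then gives $\{\alpha_\an(B)\mid B\in\bigcup_{o\in T}\gamma_\an(o)\}=\bigcup_{o\in T}\{o\}=T$, and since $T$ is downward closed, $\downclo T=T$; hence $\alpha_\an(\gamma_\an([T]_U))=[T]_U$.

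The only genuinely delicate point — and the one I expect to require the most care — is the bookkeeping around $o_m$ versus $o$ and the arithmetic with $\infty$, since it is precisely the interaction of the two bounds $o_m(x)\leq B(x)\leq o(x)$ that makes $\alpha_\an$ constant on each fibre $\gamma_\an(o)$ and thereby turns the reductive inequality into an equality. The role of downward-closedness is essential here: without restricting $\ShLinp$ to $\wp_\downclo(\Andysh(U))$ one would obtain $\downclo T$ rather than $T$, and $\alpha_\an\circ\gamma_\an$ would fail to be the identity. The remaining assembly — combining monotonicity, extensivity and the identity $\alpha_\an\circ\gamma_\an=\id$ into the adjunction $\alpha_\an([S]_U)\leq_\an[T]_U\iff[S]_U\leq_\omega\gamma_\an([T]_U)$ — is then routine and uses only monotonicity and the two composite inequalities.
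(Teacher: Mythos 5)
Your proof is correct and takes essentially the same route as the paper's: monotonicity, extensivity via $B\in\gamma_\an(\alpha_\an(B))$, and the insertion identity via the observation that $\alpha_\an(\gamma_\an(o))=\{o\}$ for each $2$-sharing group $o$. The paper simply asserts these three facts without proof, whereas you supply the case analysis on $o_m$ versus $o$ and the fibre lemma that justify them.
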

Now we may define the optimal mgu for $\ShLinp$ and single binding substitutions as follows:
\begin{definition}[Unification for $\ShLinp$]
  Given $[S]_U \in \ShLinp$ and the binding $x/t$, we define
  \[
    \mgu_\an([S]_U,x/t)= \alpha_\an(\mgu_\lp(\gamma_\an([S]_U),x/t)) \enspace .
  \]
\end{definition}
By construction, $\mgu_\an$ is the optimal abstraction of $\mgu_\lp$, hence also of $\mgu$. In the case where $\vars(x/t) \subseteq U$, by using additivity of $\alpha_\an$ we
get:
\begin{multline}
  \label{eq:mguandythefirst}
  \mgu_\an([S]_U,x/t)=\Bigl[\alpha_\an(\gamma_\an(S) \setminus
  \relev(\gamma_\an(S),x,t)) \cup \\
  \alpha_\an \Bigl(\{ \multisum \calS \mid \calS \in
  \mwp(\relev(\gamma_\an(S),x,t)),\\
  \sum_{B \in \calS} \chi(B,x) = \sum_{B \in \calS}
  \chi(B,t) \geq \card{\calS}-1 \}\Bigr)\Bigr]_U \enspace .
\end{multline}



Now we want to simplify Eq.~\ref{eq:mguandythefirst}.  In particular we
would like to get rid of the abstraction and concretization maps
and to express the result using only objects and operators in $\ShLinp$.
Therefore, we need to define operations in $\ShLinp$ which correspond
to $\uplus$ and $\chi$ in $\Linp$.

The operation on 2-sharing groups which corresponds to multiset union
on $\omega$-sharing groups, is given by
\[
  o \andybin o' = \lambda v \in \var. o(v) \oplus o'(v) \enspace ,
\]
where $0 \oplus x= x \oplus 0=x$ and $\infty \oplus x=x \oplus
\infty=1 \oplus 1= \infty$. We will use $\Andybin \multil o_1, \ldots,
o_n \multir$ for $o_1 \andybin \cdots \andybin o_n$.  Given a sharing
group $o$, we also define the \emph{delinearization} operator:
\begin{equation}
 \label{eq:delinearization}
 o^2=o \andybin o  \enspace .
\end{equation}
Note that $o^2=\lambda x\in \supp{o}.\infty$. The
operator is extended pointwise to sets and multisets.

A fundamental role is played by the notion of
multiplicity of a sharing group in a term.  While the multiplicity of
an $\omega$-sharing group in a term is a single natural number, every
object in $\ShLinp$ represents a set of $\omega$-sharing groups, hence
its multiplicity should be a set of natural numbers.  Actually, it is
enough to consider intervals.  We define the minimum $\chi_m$ and
maximum $\chi_M$ multiplicity of $o$ in $t$ as follows:
\[
  \chi_m(o,t)=\sum_{v \in \supp{o}} o_m(v) \cdot \occ(v,t) \qquad
  \chi_M(o,t)=\sum_{v \in \supp{o}} o(v) \cdot \occ(v,t) \enspace .
\]
Sum and product on integers are lifted in the obvious way, namely, the
sum is $\infty$ if and only if at least one of the addenda is $\infty$
and $n \cdot \infty = \infty \cdot n = \infty$ for any $n\in
\Natzero$, while $0 \cdot \infty = \infty \cdot 0 = 0$.
 The maximum multiplicity $\chi_M(o,t)$ either is equal to
the minimum multiplicity $\chi_m(o,t)$ or it is infinite.
Note that, if $B$ is an $\omega$-sharing group represented
by $o$, \ie $B \in \gamma_\an(o)$, then $\chi_m(o,t) \leq \chi(B,t)
\leq \chi_M(o,t)$.  Actually, not all the values between $\chi_m(o,t)$
and $\chi_M(o,t)$ may be assumed by $\chi(B,t)$.
\begin{example}
  Let $o=x^\infty$ and $t=f(x,x)$. According, to our definition, $\chi(o,t)=[4,\infty)$. However, it is obvious that if $B \in \gamma_2(o)$, then $\chi(B,t)$ is an even number.   \exproofbox
\end{example}

According to the above definitions,
we define the multiplicity of a multiset of sharing groups as
\[
  \chi(Y,t)=\Big\{n\in \Nat~|~\sum_{o \in Y} \chi_m(o,t) \leq n \leq
  \sum_{o \in Y} \chi_M(o,t) \Big\}
  \enspace .
\]
Even if this is a superset of all the possible values which
can be obtained by combining the multiplicities of all the sharing
groups in $Y$, this definition is
sufficiently accurate to allow us to design the optimal abstract
unification.

We extend in the obvious way the definition of $\relev$ (see Corollary \ref{cor:mgu-algebraic}) from $\omega$-sharing groups to 2-sharing groups, \ie
  \[
\relev(S,x,t)=\{ o \in S \mid \supp{o} \cap \vars(x/t) \neq \emptyset\} \enspace ,
  \]
and we prove the following
\begin{theorem}[Characterization of abstract unification for $\ShLinp$]
  \label{th:pre-mguandy}
  Given $[S]_U \in \Linp$ and the binding $x/t$ with $\vars(x/t) \subseteq U$, we have that
  \[  
   \begin{split}
      \mgu_\an([S]_U,x/t) & = [(S\setminus S') \cup \\
      & \downclo \bigl\{ \Andybin Y \mid Y \in \mwp(S'),
      n \in \chi(Y,x) \cap \chi(Y,t).\ n \geq \card{Y}-1
   \bigr\}]_U \enspace ,
  \end{split}
  \]
  where $S'=\relev(S,x,t)$.
\end{theorem}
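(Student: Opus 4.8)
The plan is to derive the characterization directly from the optimality formula \eqref{eq:mguandythefirst} together with the algebraic characterization of $\mgu_\lp$ (Corollary~\ref{cor:mgu-algebraic}) and the compatibility properties of $\alpha_\an$ in Proposition~\ref{prop:abstraction}. Since $\mgu_\an([S]_U,x/t)=\alpha_\an(\mgu_\lp(\gamma_\an([S]_U),x/t))$ by definition, and Corollary~\ref{cor:mgu-algebraic} already splits $\mgu_\lp$ into the ``irrelevant'' part $\gamma_\an(S)\setminus\relev(\gamma_\an(S),x,t)$ and the ``combination'' part consisting of sums $\multisum\calS$ over relevant $\omega$-sharing groups, I would handle the two parts separately and then apply $\alpha_\an$.

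First I would treat the irrelevant part. Using Proposition~\ref{prop:abstraction}(\ref{eq:abstraction2}), which states $\relev(\gamma_\an(S),x,t)=\gamma_\an(\relev(S,x,t))$, the set $\gamma_\an(S)\setminus\relev(\gamma_\an(S),x,t)$ equals $\gamma_\an(S\setminus S')$ where $S'=\relev(S,x,t)$. Applying $\alpha_\an$ and using that $\langle\alpha_\an,\gamma_\an\rangle$ is a Galois insertion (Theorem~\ref{th:linp-galois}), so $\alpha_\an\circ\gamma_\an$ is the identity on downward-closed sets, this part collapses exactly to the $S\setminus S'$ appearing in the statement. The main work is the combination part: I must show
\[
\alpha_\an\Bigl(\bigl\{\,\multisum\calS \mid \calS\in\mwp(\gamma_\an(S')),\ \textstyle\sum_{B\in\calS}\chi(B,x)=\sum_{B\in\calS}\chi(B,t)\geq\card{\calS}-1\,\bigr\}\Bigr)
\]
equals the displayed downward closure of the sets $\Andybin Y$ with $Y\in\mwp(S')$ and $n\in\chi(Y,x)\cap\chi(Y,t)$, $n\geq\card Y-1$.

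For this equality I would argue by mutual inclusion, exploiting Proposition~\ref{prop:abstraction}(\ref{eq:abstraction1}), $\alpha_\an(\multisum\calS)=\Andybin\alpha_\an(\calS)$, which lets me push $\alpha_\an$ through the multiset sum. Given a concrete multiset $\calS$ of $\omega$-sharing groups drawn from $\gamma_\an(S')$, each $B\in\calS$ lies in some $\gamma_\an(o)$ with $o\in S'$; collecting these $o$'s into a multiset $Y$ gives $\alpha_\an(\multisum\calS)=\Andybin Y$, and the constraint $\sum_B\chi(B,x)=\sum_B\chi(B,t)$ forces a common value $n$ that must lie in the interval $\chi(Y,x)\cap\chi(Y,t)$ by the bound $\chi_m(o,t)\leq\chi(B,t)\leq\chi_M(o,t)$ noted before the statement, with $n\geq\card\calS-1=\card Y-1$. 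Conversely, given $Y$ and an admissible $n$, I would reconstruct a witnessing $\calS$ by choosing, for each $o\in Y$, a representative $\omega$-sharing group $B\in\gamma_\an(o)$ whose multiplicities realize the needed split summing to $n$; here the crucial structural fact that $\chi_M(o,t)$ is either $\chi_m(o,t)$ or $\infty$ guarantees that every integer $n$ in the interval is attainable in aggregate even though individual $\chi(B,t)$ may skip values (as the example with $x^\infty$ warns). The downward closure on the right is needed because $\alpha_\an$ of a set is defined with $\downclo$, so I must verify that delinearization arising from exponents $\geq 2$ being merged is captured.

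The hard part will be the converse reconstruction: matching an abstract witness $(Y,n)$ back to a concrete multiset $\calS$ whose per-variable multiplicities are simultaneously consistent with the required sum $n$ on both the $x$-side and the $t$-side while respecting the $\gamma_\an$ membership constraints $o_m(v)\leq B(v)\leq o(v)$. The subtlety is that the two constraints $\sum_B\chi(B,x)=n=\sum_B\chi(B,t)$ must be met with a single choice of each $B$, not independently, so I would need a careful counting argument (possibly distributing the ``excess'' beyond the minimum multiplicities across the infinite-exponent variables) to show the two sums can be forced equal to the same $n$. I expect this bookkeeping, together with confirming that the $\card Y-1$ connectivity bound transfers correctly under abstraction, to be where the real care is required; the remaining manipulations are routine applications of the additivity and $\relev$-compatibility lemmas.
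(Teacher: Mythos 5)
Your setup is sound as far as it goes: collapsing the irrelevant part via Proposition~\ref{prop:abstraction}(\ref{eq:abstraction2}) and the Galois insertion, pushing $\alpha_\an$ through multiset sums via Proposition~\ref{prop:abstraction}(\ref{eq:abstraction1}), and the forward inclusion (concrete witness $\calS$ yields abstract witness $(Y,n)$) are all correct and match the paper. The gap is in the converse inclusion, and it is fatal as stated: the ``crucial structural fact'' you invoke --- that every integer $n$ in the interval $\chi(Y,t)$ is attainable as an aggregate sum $\sum_{B \in \calS}\chi(B,t)$ --- is false, and aggregation does not repair the divisibility gaps that the paper's own example ($o=x^\infty$, $t=f(x,x)$) warns about. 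Concretely, take the binding $x/f(z,z)$, $S=\downclo\{x,z^\infty\}$, $U=\{x,z\}$, and $Y=\multil x,x,x,x,x,z^\infty \multir \in \mwp(S')$. Then $\chi(Y,x)=\{5\}$ and $\chi(Y,t)=\{n \mid n \geq 4\}$, so $(Y,n{=}5)$ is a legitimate right-hand-side witness (since $5 \geq \card{Y}-1=5$); but any concrete multiset abstracting componentwise to $Y$ consists of five copies of $\multil x \multir$ together with one $z^k$, $k \geq 2$, giving $x$-side sum $5$ and $t$-side sum $2k$, which is even and hence never equal to $5$. So no concrete $\calS$ realizes this witness, and your reconstruction --- fixing $Y$ and $n$ and ``distributing the excess'' over infinite-exponent variables --- cannot succeed: parity obstructions are immune to redistribution.

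What rescues the theorem, and what the paper's proof actually does, is that the left-hand side is itself a downward closure, so one need not realize $(Y,n)$ directly: it suffices to exhibit a possibly \emph{different} multiset $Y'$ and an actually attainable common value $m \geq \card{Y'}-1$ such that $\Andybin Y \leq_\an \Andybin Y'$. In the example above, $Y'=Y \uplus \multil x \multir$ and $m=6$ work. In general this requires a case analysis on whether $\chi(Y,x)$ and $\chi(Y,t)$ are singletons or infinite intervals; the delicate case ($\chi(Y,t)$ infinite, $\chi(Y,x)$ a singleton $\{v\}$ with $v=\card{Y}-1$) is handled by padding $Y$ with extra copies of some $o \in Y$ having $o(x)=1$ and $\chi_M(o,t)=0$ (such an $o$ must exist by relevance), which raises the $x$-side sum and the cardinality in lockstep while leaving the $t$-side sums untouched. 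It also uses that the $x$-side sums \emph{do} fill the whole interval $\chi(Y,x)$, because $\occ(x,x)=1$, and the paper additionally decouples the $x$-side and $t$-side witnesses into two independent families (sound precisely because $x \notin \vars(t)$). This padding-plus-downward-closure argument is the genuine content of the proof and is exactly what is missing from your sketch.
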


\begin{example}
  Let $S=\downclo \{\emptyset, ux^\infty, vx^\infty, x^\infty y, z^\infty \}$
  and $Y=\multil ux^\infty, vx^\infty, xy, z^\infty \multir$. We
  have that $\chi(Y,x)=\{ n \mid n \geq 5 \}$ and $\chi(Y,f(z,z))= \{ n
  \mid n \geq 4 \}$. Since $f(z,z)$ contains two occurrences of $z$,
  the ``actual'' multiplicity of the sharing group $z^\infty $ in
  $f(z,z)$ should be a multiple of $2$. But we do not need to check
  this condition and can safely approximate this set with $\{ n \mid n
  \geq 4 \}$.  This works because we can always choose a
  number which is contained in both $\chi(Y,x)$ and $\chi(Y,t)$ and
  which is an ``actual'' multiplicity. For instance, we can take $n=6
  \in \chi(Y,x) \cap \chi(Y,f(z,z))$ and since we have $6 \geq
  3=\card{Y}-1$, we get that the sharing group $\Andybin Y=uvx^\infty
  yz^\infty$ belongs to $\mgu_\an([S]_U,x/f(z,z))$. This sharing group can
  be generated by the substitution $\{ x/f(f(u,u,y),f(v,v,y)),
  z/f(w,w,w)\}$ when the variables of interest are $\{u,v,x,y,z\}$.  \exproofbox
\end{example}

Theorem \ref{th:pre-mguandy} gives a characterization of the abstract unification over $\ShLinp$. However, it cannot be directly implemented, since one needs to check a certain condition for each element of $\mwp(\relev(S,x,t))$, which is an infinite set. Nonetheless, this is an important starting point to prove correctness and completeness of the abstract unification algorithm which we are going to introduce.


The characterization in Theorem \ref{th:pre-mguandy} may be used even when $\vars(x/t)\nsubseteq U$, if we first
enlarge the set of variables of interest in order to include all $\vars(x/t)$.
\begin{theorem}[Characterization of abstract unification with extension for $\ShLinp$]
\label{th:mguandy2}
Given $[S]_U$ in $\ShLinp$ and the binding $x/t$, let $V=\{v_1, \ldots, v_n\}$ be $\vars(x/t) \setminus U$. Then,
\[
  \mgu_\an([S]_U,x/t)= \mgu_\an([S \cup \{ v_1, \ldots, v_n \}]_{U \cup V},x/t) \enspace .
\]
\end{theorem}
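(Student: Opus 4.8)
The plan is to reduce the claimed identity to a single commutation fact: the concretization $\gamma_\an$ turns the singleton-extension performed at the level of $\ShLinp$ into exactly the singleton-extension that the underlying operator $\mgu_\lp$ performs internally on $\Linp$ (Definition~\ref{def:mguomega}). First I would unfold the definition of unification for $\ShLinp$ on both sides, writing each as $\alpha_\an$ applied to a call of $\mgu_\lp$, namely $\mgu_\an([S]_U,x/t)=\alpha_\an(\mgu_\lp(\gamma_\an([S]_U),x/t))$ and likewise for the right-hand side. It then suffices to show that the two $\Linp$-arguments fed to $\mgu_\lp$ coincide, since applying $\alpha_\an$ to equal arguments yields equal results.

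For the left-hand side, set $S^\omega=\bigcup_{o\in S}\gamma_\an(o)$, so that $\gamma_\an([S]_U)=[S^\omega]_U$. Because $\vars(x/t)$ need not be contained in $U$, the operator $\mgu_\lp$ invokes Definition~\ref{def:mguomega} and rewrites its argument to $[S^\omega\cup\{\multil v\multir\mid v\in\vars(x/t)\setminus U\}]_{U\cup\vars(x/t)}$. For the right-hand side, since $V=\vars(x/t)\setminus U$ we have $\vars(x/t)\subseteq U\cup V$, so here $\mgu_\lp$ acts with no further extension, and the only thing left to compute is $\gamma_\an([S\cup\{v_1,\ldots,v_n\}]_{U\cup V})$.

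The key step is the computation $\gamma_\an(v_i)=\{\multil v_i\multir\}$: the singleton $2$-sharing group $v_i$ satisfies $o_m(v_i)=o(v_i)=1$, which forces $B(v_i)=1$ on the support $\{v_i\}$, so its concretization is the single $\omega$-sharing group $\multil v_i\multir$. Combined with the fact that $\gamma_\an$ distributes over unions (by definition it is the union of the per-group concretizations), this gives $\gamma_\an([S\cup\{v_1,\ldots,v_n\}]_{U\cup V})=[S^\omega\cup\{\multil v_i\multir\mid 1\leq i\leq n\}]_{U\cup V}$. Since $U\cup V=U\cup\vars(x/t)$ and $\{v_1,\ldots,v_n\}=\vars(x/t)\setminus U$, this is literally the $\Linp$-argument obtained on the left-hand side, and the proof closes.

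The routine obligations I would discharge along the way are that $S\cup\{v_1,\ldots,v_n\}$ is a legitimate element of $\ShLinp$ over $U\cup V$ — it still contains $\emptyset$ and remains downward closed, because each singleton $v_i$ is minimal within its own support and hence has nothing strictly below it, and the $\leq$-relations internal to $S$ are unchanged by enlarging the ambient support — and that the variable-of-interest components match exactly. These are immediate, so the only genuine content is the singleton-concretization identity above. I expect no real obstacle: the result is essentially a bookkeeping lemma that transports the $\Linp$ extension mechanism of Definition~\ref{def:mguomega} through the Galois insertion of Theorem~\ref{th:linp-galois}.
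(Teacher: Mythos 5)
Your proposal is correct and takes essentially the same route as the paper's own proof: both unfold $\mgu_\an$ on each side via the Galois insertion and reduce the claim to the commutation $\ext_V(\gamma_\an([S]_U))=\gamma_\an([S\cup\{v_1,\ldots,v_n\}]_{U\cup V})$, whose only substantive content is the singleton identity $\gamma_\an(v_i)=\{\multil v_i\multir\}$ plus the fact that $\gamma_\an$ distributes over unions. The paper simply names the extension operator $\ext_V$ explicitly, whereas you leave it implicit inside Definition~\ref{def:mguomega}; the arguments are otherwise identical.
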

The previous theorem states that enlarging the set of variables of interest preserves optimality.

\subsection{An algorithm for abstract unification in $\ShLinp$}
\label{sec:andyalgo}


In order to obtain an algorithm from the characterization in Theorem \ref{th:pre-mguandy} we need to avoid the use of $\mwp(\relev(S,x,t))$ and to develop a procedure able to compute the resultant sharing groups by inspecting subsets (not multisets!) of $\relev(S,x,t)$. In general, any $X \subseteq \relev(S,x,t)$ yields more than one sharing group, since every element in $X$ may be considered more than once. However, since $\ShLinp$ is downward closed, it is enough to compute the maximal resultant sharing groups.

Given $X \subseteq \relev(S,x,t)$ and the binding $x/t$, assume that we are only interested in those sharing groups whose support is $\supp{\bigmultisum X}$. By joining (multiple copies of) the sharing groups in $X$, any resultant sharing group $o$ is between $\bigmultisum X$ and $\bigmultisum X^2$, \ie $\bigmultisum X\leq o \leq \bigmultisum X^2$, where $X^2$ is the pointwise extension of the delinearization operator (see Eq.~\ref{eq:delinearization}). Note that, if $X$ is badly chosen, it is possible that we are not able to generate any sharing group with this support. In this computation, the notion of multiplicity of a sharing group in a term plays a major role.

For example, given the binding $x/t$, if $\chi_M(o,x) \leq 1$ for each $o \in X$,  then $\bigmultisum X$ is a resultant sharing group only if there is a unique sharing group $o \in X$ such that $\vars(t) \cap \supp{o} \neq \emptyset$.  If there are $o_1, o_2 \in X$ such that $\chi_M(o_1,x)>1$ and
$\chi_M(o_2,t)>1$ then $\bigmultisum X$ is a resultant sharing group. Moreover, we may join two copies of each sharing group in $X$, and therefore also $\bigmultisum{X^2}$ is a resultant sharing group.

Now we can define the notions of linearity and non-linearity on the abstract domain. In addition, we also introduce a new notion of strong non-linearity. Given $X \subseteq \relev(S,x,t)$, we partition $X$ in three subsets $X_x = \{o\in X~|~\chi_M(o,t)=0\}$, $X_t = \{o\in X~|~\chi_M(o,x)=0\}$ and $X_{xt} = X \setminus (X_x \cup
X_t)$.

\begin{definition}
\label{def:andyalgo}
Given a set $S$ of sharing groups and $X \subseteq \relev(S,x,t)$, we say that $X$ is:
\begin{itemize}
\item \emph{linear} for the term $t$ if for all $o \in X$ it holds
  that $\chi_M(o,t) \leq 1$;
\item \emph{non-linear} for the term $t$ if there exists $o \in X$
  such that $\chi_M(o,t) > 1$;
\item \emph{strongly non-linear} for the term $t$ if there exists $o
  \in X$ such that $\chi_M(o,t)=\infty$ or there exists $o \in X_{xt}$ such
  that $\chi_M(o,t)>1$.
\end{itemize}
Analogously, we define linearity and non-linearity of $X$ for the variable $x$.
\end{definition}
Note that, if $t$ is a variable, the non-linear and strongly non-linear cases coincide. We now present the algorithm for computing the abstract unification in $\ShLinp$.

\begin{theorem}[Abstract unification algorithm for $\ShLinp$]
\label{th:mguandy}
Given $[S]_U \in \ShLinp$ and the  binding  $x/t$ with $\vars(x/t) \subseteq U$, we have
  \begin{equation*}
     \mgu_\an([S]_U,x/t)  = [(S\setminus S') \cup
      \downclo \bigcup_{X \subseteq S'} \res(X,x,t)]_U \enspace ,
   \end{equation*}
  where $S'=\relev(S,x,t)$ and $\res(X,x,t)$ is defined as follows:
\begin{enumerate}
\item if $X$ is non-linear for $x$ and $t$,  then $\res(X,x,t)=\{ \Andybin X^2\}$;
\item if $X$ is non-linear for $x$ and linear for $t$,
  $\card{X_x} \leq 1$ and $\card{X_t} \geq 1$, then we have
  $\res(X,x,t)=\{ (\Andybin X_x) \uplus (\Andybin X_{xt}^2) \uplus (\Andybin
  X_t^2)\}$;
\item if $X$ is linear for $x$ and strongly non-linear for $t$,
$\card{X_x} \geq 1$ and $\card{X_t} \leq 1$,
then we have
$\res(X,x,t)=\{ (\Andybin X_x^2) \uplus (\Andybin X_{xt}^2) \uplus (\Andybin X_t)\} $;
\item if $X$ is linear for $x$ and not strongly non-linear for $t$,
  $\card{X_t} \leq 1$, then we have
\[
\begin{split}
\res(X,x,t)=\{(\Andybin Z) \uplus
  (\Andybin X_{xt}^2) \uplus (\Andybin X_t) ~|~ & Z \in \mwp(X_x), \\
   & \card{Z}=\chi_M(X_t,t)=\chi_m(X_t,t), \\
   & \supp{Z} = X_x\} \enspace ;
\end{split}
\]
\item otherwise $\res(X,x,t)=\emptyset$.
\end{enumerate}
\end{theorem}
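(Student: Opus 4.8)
The plan is to derive the algorithm directly from the characterization in Theorem~\ref{th:pre-mguandy}, which already presents $\mgu_\an([S]_U,x/t)$ as $[(S\setminus S') \cup \downclo M]_U$, where $M = \{\Andybin Y \mid Y \in \mwp(S'),\ \exists n \in \chi(Y,x)\cap\chi(Y,t).\ n \geq \card{Y}-1\}$ and $S'=\relev(S,x,t)$. Since both $\downclo M$ and $\downclo\bigcup_{X\subseteq S'}\res(X,x,t)$ are downward closed, it suffices to prove they have the same maximal elements. I would split this into two inclusions: \emph{soundness}, that every element of $\res(X,x,t)$ lies in $\downclo M$, and \emph{completeness}, that every $\Andybin Y\in M$ is below some element of $\res(\supp{Y},x,t)$. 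Grouping the admissible multisets $Y$ by their support $X=\supp{Y}$ is the organizing idea, and downward closure lets me restrict attention to the \emph{maximal} resultant group achievable for each fixed support.

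The first technical step is a structural observation about $\Andybin Y$: for a multiset with support $X$, using a sharing group $o$ twice replaces its contribution $o$ by $o^2$ (all exponents $\infty$), and further copies change nothing since $o^2\andybin o^2=o^2$. Hence $\bigmultisum X \leq \Andybin Y \leq \bigmultisum X^2$, and the only controllable way to raise exponents to $\infty$ is \emph{delinearization}, i.e.\ duplicating elements. Maximizing the resultant group therefore means delinearizing as many members of $X$ as the multiplicity constraint permits. I would then translate that constraint back into the sharing-graph language of Definition~\ref{def:pre-mguomega}: a multiset $Y$ is admissible precisely when one can realize a connected multigraph on the nodes $Y$ whose out-degrees and in-degrees assume values in $\chi(\cdot,x)$ and $\chi(\cdot,t)$; the common edge count $\sum\chi(\cdot,t)=\sum\chi(\cdot,x)=n$ must lie in $\chi(Y,x)\cap\chi(Y,t)$, and connectivity of a graph on $\card{Y}$ nodes forces $n\geq\card{Y}-1$, which is exactly the stated inequality.

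With this dictionary, the case analysis is driven by the partition $X=X_x\uplus X_{xt}\uplus X_t$ into pure sources ($\chi_M(o,t)=0$), mixed nodes, and pure sinks ($\chi_M(o,x)=0$). Because $t$ here is a \emph{variable}, $\chi_M(o,x)=o(x)\in\{0,1,\infty\}$, so non-linearity for $x$ means some $o\in X$ has $o(x)=\infty$ and hence $\sup\chi(Y,x)=\infty$, giving unbounded out-degrees; the subtler analogue for the \emph{term} $t$ is what motivates the strongly non-linear condition, separating the case $\chi_M(o,t)=\infty$ from a merely repeated variable of $t$, and also explaining why the definition singles out mixed nodes $o\in X_{xt}$. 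In each of the four productive cases I would (i) exhibit an explicit admissible $Y$ — typically each element of $X$ taken twice (resultant $\Andybin X^2$) in Case~1, and mixtures of single and doubled copies in Cases~2--4 — and verify its degree sums and connectivity to show its resultant equals the prescribed member of $\res(X,x,t)$; and (ii) argue conversely that no admissible $Y$ on support $X$ can delinearize beyond the stated pattern, by bounding the available in/out-degrees against the connectivity requirement $\sum\chi(\cdot,t)\geq\card{Y}-1$. Pushing $\alpha_\an$ through $\uplus$ via Proposition~\ref{prop:abstraction} then lets me read the abstract result off the $\omega$-level computation.

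I expect the main obstacle to be the interplay between delinearization and the connectivity count in the tightest situations. Case~4 is the hardest: here $x$ is linear and $t$ is not strongly non-linear, so every admissible graph has severely bounded degrees, $\res(X,x,t)$ is not a single group but a family indexed by a multiset $Z\in\mwp(X_x)$ subject to the \emph{exact} matching $\card{Z}=\chi_M(X_t,t)=\chi_m(X_t,t)$, and I must show both that each such $Z$ yields a connected admissible graph and that there is no slack either to delinearize further or to generate supports strictly containing $\supp{Z}$. Dually, proving Case~5 (where $\res(X,x,t)=\emptyset$) amounts to showing the degree and connectivity constraints are jointly infeasible, which is the sharpest instance of the counting argument. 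Once these extremal cases are settled, the remaining verifications and the final appeal to downward closure should be routine.
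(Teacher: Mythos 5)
Your plan follows the same skeleton as the paper's proof: reduce to the characterization of Theorem~\ref{th:pre-mguandy}, prove two inclusions between the downward-closed sets (comparing maximal elements is legitimate here, since $\Andysh(U)$ is finite), and organize both directions around the partition $X=X_x\uplus X_{xt}\uplus X_t$ and the linearity/strong non-linearity cases. The detour through sharing graphs is harmless as intuition, but keep in mind that at this level the ``degrees'' $\chi(Y,x)$ and $\chi(Y,t)$ are \emph{intervals} of naturals, and the algebraic condition of Theorem~\ref{th:pre-mguandy} is all you actually get to use.

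The genuine gap is in your soundness direction (every element of $\res(X,x,t)$ lies in the downward closure of the characterization): the witnesses you propose are too small. ``Each element of $X$ taken twice'' does not in general produce an admissible multiset, and neither do ``mixtures of single and doubled copies''. Take $t=f(y,y)$ and $X=\{x^\infty u_1,\, x^\infty u_2,\, x^\infty u_3,\, vy\}$, which falls under Case~1 (non-linear for $x$ because of the exponent $\infty$, non-linear for $t$ because $\chi_M(vy,t)=2$). For $Y=X\uplus X$ one gets $\chi(Y,t)=\{4\}$ while $\chi(Y,x)=[12,\infty)$ and $\card{Y}-1=7$, so $\chi(Y,x)\cap\chi(Y,t)=\emptyset$ and $Y$ is not admissible; yet the theorem asserts that $\Andybin X^2$ is a resultant group. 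The paper's proof gets around this by \emph{padding}: it adds $n$ extra copies of a group $o$ obtained from a group $o'\in X$ with $\chi_m(o',t)\geq 2$ by capping its $x$-exponent at $1$ (available in $S'$ because $S'$ is downward closed; here simply $o=vy$), and chooses $n$ to satisfy the balancing inequalities $2\chi_m(X,t)+n\chi_m(o,t)\geq 2\chi_m(X,x)+n$ and $2\chi_m(X,t)+n\chi_m(o,t)\geq 2\card{X}+n$, which are solvable precisely because $\chi_m(o,t)\geq 2$; the extra copies do not change the resultant group, since every exponent of $\Andybin X^2$ is already $\infty$. The same device, with an unbounded parameter, is needed in Case~2 ($2m$ copies of $X_t$ for a suitable $m$) and in Case~3 (when $\chi_M(X,t)$ is finite, two parameters solving a linear equation, which has a solution by a gcd argument). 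So the real difficulty is not Case~4 --- there the witness $Y=Z\uplus X_{xt}\uplus X_{xt}\uplus X_t$ is exact and the verification is immediate --- but exactly these unbounded-copy constructions in Cases~1--3, which your plan as stated would not discover.
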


\begin{example}
\label{ex:mguandy1}
 Let $U=\{u,v,x,y\}$ and consider the set of $2$-sharing groups $S=\{\emptyset,xu,x^\infty,xy,yv\}$ and consider the binding $x/r(y,y)$. Note that $S'=\{xu,x^\infty,xy,yv\}$. Let us compute $\res(X,x,r(y,y))$ for some $X$'s, subsets of $S'$.
\begin{itemize}
 \item $X=\{x^\infty, yv\}$. In this case, $\chi_M( x^\infty,x)=\infty$ and $\chi_M(yv,r(y,y)=2$, hence $X$ is non-linear for $x$ and $r(y,y)$. From the first case of Theorem \ref{th:mguandy}, we have that
$\res(X,x,r(y,y))=\{ \Andybin X^2\}=\{\Andybin \{x^\infty, y^\infty v^\infty \}=\{x^\infty y^\infty v^\infty \}$;
\item $X=\{xu, xy, yv\}$. Then $X$ is linear for $x$ and strongly non-linear for $r(y,y)$, since $ xy \in X_{xt}$ and $\chi_M(xy,r(y,y))=2$. From the third case, it follows that $\res(X,x,t)=\{ (\Andybin \{xu\}^2) \uplus (\Andybin \{xy\}^2) \uplus (\Andybin \{yv\})\} = \{ x^\infty y^\infty u^\infty v \}$;
\item $X=\{xu, yv\}$. Then $X$ is linear for $x$ and not strongly non-linear for $r(y,y)$ (note that $\chi_M(yv,r(y,y))=2>1$ and $yv \in X_{t}$, hence $X$ is non-linear for $r(y,y)$ but it is not strongly non-linear). Since $\chi_M(X_t,r(y,y))=2$, we only need to consider those $Z \in \mwp(X_x)$ such that $\card{Z}=2$. There is only one such set, which is $Z=\multil xu, xu \multir$. Therefore $\res(X,x,r(y,y))=\{ (\Andybin \multil xu, xu \multir ) \uplus (\Andybin \{\}^2) \uplus (\Andybin \{ yv \}) \} = \{ x^\infty y u^\infty v \}$.   \exproofbox
\end{itemize}
\end{example}

Note that, given $X \subseteq S'$, if $x$ does not appear in any sharing group of $S$,
then $\res(X,x,t)\subseteq \{\emptyset\}$. In fact, we can only apply the fourth or fifth case. In the fourth case, we have that $X_x=X_{xt}=\emptyset$, and thus the only $Z\in \mwp(X_x)$ is the empty multiset. Thus, $\card{Z}=0$, which implies that $X_t=\emptyset$, and $\res(X,x,t)=\{\emptyset\}$. In the fifth case, the result is trivially the emptyset. Symmetrically, when none of the variables of $t$ appears in $S$, again we can apply only the fourth or fifth case, and $\res(X,x,t)\subseteq \{\emptyset\}$.

\begin{example}
 Consider $S$ and $U$ as in Example \ref{ex:mguandy1}. We compute $\mgu_\an([S]_U,x/r(y,y))$. We show the value of $\res(X,x,r(y,y))$ for every $X \subseteq S' = \relev(S,x,r(y,y))$ which contains both the variables $x$ and $y$:
\[
 \begin{array}{l|l|l}
   \mathbf{X} & \mathbf{\res(S,x,r(y,y))} & \textbf{case in Theorem \ref{th:mguandy}} \\
   \hline x^\infty, xy & x^\infty y^\infty & 1 \\
   x^\infty, yv & x^\infty y^\infty v^\infty & 1 \\
   x^\infty, xy, yv & x^\infty y^\infty v^\infty & 1\\
   x^\infty, xu, xy & x^\infty y^\infty u^\infty & 1\\
   x^\infty, xu, yv & x^\infty y^\infty u^\infty v^\infty & 1\\
   x^\infty, xu, xy, yv & x^\infty y^\infty u^\infty v^\infty & 1\\
   xu, xy & x^\infty y^\infty u^\infty & 3\\
   xu, yv & x^\infty y u^\infty v & 4\\
   xu, xy, yv & x^\infty y^\infty u^\infty v & 3
 \end{array}
\]
Hence
\begin{multline*}
\mgu_\an([S]_U,x/r(y,y))=\\
\qquad \downclo\{ \emptyset, x^\infty y^\infty,  x^\infty y^\infty v^\infty, x^\infty y^\infty u^\infty,  x^\infty y^\infty u^\infty v^\infty,  x^\infty y u^\infty v, x^\infty y^\infty u^\infty v\}  \mathproofbox \hspace{0.5cm}
\end{multline*}

\end{example}

The main difference between the algorithm in Theorem \ref{th:mguandy} and the characterization in Theorem \ref{th:pre-mguandy} is that in the former  $X$ is a subset of $S'$ while, in Theorem~\ref{th:pre-mguandy}, $Y$ is a multiset over
$S'$. Since the number of subsets of $S'$ is finite, the characterization in Theorem \ref{th:mguandy}  is an algorithm.

Obviously, a direct implementation of $\mgu_\an$ would be very slow, so  that appropriate data structures and procedures should be developed for a real implementation. Although this is mostly out of the scope of this paper, we show here that the definition of $\mgu_\an([S]_U,x/t)$ may be modified to consider only \emph{maximal} subsets of $\relev(S,x,t)$. This should help in reducing the computational complexity of the abstract operator.

Given $[A]_U \in \ShLinp$, let $\max A$ be the set of maximal elements
of $A$, \ie $\max A=\{a\in A~|~\nexists b\in A. b >_\an a\}$. Given
a sharing group $o$, we define  the \emph{linearized} version of
$o$, denoted by $l(o)$, as
\[
  l(o)(v)=\begin{cases}
    1 & \text{if $v \in \supp{o}$} \enspace ,\\
    0 & \text{otherwise} \enspace .
  \end{cases}
\]
The linearization operator $l$ is extended pointwise to sets of
sharing groups. We show that instead of choosing $X$ as a subset
of $S'$ in the definition of $\mgu_\an$, we may only consider those $X$'s
which are subsets of $\max S'$.

\begin{theorem}
\label{th:closedmgu}
Given $[S]_U \in \ShLinp$ and the binding $x/t$ with $\vars(x/t) \subseteq U$, we have
\[
    \mgu_\an([S]_U,x/t)  = [(S\setminus S') \cup
     \downclo \bigcup_{X \subseteq \max S'} (\res(X,x,t) \cup
     \res'(X,x,t))]_U \enspace ,
\]
where $S'=\relev(S,x,t)$ and
\[
\res'(X,x,t)=\begin{cases}
   \{\Andybin X^2\} & \text{if $X=X_{xt}$ and $l(X)$ is linear for $t$} \enspace ,\\
   \emptyset & \text{otherwise} \enspace .
\end{cases}
\]
\end{theorem}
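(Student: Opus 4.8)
Starting from the characterization in Theorem~\ref{th:mguandy}, the plan is to prove the identity by comparing the two downward-closed sets produced on each side, the summand $S\setminus S'$ being literally identical. Since $o\leq o'$ forces $\supp{o}=\supp{o'}$ and $S$ is downward closed, the relevant part $S'=\relev(S,x,t)$ is itself downward closed, so every element of $S'$ has a maximal dominator in $\max S'$ with the \emph{same} support, and $S'=\downclo(\max S')$. The whole statement thus reduces to the equality
\[
\downclo\!\!\bigcup_{X\subseteq S'}\!\res(X,x,t)=\downclo\!\!\bigcup_{X\subseteq \max S'}\!\bigl(\res(X,x,t)\cup\res'(X,x,t)\bigr),
\]
which I would establish by two inclusions. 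Throughout I rely on three elementary facts: $\andybin$ is monotone, so $\Andybin X\leq\Andybin \hat X$ whenever each element of $X$ is dominated by the corresponding element of $\hat X$; the delinearization $\Andybin(\cdot)^2$ depends only on the support of its argument; and the conditions $\chi_M(o,t)=0$ and $\chi_M(o,x)=0$ (hence the partition into $X_x,X_t,X_{xt}$) are determined by $\supp{o}$ alone, so they are preserved when an element is replaced by a maximal dominator of the same support.

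For the inclusion $\supseteq$, the groups coming from $\res(X,x,t)$ with $X\subseteq\max S'\subseteq S'$ lie in the left-hand union trivially. For a group $\Andybin X^2\in\res'(X,x,t)$ the hypotheses are $X=X_{xt}$ and $l(X)$ linear for $t$. Here I would take $l(X)\subseteq S'$ (it is, since $l(o)\leq o$ has the same support and $S'$ is downward closed) and check that $l(X)$ falls in the fourth case of Theorem~\ref{th:mguandy}: each $l(o)$ is linear for $x$ with $\chi_M(l(o),x)=1$ and satisfies $\chi_M(l(o),t)=1$, so $l(X)=l(X)_{xt}$ and $l(X)_x=l(X)_t=\emptyset$; the fourth case then yields exactly $\Andybin l(X)^2=\Andybin X^2$, which therefore belongs to the left-hand union.

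For the inclusion $\subseteq$, given $X\subseteq S'$ and $o\in\res(X,x,t)$, I would replace each element of $X$ by a maximal dominator to obtain $\hat X\subseteq\max S'$ with $\supp{\Andybin \hat X}=\supp{\Andybin X}$ and the same partition, and show that $o$ is dominated by a group in $\res(\hat X,x,t)\cup\res'(\hat X,x,t)$. Cases~1--3 are routine: non-linearity (for $x$ or for $t$) and strong non-linearity are preserved under passing to $\hat X$ since multiplicities only grow, and the cardinality constraints $\card{X_x}\leq1$, $\card{X_t}\leq1$, $\card{X_t}\geq1$ survive taking images; the corresponding result for $\hat X$ then dominates $o$ componentwise, using that the delinearized blocks are support-determined and the remaining blocks grow monotonically, with the fallback that $o\leq\Andybin X^2=\Andybin\hat X^2$ whenever $\hat X$ becomes non-linear for both $x$ and $t$ (case~1).

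The genuine difficulty is case~4, where $X$ is linear for $x$ and not strongly non-linear for $t$. Passing to $\hat X$ can destroy linearity for $x$ or turn $t$ strongly non-linear, moving $\hat X$ into another case. I would split on the classification of $\hat X$: if $\hat X$ stays in case~4 I match the chosen multiset $Z\in\mwp(X_x)$ by a larger multiset over $\hat X_x$, legitimate because $\chi_M(\hat X_t,t)\geq\chi_M(X_t,t)=\card{Z}$ and both are finite; if $\hat X$ becomes non-linear for $x$ and non-linear for $t$, or linear for $x$ and strongly non-linear for $t$, then case~1 or case~3 gives a group dominating $o$. The remaining situation is $\hat X$ non-linear for $x$ and linear for $t$: when $X_t\neq\emptyset$ one checks $\card{X_x}\leq1$ and that $Z$ carries no forced repetitions, so case~2 on $\hat X$ dominates $o$; but when $X_t=\emptyset$ one is forced to have $X=X_{xt}$ and $o=\Andybin X^2$, and neither case~1 (linear for $t$) nor case~2 (empty $X_t$) applies, so the group is recovered precisely by $\res'(\hat X)=\{\Andybin\hat X^2\}=\{\Andybin X^2\}$. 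Pinning down this last configuration, i.e.\ verifying that $\res'$ fires in exactly the case-4 scenario left uncovered by $\res$ on maximal subsets, is the crux of the argument.
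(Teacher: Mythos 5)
Your two inclusions follow the same route as the paper's proof: for $\supseteq$ you linearize ($l(X)\subseteq S'$ lands in the fourth case of Theorem~\ref{th:mguandy} and returns $\Andybin X^2$), and for $\subseteq$ you replace elements by maximal dominators and compare linearity classifications, letting $\res'$ absorb a degenerate case-4 configuration; the structure, including the observation that the non-delinearized blocks are protected by the cardinality side-conditions or by multiset repetition in $Z$, is exactly the paper's.

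There is, however, one step that fails as stated: in your case-4 split you claim that when $\hat X$ is linear for $x$ and strongly non-linear for $t$, ``case~3 gives a group dominating $o$''. Case~3 of Theorem~\ref{th:mguandy} carries the side-condition $\card{X_x}\geq 1$, and precisely in the configuration $X_x=\emptyset$ (which forces $Z=\emptymulti$, hence $X_t=\emptyset$, $X=X_{xt}$, and $\chi_M(o,x)=\chi_M(o,t)=1$ for every $o\in X$) this side-condition can fail for $\hat X$ while $\hat X$ still falls in your sub-case: then $\res(\hat X,x,t)=\emptyset$ by the fifth case, and $\res$ produces no dominating group at all. Concretely, let $U=\{x,y\}$, $S=\{\emptyset,xy,xy^\infty\}$ and take the binding $x/y$, so that $S'=\relev(S,x,y)=\{xy,xy^\infty\}$. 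For $X=\{xy\}$, case~4 gives $\res(X,x,y)=\{x^\infty y^\infty\}$, while $\hat X=\max S'=\{xy^\infty\}$ is linear for $x$, strongly non-linear for $y$, and has $\hat X_x=\hat X_t=\emptyset$, so $\res(\hat X,x,y)=\emptyset$; the group is recovered only by $\res'(\hat X,x,y)=\{\Andybin\hat X^2\}$. So the $\res'$ rescue is needed not only in your sub-case ``$\hat X$ non-linear for $x$ and linear for $t$ with $X_t=\emptyset$'' but also here; the repair is exactly the argument you already give there ($\hat X=\hat X_{xt}$ and $l(\hat X)=l(X)$ is linear for $t$, so $\res'$ fires and yields $\Andybin X^2$), and it is the proviso the paper's proof makes explicit (``provided that $\card{X_x}\geq 1$; otherwise \ldots'').
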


The next examples compare our optimal abstract unification operator to the original one and show the increase in precision.
\begin{example}
\label{ex:king1}
Let $U=\{u,v,w,x,y\}$. Consider the set of $2$-sharing groups $S=\{\emptyset,xu,xv,xw,y\}$. We compute $\mgu_\an([S]_U,x/r(y,y))$. Since  $\relev(S,x,r(y,y))=S$, we need to consider any $X\subseteq S$.
If $y\notin X$ then clearly $\res(X,x,r(y,y))=\emptyset$. If $y\in X$, since $\chi_M(y,r(y,y))=2$, it follows that $X$ is linear for $x$ and not strongly non-linear for $r(y,y)$.
Thus
\[\mgu_\an([S]_U,x/r(y,y))=[\downclo\{\emptyset,x^\infty u^\infty y,x^\infty uvy,x^2uwy,x^\infty v^\infty y,x^\infty vwy,x^\infty w^\infty y\}]_U\]
On the other hand, computing with the unification algorithm given in \cite{King94}, the result is
\begin{multline*}
\downclo\{\emptyset,x^\infty u^\infty y,x^\infty u^\infty v^\infty y,x^\infty u^\infty w^\infty y,\\
x^\infty v^\infty y,x^\infty v^\infty w^\infty y,x^\infty w^\infty y,x^\infty u^\infty v^\infty w^\infty y\}
\enspace .
\end{multline*}
The old algorithm is not able to infer the linearity which arises when combining two distinct sharing groups from $\{xu,xv,xw\}$ with $\{y\}$. Moreover, it does not assert that the variables $u,v,w$ cannot share a common variable.
\exproofbox
\end{example}

\begin{example}
\label{ex:king2}
Let $U=\{u,x,y,z\}$ and  $S=\{\emptyset,xu,xy,yz\}$. By computing $\mgu_\an([S]_U,x/r(y))$ we obtain $\downclo\{\emptyset,x^\infty y^\infty ,x^\infty uy^\infty z\}$, which shows that $u$ and $z$ are linear after the unification.
This is not the case when computing with the unification algorithm in \cite{King94}, since we obtain $\downclo\{\emptyset,x^\infty y^\infty ,x^\infty u^\infty y^\infty z^\infty ,x^\infty u^\infty y^\infty ,x^\infty y^\infty z^\infty \}$.  Note that, we also improve the groundness information. In fact, in our result, groundness of $u$ implies groundness of $z$.
\exproofbox
\end{example}
Both examples show the increased precision \wrt King's algorithm. In the first example, we obtain optimality thanks to the introduction of the notion of \mbox{(non-)} strong non-linearity. In the second example, we improve the result since we do not need to consider independence between $x$ and $t$, in order to exploit linearity information.

\subsection{Unification for multi-binding substitutions}\label{sec:multibind}

The unification operator on $\ShLinp$ has been defined for single binding substitutions. It is possible to extend this
operator to multi-binding substitutions in the obvious way, namely by iterating the single binding operators.
\[
  \mgu_\an([S]_U,\{ x/t \} \uplus \theta)=
  \mgu_\an(\mgu_\an([S]_U,x/t),\theta) \enspace .
\]
However, defined in such a way, $\mgu_\an$ is not optimal. Consider, for example, $S=\{ \emptyset, xz,yw \}$, $U=\{x,y,z,w\}$,
and the substitution $\theta=\{ x/r(y,y), z/w \}$. We have that $\mgu_\an([S]_U,x/r(y,y))=[\downarrow \{
\emptyset, x^\infty z^\infty yw \}]_U$. Since $x^\infty zyw \leq_\an x^\infty z^\infty yw$,
by applying the third case of $\mgu_\an$ to $Y=\{x^\infty zyw\}$ we get
\[
\mgu_\an([\downarrow \{ \emptyset, z^\infty
x^\infty yw \}]_U,z/w)=[\downarrow \{\emptyset, x^\infty y^\infty z^\infty
w^\infty\}]_U \enspace .
\]
However,
\[
  \begin{split}
    &\alpha_\an(\mgu_\lp(\gamma_\an([\{ \emptyset, xz,yw \}]_U,\theta)))\\
    =\ &\alpha_\an(\mgu_\lp([\{ xz,yw \}]_U,\theta))\\
    =\ &\alpha_\an(\mgu_\lp([\{ wx^2yz^2 \}]_U,\{z/w\}))\\
    =\ &\alpha_\an([\emptymulti]_U)=[\emptyset]_U \enspace ,
  \end{split}
\]
which shows that $\mgu_\an$ is not optimal. Note that, we do not use optimality of $\mgu_\lp$ to prove this result, since correctness is enough.

 The problem is that, to be
able to conclude that the unification of $S$ with $\theta$ is ground,
we need to keep track of the fact that, after the first binding, $w$
is linear and $z$ is definitively non-linear. Since $\ShLinp$ is
downward closed, we are not able to state this property. Note that, in
the case we have presented here, by changing the order of the bindings
we get an optimal result in $\ShLinp$, but this happens just by
accident.

Now, consider the substitution $\theta=\{x/r(y,....,y), z/s(y,...,y), u/v\}$
with $S=\{\emptyset, xu, zv, y\}$ and $U=\{u,v,x,y,z\}$. Assume
that $r(y,....,y)$ is an n-ary term, $s(y,...,y)$ is an m-ary term with $n \neq m$
and $n,m \geq 2$.  We have that:
\[
  \begin{split}
    &\mgu_\an([S]_U,x/r(y,....,y)) = [\downclo \{\emptyset,
        x^\infty u^\infty y, zv \}]_U \enspace ,\\
    & \mgu_\an([\downclo \{\emptyset, x^\infty u^\infty y, zv \}]_U , z/s(y,...,y) )=
    [\downclo \{\emptyset, x^\infty u^\infty z^\infty v^\infty y \}]_U \enspace ,\\
    & \mgu_\an([\downclo \{ \emptyset, x^\infty u^\infty z^\infty v^\infty y \}]_U , u/v)=
           [\downclo \{ \emptyset, x^\infty u^\infty z^\infty v^\infty y \}]_U \enspace .
  \end{split}
\]
On the other hand, we have that:
\[
  \begin{split}
    &\alpha_\an(\mgu_\lp(\gamma_\an([\{\emptyset, xu,zv,y \}]_U,\theta)))\\
    =\ &\alpha_\an(\mgu_\lp([\{ xu, zv, y \}]_U,\theta))\\
    =\ &\alpha_\an(\mgu_\lp([\{ x^n u^n y, zv \}]_U,\{ z/s(y,...,y), u/v   \}))\\
    =\ &\alpha_\an(\mgu_\lp([\{ x^n u^n y z^m v^m \}]_U,\{ u/v \}))\\
    =\ &\alpha_\an([\emptymulti]_U)=[\emptyset]_U \enspace .
  \end{split}
\]
However, if $n=m$, we have:
\[
\begin{split}
&\alpha_\an(\mgu_\lp(\gamma_\an([\{ \emptyset, xu,zv,y \}]_U,\theta)))\\
=\ &\alpha_\an([\{ \emptymulti \} \cup \{x^{kn} u^{kn} y^k z^{kn} v^{kn} \mid k \in \Nat \}]_U)\\
=\ &[\downclo \{\emptyset,x^\infty u^\infty z^\infty v^\infty y \}]_U \enspace .
\end{split}
\]
In this case, keeping track of the variables which are definitively non-linear does not
help. It seems that, in order to compute abstract unification one binding at a time,
we need to work in a domain which is able to keep track of the exact multiplicity of
variables in a sharing group. Actually, this is how $\Linp$ works. Obviously, we could try
to develop a different algorithm for unification in $\ShLinp$ which directly works with
multi-binding substitutions. However, since the algorithm for single binding substitutions
is already quite complex, we think this is not worth the effort.

\subsection{The domain $\Sharing \times \Lin$}
\label{sec:practical2}

The reduced product $\ShLin=\Sharing \times \Lin$  has been used
for a long time in the analysis of aliasing properties, since it was
recognized that the precision of these analyses could be greatly
improved by keeping track of the linear variables. Among the papers
which consider the domain $\ShLin$, we refer to \cite{HW92} and
\cite{HillZB04}. Actually, these papers also deal with freeness
properties, which we do not consider here, to further improve precision.
Although the domain $\ShLin$ has been used for many years, the
optimal unification operator is as yet unknown, even for
a single binding substitution. We provide here a new abstract
operator for $\ShLin$, designed from the abstract unification for
$\ShLinp$, and we prove that it is optimal for single binding
substitutions.

The domain $\ShLin$ keeps track of linearity by recording, for each object of $\Sharing$, the set of linear variables. Each element is now a triple: the first component is an object of $\Sharing$, the second component is an object of $\Lin$, that is, the set of variables which are linear in all the sharing groups of the first component, and the third component is the set of variables of interest.
It is immediate that $\ShLin$ is an abstraction of $\ShLinp$ (and thus of $\Linp$). In the following,
we briefly recall the definition of the abstract domain and provide the abstraction function from $\ShLinp$.
\begin{equation*}
  \ShLin = \{ [S,L,U]\mid S \subseteq\wp(U), (S \neq\emptyset \Ra
  \emptyset \in S), L \supseteq U \setminus \vars(S), U \in \fwp(\var)
  \} \enspace ,
\end{equation*}
with the approximation relation $\leq_\shl$ defined as $[S,L,U]
\leq_\shl [S',L',U']$ iff $U=U'$, $S \subseteq S'$, $L \supseteq L'$.
There is a Galois insertion of $\ShLin$ into $\ShLinp$ given by the
pair of maps:
\begin{align*}
  \alpha_\shl([S]_U)&=[\{ \supp{o} \mid o \in S \},\{ x \in U
  \mid \forall o \in S.\ o(x) \leq 1\},U] \enspace ,\\
  \gamma_\shl([S,L,U])&=[\{ B_L \mid B \in S \}]_U \enspace ,\\
  \intertext{where $B_L$ is the 2-sharing group which has the same support of $B$, with linear variables dictated by the set $L$. In formula:}
  B_L&=\lambda v \in \var. \begin{cases}
                               \infty & \text{if $B \in U \setminus L$,}\\
				1 & \text{if $B \in L$,}\\
                                0 & \text{otherwise.}
                       \end{cases}
\end{align*}

The functional composition of $\alpha_\lp$, $\alpha_\an$ and $\alpha_\shl$ gives
the standard abstraction map from substitutions to $\ShLin$.
We still use the
polynomial notation to represent sharing
groups, but now all the exponents are fixed to one.  Note that the
last component $U$ in $[S,L,U]$ is redundant since it can be retrieved
as $L \cup \vars(S)$.  This is because the set $L$ contains all the
ground variables.

\subsection{Abstract unification for $\Sharing \times \Lin$}

In order to obtain a correct and optimal abstract unification over $\ShLin$,
the trivial way is to directly compute $\alpha_\shl(\mgu_\an(\gamma_\shl([S,L,U]),x/t) )$.
However, we prefer to give an unification operator similar
to the other operators
for $\ShLin$ in the literature \cite{HoweK03,BagnaraZH05TPLP,HillZB04}.
As for the domain $\ShLinp$, we now provide the notions of multiplicity and linearity over $\ShLin$.

Given a set $L$ of linear variables,
we define the maximum multiplicity of a sharing group $o$ in a term $t$ as follows:
\[
  \chi^L_M(o,t)=
  \begin{cases}
    \sum_{v \in o} \occ(v,t) & \text{if $o\cap\vars(t) \subseteq L$}\\
    \infty & \text{otherwise}
  \end{cases}
\]
According to the similar definition for 2-sharing groups, given
$[S,L,U] \in \ShLin $, we say that $(S,L)$ is \emph{linear} for a term $t$
when for all $o\in S$ it holds that $\chi_M^L(o,t)\leq 1$.  Note that,
when $t$ is a variable, the definition boils down to check
whether $t\in L$.

Given $X\subseteq \relev(S,x,t)$, we fix the set $L$ of linear variables and
partition $X$ in three subsets $X_x = \{o\in X~|~\chi_M^L(o,t)=0\}$, $X_t = \{o\in
X~|~\chi_M^L(o,x)=0\}$ and $X_{xt} = X \setminus (X_x \cup X_t)$. Moreover,
we need to define the following subsets of $X$:
\[
\begin{array}{rlrl}
  X_t^{=\infty} &= \{ B\in X_t~|~\chi_M^L(B,t) = \infty \}, &
  \hspace{0.8cm} X_t^{\in \Nat} &= \{ B\in X_t~|~\chi_M^L(B,t) \in \Nat \}, \\
  X_t^{=1} &= \{ B\in X_t~|~\chi_M^L(B,t) =1 \}, &
  X_t^{>1} &= \{ B\in X_t~|~\chi_M^L(B,t) >1 \}, \\
  X_{xt}^{=1} &= \{ B\in X_{xt}~|~\chi_M^L(B,t) =1 \}, &
  X_{xt}^{>1} &= \{X_{xt}~|~\chi_M^L(B,t) >1 \}.
\end{array}
\]
Since we do not deal with definite linearity, we need to take into account
the sharing groups which can be obtained by linearizing variables. This may be accomplished by using the set $U$ instead of $L$ when computing the multiplicity. We denote by $X_{xt}^{U}$ the set
\[
  X_{xt}^{U} = \{ B\in X_{xt}~|~\chi_M^U(B,t) =1 \} \enspace ,
\]
which corresponds to the \textit{linearizable} sharing groups.

Moreover, given sets $A_1,\ldots,A_n$ with $n \geq 2$ we denote by
$\bin(A_1,\ldots,A_n)$ the set $\{ \bigcup\{a_1,\ldots,a_n\}~|~a_1\in
A_1,\ldots, a_n\in A_n\}$, by $A^*$ the set $\{\bigcup B~|~B\subseteq
A\}$ and by $A^+$ the set $\{\bigcup B~|~B\subseteq A, B\neq \emptyset
\}$. This notation slightly deviates from most of other
literature on $\Sharing$, where $A^*$ does not include the empty
set.  We prefer to adopt a double notation, namely, $A^*$ and $A^+$,
which is more standard in the rest of the research community.

\begin{definition}[Abstract unification algorithm for\/ $\ShLin$]
\label{def:mgushl}
Given $[S,L,U] \in \ShLin$ and the  binding $x/t$ such that $\vars(x/t) \subseteq U$, we define
\[
\mgu_\shl([S,L,U],x/t) = [ (S\setminus X) \cup K, U' \cup L', U] \enspace ,
\]
where $X=\relev(S,x,t)=\{ B \in S\mid B \cap \vars(x/t) \neq \emptyset\}$ and
$U'=U\setminus\vars( (S\setminus X) \cup
K)$. Here, $K$ is the set of new sharing groups created by the
unification process and $U'$ is the set of variables which do not
appear in any sharing group of the result, i.e. the set of ground
variables. $K$ is defined as follows:
\begin{align}
\intertext{$\bullet$~If $x \in L$:}
\label{eq:sl1}
\begin{split}
K=\ & \bin(X_t^{=\infty}, X_x^+, X_{xt}^* ) \cup \\
  & \bin(X_t\cup \{ \emptyset \}, X_{xt}^{>1}, X_x^+, X_{xt}^*) \cup \\
  & \bin ( \{ \{o\} \cup (\cup Z)~|~ o\in  X_t^{\in \Nat}, Z
  \subseteq  X_x, 1 \leq |Z| \leq \chi_M^L(o,t)\}, (X_{xt}^{=1})^* )
  \cup \\
  & (X_{xt}^{U})^+\enspace .
\end{split}\\
\intertext{$\bullet$~If $x \notin L$:}
  \label{eq:sl2}
\begin{split}
K=\ & \bin( X_t^{>1}\cup X_{xt}^{>1} , X_x \cup X_{xt},X^*)  \cup  \\
  & \bin((X_t^{=1})^+, X_x \cup X_{xt}^{=1}, (X_{xt}^{=1})^* )\cup\\
  & (X_{xt}^{=1})^+ \enspace .
\end{split}
\end{align}
Finally, the set $L'$ of linear variables which are not ground is
\begin{equation}\label{eq:linsharinglin}
  L'=
  \begin{cases}
    L \setminus (\vars(X_x\cup X_{xt}) \cap
    \vars(X_t\cup X_{xt})) & \text{if $(S,L)$ is linear for $x$ and $t$,}\\
    L \setminus \vars(X_x \cup X_{xt}) & \hspace*{-1cm} \text{otherwise, if $(S,L)$ is linear for $x$,}\\
    L \setminus \vars(X_t \cup X_{xt}) & \hspace*{-1cm}\text{otherwise, if $(S,L)$ is linear for $t$,}\\
    L \setminus \vars(X) & \hspace*{-1cm} \text{otherwise.}
  \end{cases}
\end{equation}
\end{definition}

\begin{theorem}[Optimality of $\mgu_\shl$]
  \label{th:optimal-shl}
  The operator $\mgu_\shl$ in Definition~\ref{def:mgushl} is correct and optimal \wrt $\mgu$, when $\vars(x/t)\subseteq U$.
\end{theorem}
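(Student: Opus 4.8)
The plan is to exploit the fact that $\ShLin$ is, by construction, an abstraction of $\ShLinp$ through the Galois insertion $\langle \alpha_\shl, \gamma_\shl \rangle$, which is itself an abstraction of $\Linp$. Since $\mgu_\an$ is the optimal abstraction of $\mgu_\lp$ (hence of $\mgu$), and since composing optimal abstractions along a Galois insertion yields again the optimal abstraction \cite[Sect.~4.2.3.1]{CousotC92lp}, the operator
\[
  [S,L,U] \mapsto \alpha_\shl\bigl(\mgu_\an(\gamma_\shl([S,L,U]),x/t)\bigr)
\]
is correct and optimal on $\ShLin$. Therefore both claims of the theorem follow at once once I prove the single equality
\[
  \mgu_\shl([S,L,U],x/t) = \alpha_\shl\bigl(\mgu_\an(\gamma_\shl([S,L,U]),x/t)\bigr) \enspace ,
\]
so the whole argument reduces to checking that the explicit formula of Definition~\ref{def:mgushl} computes this composite.

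To carry this out I would first unfold the right-hand side. By definition $\gamma_\shl([S,L,U]) = [\{B_L \mid B \in S\}]_U$, and I would observe that the $\ShLin$ multiplicity $\chi_M^L(\cdot,\cdot)$ is exactly the $\ShLinp$ multiplicity $\chi_M(\cdot,\cdot)$ transported along $B \mapsto B_L$; consequently the partition of $\relev(S,x,t)$ into $X_x, X_t, X_{xt}$ and the linear/non-linear/strongly-non-linear classification agree on both sides. I would then apply the algorithmic characterization of $\mgu_\an$ in Theorem~\ref{th:mguandy} to the set $\{B_L \mid B \in S\}$ and finally push the supports through $\alpha_\shl$. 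The proof then splits into two parts: the \emph{sharing component}, \ie that $\{\supp{o} \mid o \in \mgu_\an(\gamma_\shl(\cdots))\}$ equals $(S\setminus X) \cup K$; and the \emph{linearity component}, \ie that the variables occurring with exponent at most $1$ throughout the resultant $2$-sharing groups are exactly $U' \cup L'$.

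For the sharing component I would match, case by case, the four clauses of $\res(X,x,t)$ in Theorem~\ref{th:mguandy} against the $\bin$, $(\cdot)^*$ and $(\cdot)^+$ terms of Equations~\eqref{eq:sl1} and~\eqref{eq:sl2}, splitting on $x \in L$ versus $x \notin L$ (which corresponds to whether $\chi_M(\cdot,x)$ is bounded or infinite). Here the key simplifications are that $\alpha_\shl$ forgets exponents, so that delinearization $(\cdot)^2$ becomes invisible on supports; that $\ShLin$ is downward closed, so only maximal supports need be generated; and that multiset union of $2$-sharing groups abstracts to plain set union of supports. The subscripted sets $X_t^{=\infty}, X_t^{\in\Nat}, X_{xt}^{=1}, X_{xt}^{U}$ are precisely the bookkeeping needed to record, for each combination, which multiplicity bound makes the feasibility condition $n \in \chi(Y,x) \cap \chi(Y,t)$ with $n \geq \card{Y}-1$ of Theorem~\ref{th:pre-mguandy} satisfiable.

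The main obstacle, and the technical heart of the proof, is the linearity component and the four-way case split defining $L'$ in Equation~\eqref{eq:linsharinglin}. Since $\ShLin$ records only a binary linear/non-linear flag and is downward closed, I must show, for each variable $v$, both directions: that if $v$ is removed from $L'$ then there genuinely exists a resultant $2$-sharing group in which $v$ has exponent $\infty$ (an existence argument, essentially exhibiting a feasible multiset combination forcing delinearization of $v$), and conversely that every retained $v$ stays linear in \emph{all} resultant groups. The delicate point is that $v$ becomes non-linear exactly when some feasible combination joins a group from the $x$-side ($X_x \cup X_{xt}$) with a group from the $t$-side ($X_t \cup X_{xt}$) through $v$, or when delinearization is forced by a group counted with multiplicity greater than one; disentangling these possibilities under the various linearity hypotheses on $(S,L)$ is what produces the four cases and is where the bulk of the verification lies. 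Finally, the disjointness of the ground component $U'$ and the linear non-ground component $L'$, together with the redundancy of $U$ noted after the definition of $\ShLin$, would be used to package the answer in the required $[(S\setminus X)\cup K,\, U'\cup L',\, U]$ form.
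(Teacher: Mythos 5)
Your overall architecture coincides with the paper's proof: both reduce correctness and optimality to the single equality $\mgu_\shl([S,L,U],x/t)=\alpha_\shl\bigl(\mgu_\an(\gamma_\shl([S,L,U]),x/t)\bigr)$ by composing Galois insertions, and both then verify this equality separately on the sharing component and on the linearity component by case analysis; your plan for the linearity part (proving both directions for each variable $v$, following the four cases of \eqref{eq:linsharinglin}) is also exactly what the paper does.

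There is, however, a concrete gap in your sharing part. You apply Theorem~\ref{th:mguandy} to $\{B_L \mid B \in S\}$ and propose a direct clause-by-clause matching of the four cases of $\res$ against the terms of \eqref{eq:sl1} and \eqref{eq:sl2}. But the concretization $T=\gamma_\shl([S,L,U])$ is (up to the downward closure forced by membership in $\ShLinp$) the set $\downclo\{B_L \mid B \in S\}$, and Theorem~\ref{th:mguandy} quantifies over \emph{all} subsets of $\relev(T,x,t)$, not only over subsets of its maximal elements $\{B_L \mid B \in \relev(S,x,t)\}$. Some terms of $K$ are generated exclusively by subsets containing non-maximal, linearized elements. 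Concretely, let $x \notin L$ and $\emptyset \neq X \subseteq X_{xt}^{=1}$: every transported element $B_L$ with $B \in X$ has $\chi_M(B_L,x)=\infty$ and $\chi_M(B_L,t)=1$, so the maximal subset $\{B_L \mid B \in X\}$ falls under case 5 of Theorem~\ref{th:mguandy} and contributes nothing; yet $\bigcup X \in (X_{xt}^{=1})^+ \subseteq K$, and in $\ShLinp$ this support is produced only by applying case 4 to the non-maximal elements $l(B_L)$. The same happens for $(X_{xt}^{U})^+$ when $x \in L$. Hence the matching step fails as stated, and your appeal to downward closure (``only maximal supports need be generated'') concerns outputs, whereas the problem is about inputs: for non-maximal elements the linearity classifications do \emph{not} agree on both sides (e.g.\ $l(B_L)$ is linear for $x$ while $B_L$ is not). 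The missing ingredient is precisely the paper's Theorem~\ref{th:closedmgu}, which restricts the union to $X \subseteq \max S'$ at the price of the auxiliary operator $\res'$, and whose $\res'$ term is then matched against $(X_{xt}^{=1})^+$ and $(X_{xt}^{U})^+$ in $K$. You must either invoke that theorem, as the paper does, or extend your case analysis to all subsets of the downward closure, which amounts to re-proving it inline.
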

%

\begin{example}\label{rational}
  Let $S=\{\emptyset, xv,xy,zw\}, L=\{v,w,x,y\}, U=\{v,w,x,y,z\}$ and consider the binding $x/f(y,z)$. It is easy to check that $(S,L)$ is linear for $x$ but not for $t$. Applying our operator, we obtain $\mgu_\shl([S,L,U],x/f(y,z))=[S',L',U]$ with $S'=\{\emptyset,xy,vwxyz,vwxz\}$ and $L'=\{w\}$. This is more precise that the operators for $\Sharing \times \Lin$ in \cite{HW92}. Actually, even using the optimizations proposed in \cite{HoweK03,HillZB04}, one obtains as result the object
  \[
  [\{vxy,vwxz,xy,wxyz,vwxyz\},\{w\},U] \enspace .
  \]
  The optimization proposed in \cite{BagnaraZH05TPLP} is not applicable as it is, since it requires $\vars(\relev(S,x))$ and $\vars(\relev(S,f(y,z)))$ to be disjoint. Even assuming that this test for independence may be removed as unnecessary, the final result would be the same as above. In both cases, our operator is able to prove that $vxy$ and $wxyz$ are not possible sharing groups.

  Note that, in a domain for rational trees, the sharing group $vxy$ is needed for correctness, since the unification of $\{x/f(f(v,y),c), z/w\}$ with the binding $x/f(y,z)$ succeeds with $\{ x/f(f(v,y),c), z/c, w/c, y/f(v,y)\}$. This means that we are able to exploit the occur-check of the unification in finite trees.  As a consequence, our abstract unification operator is not correct w.r.t.~a concrete domain of rational substitutions \cite{King00jlp}.
%
\exproofbox
\end{example}

An alternative would be to compute the abstract unification following Theorem \ref{th:mguandy} with $\chi_M$ and $\andybin$ replaced by $\chi^L_M$ and $\cup$ respectively (we can obviously ignore the delinearization operator $(\_)^2$ since $B \cup B=B$). However, we do not pursue further this approach.

In the case $\vars(x/t)\nsubseteq U$, we may proceed as for $\Linp$ and $\ShLinp$: enlarge the set of variables of interest in order to include all $\vars(x/t)$ and compute unification with $\mgu_\shl$.

\begin{definition}[Abstract unification algorithm with extension in $\ShLin$]
\label{def:mgushl2}
Given $[S,L,U] \in \ShLin$ and the binding $x/t$, let $V=\{v_1, \ldots, v_n\}$ be $\vars(x/t) \setminus U$. We define:
\[
  \mgu_\shl([S,L,U],x/t)= \mgu_\shl([S \cup \{ v_1, \ldots, v_n \},L  \cup V, U \cup V],x/t) \enspace .
\]
\end{definition}
\begin{theorem}[Optimality of $\mgu_\shl$ with extension]
  \label{th:optimal-shl2}
  The operator $\mgu_\shl$ in Definition \ref{def:mgushl2} is the optimal abstraction of $\mgu$.
\end{theorem}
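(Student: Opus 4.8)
The plan is to reduce the general statement to the optimality already proved for the restricted case (Theorem~\ref{th:optimal-shl}), using the optimality of $\mgu_\omega$ with extension on $\Linp$ (Theorem~\ref{th:optimality_extension}) together with the fact that optimality composes along Galois insertions. Concretely, exactly as $\mgu_\an$ was derived from $\mgu_\omega$, if $A$ is obtained from $B$ by a Galois insertion $\langle\alpha,\gamma\rangle$ and $g$ is the optimal abstraction of $\mgu$ on $B$, then $\alpha\circ g\circ\gamma$ is the optimal abstraction on $A$. Applying this twice along $\Linp$, $\ShLinp$, $\ShLin$ (Theorem~\ref{th:linp-galois} and the insertion $\langle\alpha_\shl,\gamma_\shl\rangle$) and invoking Theorem~\ref{th:optimality_extension}, the optimal abstraction of $\mgu$ on $\ShLin$ is
\[
  [S,L,U] \mapsto \alpha_\shl(\alpha_\an(\mgu_\omega(\gamma_\an(\gamma_\shl([S,L,U])),x/t))),
\]
where $\mgu_\omega$ is the operator with extension of Definition~\ref{def:mguomega}. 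It therefore suffices to show that $\mgu_\shl$ with extension coincides with this map.

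Next I would unfold the two sides. By Definition~\ref{def:mgushl2} we have $\mgu_\shl([S,L,U],x/t) = \mgu_\shl([S^+,L^+,U^+],x/t)$, where $S^+ = S\cup\{v_1,\ldots,v_n\}$, $L^+=L\cup V$, $U^+=U\cup V$ and $V=\vars(x/t)\setminus U$. By construction $\vars(x/t)\subseteq U^+$, so Theorem~\ref{th:optimal-shl} applies to the extended object and gives
\[
  \mgu_\shl([S^+,L^+,U^+],x/t) = \alpha_\shl(\alpha_\an(\mgu_\omega(\gamma_\an(\gamma_\shl([S^+,L^+,U^+])),x/t))),
\]
where now, since $\vars(x/t)\subseteq U^+$, the inner $\mgu_\omega$ needs no extension.

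The heart of the argument is then the lemma that concretizing the extended $\ShLin$ object reproduces exactly the $\Linp$-extension of Definition~\ref{def:mguomega}, namely
\[
  \gamma_\an(\gamma_\shl([S^+,L^+,U^+])) = [\, S_0 \cup \{ \multil v_1 \multir, \ldots, \multil v_n \multir \} \,]_{U^+},
\]
where $[S_0]_U = \gamma_\an(\gamma_\shl([S,L,U]))$. I would prove this by direct computation on the two concretization maps: because the $v_i$ are fresh, every original sharing group $B\in S$ satisfies $B_{L^+}=B_L$, so its image under $\gamma_\an\circ\gamma_\shl$ is unchanged and only the variables of interest grow to $U^+$; and each added singleton $\{v_i\}$, marked linear in $L^+$, is sent by $\gamma_\shl$ to the linear $2$-sharing group $v_i$ and then by $\gamma_\an$ to the single $\omega$-sharing group $\multil v_i\multir$. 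Granting the lemma, Definition~\ref{def:mguomega} identifies $\mgu_\omega(\gamma_\an(\gamma_\shl([S^+,L^+,U^+])),x/t)$ with the extended operator $\mgu_\omega(\gamma_\an(\gamma_\shl([S,L,U])),x/t)$, and chaining the three displayed equalities delivers the claim.

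I expect the key lemma to be the main obstacle: one must check carefully that enlarging both $L$ and $U$ by the fresh variables leaves the images of the pre-existing sharing groups untouched, and that the newly inserted linear singletons trace through both $\gamma_\shl$ and $\gamma_\an$ to precisely the single-occurrence groups $\multil v\multir$ used in the $\Linp$ extension. This is where the compatibility of the three notions of \emph{extension}, in $\ShLin$, $\ShLinp$ and $\Linp$, is actually established; everything else is bookkeeping with the Galois insertions.
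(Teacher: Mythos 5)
Your proposal is correct and follows essentially the same route as the paper: unfold Definition~\ref{def:mgushl2}, apply the no-extension optimality result (Theorem~\ref{th:optimal-shl}) to the extended object, and close the gap with a commutation lemma showing that concretization maps the fresh linear singletons added at the abstract level to exactly the singletons $\multil v \multir$ used by the extension operator below. The only difference is bookkeeping: the paper states this lemma one level down, as $\ext_V(\gamma_\shl([S,L,U])) = \gamma_\shl([S \cup V, L \cup V, U \cup V])$ in $\ShLinp$, and then reuses Theorem~\ref{th:mguandy2}, whereas you push it through both Galois insertions to $\Linp$ and invoke Definition~\ref{def:mguomega} and Theorem~\ref{th:optimality_extension} directly; the underlying fresh-variable computation is the same.
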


Although the abstract operator $\mgu_\shl$ is optimal for the
unification with a single binding, the optimal operator for a
multi-binding substitution cannot be obtained by considering one
binding at a time. This is a consequence of the fact that the
corresponding operator for single binding unification on $\ShLinp$ cannot
be extended to an optimal multi-binding operator by simply
considering one binding at a time.  In fact, all the counterexamples in
Section \ref{sec:multibind} are also counterexamples for $\mgu_\shl$, since
it is the case that $[S]_U = \gamma_\shl(\alpha_\shl([S]_U))$.

\section{Optimal unification in practice}
\label{sec:practice}

In this section, we give some evidence that there are practical advantages in using the optimal unification operators for $\ShLin$. It is far beyond the scope of this paper to provide an experimental evaluation of the new algorithms, but the results in \cite{BagnaraZH05TPLP} give some hints on its possible outcome. \citeN{BagnaraZH05TPLP} introduce an improvement for $\Sharing \times \Lin \times \Free$ exploiting some ideas from King's unification operator for the domain $\ShLinp$. In this way, they improve precision in a few cases and show that efficiency of the analysis  is more likely to be increased than decreased. In fact, even if the final result of the analysis does not change, a more precise operator may reduce the number of sharing groups in the intermediate steps, which helps performance. Hence, we expect the optimal unification for $\ShLin$ to further improve the analysis, both in efficiency and precision. This is more evident if we consider that  \citeN{BagnaraZH05TPLP} measure precision  in terms of the number of independent pairs (as well as definitively ground, free and linear variables) and do not consider set-sharing. However, \citeN{Buenod04} show that set-sharing information may be useful in several application of the analysis, such as parallelization of logic programs. Hence, a greater improvement in precision is to be expected if we consider the full set-sharing property.

We now provide a concrete example of a simple program where our abstract operators give better results than the operators known in the literature.

\subsection{An example: difference lists}

We work with \emph{difference lists}, an alternative data structure to lists for representing a sequence of elements. A difference list is a term of the kind $A \setminus B$ where $A$ and $B$ are lists, which represents the list obtained by removing $B$ from the tail of $A$. For example, using \PROLOG\  notation for lists,  $[1,2,3,4] \setminus  [3,4]$ represents the list $[1,2]$, while $[1,2,3 | x] \setminus x$ and $[1,2,3] \setminus []$ represent the list $[1,2,3]$. The difference lists whose tail is a variable (such as $[1,2,3 | x] \setminus x$) are mostly useful, since they can be concatenated in constant time. An overview of difference lists may be found in \cite{SterlingS94}.

We define the predicate $\difflist/3$, which translates lists to difference lists and vice-versa. The goal $\leftarrow \difflist(l,h,t)$ succeeds when the difference list $h \setminus t$ represents the standard list $l$. For example, $\difflist([],x,x)$ and $\difflist([1,2,3],[1,2,3 | x],x)$ succeed without any further instantiation of variables. In order to improve the precision of the analysis, we keep head and tail of difference lists in separate predicate arguments. The
code for $\difflist/3$, in head normal form, is the following.
\[
\begin{split}
\difflist(l, h, t) &\leftarrow l=[], h=t.\\
\difflist(l, h, t) &\leftarrow l=[x|l'], h=[x|h'], \difflist(l',h',t).
\end{split}
\]
where $l,l'$ (list), $h,h'$ (head), $t$ (tail) and $x$ are variables. We informally compute the goal-independent analysis of $\difflist$ on the domain $\ShLin$, which gives:
\[
\begin{split}
\seml \difflist \semr ^0 =&[\{\emptyset\},\{l,h,t\},\{l,h,t\}] \enspace ,\\
\seml \difflist \semr ^1 =&[\{\emptyset,ht,hl\},\{l,h,t\},\{l,h,t\}] \enspace , \\
\seml \difflist \semr ^2 =& \seml \difflist \semr ^1  \enspace .
\end{split}
\]
The result of the analysis is not affected by our improved unification operator: the standard mgu for $\ShLin$, as given in \cite{HW92}, yields exactly the same result. Now, suppose we want to analyze the goal $\leftarrow \difflist(l,h,h)$. This corresponds to the goal $\leftarrow \difflist(l,h,t), h=t$ in head normal form. Its semantics may be computed, using our operators, as
\[
  \mgu_{sl}([\{\emptyset,ht,hl\},\{l,h,t\},\{l,h,t\}],h/t)=[\{\emptyset,ht\},\{l\},\{l,h,t\}] \enspace .
\]
By projecting over $l$ and $h$, we get $[\{\emptyset,h\},\{l\},\{l,h\}]$.
Hence, the analysis is able to infer that, upon exiting the goal $\leftarrow \difflist(l,h,h)$, the variable $l$ is ground.


By using the standard mgu for $\ShLin$ in \cite{HW92}, we get
\begin{equation}
\label{eq:dl-standard}
   [\{\emptyset,ht,htl\},\{l\},\{l,h,t\}] \enspace ,
\end{equation}
hence $l$ is detected to be linear but not ground. The optimizations introduced in \cite{HoweK03,HillZB04,BagnaraZH05TPLP} do not improve this result. This is a consequence of the fact that these optimizations have been developed to be correct also for rational trees. In this case, you cannot infer that $l$ is ground after $\leftarrow \difflist(l,h,h)$ since the substitution in rational solved form $\{l/[v], h/[v | h]\}$ is a correct answer for the same goal.

If we perform the analysis in $\ShLinp$, using our operators we have $\seml \difflist \semr = [\{ \emptyset, hl, ht \}]_{lht}$ and the result for the goal $\leftarrow \difflist(l,h,h)$ is $[\{ \emptyset, h \}]_{lh}$. However, by using the original operator in  \cite{King94}, the semantics of $\difflist$ does not change, but the result for the goal $\leftarrow \difflist(l,h,h)$ is   $[\{\emptyset,h^\infty  ,h^\infty  l^\infty \}]_{lh}$ thus $l$ is not proven to be either ground or linear.

The fact that optimal operators improve groundness information is somehow surprising. Generally, one expects that groundness affects aliasing analysis, but not vice-versa. In fact, it is well known that $\Sharing$ is a refinement \cite{CFGPR97} of the domain $\Def$. However, as far as groundness is concerned, the precision of $\Sharing$ and $\Def$ is the same, \ie the other objects included in $\Sharing$ do not improve groundness analysis \cite{CFW98}. As far as we know, there is no abstract unification operator in the literature, for a domain dealing with sharing, freeness and linearity, which is more precise that $\Def$ for groundness. On the contrary, the example above shows that  $\ShLin$, endowed with the optimal unification, improves over $\Def$. Amazingly, in this example $\ShLin$ is even better than $\Prop$  \cite{ArmstrongMSS94}. In the latter, the abstract semantics of $\difflist$ is $h \leftrightarrow (l \wedge t)$, \ie $h$ is ground iff both $l$ and $t$ are ground. The result of the analysis for the goal  $\leftarrow \difflist(l,h,h)$ is $\exists_t ~ (h  \leftrightarrow (l \wedge t) \wedge h \leftrightarrow t)$. This is equivalent to $h \rightarrow l$ which does not imply groundness of $l$. Actually, $h \rightarrow l$ is the groundness information which may be inferred by \eqref{eq:dl-standard}.

\subsection{Another example for $\ShLinp$}

As far as we know, there is no implementation or experimental evaluation of the domain $\ShLinp$. We think it would be worthwhile to give such an implementation and that there is some evidence that $\ShLinp$ improves over $\ShLin$ also in practice. For instance, we show a simple program where King's domain is more precise than $\ShLin$ with optimal operators.

We provide a variant of the predicate $\difflist/3$, which we call $\difflist'/2$, with only two arguments: head and tail of the difference list are encoded in the second argument as the term $\mathit{head} \setminus \mathit{tail}$.
\[
\begin{split}
\difflist'(l, d) &\leftarrow l=[], d = h \setminus h.\\
\difflist'(l, d) &\leftarrow l=[x|l'], d=[x|h] \setminus t, d'=h \setminus t, \difflist'(l',d').
\end{split}
\]
We informally compute the goal-independent analysis of $\difflist'$ on the domain $\ShLin$, which gives:
\[
\begin{split}
\seml \difflist' \semr ^0 =&[\{\emptyset\},\{d,l\},\{d,l\}] \enspace ,\\
\seml \difflist' \semr ^1 =&[\{\emptyset,dl,d\},\{l\},\{d,l\}] \enspace ,\\
\seml \difflist' \semr ^2 =& \seml \difflist' \semr ^1 \enspace .
\end{split}
\]
The same analysis, computed over $\ShLinp$, gives
\[
\begin{split}
\seml \difflist' \semr ^0 =&[\{\emptyset\}]_{dl} \enspace , \\
\seml \difflist' \semr ^1 =&[\{\emptyset, dl, d, d^\infty\}]_{dl} \enspace , \\
\seml \difflist' \semr ^2 =& \seml \difflist' \semr ^1 \enspace .
\end{split}
\]

Now, suppose we want to analyze the goal $\leftarrow \difflist'(l,d), d=[x_1,x_2|h]\setminus t$, which extracts the first two elements from the difference list $d$. In $\ShLin$ we have the following
\begin{multline*}
  \mgu_{sl}([\{\emptyset,dl,d\},\{l\},\{d,l\}],d/[x_1,x_2|h]\setminus t)=\\
  [\{\emptyset\} \cup \bin(\{dl, d\},\{x_1,x_2,h,t\}^*), \{l\},\{d,l,x_1,x_2,h,t\}] \enspace .
\end{multline*}
Note that the sharing group $dlx_1x_2$ is part of the result. If we repeat the analysis in $\ShLinp$, we have
\begin{multline*}
  \mgu_{\an}([\{\emptyset,dl, d, d^\infty\}]_{dl},d/[x_1,x_2|h]\setminus t)=
  \big[\{\emptyset, dlx_1, dlx_2, dlh, dlt\} ~ \cup \\
  \downclo \bigl\{ \Andybin X \mid X \in \wp(\{d^\infty x_1^\infty, d^\infty x_2^\infty,d^\infty h^\infty,d^\infty t^\infty\}) \bigr\}\big]_{dlx_1x_2ht} \enspace .
\end{multline*}
This result does not contain the sharing group $dlx_1x_2$.

Generally speaking, it is easier to analyze the predicate $\difflist/3$ than $\difflist'/2$. \citeN{CodishMT00} propose a method named \emph{untupling} which is able to automatically recover $\difflist/3$ from $\difflist'/2$.

\section{Related work}
\label{sec:related}

In this paper, we work with a concrete domain of substitutions on finite trees. In the literature, some authors deal with rational trees.

  Since any correct operator for rational trees is also correct for finite trees, we can compare the unification operators for rational trees with ours (of course, this is not entirely fair as far as the precision is concerned). The opposite is not true, since an abstract unification operator for finite trees may be able to exploit the occur-check condition. We have shown in Example \ref{rational} that our optimal operator can exploit the occur-check condition, and thus it is not correct for rational trees.

\subsection{$\Sharing$}
It is well-known that the abstract unification operator of the domain $\Sharing$ alone (\ie without any freeness or linearity information) is optimal. \citeN{CF99} give a formal proof of optimality, considering a slightly different unification operator with two abstract objects and a concrete substitution. Since the two abstract objects are renamed apart, it is equivalent to consider a single abstract object. The basic idea underlying the proof is to exhibit, for each sharing group in the result of the unification, a pair of concrete substitutions generating the resulting sharing group. We follow the same constructive schema in the proof of optimality for $\Linp$ (but we look for a single substitution, due to the different concrete operator). Instead,
to prove optimality for $\ShLin$ and $\ShLinp$, we use a direct approach and show that the abstract unification operator corresponds to the best correct abstraction (\ie $\alpha \circ \mgu_\omega \circ \gamma$) of the unification on $\Linp$ with simple (although tedious) algebraic manipulations.

A different unification operator has been proposed in \cite{AmatoS02agp,AmatoS09sharing} for goal-dependent analysis of $\Sharing$. In this paper, the standard unification operator is splitted into two different operators for forward and backward unification. Both operators are proved to be optimal and the overall analysis is strictly more precise than the analysis performed on $\Sharing$ equipped with the standard operator.

As far as we know, these are the only optimality results for domains encoding aliasing properties.

\subsection{$\Sharing \times \Lin$}
In most of the work combining sharing and linearity, freeness information is included in the abstract domain. In fact, freeness may improve the precision of the aliasing component and it is also interesting by itself, for example in the parallelization of logic programs \cite{HermenegildoR95}. In this comparison, we do not consider the freeness component.

The first work which combines set-sharing with linearity is \cite{Langen90}, followed by \cite{HW92}. The initial unification algorithm has been improved by \citeN{HoweK03} and \citeN{HillZB04} by removing an independence test. This increases the number of cases when linearity information may be exploited. \citeN{BagnaraZH05TPLP} propose a different improvement, adopting an idea by \citeN{King94} for  the domain $\ShLinp$, which simplifies the unification of a linear term with a non-linear one. Example \ref{rational} shows that, even adopting all these improvements, we still obtain a strictly more precise operator. Since our operator is optimal, any further improvement is now impossible.

\citeN{BagnaraHZ-ta-TCS} show that, if we are only interested in pair-sharing information, $\Sharing$ is redundant. They propose a new domain $\PSD$ which is obtained by discharging redundant sharing groups. A sharing group $B$ in a set $S$ is redundant if $\card{B} > 2$ and  $\forall x,y \in B.~\exists C \in S.~\{x,y\} \subseteq C \subset B$. Analyses performed with $\PSD$ are shown to be as precise as those performed with $\Sharing$, if only pair-sharing information is required. \citeN{HillZB04} introduce the domain $\PSD \times \Lin \times \Free$.  Example \ref{rational} shows that our operator is still more precise (of course, without considering the freeness component), because of the sharing group $vxy$ which does not appear in $S'$ and is not redundant for $\PSD$. In any case, \citeN{Buenod04} have shown that classical applications of sharing analyses, such as parallelization of logic programs, are able to exploit information which is encoded in $\Sharing \times \Free$ but not in $\PSD\times \Free$.

An alternative presentation of $\Sharing \times \Lin$, based on \emph{set logic programs}, has been introduced by \citeN{CodishLB00}. However, the proposed operators are not optimal, as shown in \citeNP{HillZB04}.

The domain $\ShLinp$ is introduced by \citeN{King94}, which provides correct operators for abstract unification. However, these operators are not optimal, as Examples \ref{ex:king1} and \ref{ex:king2} show.

\subsection{$\ASub$}

An alternative approach to aliasing analysis is to only record sharing between pairs of variables (and possibly linearity and groundness information). The best known domain of this category is $\ASub$, introduced by \citeN{S86} and formalized by \citeN{CDY91}. The domain $\ASub$ is the reduced product of pair-sharing, $\Lin$ and $\Con$ \cite{JS87}, which is the simplest domain for definite groundness. Recently, \citeN{King00jlp} reformulated the proofs in order to work with rational trees. Moreover, King's algorithms are parametric \wrt\ the groundness domain, allowing to replace $\Con$ with more precise domains such as $\Def$ and $\Prop$.

The domain $\Sharing \times \Lin$ is strictly more precise than $\ASub$, since it embeds more groundness information (equivalent to  $\Def$) and set-sharing information. Since our operator for $\Sharing \times \Lin$ is optimal, we are sure that the analyses performed in $\Sharing \times \Lin$ are strictly more precise than those in $\ASub$.

The following is a counterexample to the optimality of the abstract unification in \cite{King00jlp}, in the case of finite trees, when pair sharing is equipped with $\Def$ or $\Prop$.


\begin{example}
\label{ex:asub}
Consider the object $\kappa=(x \leftrightarrow y, \{xy\})$ where the first component is a formula of $\Def$ and  $\Prop$ and $\{xy\}$ is the set of pairs of variables which may possibly share. In this domain, linearity information is embedded in the second component in the following way: if $v$ is not linear, then $vv$ must be included in the second component. Thus, both $x$, $y$ and $z$ are linear in $(x \leftrightarrow y, \{xy\})$. We want to unify $\kappa$ with $x/f(y,z)$. By using the algorithm \cite{King00jlp}, we obtain $(y \leftrightarrow x \wedge x \rightarrow z,\{xy,xz,yz,xx,yy\})$. However, in $\Sharing \times \Lin$ we may represent $\kappa$ with $[S,L,U]=[\{xy,z\},\{x,y,z\},\{x,y,z\}]$ and $\mgu_\shl([S,L,U],x/f(y,z))=[\{xy\},\{z\},\{x,y,z\}]$ which proves that $z$ is ground.
\exproofbox
\end{example}

Actually, \citeN{King00jlp} does not state explicitly how to compute the groundness component of the result, although he says that it must be computed before the linearity and pair-sharing components, in order to improve precision. However, it seems safe to assume that the author's intention was to compute the groundness component using the abstract operators already known, and therefore independently from the pair sharing component. This is what makes our operator more precise, since linearity information may help in tracking ground variables when working over finite trees.

\subsubsection{Alternating paths}

The domain $\ASub$ and its derivatives \cite{King00jlp} use the concept of \emph{alternating path}. Alternating paths may seem the counterparts, for pair-sharing, of sharing graphs. We now investigate this idea, and show to what extent this correspondence is faithful.

We call \emph{carrier graph} a special graph defined by a set of equations $E$. Each distinct occurrence of a variable in $E$ is a node. Edges in the carrier graphs can be of two types:
\begin{itemize}
 \item  edges of \emph{type one} between two variable occurrences if the occurrences are on opposite sides of a single equation in $E$,
\item edges of \emph{type two} linking two (distinct) occurrences of the same variable. 
\end{itemize}
An alternating path is a sequence of edges of alternating type over the carrier graph.

Alternating paths in $\ASub$ (and derivatives) are used to prove correctness of the abstract unification operators. For example, they are used to prove Prop. 3.1 in \cite{King00jlp}. Sharing graphs are used in this paper to prove Theorem \ref{th:algebraic}, which is the starting point to prove correctness and optimality of the unification algorithms for $\ShLinp$ and $\ShLin$. However, sharing graphs are also used to compute the abstract unification in $\Linp$. Even if alternating paths are not used, in the literature, for computing abstract unification, they could. For any object of pair-sharing $o$, which is a set of pairs of variables, consider any substitution $\theta$ in the concretization of $o$. Then, the object $o$ is an abstraction of the set of alternating paths in $\theta$. More precisely, it  represents all the paths which start and end with edges of type one, which we call \emph{admissible paths}. They are abstracted by considering only the start and end variables. In order to unify $o$ with the binding $x/t$, we build a carrier graph with all the occurrences of variables in $o$ and $x/t$. For each pair of variables in $o$, we add an edge of type one. We add edges of type one and two for the binding $x/t$, as explained above. Finally, we add all the type two edges between  any occurrence in $x/t$ and any occurrence of the same variable in $o$. We consider all the admissible paths over the graph so obtained. It is not difficult to check that the result of the unification algorithm for pair-sharing in \cite{King00jlp}, without any additional groundness domain, is the set of all the start and end variables for all these admissible paths.

\begin{example}\label{ex:carriergraph}
Let $S=\{xv\}$ be the set of pairs of variables which share, and consider the binding $x/r(y,y)$. We obtain the carrier graph:
 \[
  \xymatrix{
      &&& y \ar@{=}[dd]^{2} \\
      v \ar@{-}[r]^{1} & x \ar@{=}[r]^{2} & x \ar@{-}[ru]^{1} \ar@{-}[rd]^{1} \\
      &&& y
  }
 \]
which gives origin to several alternating paths. Among them, there is an admissible path from $v$ to $v$, which proves that $v$ is not linear after the unification.
\exproofbox
\end{example}

The first difference between alternating paths and sharing graphs is that all the alternating paths are subgraphs of the same carrier graph, while each sharing graph has a different structure, with a different set of nodes. The second difference is that the  information coming from the abstract object and the binding is encoded in a different way. For instance, consider the set $S=\{xy\}$ and the binding $x/z$. We obtain a carrier graph with 4 nodes $x,y,x,z$, two edges $\oneedge{x}{y}$ and $\oneedge{x}{z}$ of type one, and an edge $\twoedge{x}{x}$ of type two. Therefore, the sharing information coming from the initial pair-sharing and the binding is treated symmetrically, and is entirely encoded on the edges. Performing unification on the carrier graph boils down to devising the alternating paths on the graph.
On the contrary, each sharing graph has a set of nodes labelled by $xy$, $x$ and $z$,  with suitable multiplicities. The labels of the nodes encode the initial pair-sharing information, while the binding  affects the multiplicity of nodes.  The process of unification consists of adding the necessary arrows  to get a sharing graph.

If we consider a single alternating path in a carrier graph and the sharing graph for the same pair-sharing information and the same binding, they are obviously related, although not in a straightforward manner. Consider an admissible path and delete all type two edges, collapsing in a single node their start and end nodes (type two edges are used in the carrier graph to avoid the creation of invalid paths, but in a single alternating path they do not add information).  Then, each type one edge coming from the initial pair-sharing information corresponds to a node in the sharing graph, while a type one edge coming from the binding becomes an arrow in the sharing graph.

\begin{example}
Consider Example \ref{ex:carriergraph}. We depict the (admissible) alternating path from $v$ to $v$, its collapsed version and the corresponding sharing graph.
\[
  \xymatrix{v \ar@{-}[r]^{1}  &  x \ar@{=}[r]^{2}  &  x \ar@{-}[r]^{1} & y  \ar@{=}[r]^{2}  &  y \ar@{-}[r]^{1}  &  x \ar@{=}[r]^{2}  &  x \ar@{-}[r]^{1}  & v \\
    v \ar@{-}[rr]^{1}  &&  x  \ar@{-}[rr]^{1} && y   \ar@{-}[rr]^{1}  &&  x  \ar@{-}[r]^{1}  & v \\
     && \ovalee{xv}_1^0  \ar[rr] && \ovalee{y}_0^2  &&  \ovalee{xv}_1^0 \ar[ll] \\
  }
\]
Note that, while in the carrier graph, non-linearity of the variable $x$ is handled by duplicating the variable $y$ which occurs twice, in alternating paths without type two nodes, the duplicated variables are $x$ and $v$, which are connected to $y$. The same holds in the sharing graph, where we have only one node labeled by $y$ and two nodes labelled by $xv$.
\exproofbox
\end{example}

In sharing graphs we also require the multiplicities of a node to be equal to its in- and out-degrees. This makes possible to handle groundness at the same level of sharing and linearity, without requiring a separate domain. Remember that a sharing group $S$ with multiplicity $n$ corresponds, in the concrete domain, to a variable $u$ such that $\theta^{-1}(u)=S$. If the degree of the node labeled with this sharing group is not $n$, this means that one of the occurrences of $u$ is bound to a ground term. This would make ground the entire connected component containing $S$. Hence, in order to correctly and precisely propagate groundness, we just forbid this kind of sharing graphs. On the contrary, the pair-sharing algorithm in \cite{King00jlp}, which focus on a single path in the carrier graph, is not able to extract groundness information without the help of an auxiliary domain.

\subsection{Lagoon and Stuckey's domain}

\citeN{LagoonS02} have recently proposed a different approach to pair-sharing analysis. The authors use multigraphs, called \emph{relation graphs}, to represent sharing and linearity information. The nodes of the multigraph are variables, and two of them may share only if there is a \emph{traversable path} from one variable to the other. Intuitively, each binding generates edges of different types. 
The definition of traversable paths is very similar to that of  alternating paths. A traversable path  is a sequence of edges, such that contiguous edges are always of different types.

This domain should be coupled with a groundness domain, and operators are parametric \wrt the latter one.
The authors show that relation graphs, when coupled with the $\Def$ groundness domain, are more precise than $\Sharing$ and $\ASub$. However, this is not the case for $\Sharing \times \Lin$, at least in the case of finite trees, since the operators in \cite{LagoonS02} are not able to use linearity to improve the precision of the groundness component.

\begin{example}
\label{ex:lagoon-no-better}
As shown in Example \ref{ex:asub}, if we unify  $[S,L,U]=[\{xy,z\},\{x,y,z\},\{x,y,z\}]$ with the binding $x/f(y,z)$, we obtain $\mgu_\shl([S,L,U],x/f(y,z))=[\{xy\},\{z\},\{x,y,z\}]$, proving that $z$ is ground after the unification. In the domains
$\mathrm{\Omega}_\Def$ and $\mathrm{\Omega}_\Prop$ of \cite{LagoonS02}, the abstract object corresponding to $[S,L,U]$ is
\[
  \mu_1=\left( \xymatrix{x \ar@{-}[r] & y}, x \leftrightarrow y \right) \enspace .
\]
Intuitively, the first element of $\mu_1$ encodes the sharing information, namely, that $x$ and $y$ may share (while $z$ does not share neither with $x$ nor with $y$). The second element of $\mu_1$ is an element of $\Prop$ (and also of $\Def$) and denotes the groundness information that $x$ is ground if and only if $y$ is ground.

The unification of $\mu_1$ with $x/f(y,z)$ in $\mathrm{\Omega}_\Prop$ is realized by abstracting the substitution and composing the two abstract object. The abstraction of $x/f(y,z)$ is
\[
  \mu_2=\left( \vcenterbox{\xymatrix{& y \\x \ar@{-}[ru]
        \ar@{-}[rd] & \\ & z}}, x \leftrightarrow (y \wedge z)
  \right) \enspace ,
\]
The first element says that $x$ shares with both $y$ and $z$, while $y$ and $z$ do not share. The second element says that $x$ is ground if and only if both $y$ and $z$ are ground.

The abstract conjunction is
\[
  \mu_1 \curlywedge \mu_2 = \left( \vcenterbox{\xymatrix{& y \\x \ar@{-}@/^/[ru] \ar@{.}@/_/[ru]
        \ar@{-}[rd] & \\ & z}}, (x \leftrightarrow y) \wedge (x \rightarrow z)
  \right) \enspace ,
\]
where edges drawn in different styles are compatible, namely, that they come from different bindings.
From this result, it is not possible to infer that $z$ is ground after the unification.
\exproofbox
\end{example}
In the actual implementation, \citeN{LagoonS02} use another representation for their domain. Each pair of variables is annotated with a formula denoting the groundness models under which the corresponding pair-sharing may occur. For example, a pair $uv$ annotated with the formula $\bar{u} \wedge \bar{v} \wedge \bar{w} \wedge \bar{z}$ means that $u$ and $v$ may share only if none of $u,v,w,z$ is ground. We conjecture that this domain may be embedded in King's $\ShLinp$. The next example shows how to perform this embedding.

\begin{example}\label{ex:traversablepath}
 We consider the example in Figure $4$ in \cite{LagoonS02}. The variables of interest are $u,v,w,z$.
\[\begin{array}{lcllcl}
 uw &:& \bar{u} \wedge \bar{w} & uz &:& \bar{u} \wedge \bar{z}\\
 vz &:& \bar{v} \wedge \bar{z} & uu &:& \bar{u} \wedge \bar{w}\wedge \bar{z}\\
 uv &:& \bar{u} \wedge \bar{v} \wedge \bar{w}\wedge \bar{z} \qquad\qquad & vv &:& \bar{v} \wedge \bar{w}\wedge \bar{z}\\
 wz &:& \bar{w} \wedge \bar{z} & vw &:& \bar{v} \wedge \bar{w}
\end{array}
\]
For instance, $uv : \bar{u} \wedge \bar{v} \wedge \bar{w}\wedge \bar{z}$ means that $u$ and $v$ may share only if $u,v,w,z$ are not ground, while  $uu : \bar{u} \wedge \bar{w}\wedge \bar{z}$ means that $u$ is (possibly) not linear only if $u,w,z$ are not ground.  Each of these formulas  may be viewed as a condition over $2$-sharing groups. For example $uv : \bar{u} \wedge \bar{v} \wedge \bar{w}\wedge \bar{z}$ means that every $2$-sharing group which contains $u$ and $v$ should also contain $w$ and $z$, while $uu : \bar{u} \wedge \bar{w}\wedge \bar{z}$ means that each $2$-sharing group where $u$ is non-linear should also contain $w$ and $z$. In order to find the object of $\ShLinp$ which corresponds to this example, it is enough to collect all the $2$-sharing groups which satisfy all the conditions enforced by the formulas. In this case, we get $\downclo \{u^\infty v^\infty w z, u^\infty wz, v^\infty wz, uw, vz, wz, uz, vw, u,v,w,z\}$.
\exproofbox
\end{example}

\subsubsection{Traversable paths}

The idea behind traversable paths is very similar to the concept of alternating path and relation graphs are quite similar to carrier graphs. From a carrier graph, we can obtain a relation graph by removing type two edges and introducing a different type of edge for each binding. This works
because the use of non-linear terms is forbidden: a binding like $x / r(y,y)$ has to be replaced by two bindings $x / r(y,z)$ and $y / z$. However, the main difference \wrt traditional pair-sharing (and also $\Linp$) is that \citeN{LagoonS02} do not abstract traversable paths to set of pairs of variables, but they keep in the abstract object the set of all the edges generated during the unification process. In this way, they are able to record that, in order for two variables $x$ and $y$ to share, the only possible path touches another variable $z$.  Hence, if $z$ is ground, $x$ and $y$ cannot share: in this way they recover pair sharing dependence information which would be  lost otherwise.

We could follow the same approach and use multilayer sharing graphs (namely, sets of sharing graphs over the same set of nodes, where each layer represents the unification with a single binding) as abstract objects, without collapsing them to sharing groups. We do not think this would improve precision of the domain very much, since a sharing group is already a much more concrete abstraction of a graph \wrt the set of all the connected pairs of variables.
In fact, already $\Sharing$ can encode the information that, grounding a certain variable $z$, two variables $x$ and $y$ become independent. Moreover, in the Example \ref{ex:traversablepath} we have shown that relation graphs may be encoded into $\ShLinp$.

\subsection{Rational trees}

In the recent years, many authors have studied the behavior of logic programs on \emph{rational trees} \cite{King00jlp,HillZB04}, which formalize the standard implementations of logic languages.  We have shown that our operators, which are optimal for finite trees, are not correct for rational trees, since they exploit the occur-check to reduce the sharing groups generated by the abstract unification (see Example~\ref{rational}).  It would be interesting to adapt our framework to work with rational trees, in order to obtain optimal operators also in this case.
Since a rational tree may contain infinite occurrences of a variable, the notion of $\omega$-sharing group needs to be extended in order to allow infinite exponents. Also, we need to consider infinite sharing graphs (or, at least, a representation of them) and find suitable regularity conditions for them, analogously to the regularity conditions on rational trees.
\begin{example}
Consider the set of $\omega$-sharing groups $S=\{xy,z\}$ and the binding $x/r(z,y)$. On rational trees, unifying $\delta=\{x/y\}$ (such that $[S]_{xyz}  \rightslice [\delta]_{xyz}$) with $x/r(z,y)$ would get the substitution $\{x/r(z,x),y/r(z,y)\}$ in rational solved form. This, intuitively, corresponds to the sharing group $x^\omega y^\omega z$ where the exponent $\omega$ denotes an infinite number of occurrences. A possible (infinite) sharing graph generating this sharing group is the following:
\[
\xybox{
  0 *{\cdots\quad}="a",
  "a"+<\dist,0cm>*{\ovalee{xy}_1^1}="b",
  "b"+<\dist,0cm>*{\ovalee{xy}_1^1}="c",
  "c"+<\dist,0cm>*{\ovalee{xy}_1^1}="d",
  "d"+<\dist,0cm>*{\ovalee{z}_0^1}="e",
  {"a" \ar  "b"},
   {"b" \ar  "c"},
  {"c" \ar  "d"},
  {"d" \ar  "e"},
} 
\]
\exproofbox
\end{example}
\medskip
Although the structure of abstract objects and operators for adapting $\Linp$ to work with rational trees
is more complex, we expect the optimal abstract operators for rational trees on $\ShLinp$ and $\Sharing \times \Lin$ to be simpler than those presented here for finite trees. This is because we do not need to worry about the occur-check condition (embedded in our unification operator) and infinite multiplicities.

\section{Conclusion and Future Works}

We summarize the main results of this paper:
\begin{itemize}
\item We define a new domain $\Linp$ as a general framework for
  investigating sharing and linearity properties and provide the optimal unification
  operator.
\item We show that $\Linp$ is a useful starting point for studying
  further abstractions. We obtain the optimal operators for
  single binding abstract unification in $\Sharing \times \Lin$ and
  $\ShLinp$, and we show that these are strictly more precise
  than all the other operators in the literature for the same domains.
\item We show, for the first time, an optimality result for a domain which
  combines aliasing and linearity information.
\end{itemize}

Moreover, as a negative result, we prove that the standard schema of the iterative unification algorithm (one binding at a time) does not lead to optimal operators for the domains  $\ShLinp$ and $\Sharing \times \Lin$. As a side result, we show that $\ShLin$ and $\ShLinp$ with optimal operators may be more precise than $\Prop$ for groundness analysis.

Several things remain to be explored: first of all, we need to study the impact on the precision and performance obtained by adopting the new optimal operators and domains. We plan to implement the operators on $\ShLinp$ and $\Sharing \times \Lin$ within the CiaoPP static analyzer \cite{ciao-reference-manual-tr}. Moreover, we plan to analyze the domain $\PSD \times \Lin$ \cite{BagnaraHZ-ta-TCS} in our framework and, possibly, to devise a variant of $\ShLinp$ which enjoys a similar closure property for redundant sharing groups. This could be of great impact on the efficiency of the analysis. Last but not least, we plan to translate our framework to the case of unification over rational trees.

\appendix

\renewcommand\thesection{\Alph{section}}

\section{Proofs of Section \ref{sec:shlinomega}}
\label{sec:mainproof}
In this section we give the proofs of the main results of the paper.

\newtheorem{theoremproof}{Theorem}
\newtheorem{propositionproof}{Proposition}

\renewcommand{\thetheoremproof}{\ref{th:wdapprox}}

\begin{theoremproof}
  The relation $\rightslice$ is well defined.
\end{theoremproof}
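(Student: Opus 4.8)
The plan is to read ``well defined'' in the sense made explicit later in the text (where the well-definedness of $\alpha_\omega$ is said to follow ``from the proof''): the relation $[S]_U \rightslice [\theta]_W$ is stated in terms of a \emph{representative} $\theta$ of the existential substitution $[\theta]_W$, so I must show its truth value does not depend on that choice. Since by definition $[S]_U \rightslice [\theta]_W$ holds iff $U=W$ and $\{\theta^{-1}(v)|_W \mid v \in \var\} \subseteq S$, it suffices to prove the slightly stronger, representative-free statement that the whole collection $\mathcal{C}(\theta) := \{\theta^{-1}(v)|_W \mid v \in \var\}$ depends only on the class $[\theta]_W$ and not on $\theta$. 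I would isolate this as the real content of the theorem and reduce everything to it in the first lines.

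Next I would take two representatives of the same class, i.e.\ $\theta \sim_W \theta'$, which by the definition of $\sim_W$ in Section~\ref{sec:existential} yields a renaming $\rho \in \Ren$ with $\theta(x) = \rho(\theta'(x))$ for every $x \in W$. The key technical ingredient is the occurrence-counting identity for renamings: for every $\rho \in \Ren$, every variable $v$ and every term $t$, one has $\occ(v,\rho(t)) = \occ(\rho^{-1}(v),t)$. This is a one-line structural induction on $t$ using only that $\rho$ is a bijection on $\var$, so each occurrence of $\rho^{-1}(v)$ in $t$ corresponds bijectively to an occurrence of $v$ in $\rho(t)$.

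I would then combine the two facts. For each $w \in W$ we get $\occ(v,\theta(w)) = \occ(v,\rho(\theta'(w))) = \occ(\rho^{-1}(v),\theta'(w))$, and since $\theta^{-1}(v)|_W$ reads $\theta$ only on the variables of $W$ (exactly where the representatives agree up to $\rho$), this gives the multiset equality $\theta^{-1}(v)|_W = (\theta')^{-1}(\rho^{-1}(v))|_W$ over $W$. Finally, because $\rho^{-1}$ is a bijection on $\var$, letting $v$ range over $\var$ makes $\rho^{-1}(v)$ range over all of $\var$ as well, whence $\mathcal{C}(\theta) = \mathcal{C}(\theta')$. This is the desired invariance and concludes well-definedness.

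I do not expect a genuine obstacle here: the argument is routine once the right invariant $\mathcal{C}(\theta)$ is identified. The only points requiring care are the occurrence identity under renaming and the observation that a renaming merely \emph{permutes} the indexing variable $v$ of the sharing groups rather than altering the multiset of sharing groups produced; the restriction to $W$ is harmless for precisely the reason noted above. If one wants the weaker literal reading of the statement, the inclusion $\mathcal{C}(\theta) \subseteq S \iff \mathcal{C}(\theta') \subseteq S$ is then immediate from $\mathcal{C}(\theta) = \mathcal{C}(\theta')$.
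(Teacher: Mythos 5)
Your proof is correct and follows essentially the same route as the paper: both reduce well-definedness to the invariance of the set $\{\theta^{-1}(v)|_U \mid v \in \var\}$ under change of representative, and both use the renaming $\rho$ from the definition of $\sim_U$ to set up a bijective correspondence $v \mapsto \rho^{-1}(v)$ (the paper writes it as $w=\rho(v)$) between the indexing variables of the two collections. The only difference is cosmetic: you spell out the occurrence-counting identity $\occ(v,\rho(t))=\occ(\rho^{-1}(v),t)$ that the paper leaves implicit.
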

\begin{proof}
  It is enough to prove that $\{{\theta_1}^{-1}(v)|_U \mid v \in \var
  \} = \{{\theta_2}^{-1}(v)|_U \mid v \in \var\}$ when $\theta_1
  \sim_U \theta_2$. Assume that $\theta_1 \sim_U \theta_2$, then by
  definition of $\sim_U$ there exists a renaming $\rho$ such that
  $\rho(\theta_1(u))=\theta_2(u)$ for each $u \in U$. Given
  $S=\theta^{-1}_1(v)|_U$, if $w=\rho(v)$ we have
  $\theta_2^{-1}(w)|_U=\theta_1^{-1}(v)|_U=S$. This concludes the
  proof.
\end{proof}

\renewcommand{\thepropositionproof}{\ref{prop:chi}}
\begin{propositionproof}
  Given a substitution $\theta$, a
  variable $v$ and a term $t$, we have that
  $\chi(\theta^{-1}(v),t)=\occ(v,\theta(t))$. Moreover, given a set of variables $U$, when $\vars(t) \subseteq U$, it holds that $\chi(\theta^{-1}(v)|_U,t)=\occ(v,\theta(t))$.
\end{propositionproof}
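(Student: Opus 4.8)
The plan is to establish the first identity $\chi(\theta^{-1}(v),t)=\occ(v,\theta(t))$ by structural induction on the term $t$, and then obtain the restricted identity as an immediate consequence. First I would unfold the two definitions involved to recast the goal in a purely combinatorial form. By the definition of $\theta^{-1}(v)$ and of $\chi$,
\[
\chi(\theta^{-1}(v),t)=\sum_{w \in \supp{\theta^{-1}(v)}} \occ(v,\theta(w)) \cdot \occ(w,t) = \sum_{w \in \var} \occ(v,\theta(w)) \cdot \occ(w,t) \enspace ,
\]
where the second equality holds because every $w$ with $\occ(w,t)=0$ contributes nothing, while the support of $\theta^{-1}(v)$ is finite so the sum is well defined. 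Thus the claim reduces to the counting identity $\occ(v,\theta(t)) = \sum_{w \in \var} \occ(w,t) \cdot \occ(v,\theta(w))$, which simply records that the occurrences of $v$ in $\theta(t)$ are exactly those contributed by substituting $\theta(w)$ for each occurrence of each variable $w$ appearing in $t$.

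I would then run the induction on $t$. The base cases are a symbol with no variables (both sides are $0$) and a variable $w_0$, where the right-hand sum collapses to $\occ(v,\theta(w_0))=\occ(v,\theta(t))$ since $\occ(w,w_0)=1$ precisely when $w=w_0$. For the inductive step $t \equiv f(t_1,\ldots,t_n)$ I would use that $\theta(t)=f(\theta(t_1),\ldots,\theta(t_n))$, whence $\occ(v,\theta(t))=\sum_{i=1}^n \occ(v,\theta(t_i))$, and apply the induction hypothesis $\occ(v,\theta(t_i))=\chi(\theta^{-1}(v),t_i)$ to each summand. It then remains to observe that $\chi(B,\cdot)$ is additive over immediate subterms, i.e. $\chi(B,t)=\sum_{i=1}^n \chi(B,t_i)$, which follows from $\occ(w,f(t_1,\ldots,t_n))=\sum_{i=1}^n \occ(w,t_i)$ together with distributivity; combining these gives $\occ(v,\theta(t))=\chi(\theta^{-1}(v),t)$.

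For the \emph{moreover} part, I would argue that $\chi(B,t)$ depends only on the entries of $B$ at variables of $t$: since $\chi(B,t)=\sum_w B(w)\cdot\occ(w,t)$ and every $w \notin \vars(t)$ satisfies $\occ(w,t)=0$, the terms outside $\vars(t)$ vanish. Hence, when $\vars(t)\subseteq U$, discarding the entries of $\theta^{-1}(v)$ lying outside $U$ (that is, passing to $\theta^{-1}(v)|_U$) leaves the value of $\chi$ unchanged, so $\chi(\theta^{-1}(v)|_U,t)=\chi(\theta^{-1}(v),t)=\occ(v,\theta(t))$ by the first part.

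I do not expect a genuine obstacle: the whole content is the elementary counting identity for occurrences under simultaneous substitution. The only point demanding a little care is the inductive step, namely justifying the exchange between the summation over immediate subterms $t_i$ and the summation over variables $w$ — equivalently, the additivity of $\chi(B,\cdot)$ over subterms — which is where the argument would be most prone to sloppy bookkeeping and so is the step I would write out most explicitly.
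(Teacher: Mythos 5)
Your proposal is correct and follows essentially the same route as the paper's proof: structural induction on $t$ with constants and variables as base cases, additivity of $\chi(B,\cdot)$ over immediate subterms in the inductive step, and, for the restricted version, the observation that variables outside $\vars(t)$ satisfy $\occ(w,t)=0$ and hence contribute nothing. The only nitpick is in your justification for extending the sum from $\supp{\theta^{-1}(v)}$ to all of $\var$: the terms you add vanish because $\occ(v,\theta(w))=0$ for $w$ outside the support, not because $\occ(w,t)=0$ — but this does not affect the validity of the argument.
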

\begin{proof}
   Let $B=\theta^{-1}(v)$.
  The proof is by induction on the structure of the term $t$. If $t
  \equiv a$ is a constant, then $\occ(v,\theta(a))=\occ(v,a)=0$ which
  is equal to $\chi(B,a)$ since $\occ(w,a)=0$ for each $w \in \var$.
  If $t \equiv w$ is a variable, then $\occ(v,\theta(w))
  =\theta^{-1}(v)(w)=B(w)$.  At the same time, $\chi(B,t)=B(w)$ since
  $\occ(w,w)=1$ and $\occ(y,w)=0$ for $y \not\equiv w$.  For the
  inductive case, if $t \equiv f(t_1, \ldots, t_n)$, we have
  $\occ(v,t)=\sum_{i=1}^n \occ(v,t_i)= \sum_{i=1}^n \chi(B,t_i)$ by
  inductive hypothesis. Moreover
  \[
\chi(B,t)=\sum_{v \in \supp{B}} (
  B(v) \cdot \sum_{i=1}^{n} \occ(v,t_i))
  = \sum_{i=1}^n \sum_{v \in
    \supp{B}} B(v) \cdot \occ(v,t_i)= \sum_{i=1}^n\chi(B,t_i) \enspace .
  \]
Let $U$ be a set of variables with $\vars(t) \subseteq U$. By definition,
$\chi(\theta^{-1}(v)|_U,t)=\sum_{w \in \theta^{-1}(v)|_U} \occ(w,t)$.
Since $\vars(t) \subseteq U$, for any $w\notin U$ it holds that $\occ(w,t)=0$, and thus
$\chi(\theta^{-1}(v)|_U,t)=\chi(\theta^{-1}(v),t)$.
\end{proof}

\renewcommand{\thepropositionproof}{\ref{prop:thetam1}}
\begin{propositionproof}
  Given substitutions $\theta$, $\eta \in \Isubst$ and an $\omega$-sharing group $B$, we have
  \[
    (\eta \circ \theta)^{-1}(B)=\theta^{-1}(\eta^{-1}(B))
    \enspace .
  \]
\end{propositionproof}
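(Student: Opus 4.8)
The plan is to prove the identity pointwise. Since both $(\eta \circ \theta)^{-1}(B)$ and $\theta^{-1}(\eta^{-1}(B))$ are $\omega$-sharing groups, \ie functions $\var \fun \Nat$, it suffices to show that they agree on every variable $w \in \var$. First I would unfold the two definitions at $w$: by the definition of $\theta^{-1}(\cdot)$ applied to an $\omega$-sharing group, the left-hand side evaluated at $w$ is $\chi(B, \eta(\theta(w)))$, while the right-hand side is $\chi(\eta^{-1}(B), \theta(w))$. Writing $t = \theta(w)$, the whole statement therefore reduces to the single key identity
\[
  \chi(B, \eta(t)) = \chi(\eta^{-1}(B), t) \enspace,
\]
valid for every term $t$.

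The core of the proof is establishing this identity, and my preferred route is a short algebraic computation driven by Proposition~\ref{prop:chi}. Expanding the left-hand side by the definition of $\chi$ gives $\sum_{v} B(v) \cdot \occ(v, \eta(t))$, and Proposition~\ref{prop:chi} (applied to $\eta$) rewrites $\occ(v, \eta(t))$ as $\chi(\eta^{-1}(v), t) = \sum_{u} \occ(v, \eta(u)) \cdot \occ(u, t)$. Interchanging the two summations and collecting the coefficient of $\occ(u,t)$, the inner sum $\sum_v B(v) \cdot \occ(v, \eta(u))$ is exactly $\chi(B, \eta(u))$, which by the definition of $\eta^{-1}(B)$ equals $\eta^{-1}(B)(u)$. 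What remains is $\sum_u \eta^{-1}(B)(u) \cdot \occ(u, t) = \chi(\eta^{-1}(B), t)$, as required.

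Alternatively, the key identity can be proved directly by induction on the structure of $t$, mirroring the proof of Proposition~\ref{prop:chi}: the constant and variable base cases are immediate from the definitions, and the inductive step for $t \equiv f(t_1, \ldots, t_n)$ follows because $\chi(\cdot, \cdot)$ distributes over the arguments of a function symbol, so both sides split as a $\sum_{i=1}^n$ of the corresponding quantities for the subterms $t_i$ and the induction hypothesis applies termwise.

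I expect the only real subtlety to be the interchange of summations in the algebraic route: it is justified because $B$ has finite support and $t$ (hence $\eta(t)$) is a finite term, so all the sums involved contain only finitely many nonzero terms. Once the key identity $\chi(B,\eta(t)) = \chi(\eta^{-1}(B),t)$ is in hand, specializing $t = \theta(w)$ and quantifying over $w \in \var$ closes the proof immediately.
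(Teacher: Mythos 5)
Your proposal is correct and is essentially the paper's own proof: both unfold the definitions into a double sum, interchange the two (finitely supported) summations, and invoke Proposition~\ref{prop:chi} exactly once. The only difference is cosmetic direction — the paper computes from $\theta^{-1}(\eta^{-1}(B))$ toward $(\eta \circ \theta)^{-1}(B)$, applying Proposition~\ref{prop:chi} at the end to collapse $\chi(\eta^{-1}(x),\theta(w))$ into $\occ(x,\eta(\theta(w)))$, whereas you reduce pointwise to the identity $\chi(B,\eta(t))=\chi(\eta^{-1}(B),t)$ and apply the proposition at the start to expand $\occ(v,\eta(t))$.
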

\begin{proof*}
  Using the definitions and simple algebraic manipulations, we
  have 
  \begin{eqnarray*}
    && \theta^{-1}(\eta^{-1}(B))\\
    &=& \lambda w. \chi \big( \lambda v. \chi(B, \eta(v)), \theta(w)
    \big)\\
    &=& \lambda w. \sum_{y} \chi(B,\eta(y)) \cdot \occ(y,\theta(w)) \\
    &=& \lambda w. \sum_{y} \left(\sum_{x} B(x) \cdot \occ
    (x,\eta(y)) \right) \cdot  \occ(y,\theta(w))\\
    &=& \lambda w. \sum_{x} B(x) \cdot \sum_{y} \occ
    (x,\eta(y))  \cdot  \occ(y,\theta(w))\\
    &=& \lambda w. \sum_{x} B(x) \cdot \sum_{y} \eta^{-1}(x)(y)
    \cdot  \occ(y,\theta(w))\\
    &=& \lambda w. \sum_{x} B(x) \cdot \chi(\eta^{-1}(x),\theta(w)) \enspace .
  \end{eqnarray*}
  By Prop. \ref{prop:chi}, we have that $\chi(\eta^{-1}(x),\theta(w))=
  \occ(x,\eta(\theta(w))$ and therefore
  \begin{equation*}
    \theta^{-1}(\eta^{-1}(B)) = (\eta \circ \theta)^{-1}(B) \enspace . \mathproofbox
  \end{equation*}
\end{proof*}

\renewcommand{\thetheoremproof}{\ref{thm:correctness_mgu_omega}}
\begin{theoremproof}[Correctness of $\mgu_\lp$]
  The operation $\mgu_\lp$ is correct w.r.t. $\mgu$, \ie
\[
 \forall [S]_U \in \Linp, \delta \in \Isubst.~[S]_U \rightslice [\theta]_{U} \implies
 \mgu_\lp([S]_U,\delta) \rightslice  \mgu([\theta]_{U},\delta) \enspace .
\]

\end{theoremproof}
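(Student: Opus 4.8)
The plan is to establish correctness for a single idempotent binding $x/t$ and then lift it to an arbitrary $\delta$. The multi-binding case follows by induction, since $\mgu_\lp([S]_U,\{x/t\}\uplus\delta)=\mgu_\lp(\mgu_\lp([S]_U,x/t),\delta)$ mirrors the incremental computation $\mgu([\theta]_U,\{x/t\}\uplus\delta)=\mgu(\mgu([\theta]_U,\{x/t\}),\delta)$ of the concrete unifier, and the composition of correct operators is correct. The subcase $\vars(x/t)\nsubseteq U$ reduces to $\vars(x/t)\subseteq U$ through Definition~\ref{def:mguomega}: after enlarging the variables of interest to $U\cup\vars(x/t)$ and adjoining the singletons $\multil v\multir$ for $v\in\vars(x/t)\setminus U$, the enlarged abstract object still approximates $[\theta]_{U\cup\vars(x/t)}$, which is immediate because the adjoined variables do not occur in $\theta$. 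Hence it suffices to treat a single binding with $\vars(x/t)\subseteq U$.

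So fix $[S]_U\rightslice[\theta]_U$ with $\vars(x/t)\subseteq U$, put $\eta=\mgu(\theta(x)=\theta(t))$ and $\theta'=\eta\circ\theta$, so that $\mgu([\theta]_U,\{x/t\})=[\theta']_U$. I must show that $(\theta')^{-1}(v)|_U\in\mgu_\lp(S,x/t)$ for every variable $v$, i.e. exhibit a sharing graph $G$ for $S$ and $x/t$ with $\res(G)=(\theta')^{-1}(v)|_U$. Following the construction in the worked example after the statement, I would linearise $\eta$ into $\eta'$ by replacing every occurrence of a range variable with a fresh, distinct variable, set $\beta=\eta'\circ\theta$, and let $\rho$ be the variable-to-variable substitution collapsing the fresh variables so that $\rho(\beta(w))=\theta'(w)$ for all $w\in U$. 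The nodes of $G$ are the variables $u_i$ with $\rho(u_i)=v$, each labelled by $\beta^{-1}(u_i)|_U$; for every leaf position of the skeleton shared by $\beta(x)$ and $\beta(t)$ whose two variables both map to $v$ under $\rho$, I draw an edge from the $\beta(x)$-variable to the $\beta(t)$-variable. This is well posed because $\rho(\beta(x))=\theta'(x)=\theta'(t)=\rho(\beta(t))$ and $\rho$ merely renames variables, so $\beta(x)$ and $\beta(t)$ carry the same skeleton and differ only at the leaves.

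The verification is then bookkeeping backed by the two auxiliary propositions. Each node variable $u_i$ either is untouched by $\eta$ or occurs exactly once in $\rng(\eta')$ as the image of a single variable of $\theta$; in both cases there is a variable $z_0$ with $\occ(u_i,\beta(w))=\occ(z_0,\theta(w))$ for all $w$, so the label $\beta^{-1}(u_i)|_U$ equals $\theta^{-1}(z_0)|_U$, which belongs to $S$ because $[S]_U\rightslice[\theta]_U$. The out-degree of $u_i$ equals $\occ(u_i,\beta(x))=\occ(z_0,\theta(x))=\chi(\theta^{-1}(z_0)|_U,x)$ by Proposition~\ref{prop:chi} (using $\vars(x/t)\subseteq U$), and symmetrically its in-degree equals $\chi(\theta^{-1}(z_0)|_U,t)$, so the degree conditions hold. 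Finally, since renaming a term sends $\occ(v,\rho(\beta(w)))$ to $\sum_{\rho(u_i)=v}\occ(u_i,\beta(w))$, summing the labels gives $\res(G)(w)=\occ(v,\theta'(w))$ for $w\in U$, i.e. $\res(G)=(\theta')^{-1}(v)|_U$; this is the concrete shadow of Proposition~\ref{prop:thetam1} applied to $\theta'=\eta\circ\theta$.

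The one genuinely delicate point, and the step I expect to be the main obstacle, is the connectedness requirement. The underlying undirected graph of $G$ should coincide with the class of $v$ under the identifications made by the most general unifier of $\beta(x)=\beta(t)$: its nodes are the leaf variables and its edges join the two variables sitting at a common leaf of the shared skeleton. Solving $\beta(x)=\beta(t)$ amounts to a system of variable--variable equations, whose mgu identifies exactly the variables lying in one connected component of that system. Because $\theta'$ is itself a most general unifier of $\theta$ and $x/t$ and $\eta=\rho\circ\eta'$, the collapsing renaming $\rho$ realises precisely these components, so $\{u_i\mid\rho(u_i)=v\}$ is a single component and $G$ is connected; in the degenerate case where $v$ does not occur in $\theta(x)$ or $\theta(t)$ the graph has one node and no edges, which counts as connected. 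The real work is to prove rigorously that $\rho$ matches the partition induced by the mgu, correctly accounting for variables shared between $\beta(x)$ and $\beta(t)$ and for variables occurring at several leaves.
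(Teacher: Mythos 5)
Your construction follows essentially the same route as the paper's proof (reduce to a single binding with $\vars(x/t)\subseteq U$; linearise $\eta=\mgu(\theta(x){=}\theta(t))$ into $\eta'$; set $\beta=\eta'\circ\theta$ and collapse back with the variable-to-variable map $\rho$; check $[S]_U\rightslice[\beta]_U$; label the variables of $\beta(U)$ with $\beta^{-1}(\cdot)|_U$ and carve the sharing graph out of the fibre $\rho^{-1}(v)$), but the step you explicitly defer --- that the fibres of $\rho$ on $\vars(\beta(U))$ coincide with the equivalence classes of $\delta=\mgu(\beta(x){=}\beta(t))$ --- is a genuine gap, and it is needed earlier than you acknowledge. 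You present it as only the connectedness obstacle, but your degree bookkeeping already relies on it: in your fibre-restricted graph the out-degree of a node $u_i$ is the number of positions $\xi$ with $\beta(x)(\xi)=u_i$ \emph{and} $\rho(\beta(t)(\xi))=v$, and this equals $\occ(u_i,\beta(x))$ only if every variable facing an occurrence of $u_i$ lies in the same fibre, i.e.\ only if fibres are closed under the edges of the equation system. So one direction of the correspondence ($\delta$ identifies $\Rightarrow$ $\rho$ identifies) is what validates the in/out-degree condition of sharing graphs, while the other direction ($\rho$ identifies $\Rightarrow$ $\delta$ identifies) is what gives connectedness; asserting that ``$\rho$ realises precisely these components'' because $\theta'$ is an mgu is the statement to be proven, not a proof of it.

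The paper closes exactly this hole with a short argument you could adopt verbatim: since $\theta'\preceq_U\beta\preceq_U\theta$, one has $[\theta']_U=\mgu([\beta]_U,\{x/t\})=[\delta\circ\beta]_U$, and since also $\theta'\sim_U\rho\circ\beta$, it follows that $\rho\circ\beta\sim_U\delta\circ\beta$; hence there is a \emph{renaming} $\tau$ with $\tau(\rho(\beta(u)))=\delta(\beta(u))$ for every $u\in U$. Reading this equality off position by position gives $\tau(\rho(w))=\delta(w)$ for each $w\in\vars(\beta(U))$, and injectivity of $\tau$ then yields $\rho(w_1)=\rho(w_2)\iff\delta(w_1)=\delta(w_2)$ for all $w_1,w_2\in\vars(\beta(U))$. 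Since $\delta$ is the mgu of the set of variable--variable equations read off the common skeleton of $\beta(x)$ and $\beta(t)$, its classes are exactly the connected components of your edge system, so closure under edges and connectedness of each fibre follow at once. Two smaller points: when $B=\emptymulti$ your fibre is empty and the construction produces a graph with no nodes, which is not a legal multigraph; the paper treats this case separately with the one-node graph labelled $\emptymulti\in S$. And your reduction for $\vars(x/t)\nsubseteq U$ should be routed through $\mgu([\theta]_U,[\epsilon]_{\vars(x/t)})$ as in the paper, since ``the adjoined variables do not occur in $\theta$'' only holds after renaming the representative apart from $x/t$, which is precisely what the concrete mixed unification does.
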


\begin{proof}
  Given $[S]_U \rightslice [\theta]_{U}$ and $\delta \in \Isubst$,
  we need to prove that $\mgu_\lp([S]_U,\delta) \rightslice
  \mgu([\theta]_{U},\delta)$ or the equivalent property
  $\alpha_\lp(\mgu([\theta]_U,\delta)) \leq_\lp
  \mgu_\lp([S]_U,\delta)$.

  Since $\mgu_\lp$ is defined inductively on the number of
  bindings in $\delta$, it is enough to prove that
  $\mgu_\lp([S]_U,x/t) \rightslice \mgu([\theta]_{U},\{x/t\})$ for a single binding $x/t$. Since composition of correct operators is still correct, it follows that multi-binding unification is correct.

Moreover, when $\vars(x/t) \not\subseteq U$, we exploit the identity $\mgu([\theta]_{U},\{x/t\}) =
  \mgu(\mgu([\theta]_{U}, [\epsilon]_{\vars(x/t)}),\{x/t\})$. When computing
$\mgu([\theta]_U,[\epsilon]_{\vars(x/t)})$ all the variables in
$\vars(x/t) \setminus U$ occurring in $\theta$ are renamed apart
from $x/t$ itself. Therefore each $v \in \vars(x/t) \setminus U$
is free (hence linear) in
$\mgu([\theta]_U,[\epsilon]_{\vars(x/t)})$, \ie
  \[
  \alpha_\lp(\mgu([\theta]_{U}, [\epsilon]_{\vars(x/t)}))= \left[S
    \cup \{ \multil v \multir \mid v \in \vars(x/t) \setminus U
    \}\right]_{U \cup \vars(x/t)} \enspace .
  \]
  Therefore, it is enough to prove that
  $\mgu_\lp([S]_U,x/t) \rightslice \mgu([\theta]_{U},\{x/t\})$ when
  $\vars(x/t) \subseteq U$.  Let $B$ be a
  sharing group in $\alpha_\lp(\mgu([\theta]_U, \{x/t\}))$, we prove
  that $B \in \mgu_\lp([S]_U,x/t)$.

  If $B=\emptymulti$, we consider a multigraph $G$ with only
  one node labelled by $\emptymulti$ and no edges. It is easy to check
  that $G$ is a sharing graph for $S$ (since $\emptymulti \in S$) and
  $x/t$, and that $\res(G)=\emptymulti$. Therefore, in
  the following we consider only the case $B \neq \emptymulti$.

The proof is composed of three parts: first, we look for a (special) substitution $\beta$ obtained by renaming some variables in $\theta$ and such that $\beta$ is still approximated by $S$; second, we define a multigraph $G$ exploiting the variables of $\beta$; third, we show that we can restrict $G$ to a smaller sharing graph whose resultant $\omega$-sharing group is exactly $B$.

  \textbf{First part.} Without loss of generality, we assume that $\dom(\theta)=U$ (this is always possible since, in any class $[\theta]_{U}$, there exists a substitution whose domain is exactly $U$).  Let
  $\theta'=\mgu(\theta,\{x/t\})=\eta \circ \theta$ with
  $\eta=\mgu(\{\theta(x)=\theta(t)\})$ and we have
  $[\theta']_U=\mgu([\theta]_U,[x/t]_U)$. Since
  $\dom(\theta)=U$, we have $\vars(\eta) \cap U=\emptyset$.
  Consider $\eta'$ obtained from $\eta$ by replacing each occurrence
  of a variable in $\rng(\eta)$ with a different fresh variable. This
  means that there exists $\rho \in \subst$ mapping variables to variables such that $\rho(\eta'(x))=\eta(x)$ for each $x \in
  \dom(\eta)$. Namely, we have
  \[
  \rho=\{v_1/v_2 \mid \exists x \in \dom(\eta), \xi \in \Xi \text{ s.t. }
  \eta'(x)(\xi)=v_1 \wedge \eta(x)(\xi)=v_2 \} \enspace .
  \]
  Note that $\rho$ is not a renaming, since it is not bijective.
  We now show that $\beta=\eta' \circ \theta$ has the property that $[S]_U
  \rightslice [\beta]_U$.  For any $C \in
  \alpha([\beta]_U)$, we may distinguish three cases:
  \begin{itemize}
  \item $C=\emptymulti$. In this case $C \in S$ by definition of
    $\Linp$;
  \item $C=\beta^{-1}(w)|_U$ for $w \in \rng(\theta) \setminus
    \dom(\eta)$. In this case $\occ(w,(\eta' \circ
    \theta)(v))=\occ(w,\theta(v))$ for each $v \in \var$, therefore
    $\beta^{-1}(w)|_U=\theta^{-1}(w)|_U \in S$;
  \item $C=\beta^{-1}(w)|_U$ for $w \in \rng(\eta')$. Hence there
    exists $v \in \rng(\theta)$ such that $\occ(w,\eta'(v))=1$ and
    $\occ(w,\eta'(v'))=0$ for each $v' \notin \{v',w\}$.  Hence, for
    each $u \in U$, $\occ(w,\eta'(\theta(u)))=n$ iff
    $\occ(v,\theta(u))=n$ and this implies $C=\theta^{-1}(v)|_U \in
    S$.
  \end{itemize}
  Moreover $\rho(\beta(u))=\theta'(u)$ for each $u \in U$, therefore
  $\theta' \sim_U \rho \circ \beta$.


  \textbf{Second part.} Consider the labelled multigraph $G$ such that $N_G=\{ v \mid v
  \in \vars(\beta(U)) \}$, $l_G(v)=\beta^{-1}(v)|_U \in S$ and $E_G=\{
  \xi \mid \beta(x)(\xi) \in \var \}$. Note that if $\beta(x)(\xi) \in
  \var$, then $\beta(t)(\xi) \in \var$, too. Each position $\xi$ in
  $E_G$ is an arrow such that $\src_G(\xi)=\beta(x)(\xi)$ and
  $\tgt_G(\xi)=\beta(t)(\xi)$. Observe that the second condition in
  the definition of  sharing graph for $S$ and $x/t$ is
  satisfied, since $[S]_U \rightslice [\beta]_U$.

  Let us check the third condition.  For each node $v \in N_G$, if
  $\chi(\beta^{-1}(v)|_U,x)=n$ by Prop.  \ref{prop:chi} we have
  $\occ( v,\beta(x))=n$, \ie there are $n$ positions in $\beta(x)$
  corresponding to $v$. Therefore the outdegree of $v$ is $n$.  In
  the same way, we have that $\chi(\beta^{-1}(v)|_U,t)$ is the
  in-degree of $v$.

  \textbf{Third part.} Given $B=\theta'^{-1}(u)|_{U}$, by Prop.~\ref{prop:thetam1} we have
  $B=\beta^{-1}(\rho^{-1}(u))|_{U}$. Since $\theta' \preceq_U \beta
  \preceq_U \theta$, then
  $[\theta']_U=\mgu([\beta]_U,\{x/t\})=[\mgu(\beta,\{x/t\})]_U$.
  Therefore $\rho \circ \beta' \sim_U \theta'=\mgu(\theta, \{x/t\})
  \sim_U \mgu(\beta, \{x/t\}) = \mgu(\beta(x)=\beta(t)) \circ \beta$.
  We call $\delta$ the result of $\mgu(\beta(x)=\beta(t))$, and note
  that $\beta(x)=\beta(t)$ is equivalent to the set of equations $X=\{
  v_1 = v_2 \mid \text{there is a position $\xi$ such that }
  \beta(x)(\xi)=v_1 \wedge \beta(t)(\xi)=v_2 \}$. The relation $\rho
  \circ \beta \sim_U \delta \circ \beta$ means that, if $w_1, w_2 \in
  \beta(U)$ and $\rho(w_1)=\rho(w_2)$ then $\delta(w_1)=\delta(w_2)$.
  The latter implies that there are in $X$ equations of the kind
  $x_1=x_2$, $x_2=x_3$, \ldots, $x_{n-1}=x_n$ with $x_1=w_1$ and
  $x_n=w_2$, \ie that $w_1$ and $w_2$ are connected in the graph $G$.

  Therefore, let $Y=\{ w \mid \rho(w)=u\}=\supp{\rho^{-1}(u)}$. This is
  not empty, since $B \neq \emptymulti$. If
  $\xi$ is an edge such that $\src_G(\xi)\in Y$, then $\tgt_G(\xi) \in
  Y$, since $\beta(x)(\xi)=\beta(t)(\xi) \in X$. The converse also
  holds. Hence, if we restrict the graph $G$ to the set of nodes $Y$,
  we obtain a sharing graph whose resultant $\omega$-sharing group is
  $\biguplus_{w \in Y} \beta^{-1}(w)|_U=\beta^{-1}(\rho^{-1}(u))|_{U}
  =B$.
\end{proof}

\renewcommand{\thetheoremproof}{\ref{th:lpopt}}
\begin{theoremproof}[Optimality of $\mgu_\lp$]
  The single binding unification $\mgu_\lp([S]_U,x/t)$ is optimal \wrt
  $\mgu$, under the assumption that $\vars(x/t) \subseteq U$, \ie:
\[
\forall B\in\mgu_\omega ([S]_U,x/t)~\exists \delta\in  \Isubst.~[S]_U \rightslice [\delta]_U \text{ and } B\in \alpha_\lp(\mgu([\delta]_U,\{x/t\})) \enspace .
 \]
\end{theoremproof}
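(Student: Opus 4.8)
The plan is to invert the construction used in the proof of correctness (Theorem~\ref{thm:correctness_mgu_omega}): from a sharing graph witnessing $B$ I build a single substitution $\delta$ whose fresh variables are forced, upon unification with $x/t$, to collapse exactly along the edges of the graph. By definition of $\mgu_\omega$, since $B\in\mgu_\omega([S]_U,x/t)$ there is a sharing graph $G$ for $S$ and $x/t$ with $\res(G)=B=\biguplus_{n\in N_G} l_G(n)$. The degenerate case $B=\emptymulti$ (a single node labelled $\emptymulti$) is immediate, since $\emptymulti$ belongs to $\alpha_\lp$ of any unification and $[S]_U$ approximates any substitution grounding $U$; so I assume $B\neq\emptymulti$.

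Following Example~\ref{ex:optimality}, I introduce one fresh variable $w_n$ for every node $n\in N_G$, writing $B_n=l_G(n)\in S$ for its label. For each $z\in U\setminus\{x\}$ I let $\delta(z)$ be a term whose variables are exactly the $w_n$, with $w_n$ occurring $B_n(z)$ times, so that $\occ(w_n,\delta(z))=B_n(z)$. The term $\delta(x)$ is defined to have the very same tree structure as $\delta(t)$ but with its variable leaves relabelled: I fix a bijection between the edges of $G$ and the variable-positions of $\delta(t)$, sending an edge targeting node $n$ to a position carrying $w_n$. This is possible because, by Proposition~\ref{prop:chi}, the number of occurrences of $w_n$ in $\delta(t)$ is $\chi(B_n,t)$, which equals the in-degree of $n$; at the position assigned to an edge with source $m$ I place $w_m$ in $\delta(x)$. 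The degree condition $\chi(B_m,x)=\outdeg(m)=B_m(x)$ then guarantees that $w_m$ occurs exactly $B_m(x)$ times in $\delta(x)$, and since $\sum_n\outdeg(n)=\card{E_G}=\sum_n\indeg(n)$ the two terms $\delta(x)$ and $\delta(t)$ carry the same number of variable leaves, so the bijection is well defined.

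With $\delta$ in hand, I first check $[S]_U\rightslice[\delta]_U$: for each fresh $w_n$ we have $\delta^{-1}(w_n)|_U=\lambda z\in U.\ \occ(w_n,\delta(z))=B_n\in S$, while every other variable contributes $\emptymulti\in S$. Next I compute $\delta'=\mgu(\delta,\{x/t\})$ using the identity $\mgu(\eq(\delta)\cup\{x=t\})=\mgu(\{\delta(x)=\delta(t)\})\circ\delta$ as in the correctness proof. Because $\delta(x)$ and $\delta(t)$ are structurally identical and differ only on variable leaves, solving $\delta(x)=\delta(t)$ merely equates, for each edge $m\to n$ of $G$, the variables $w_m$ and $w_n$, and performs no other binding. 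Connectedness of $G$ forces all the $w_n$ to be aliased to a single variable $w$, whence $\delta'^{-1}(w)|_U=\biguplus_n\delta^{-1}(w_n)|_U=\biguplus_n B_n=\res(G)=B$, so $B\in\alpha_\lp(\mgu([\delta]_U,\{x/t\}))$, as required.

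I expect the delicate point to be the construction of $\delta(x)$: one must simultaneously ensure that each $w_n$ occurs the correct number of times (so that $\delta$ is still approximated by $S$) and that unifying $\delta(x)$ with $\delta(t)$ reproduces precisely the aliasings coded by the edges of $G$ and nothing more. Both requirements are reconciled through Proposition~\ref{prop:chi}, which translates the graph-theoretic in/out-degree conditions of a sharing graph into the combinatorics of variable occurrences in $\delta(x)$ and $\delta(t)$; verifying that the edge-to-position bijection respects the source/target labelling, and that it leaves no unfilled variable leaf on either side, is the core computation of the argument.
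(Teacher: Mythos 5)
Your proof is correct and takes essentially the same approach as the paper's: you invert the correctness construction by introducing one fresh variable $w_n$ per node of the sharing graph, defining $\delta$ on $U\setminus\{x\}$ from the node labels and on $x$ by relabelling the variable positions of $\delta(t)$ along a target-respecting edge-to-position bijection, so that unifying with $x/t$ aliases exactly the variables joined by edges and connectedness collapses them to a single variable whose inverse image is $B$. The paper's heavier machinery (the edge-labelling map $f$, the partitions $M_{n,y,\xi}$ and the terms $t_y$ and $t^y_\xi$) is precisely an explicit realization of the bijection you fix directly, and your verification steps (approximation of $\delta$, edge equations only, connectedness, summing the labels) match the paper's second and third parts.
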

\begin{proof}
  Let $X \in  \mgu_\omega(S,x/t)$. By definition of $\mgu_\omega$, there exists a sharing graph
  $\mathcal G$ such that $X \in
  \res(\mathcal G)$. Let $N_{\mG}=\{ n_1, \ldots, n_k \}$.
    We want to define a substitution $\delta$ such that $[S]_U
  \rightslice [\delta]_U$ and $X \in
  \alpha_\lp(\mgu([\delta]_U,\{x/t\}))$.
If $X=\emptymulti$ this is
  trivial, hence we assume that $X \neq \emptymulti$.
The structure of the proof is as follows: first, we define a substitution $\delta$ which unifies with $x/t$; second, we show that $\delta$ is approximated by $[S]_U$, namely, $[S]_U  \rightslice [\delta]_U$; third, we show that $X \in
  \alpha_\lp(\mgu([\delta]_U,\{x/t\}))$.

  \textbf{First part.} We now define a substitution $\delta$ which unifies with $x/t$. For each node $n \in
  N_{\mG}$ we consider a fresh variable $w_{n}$ and we denote by $W$
  the set of all these new variables.


  For any $y \in U \setminus \{x\}$ we define a term $t_y$ of
  arity $\sum_{n\in N_{\mG}}l_{\mG}(n)(y)$ as follows:
  \[
  t_y= r(\underbrace{w_{n_1},\ldots,w_{n_1}}_{\text{$l_{\mG}(n_1)(y)$
      times}}, \underbrace{w_{n_2},\ldots,w_{n_2}}_{\text{$l_{\mG}(n_2)(y)$
      times}}, \ldots,
  \underbrace{w_{n_k},\ldots,w_{n_k}}_{\text{$l_{\mG}(n_k)(y)$ times}})
  \]
  We know that there exists a map $f:
  E_{G} \ra \var$ such that, for each variable $y$ and node $n$, the
  set of edges targeted at $n$ and labelled with $y$ by $f$ is
  exactly $l_{\mG}(n)(y) \cdot \occ(y,t)$. Namely, we require
  \begin{equation*}
    \card{ \{ e \in E_{G} \mid f(e)=y \wedge \tgt_{G}(e)=n \}}=
    l_{\mG}(n)(y) \cdot \occ(y,t) \enspace .
  \end{equation*}
  The idea is that each edge targeted at the node $n$ is actually
  targeted at one of the specific variables in $l_{\mG}(n)$. In
  particular, each variable $y \in \supp{l_{\mG}(n)}$ should have
  exactly $l_{\mG}(n)(y) \cdot \occ(y,t)$ edges targeted at it, so
  that the total number of edges pointing $n$ is $\sum_{y \in U}
  l_{\mG}(n)(y) \cdot \occ(y,t)=\chi(l_{\mG}(n),t)$, \ie the
  in-degree of $n$. The map $f$ chooses, for each edge targeted at
  $n$, a variable in $l_{\mG}(n)$ according to the previous idea.

  Now, for each node $n$ and variable $y \in U$, we denote by
  $M_{n,y}$ the set of edges pointing at $y$ in $n$, \ie
  $M_{n,y}=\{ e \in E_{G} \mid \tgt_{G}(e)=n \wedge f(e)=y
  \}$. Thus $M_{n,y}$ may be partitioned in $\occ(y,t)$ sets
  of $l_{\mG}(n)(y)$ elements, denoted by $M_{n,y,\xi}$ such that $\cup
  \{ M_{n,y,\xi} \mid  t(\xi)=y\} = M_{n,y}$.

  We may define some variations of the terms $t_y$ by replacing the
  variables occurring in them with those in the set $M_{n,y,l}$.  In
  particular, for $y \in U \setminus \{x\}$ and any occurrence
  $\xi$ of a variable $y$ in $t$, we define the term $t^{y}_\xi$ of
  arity $\sum_{n\in N_{\mG}} l_{\mG}(n)(y)$ as
  \[
  t_\xi^{y}=r(w(M_{n_1,y,\xi}),w(M_{n_2,y,\xi}),\ldots,w(M_{n_k,y,\xi}))
  \enspace ,
  \]
  where, if $M=\{ e_1, \ldots, e_q \}$, we define $w(M)$ as
  the sequence $w_{n'_1}, \ldots, w_{n'_q}$ where $n'_j=\src_{E_{G}}(e_j)$.


  Note that $t_y$ and $t_\xi^{y}$ have, in corresponding positions,
  variables related to nodes which are connected through edges.  We
  are now ready to define the substitution $\delta$ in the following
  way:
  \begin{itemize}
  \item $\delta(x)$ is the
    same as $t$ with the difference that each occurrence
    $\xi$ of a variable $y \in t$ is replaced by the term
    $t^{y}_{\xi}$;
  \item for $y \in U \setminus \{x\}$ then $\delta(y)=t_y$;
  \item in all the other cases, i.e $v \notin U$, $\delta(v)=v$.
  \end{itemize}

  \textbf{Second part.} Now we show that $[S]_U \rightslice [\delta]_U$. We need to consider
  all the variables $v \in \var$ and check that $\delta^{-1}(v)|_U \in
  S$. We distinguish several cases.
  \begin{itemize}
  \item If we choose the variable $w_n$ for some $n \in N$, by
    construction $\occ(w_n,t_y)=l_{\mG}(n)(y)$.  Moreover, since $\mG$
    is a sharing graph, there
    are $l_{\mG}(n)(x)$ edges in $E$ departing from $n$ and
    targeted to nodes $m$ such that $\chi(l_{\mG}(m),t) \neq 0$.
    Thus $\sum_{y \in \vars(t), m \in N_{\mG}} |\{e\in
    M_{m,y} | \src_{E_{G}}(e)=n\}| = l_{\mG}(n)(x)$ and
    $\occ(\delta(x),w_n)=l_{\mG}(n)(x)$.  Since for each $v \in U$
    we have that $\occ(\delta(v),w_n)=l_{\mG}(n)(v)$, we obtain the
    required result which is $\delta^{-1}(w_n)|_U=l_{\mG}(n) \in S$.
  \item If we choose a variable $v \in U$ then $v \in \dom(\delta)$
    and $\delta^{-1}(v)=\emptymulti \in S$.
  \item Finally, if $v \notin U \cup W$, then $\delta^{-1}(v)=\multil
    v \multir$ and $\delta^{-1}(v)|_{U}=\emptymulti \in S$.
  \end{itemize}

  \textbf{Third part.} We now show that $X \in \alpha_\lp(\mgu([\delta]_U,\{x/t\}))$. By
  definition of $\mgu$ over $\Isubst_\sim$, we have that
  $\mgu([\delta]_U,\{x/t\})=[\mgu(\delta,\{x/t\})]_U$. We obtain:
  \begin{equation}
    \label{eq:opt-proof1}
    \begin{split}
      \eta = & \mgu(\delta,\{x/t\}) =\\
      &\{x/t\} \circ \mgu \bigl(
      \{ y=t_y \mid y \in U \setminus \{x\} \} \cup
      \{ y=t_\xi^{y} \mid t(\xi)=y\}\bigr) = \\
      &\{x/t\} \circ \{ y/t_y \mid y \in U \setminus \{x\} \} \circ
      \mgu\{t_y=t^{y}_\xi ~|~ t(\xi)=y \} \enspace .
    \end{split}
  \end{equation}

  Let $F$ be the set of equations $\{t_y=t^{y}_j ~|~ t(j)=y \}$.  We show
  that, for any edge $n \ra m \in E_{G}$, it follows from $F$ that
  $w_n=w_m$.  Since $n \ra m \in E_{G}$, then for some $y\in
  \vars(t)$ it holds that $f(n\ra m)=y$.  This implies that $n \ra m \in
  M_{m,y}$ and therefore there exists a position $\xi$ such that
  $n\ra m \in M_{m,y,\xi}$. By definition of $t_\xi^{y}$,
  it means that $w_n \in \vars(t_\xi^{y})$, in the same position where $w_m$ occurs in $t_y$, hence $w_n=w_m$
  follows from $t_y=t^{y}_\xi \in F$.

  We know that $\mG$ is connected, hence for any $n,m\in N_{\mG}$,
  the set of equations in $F$ implies $w_n=w_m$.  We choose a
  particular node $\bar{n} \in N_{\mG}$ and, for what we said before, we
  have $\mgu(F)=\{ w_{n} / w_{\bar{n}} \mid n \in N_{\mG}\setminus\{\bar{n}\} \}$. We show that
  $\eta^{-1}(w_{\bar{n}})|_{ U}=X$.
  \begin{equation*}
    \begin{split}
      & \eta^{-1}(w_{\bar{n}})|_{ U}\\
      =\ &\{x/t\}^{-1}(\{ y/t_y \mid y \in U \setminus \{x\}\}^{-1}
           (\multil w_{n_1},\ldots,w_{n_k}\multir ))|_{ U}\\
      =\ &\{x/t\}^{-1}(\multil w_{n_1},\ldots,w_{n_k} \multir \multisum
      \lambda y \in U \setminus \{x\}.
      \sum_{n \in N_{\mG}} l_{\mG}(n)(y))|_{ U}\\
      =\ &\lambda y \in U \setminus \{x\}. \sum_{n \in N_{\mG}}
      l_{\mG}(n)(y) \ \multisum  \multil x^{\sum_{y \in \var}
      \occ(y,t) \cdot \sum_{n \in N_{\mG}} l_{\mG}(n)(y)} \multir   \\
      =\ &\lambda y \in U \setminus \{x\}.  \sum_{n \in
        N_{\mG}} l_{\mG}(n)(y) \multisum \multil x^{\sum_{n \in N_{\mG}} \chi(l_{\mG}(n),t)} \multir
       \enspace .
    \end{split}
  \end{equation*}
  Since $\mathcal G$ is a sharing graph, the total out-degree $\sum_{n \in N_{\mG}}
  \chi(l_{\mG}(n),t)$, is equal to the total in-degree
  $\sum_{n \in N_{\mG}} \chi(l_{\mG}(n),x)$. Hence
  \begin{equation*}
    \begin{split}
      & \eta^{-1}(w_{\bar{n}})|_{ U}\\
      =\ &\lambda y \in U\setminus \{x\}.  \sum_{n \in N_{\mG}}
      l_{\mG}(n)(y) \multisum \multil x^{\sum_{n \in N_{\mG}} \chi(l_{\mG}(n),x)} \multir \\
      =\ &\lambda y \in U.  \sum_{n \in N_{\mG}} l_{\mG}(n)(y)\\
      =\ &\res(\mathcal G) \enspace .
    \end{split}
  \end{equation*}
  This concludes the proof.
\end{proof}

\renewcommand{\thetheoremproof}{\ref{th:optimality_extension}}
\begin{theoremproof}[Optimality of $\mgu_\lp$ with extension]
  The single binding unification $\mgu_\omega$ with extension is optimal \wrt
  $\mgu$.
\end{theoremproof}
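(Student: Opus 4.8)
The plan is to reduce the statement to the base case already settled in Theorem~\ref{th:lpopt} and then transport the witnessing substitution from the enlarged set of variables of interest back to $U$ by a restriction. Writing $V = U \cup \vars(x/t)$ and $S' = S \cup \{\multil v \multir \mid v \in \vars(x/t)\setminus U\}$, Definition~\ref{def:mguomega} gives $\mgu_\omega([S]_U,x/t) = \mgu_\omega([S']_V,x/t)$, and now $\vars(x/t)\subseteq V$, so Theorem~\ref{th:lpopt} applies. Fix $B \in \mgu_\omega([S]_U,x/t)$ (the case $B=\emptymulti$, and the degenerate case $S=\emptyset$, being immediate); by Theorem~\ref{th:lpopt} there is $\delta' \in \Isubst$ with $[S']_V \rightslice [\delta']_V$ and $B \in \alpha_\lp(\mgu([\delta']_V,\{x/t\}))$. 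I would then take $\delta = \delta'|_U$ and prove that $[S]_U \rightslice [\delta]_U$ and $B \in \alpha_\lp(\mgu([\delta]_U,\{x/t\}))$, which is exactly what optimality with extension demands.

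For the approximation claim, note that for $y\in U$ we have $\delta(y)=\delta'(y)$, so $\delta^{-1}(v)|_U = (\delta'^{-1}(v)|_V)|_U$ for every $v$. Since $[S']_V\rightslice[\delta']_V$, each $\delta'^{-1}(v)|_V$ lies in $S'$, hence is either an element of $S$ (whose support is contained in $U$, so restriction to $U$ leaves it unchanged) or a singleton $\multil w\multir$ with $w\in\vars(x/t)\setminus U$ (which restricts to $\emptymulti\in S$). In both cases $\delta^{-1}(v)|_U\in S$, so $[S]_U\rightslice[\delta]_U$.

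The substantive step is recovering $B$ from $\delta$ rather than $\delta'$. The guiding observation, visible already in the construction behind Theorem~\ref{th:lpopt}, is that the fresh variables introduced for nodes labelled by a singleton $\multil z\multir$ with $z\in\vars(x/t)\setminus U$ occur only inside $\delta'(z)$: such nodes have out-degree $\chi(\multil z\multir,x)=0$ and carry no variable of $U$, so their fresh variables appear neither in $\delta'(x)$ nor in any $\delta'(y)$ with $y\in U$. Consequently $\dom(\delta)\subseteq U$ and $\rng(\delta)$ consists of fresh variables, whence $\vars(\delta)\cap(\vars(x/t)\setminus U)=\emptyset$; the clash condition of~\eqref{eq:mgu} is then met by the identity representatives and $\mgu([\delta]_U,\{x/t\}) = [\mgu(\delta,\{x/t\})]_V$. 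It remains to compare $\mgu(\delta,\{x/t\})$ with $\mgu(\delta',\{x/t\})$: they differ only at the positions of the dropped variables $z\in\vars(x/t)\setminus U$, where $\delta(t)$ keeps $z$ free while $\delta'(t)$ carries $t_z$. I would use the concrete extension identity $\mgu([\delta]_U,\{x/t\}) = \mgu(\mgu([\delta]_U,[\epsilon]_{\vars(x/t)}),\{x/t\})$ (already exploited in the proof of Theorem~\ref{thm:correctness_mgu_omega}) to see that unification with $x/t$ binds each such free $z$ to the very subterm $t^z_\xi$ of $\delta(x)$ that was unified against $t_z$ in the computation for $\delta'$, so the multiplicity of the witness variable $w_{\bar n}$ over all of $V$ — including the coordinates indexed by the extension variables — is unchanged. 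Hence $B\in\alpha_\lp(\mgu([\delta]_U,\{x/t\}))$.

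The main obstacle I anticipate is precisely this last bookkeeping: showing that freeing the extension variables and letting unification re-bind them yields the same resultant $\omega$-sharing group $B$ as the explicit bindings $z/t_z$ in $\delta'$. The delicate subcase is $x\notin U$, where the binding variable itself lies among the extension variables; there one must check that treating $x$ as a renamed, fresh variable of interest reproduces the singleton $\multil x\multir$ contributed by the extension, rather than disturbing the multiplicity count. I expect this to follow by tracking occurrences through the explicit terms $t_y$ and $t^y_\xi$ of the construction, but it is the step that will require the most care.
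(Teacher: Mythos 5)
Your overall strategy is exactly the paper's: apply Theorem~\ref{th:lpopt} to the enlarged object $[S']_V$, keep the resulting witness (your $\delta=\delta'|_U$ and the paper's $[\delta']_U$ are the same equivalence class, since $\delta'\sim_U \delta'|_U$), and re-prove the two required facts over $U$. Your argument for $[S]_U \rightslice [\delta]_U$ coincides with the paper's, and your treatment of the case $x\in U$ is a sound, class-tracking rendering of what the paper does by equation manipulation: the paper replaces each dropped binding $y=t_y$ (for $y\in\vars(t)\setminus U$) by the derived equation $t^y_{\xi_y}=t_y$, observes that $t_y$ is linear with private fresh variables, and splits off an independent mgu $\beta'$ whose bindings do not affect the abstraction over $V$ --- the same ``re-binding of the freed variable against the subterm $t^z_\xi$'' phenomenon you describe.

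The gap is the subcase $x\notin U$, which you explicitly defer, and it is precisely where the paper's proof does its real work (roughly half of it). There your mechanism breaks down as stated: there is no subterm $t^z_\xi$ of $\delta(x)$ to rebind against, because $\delta=\delta'|_U$ leaves $x$ unbound, so unification instead binds $x$ to $\delta(t)$, a term containing the remaining free extension variables. The paper handles this with structural facts your proposal never establishes: since $x$ occurs in $S'$ only in the singleton $\multil x \multir$, every node labelled $\multil x \multir$ has out-degree $1$ and in-degree $0$, hence $\delta'(x)$ is linear and variable-disjoint from all the other bindings; this lets the paper split $\mgu(\delta',\{x/t\})$ as $\mgu(\{x=t\}\cup\eq(\delta|_U)\cup\{y=t_y \mid y\in\vars(t)\setminus U\})\uplus\beta$ with $\dom(\beta)\subseteq W$, and then, setting $\gamma=\delta|_U\circ\{x/\delta|_U(t)\}=\mgu(\delta|_U,\{x/t\})$ and $U_1=\vars(t)\setminus U$, prove the inequality $\alpha_\lp([\gamma]_V) \geq_\lp \alpha_\lp([\gamma\circ\delta|_{U_1}]_V)$, i.e., that composing with the dropped bindings can only lose sharing groups over $V$, never create new ones. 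Your deferred step is in fact provable --- it is exactly what the paper proves --- so the approach is not wrong; but as written the proposal omits the proof of the one part of the theorem that is nontrivial, so it does not stand as a complete argument.
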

\begin{proof}
  Let $S'=S \cup \{ \multil v \multir \mid v \in \vars(x/t)
  \setminus U \}$, $V=U \cup \vars(x/t)$ and $X\in \mgu_\omega(S',x/t)$. We want to find
  $[\delta]_U$ such that $[S]_U \rightslice [\delta]_U$ and $X \in
  \alpha_\lp(\mgu([\delta]_U,\{x/t\}))$.

  Following the previous theorem, we find $\delta$ such
  that $X \in \alpha_\lp(\mgu([\delta]_V,\{x/t\}))$ and $[S']_V
  \rightslice [\delta]_V$. We want to prove that $[S]_U \rightslice
  [\delta]_U$ and $\alpha_\lp(\mgu([\delta]_V,\{x/t\})) \leq_\lp
  \alpha_\lp(\mgu([\delta]_U,\{x/t\}))$, so that $[\delta]_U$ is the existential
  substitutions we are looking for.

  We first show that $[S]_U \rightslice [\delta]_U$. Let $v\in 	\var$. Since
  $[S']_V  \rightslice [\delta]_V$, it follows that  $\delta^{-1}(v)|_V \in
  S'$.
\begin{itemize}
 \item If $ \delta^{-1}(v)|_V \in S$, then $ \delta^{-1}(v)|_V= \delta^{-1}(v)|_U$, since $\vars(S)\subseteq U$, and thus
  $\delta^{-1}(v)|_U \in S$.
 \item  If $ \delta^{-1}(v)|_V \notin S$, then $ \delta^{-1}(v)|_V \in \{ \multil v \multir \mid v \in \vars(x/t) \setminus U \}$. Then $\delta^{-1}(v)|_U = \emptymulti \in S$.
\end{itemize}

  Now we distinguish two cases: either $x\in U$ or $x\notin U$.

  If $x\in U$, with the same considerations which led to
  \eqref{eq:opt-proof1}, we have
  \begin{equation*}
    \begin{split}
      \mgu(\{x/t\},\delta) &= \mgu(\{x=t\} \cup \eq(\delta|_U) \cup
      \eq(\delta|_{V \setminus U}))=\\
      &=\mgu(\{x=t\} \cup \eq(\delta|_U) \cup \{ y=t_y \mid y \in \vars(t)
      \setminus U \}
    \end{split}
  \end{equation*}


  For each $y \in \vars(t) \setminus U$ there exist a
  position $\xi_y$ such that $t(\xi_y)=y$ and
  $\{x/t\} \cup \eq(\delta|_U) \cup \{y=t_y\}$ is equivalent to
  $\{x/t\} \cup \eq(\delta|_U) \cup \{ t^{y}_{\xi_y} = t_y \}$. Note
  that, since $y \notin U$, then $t_y$ (which is actually $\delta(y)$)
  is linear and independent from $x/t$ and the other bindings in
  $\delta$.  Therefore
  \begin{equation*}
    \begin{split}
      &\mgu(\{x=t\} \cup \eq(\delta|_U) \cup \{ y=t_y \mid y \in \vars(t)
      \setminus U \} \\
      =&\mgu(\{x=t\} \cup \eq(\delta|_U) \cup \{ t^{y}_{\xi_y}=t_y \mid y \in \vars(t)
      \setminus U \} \\
      =&\mgu(\{x=t\} \cup \eq(\delta|_U) ) \uplus \beta'
    \end{split}
  \end{equation*}
  where $\beta'=\mgu(\{ t^{y}_{\xi_y}=t_y \mid y \in \vars(t)
      \setminus U  \})$ and
  $\dom(\beta')=\vars(\{ t_y \mid y \in \vars(t)
      \setminus U \})$. It follows that
  \begin{equation*}
    \begin{split}
      &\alpha_\lp([\mgu(\{x=t\} \cup \eq(\delta|_U) ) \uplus \beta']_V) \\
      =&\alpha_\lp([\mgu(\{x=t\} \cup \eq(\delta|_U) )]_V) \\
      =&\alpha_\lp(\mgu([\delta]_U, \{x/t\})) \enspace .
    \end{split}
  \end{equation*}

If $x\notin U$, then
\begin{equation*}
    \begin{split}
      \mgu(\{x/t\},\delta) &= \mgu(\{x=t\} \cup \eq(\delta|_U) \cup  \eq(\delta|_{\vars(t)\setminus U})\cup \eq(\delta|_{\{x\}})\\
 =& \mgu(\{x=t\} \cup \eq(\delta|_U) \cup \{ y=t_y \mid y \in \vars(t)
      \setminus U \} \cup \\
	&\{t_y=t^y_{\xi}\mid t(\xi)=y\}
    \end{split}
  \end{equation*}
Note that $x$ appears in $S'$
  only in the multiset $\multil x \multir$. Moreover, if $n$ is a
  node labelled by $\multil x \multir$, there is only one edge which
  departs from $n$ and there are no edges which arrive in $n$. This
  means that
  \begin{itemize}
  \item $w_n$ does not appear in any $t_y$ for $y \in V \setminus
    \{x\}$,
  \item $\delta(x)$ is linear since given edges $e \neq e'$, we have
    that $\src_{E_{G}}(e) \neq \src_{E_{G}}(e')$.
  \end{itemize}
  As a result, $\delta(x)$ is linear and does not share variables
  with $x/t$ or the other bindings in $\delta$. The last formula
  may be rewritten as
  \begin{equation*}
    \mgu(\{x=t\} \cup \eq(\delta|_U) \cup \{ y=t_y \mid y \in \vars(t) \setminus U \}) \uplus \beta
  \end{equation*}
  where $\beta$ is a substitution such that
  $\dom(\beta)=\vars(\delta(x))\subseteq W$.  It is
  obvious that
  \begin{equation*}
    \begin{split}
    &\alpha_\lp([\mgu(\{x=t\} \cup \eq(\delta|_U) \cup
    \{ y=t_y \mid y \in \vars(t) \setminus U \}) \uplus \beta]_V)\\
    =\ &\alpha_\lp([\mgu(\{x=t\} \cup \eq(\delta|_U) \cup
    \{ y=t_y \mid y \in \vars(t) \setminus U \})]_V)
    \enspace .
    \end{split}
  \end{equation*}
  since $\dom(\beta) \cap V=\emptyset$.

  Let $U_1=\vars(t)\setminus U$, then
  \begin{equation*}
    \begin{split}
      &\mgu(\{x=t\}\cup \eq(\delta|_U) \cup \eq(\delta|_{U_1}))\\
      =&\delta|_U \circ \mgu(\delta|_U(\{x=t\} \cup
      \eq(\delta|_{U_1})))\\
      =&\delta|_U \circ \mgu(\{x=\delta|_U(t)\}) \cup
      \eq(\delta|_{U_1})))\\
       &\text{[since $\vars(\delta|_{U_1}) \cap \vars(\delta|_U)=\emptyset$
        and $x \not \in \vars(\delta|_{U_1})$]}\\
      =&\delta|_U \circ \{x/\delta|_U(t)\} \circ \delta|_{U_1} \\
      &\text{[since $\{x\}\notin \vars(\delta|_{U_1})$]}
    \end{split}
  \end{equation*}
Note that $\delta|_U \circ \{x/\delta|_U(t)\}$ is $\mgu(\delta|_U,\{x/t\})$.
 We call $\gamma = \delta|_U \circ \{x/\delta|_U(t)\}$ and
  we prove that $\alpha_\lp([\gamma]_V) \geq_\lp \alpha_\lp([\gamma
  \circ \delta|_{U_1}]_V)$.

  Consider a variable $v \in \var$. If $v \notin \vars(\delta|_{U_1})$
  there is nothing to prove. If $v \in \rng(\delta|_{U_1})$ we know
  that $v$ does not occur anywhere else in $\delta|_{U_1}$ and
  $\gamma$.  Then $(\gamma \circ \delta|_{U_1})^{-1}(v)=
  \gamma^{-1}(\multil y, v \multir)= \gamma^{-1}(y) \multisum \multil v
  \multir$ for the unique $y$ such that $v \in \vars(\delta|_{U_1}(y))$.
  Therefore, since $v \notin V$, the sharing group over $V$ we obtain
  in $\gamma \circ \delta|_{U_1}$ from $v$ may be obtained in $\gamma$
  from the variable $y$.  If $v \in \dom(\delta|_{U_1})$ then $(\gamma
  \circ \delta|_{U_1})^{-1}(v)=\emptymulti$ which occurs in every
  element of $\Linp$. 
\end{proof}

\renewcommand{\thetheoremproof}{\ref{th:algebraic}}
\begin{theoremproof}
  Let $S$ be a set of $\omega$-sharing groups and $x/t$ be a binding. Then $B \in \mgu_\lp(S,x/t)$ iff there exist $n\in \Natzero$, $B_1,\ldots,B_n \in S$ which satisfy the following conditions:
  \begin{enumerate}
    \item $B=\multisum_{1 \leq i \leq n} B_i$,
    \item $\sum_{1 \leq i \leq n} \chi(B_i,x) = \sum_{1 \leq i \leq n}
      \chi(B_i,t) \geq n -1$,
    \item either $n=1$ or $\forall 1 \leq i \leq n.\ \chi(B_i,x) +
      \chi(B_i,t) > 0$.
 \end{enumerate}
\end{theoremproof}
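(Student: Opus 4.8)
The statement is an equivalence between the existence of a sharing graph with resultant group $B$ and a purely numerical condition on a multiset of labels, so the plan is to prove the two implications separately, treating the forward one as bookkeeping and concentrating the effort on the converse.

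For the direction ``$\Rightarrow$'', suppose $B\in\mgu_\lp(S,x/t)$, so there is a sharing graph $G$ with $\res(G)=B$; I would let $B_1,\dots,B_n$ be the labels of its nodes (with repetitions), so that each $B_i\in S$ and $n=\card{N_G}$. Condition~(1) is then just the definition $\res(G)=\multisum_i B_i$. For~(2) I would use the handshake identity: since every edge has exactly one source and one target, the total out-degree and the total in-degree both equal $\card{E_G}$; by the degree condition in the definition of sharing graph these totals are $\sum_i\chi(B_i,x)$ and $\sum_i\chi(B_i,t)$, so they coincide, and since a connected multigraph on $n$ nodes has at least $n-1$ edges this common value is $\ge n-1$. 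For~(3), when $n\ge 2$ connectedness forbids isolated nodes, so every node has positive total degree, i.e. $\chi(B_i,x)+\chi(B_i,t)>0$.

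For ``$\Leftarrow$'' the task becomes a realizability problem: given labels $B_i$ with out-demands $a_i:=\chi(B_i,x)$ and in-demands $b_i:=\chi(B_i,t)$ satisfying (1)--(3), I must build a \emph{connected} directed multigraph on nodes $n_1,\dots,n_n$, with $n_i$ labelled $B_i$ and having out-degree $a_i$ and in-degree $b_i$; such a graph is by definition a sharing graph, and its resultant group is $\multisum_i B_i=B$. My plan is a two-phase construction. In the first phase I build a connected spanning skeleton using at most $a_i$ out-stubs and $b_i$ in-stubs at each node. Because $\sum_i a_i=\sum_i b_i$ and each edge consumes exactly one out-stub and one in-stub, after the skeleton the number of unused out-stubs equals the number of unused in-stubs; in the second phase I pair these leftover stubs by an arbitrary bijection. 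Adding edges never destroys connectivity, and the pairing brings every node up to its exact demand, so the final graph realizes $(a_i,b_i)$ and is connected.

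The crux, and the step I expect to be the main obstacle, is the first phase. The counting hypotheses are exactly what connectivity needs: $\sum_i a_i=\sum_i b_i\ge n-1$ guarantees enough stubs for the $n-1$ edges of a spanning skeleton, and $a_i+b_i>0$ rules out isolated nodes. The difficulty is that the orientations interact, so a naive procedure that grows the component one node at a time can get stuck, for instance when the current component has exhausted its in-stubs while the only nodes left to attach offer in-stubs only. I would therefore grow the component in a carefully chosen order, equivalently arguing by induction on $n$, attaching each new node through one of its stubs paired with a complementary free stub already present in the component, and verifying that such a pairing is always available. The points requiring care are to keep both degree-sums at least $n-2$ after each attachment and never to drive a remaining demand to zero; I would dispose of the degenerate configurations in which every node is a pure source or every node is a pure sink by exhibiting the skeleton directly as an oriented star, and handle $n=1$ (a bouquet of $a_1=b_1$ self-loops) and $n=2$ as base cases.
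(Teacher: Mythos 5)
Your forward direction is complete and coincides with the paper's: the node labels give the $B_i$'s, the handshake identity together with the degree conditions gives (2), and connectedness gives both the bound $\geq n-1$ and, for $n\geq 2$, the absence of isolated nodes, \ie condition (3). The converse, however, contains a genuine gap exactly at the point you yourself identify as the crux. Your argument reduces the theorem to the lemma that, under (2) and (3), there exists a \emph{connected} spanning skeleton (a tree) in which every node $i$ has out-degree at most $\chi(B_i,x)$ and in-degree at most $\chi(B_i,t)$; your phase two (pairing the leftover stubs, which uses the equality in (2) and the fact that adding edges preserves connectedness) is fine, but the lemma itself is only asserted, with a plan whose known pitfalls you list without resolving. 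Concretely, the induction you sketch does not go through as stated: if you attach a leaf $v$ by an edge $v \ra u$ and delete $v$, the equality $\sum_i \chi(B_i,x)=\sum_i \chi(B_i,t)$ is destroyed (the two sums drop by $\chi(B_v,x)$ and by $\chi(B_v,t)+1$ respectively), so the induction hypothesis must first be weakened to two one-sided bounds $\sum_i \chi(B_i,x)\geq n-1$ and $\sum_i \chi(B_i,t)\geq n-1$; worse, consuming a stub of $u$ can drive the residual demand $\chi(B_u,x)+\chi(B_u,t)$ of the subproblem to zero, so condition (3) can fail for the smaller instance, and one must prove that this only happens in configurations covered by your ``oriented star'' case. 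Saying that one should ``verify that such a pairing is always available'' is precisely the statement to be proved, not a proof.

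The paper closes this gap with a different, non-inductive idea that you may want to compare with. It first builds an explicit spanning tree that respects only the \emph{zero} constraints (no edge leaves a node with $\chi(B_i,x)=0$, no edge enters a node with $\chi(B_i,t)=0$), allowing degrees to exceed the budgets: a chain through the nodes with both multiplicities positive, with every pure source pointing at the head of the chain and the head pointing at every pure sink, or a double star when no mixed node exists. It then repairs budget violations by a local exchange argument: if some node $j$ has out-degree exceeding $\chi(B_j,x)$, then, since the tree has only $n-1\leq\sum_i\chi(B_i,x)$ edges, some node $l$ is strictly under its out-budget; because $j$ must then have at least two outgoing edges, one of them, say $j \ra j'$, can be chosen off the $j$--$l$ path and rerouted to $l \ra j'$, preserving connectedness and strictly decreasing a nonnegative ``unbalancement factor''. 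Iterating yields exactly the budget-respecting spanning tree your phase one needs, after which the padding step is the same as yours. Unless you supply this exchange argument, or a correct completion of your induction with the reformulated hypothesis and the degenerate cases handled, the right-to-left implication remains unproved.
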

\begin{proof}
  We first prove that these conditions are necessary.  Assume that $B$ is a
  resultant sharing group for $S$ and $x/t$, obtained by the sharing
  graph $G$. We show that there exist a finite set $I$ and, for each $i\in I$, a multiset $B_i\in S$, which satisfy the above conditions.

Take $I=N_G$ and $B_i=l_G(i)$ for each $i \in I$, so that
  $B=\multisum_{i \in I} B_i$. Since then in-degree of each
  node is $\chi(B_i,x)$, the sum of the in-degrees of all the nodes is
  $\sum_{i \in I} \chi(B_i,x)$ and the sum of the out-degree is
  $\sum_{i \in I} \chi(B_i,t)$. Both of them must be equal to the
  number of edges in $E_G$. Moreover, each connected graph with
  $\card{I}$ nodes has at least $\card{I}-1$ edges. Finally, if a
  connected graph has more than one node, then every node $i$ has an
  adjacent edge. Therefore, either $\chi(B_i,x)$ or $\chi(B_i,t)$ is
  not zero.

  Now we prove that the conditions are sufficient. Let $I=\{1,\ldots,n\}$. If
  $n=1$ and $\chi(B_i,x) + \chi(B_i,t) = 0$ for the only $i\in I$,
  simply consider a sharing graph with a single node labelled with
  $B_i$ and no edges. Otherwise, we partition the set $I$ in three
  parts:
   \begin{itemize}
    \item $N_x = \{i\in I~|~\chi(B_i,x)=0 \}$;
    \item $N_t = \{i\in I~|~\chi(B_i,t)=0 \}$;
    \item $N = \{i\in I~|~\chi(B_i,x)\neq 0, \chi(B_i,t)\neq 0 \}$;
  \end{itemize}
  Note that this is a partition of $I$ since, by hypothesis, $\forall
  i \in I.\ \chi(B_i,x) + \chi(B_i,t) > 0$. Now we define a connected
  labelled multigraph $G$ whose sets of nodes is $I$ and whose
  labelling function is $\lambda i\in I. B_i$. In order to define the
  edges, we distinguish two cases.
  \begin{description}
  \item [$N\neq \emptyset$:] Let $N=\{b_1, \ldots, b_m\}$ with $m\geq
    1$ and consider the set of edges:
    \[
    \{a \ra b_1~|~ a \in N_t\} \cup \{b_1 \ra c~|~c \in N_x\}\cup
    \{b_i \ra b_{i+1}~|~i \in \{1,\ldots, m-1\} \} \enspace .
   \]
  \item [$N= \emptyset$:] If $N_t=\emptyset$, then also $N_x=\emptyset$
    and there is nothing to prove. We assume that $N_t\neq \emptyset$, and
    thus $N_x\neq \emptyset$. Let $\bar{a} \in N_t$, $\bar{c} \in N_x$ and
    consider the set of edges:
    \[
    \{\bar{a}\ra c~|~c \in N_x\} \cup \{a\ra \bar{c}~|~a \in N_t
    \setminus \{\bar{a}\} \} \enspace .
    \]
  \end{description}
  Note that, in both cases, we obtain a multigraph with the following
  properties:
  \begin{enumerate}
  \item it is connected;
  \item it has exactly $n-1$ edges, i.e. it is a tree (if we do not
    consider the direction of edges);
  \item there is no edge targeted at a node $i$ with $\chi(i,t)=0$ and
    no edge whose source is a node $i$ with $\chi(i,x)=0$.
  \end{enumerate}
  In the rest of the proof, we call \emph{pre-sharing graph} a
  multigraph which satisfies the above properties.

  If $\indeg(i)$ is the in-degree of a node and $\outdeg(i)$ the
  outdegree, we call \emph{unbalancement factor} of the graph the
  value:
  \begin{multline*}
    \sum \{ \outdeg(i) - \chi(B_i,x) \mid i \in I, \outdeg(i) >
    \chi(B_i,x) \} + \\
    + \sum \{ \indeg(i) - \chi(B_i,t) \mid i \in I,
    \indeg(i) > \chi(B_i,t) \} \enspace .
  \end{multline*}
  We prove that given a pre-sharing graph with unbalancement factor
  $k$, we can build another pre-sharing graph with unbalancement
  factor strictly less then $k$. As a result, there is a pre-sharing
  graph with unbalancement factor equal to zero.

  Assume that the graph has unbalancement factor $k$. There is at least an
  unbalanced node. Assume without loss of generality that the
  unbalanced node is $j$ and that $\outdeg(j) > \chi(B_j,x)$.  Since
  $\sum_{i \in I} \chi(B_i,x) \geq n-1$, there exists a node $l$ such
  that $\outdeg(l) < \chi(B_l,x)$. Let $e$ be the unique edge with
  source $j$ such that, if we remove $e$ from the graph, $l$ becomes
  disconnected from $j$.  Since no edge starts from a node $i$ with
  $\chi(B_i,x)=0$, then $\chi(B_j,x)>0$. This means that $\outdeg(j) >
  1$ and there is at least another edge starting from $j$. Assume that it
  is $e':j \ra j'$. Remove this edge and replace it with an edge $l\ra
  j'$. It is obvious that the result is a pre-sharing graph with a
  smaller unbalancement factor than the original one. The case for
  $\indeg(j) > \chi(B_j,t)$ is symmetric.

  Once the unbalancement factor is zero, since $\sum_{i \in I}
  \chi(B_i,x)= \sum_{i \in I} \chi(B_i,t)$ we can freely add other
  edges in such a way to complete the graph w.r.t. the condition on
  the degree of nodes. We obtain a sharing graph $G$ such that
  $\res(G)=B$.
\end{proof}

\section{Proofs of Section \ref{sec:practical}}
\label{sec:proofs}

In this section we give the proofs of correctness and optimality for the abstract unification operators $\mgu_\an$ and $\mgu_\shl$.

\renewcommand{\thepropositionproof}{\ref{prop:abstraction}}
\begin{propositionproof}
The following properties hold:
\begin{enumerate}
\item $\alpha_\an(\bigmultisum \calS)= \Andybin \alpha_\an(\calS)$.
\item $\relev(\gamma_\an(S),x,t)) = \gamma_\an( \relev(S,x,t)) $.
\end{enumerate}
\end{propositionproof}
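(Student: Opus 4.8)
The two parts are logically independent, so the plan is to treat each on its own; both reduce to algebraic verifications once the statement is unfolded. Throughout, $\alpha_\an(\calS)$ on a multiset $\calS$ is read as the multiset of pointwise abstractions of its members.

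For part~(1) I would first reduce to the binary case. Both $\multisum$ and $\andybin$ are associative and commutative, with neutral elements $\emptymulti$ and $\emptyset$ respectively, and $\alpha_\an(\emptymulti)=\emptyset$; hence it suffices to prove $\alpha_\an(B \multisum B')=\alpha_\an(B) \andybin \alpha_\an(B')$ for two $\omega$-sharing groups, and the general identity $\alpha_\an(\bigmultisum \calS)=\Andybin \alpha_\an(\calS)$ then follows by induction on $\card{\calS}$. The binary identity I would check pointwise in a variable $v$, splitting on how many of $\supp{B}$, $\supp{B'}$ contain $v$. When $v$ lies in neither support, or in exactly one, the two sides agree immediately via $0 \oplus x=x$. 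The delicate case---which I expect to be the crux of the whole proposition---is $v \in \supp{B} \cap \supp{B'}$: then $(B \multisum B')(v)=B(v)+B'(v) \geq 2$ forces the left-hand value to be $\infty$, and I must confirm the right-hand side collapses to $\infty$ as well. This is precisely what the clause $1 \oplus 1=\infty$, together with $\infty \oplus x=\infty$, in the definition of $\oplus$ is designed to deliver, since both $\alpha_\an(B)(v)$ and $\alpha_\an(B')(v)$ are then nonzero.

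For part~(2) the plan rests on one observation, read off directly from the definition of $\gamma_\an$: every $B \in \gamma_\an(o)$ has $\supp{B}=\supp{o}$. Consequently the two relevance tests coincide on such $B$, namely $\supp{B} \cap \vars(x/t) \neq \emptyset$ holds iff $\supp{o} \cap \vars(x/t) \neq \emptyset$. Granting this, I would finish by a routine double inclusion after unfolding $\gamma_\an(S)=\bigcup_{o \in S} \gamma_\an(o)$: a relevant $B$ on the left arises from some $\gamma_\an(o)$, whose generator $o$ is itself relevant by support-equality, giving the left-to-right containment; the converse is symmetric.

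In summary, I anticipate no genuine difficulty beyond the infinity-collapse bookkeeping in part~(1); everything else is direct unfolding of the definitions of $\alpha_\an$, $\gamma_\an$, $\oplus$ and $\relev$.
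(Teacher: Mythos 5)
Your proposal is correct and follows essentially the same route as the paper: part~(1) is in both cases a pointwise verification that abstracting after summing agrees with $\oplus$-combining the abstractions (the paper does the $n$-ary computation directly where you reduce to the binary case and induct on $\card{\calS}$, a purely organizational difference), and part~(2) rests in both cases on the observation that every $B \in \gamma_\an(o)$ satisfies $\supp{B}=\supp{o}$, so the relevance tests coincide and the identity follows by unfolding $\gamma_\an(S)=\bigcup_{o\in S}\gamma_\an(o)$.
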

\begin{proof*}
We begin by proving the first property.
\begin{align*}
  \begin{split}
    & \alpha_\an(\multisum \multil B_1,\ldots,B_n \multir) \\
    =\ & \alpha_\an \Big(\lambda v \in \bigcup_{1\leq i \leq n} \supp{B_i}.
    \sum_{1\leq i \leq n}B_i(v)\Big) \\
    =\ & \lambda v\in \bigcup_{1\leq i \leq n} \supp{B_i}.
  \begin{cases}
    1  \text{ if $\sum_{1\leq i \leq n}B_i(v)=1$}\\
    \infty  \text{ otherwise}
  \end{cases} \\
  =\ & \multisum \multil o_1,\ldots,o_n \multir \text{ where } o_i =
  \lambda v\in\supp{B_i}.
  \begin{cases}
    1  \text{ if $B_i(v)=1$}\\
    \infty  \text{ otherwise}
  \end{cases} \\
  =\  & \multisum \multil \alpha_\an(B_1),\ldots,
  \alpha_\an(B_n) \multir \enspace .
  \end{split}\\
  \intertext{Now we proceed with the proof of the second property.}
  \begin{split}
    &   \relev(\gamma_\an(S),x,t)) \\
    =\  & \bigcup \{\gamma_\an(o)~|~o\in S, \supp{\gamma_2(o)} \cap \vars(x=t)
    \neq  \emptyset \} \\
    =\ & \bigcup \{\gamma_\an(o)~|~o\in S, \supp{o} \cap \vars(x=t)
    \neq  \emptyset \}
    \qquad \text{(since $\supp{o}=\supp{\gamma_2(o)}$)}\\
    =\ & \gamma_\an( \relev(S,x,t))) \enspace . \mathproofbox
  \end{split}
\end{align*}
\end{proof*}

\renewcommand{\thetheoremproof}{\ref{th:linp-galois}}
\begin{theoremproof}
  $\langle \alpha_2, \gamma_2 \rangle : \ShLinp \leftrightharpoons
  \Linp$ is a Galois insertion.
\end{theoremproof}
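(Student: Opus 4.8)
The plan is to work at a fixed set of variables of interest $U$, since both $\leq_\omega$ and $\leq_\an$ relate only objects sharing the same $U$, and both $\alpha_\an$ and $\gamma_\an$ preserve $U$; it therefore suffices to regard $\alpha_\an$ and $\gamma_\an$ as maps between the subsets of $\mwp(U)$ ordered by $\subseteq$ and the downward-closed subsets of $\Andysh(U)$. First I would record that both maps are well defined and monotone: $\gamma_\an$ sends larger sets to larger unions, $\alpha_\an$ sends larger sets to larger downward closures, and one checks that $\alpha_\an([S]_U)$ is genuinely downward closed and contains $\emptyset$ whenever it is nonempty, while $\gamma_\an([S']_{U})$ contains $\emptymulti$ whenever nonempty, so that both land in the intended domains. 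I will then prove the adjunction directly and deduce the insertion property, rather than going through the extensive/reductive characterisation.

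The crux is a single pointwise computation on sharing groups, from which I would extract two facts: (i) for every $\omega$-sharing group $B$ one has $B \in \gamma_\an(\alpha_\an(B))$; and (ii) for every $2$-sharing group $o$ the set $\gamma_\an(o)$ is nonempty and $\alpha_\an$ is \emph{constant} on it, with value $o$. Fact (i) follows by inspecting $\gamma_\an(\alpha_\an(B))$ coordinatewise: where $B(v)=1$ the abstraction gives $o(v)=1$ and the defining constraint is $1 \le B(v) \le 1$, and where $B(v)\ge 2$ it gives $o(v)=\infty$, $o_m(v)=2$, and the constraint $2 \le B(v)$ holds. Fact (ii) is the key step and is exactly where the threshold $o_m(v)=\min\{o(v),2\}$ does its work: taking $B(v)=o_m(v)$ shows $\gamma_\an(o)\ne\emptyset$, and for any $B\in\gamma_\an(o)$ a coordinate with $o(v)=1$ forces $B(v)=1$, hence $\alpha_\an(B)(v)=1=o(v)$, while $o(v)=\infty$ forces $B(v)\ge 2$, hence $\alpha_\an(B)(v)=\infty=o(v)$; therefore $\alpha_\an(B)=o$.

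From (i) and (ii) the adjunction $\alpha_\an([S]_U)\le_\an[S']_{U}\iff[S]_U\le_\omega\gamma_\an([S']_{U})$ follows. For the forward direction, if $\downclo\{\alpha_\an(B)\mid B\in S\}\subseteq S'$ then each $B\in S$ has $\alpha_\an(B)\in S'$ and, by (i), $B\in\gamma_\an(\alpha_\an(B))\subseteq\bigcup_{o\in S'}\gamma_\an(o)$, which is exactly $[S]_U\le_\omega\gamma_\an([S']_{U})$. For the converse, if every $B\in S$ lies in some $\gamma_\an(o)$ with $o\in S'$, then by (ii) $\alpha_\an(B)=o\in S'$, so $\{\alpha_\an(B)\mid B\in S\}\subseteq S'$ and, since $S'$ is downward closed, its downward closure stays inside $S'$. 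Finally the insertion property $\alpha_\an\circ\gamma_\an=\id$ is immediate from (ii): $\{\alpha_\an(B)\mid B\in\bigcup_{o\in S'}\gamma_\an(o)\}=\bigcup_{o\in S'}\{o\}=S'$, and taking the downward closure of the already downward-closed $S'$ returns $S'$.

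The step I expect to carry the real content is fact (ii), the constancy of $\alpha_\an$ on the fibres $\gamma_\an(o)$; everything else is bookkeeping. The main obstacle is not difficulty but care with the degenerate cases: the empty support ($o=\emptyset$ with $\gamma_\an(\emptyset)=\{\emptymulti\}$ and $\alpha_\an(\emptymulti)=\emptyset$), the case $U\neq U'$ (where both sides of the adjunction simply fail), and the side condition $S\neq\emptyset\Rightarrow\emptymulti\in S$, which must be verified to be preserved so that the two maps genuinely respect the domain definitions.
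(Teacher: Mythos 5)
Your proposal is correct and follows essentially the same route as the paper: both proofs rest on exactly the two pointwise facts $B \in \gamma_\an(\alpha_\an(B))$ and $\alpha_\an(\gamma_\an(o)) = \{o\}$, which are your facts (i) and (ii). The only difference is packaging — the paper checks monotonicity, extensivity of $\gamma_\an \circ \alpha_\an$, and $\alpha_\an \circ \gamma_\an = \id$, invoking the standard characterization of Galois insertions, whereas you unfold the adjunction $\alpha_\an([S]_U)\leq_\an[S']_{U}\iff[S]_U\leq_\omega\gamma_\an([S']_{U})$ directly from those same two facts.
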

\begin{proof}
  It is obvious that $\alpha_2$ and $\gamma_2$ are monotone functions
  and that they are both join-morphisms. Extensionality of $\gamma_2
  \circ \alpha_2$ follows from the fact that, given an $\omega$-sharing
  group $B$, we have $B \in \gamma_2(\alpha_2(B))$. Finally, given a 2-sharing
  group $o$, we have $\alpha_2(\gamma_2(o))=\lbrace o \rbrace$. This implies
  that $\alpha_2 \circ \gamma_2$ is the identity.
\end{proof}

\renewcommand{\thetheoremproof}{\ref{th:pre-mguandy}}
\begin{theoremproof}
Given $[S]_U \in \Linp$ and the binding $x/t$ with $\vars(x/t) \subseteq U$, we have that
  \[  
   \begin{split}
      \mgu_\an([S]_U,x/t) & = [(S\setminus S') \cup \\
      & \downclo \bigl\{ \Andybin Y \mid Y \in \mwp(S'),
      n \in \chi(Y,x) \cap \chi(Y,t).\ n \geq \card{Y}-1
   \bigr\}]_U \enspace ,
  \end{split}
  \]
  where $S'=\relev(S,x,t)$.
\end{theoremproof}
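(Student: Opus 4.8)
The plan is to start from the non-constructive identity $\mgu_\an([S]_U,x/t)=\alpha_\an(\mgu_\lp(\gamma_\an([S]_U),x/t))$ and rewrite the right-hand side using the algebraic characterization of $\mgu_\lp$ (Corollary~\ref{cor:mgu-algebraic}) applied to $\gamma_\an(S)$, followed by the two properties of Proposition~\ref{prop:abstraction}. Since $\alpha_\an$ is a join-morphism, applying it to the union produced by Corollary~\ref{cor:mgu-algebraic} splits the goal into a \emph{non-relevant} summand $\alpha_\an(\gamma_\an(S)\setminus\relev(\gamma_\an(S),x,t))$ and a \emph{relevant} summand; this is precisely Equation~\eqref{eq:mguandythefirst}, which I would take as the starting point. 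The whole proof then reduces to simplifying these two summands into the claimed expression.

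For the non-relevant summand I would first use Proposition~\ref{prop:abstraction}(\ref{eq:abstraction2}) to rewrite $\relev(\gamma_\an(S),x,t)$ as $\gamma_\an(S')$, where $S'=\relev(S,x,t)$. The key observation is that $\gamma_\an$ preserves supports: every $\omega$-sharing group in $\gamma_\an(o)$ has support $\supp{o}$. Hence the concretizations of relevant and non-relevant $2$-sharing groups are disjoint, giving $\gamma_\an(S)\setminus\gamma_\an(S')=\gamma_\an(S\setminus S')$. Applying the Galois-insertion identity $\alpha_\an\circ\gamma_\an=\id$ (Theorem~\ref{th:linp-galois}) yields $\downclo(S\setminus S')$, and since the order on $2$-sharing groups only relates groups of \emph{equal} support, $S\setminus S'$ is already downward closed; so this summand is exactly $S\setminus S'$.

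For the relevant summand I would again invoke Proposition~\ref{prop:abstraction}(\ref{eq:abstraction2}) to work over $\gamma_\an(S')$, and Proposition~\ref{prop:abstraction}(\ref{eq:abstraction1}) to push $\alpha_\an$ through $\multisum$, so that any multiset $\calS\in\mwp(\gamma_\an(S'))$ abstracts to $Y=\multil\alpha_\an(B)\mid B\in\calS\multir\in\mwp(S')$ with $\alpha_\an(\multisum\calS)=\Andybin Y$ and $\card{Y}=\card{\calS}$. It then remains to prove the two inclusions between the abstracted set and $\downclo\{\Andybin Y\mid Y\in\mwp(S'),\ \exists n\in\chi(Y,x)\cap\chi(Y,t).\ n\geq\card{Y}-1\}$. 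The forward inclusion is routine: if $\calS$ satisfies $\sum_{B\in\calS}\chi(B,x)=\sum_{B\in\calS}\chi(B,t)=n'\geq\card{\calS}-1$, then the bounds $\chi_m(o,x)\leq\chi(B,x)\leq\chi_M(o,x)$ and $\chi_m(o,t)\leq\chi(B,t)\leq\chi_M(o,t)$ for $B\in\gamma_\an(o)$ place $n'$ in both $\chi(Y,x)$ and $\chi(Y,t)$, so $(Y,n')$ witnesses membership on the right.

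The reverse inclusion is the main obstacle, and is where the slack in the interval definition of $\chi(Y,\cdot)$ must be controlled. Given $Y$ and a witness $n\in\chi(Y,x)\cap\chi(Y,t)$ with $n\geq\card{Y}-1$, I must exhibit an actual multiset $\calS$ over $\gamma_\an(S')$ satisfying the \emph{exact} equality $\sum\chi(B,x)=\sum\chi(B,t)\geq\card{\calS}-1$ and whose resultant abstracts to $\Andybin Y$ (or, via $\downclo$, to something $\geq_\an\Andybin Y$ with the same support). The difficulty is that $\chi(Y,x)$ and $\chi(Y,t)$ are intervals, whereas the genuinely attainable multiplicities form a coarser set (for instance multiples of $\occ(v,t)$ when a non-linear $v$ occurs several times in $t$), so $n$ itself need not be simultaneously attainable on both sides. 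The construction exploits three degrees of freedom: idempotency $x\notin\vars(t)$ decouples the contributions to $\chi(\cdot,x)$ and $\chi(\cdot,t)$; the multiplicity of each non-linear ($\infty$) variable in a chosen representative $B\in\gamma_\an(o)$ may be set to any value $\geq 2$; and sharing groups may be duplicated, changing $\card{\calS}$ without altering $\Andybin Y$, provided the duplicated groups carry no variable that is linear in $\Andybin Y$. I expect to run a short case analysis on whether $Y$ is linear for $x$ and for $t$ (degenerate point-intervals versus genuine rays), showing in each case that the overlap at $n\geq\card{Y}-1$ forces a common attainable value $m\geq\card{\calS}-1$, which completes the equality.
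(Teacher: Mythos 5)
Your plan follows the paper's own proof essentially step for step: it starts from Eq.~\eqref{eq:mguandythefirst}, collapses the non-relevant summand to $S \setminus S'$ via disjointness of the concretizations $\gamma_\an(o)$ together with Proposition~\ref{prop:abstraction}, and reduces the relevant summand to bridging the gap between genuinely attainable multiplicities and the interval $\chi(Y,t)$, which the paper closes using exactly the three devices you name (decoupling the $x$-side from the $t$-side via $x \notin \vars(t)$, free choice of exponents $\geq 2$ for $\infty$-variables, and duplication of sharing groups, with $\downclo$ absorbing the resulting larger resultant group). The case analysis you defer is precisely where the paper spends most of its proof (cases on whether $\chi(Y,x)$ and $\chi(Y,t)$ are singletons or infinite rays, with a further counting argument when the $x$-side is a singleton), but your outline contains all the ingredients needed to carry it out, so the approach is sound and essentially identical to the paper's.
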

\begin{proof*}
By using Prop. \ref{prop:abstraction} point \ref{eq:abstraction2},
and since $o \neq o' \Rightarrow \gamma_\an(o) \cap
\gamma_\an(o')=\emptyset$, we get:
\begin{equation*}
  \begin{split}
    &   \alpha_\an( \gamma_\an(S) \setminus \relev(\gamma_\an(S),x,t)) \\
    = \ & \alpha_\an( \gamma_\an(S) \setminus \gamma_\an(
    \relev(S,x,t))) \\
    = \ & \alpha_\an( \gamma_\an(S \setminus \relev(S,x,t))) \\
    = \ & S \setminus \relev(S,x,t) \enspace .
\end{split}
\end{equation*}
Therefore, we get the equality
\begin{multline*}
  \mgu_\an([S]_U,x/t)=\Bigl[S \setminus \relev(S,x,t)  \cup \\
  \alpha_\an \Bigl(\{ \multisum \calS \mid \calS \in
  \mwp(\relev(\gamma_2(S),x,t)), \sum_{B \in \calS} \chi(B,x) = \sum_{B \in
    \calS} \chi(B,t) \geq \card{\calS}-1 \}\Bigr)\Bigr]_U \enspace .
\end{multline*}
Now, with simple algebraic manipulations, we obtain:
{\small
 \begin{align*}
   &\alpha_\an(
   \{ \multisum \calS \mid
      \calS \in \mwp(\relev( \gamma_\an(S),x,t)),
      \sum_{B \in \calS} \chi(B,x) = \sum_{B \in \calS}
      \chi(B,t) \geq \card{\calS}-1 \}) \\
 =\ &\alpha_\an(
   \{ \multisum \calS \mid
      \calS \in \mwp(\gamma_\an(\relev(S,x,t))),
      \sum_{B \in \calS} \chi(B,x) = \sum_{B \in \calS}
      \chi(B,t) \geq \card{\calS}-1 \} )\\
 =\ &\alpha_\an(
   \{ \multisum \multil B_1,\ldots,B_k \multir \mid   k\in \Nat,
      \\
   & \qquad  \forall i. B_i \in \gamma_\an(\relev(S,x,t)), \sum_{1\leq i \leq k} \chi(B_i,x) = \sum_{1\leq i \leq k}
      \chi(B_i,t) \geq k-1 \} )\\
=\ &\alpha_\an(
   \{ \multisum \multil B_1,\ldots,B_k \multir \mid   k\in \Nat,
   \multil o_1,\ldots,o_k \multir \in \mwp(\relev(S,x,t)),
      \\
   & \qquad  \forall i. B_i \in \gamma_\an(o_i), \sum_{1\leq i \leq k} \chi(B_i,x) = \sum_{1\leq i \leq k}
      \chi(B_i,t) \geq k-1 \} )\\
=\ &\alpha_\an(
   \{ \multisum \multil B_1,\ldots,B_k \multir \mid   k\in \Nat,
   \multil o_1,\ldots,o_k \multir \in \mwp(\relev(S,x,t)),
      \\
   & \qquad \forall i. \alpha_\an(B_i) = o_i,  \sum_{1\leq i \leq k} \chi(B_i,x) = \sum_{1\leq i \leq k}
      \chi(B_i,t) \geq k-1 \} )\\
    &  \qquad \qquad \text{(such $o_i$'s do always exist since $\relev(S,x,t)$
is downworld closed)} \\
=\ &\downclo
   \{ \alpha_\an(\multisum \multil B_1,\ldots,B_k \multir) \mid   k\in \Nat,
   \multil o_1,\ldots,o_k \multir \in \mwp(\relev(S,x,t)),
       \\
   & \qquad  \forall i. \alpha_\an(B_i) = o_i, \sum_{1\leq i \leq k} \chi(B_i,x) = \sum_{1\leq i \leq k}
      \chi(B_i,t) \geq k-1 \} \\
 =\ &\downclo
   \{ \multisum \multil \alpha_\an(B_1),\ldots,\alpha_\an(B_k) \multir \mid
    k\in \Nat,  \multil o_1,\ldots,o_k \multir \in \mwp(\relev(S,x,t)),  \\
   & \qquad  \forall i. \alpha_\an(B_i)=o_i, \sum_{1\leq i \leq k} \chi(B_i,x) = \sum_{1\leq i \leq k}
      \chi(B_i,t) \geq k-1 \} \\
&  \qquad \qquad \text{(by Prop. \ref{prop:abstraction} point \ref{eq:abstraction1})}\\
 =\ &\downclo
   \{ \multisum \multil o_1,\ldots,o_k \multir \mid   k\in \Nat,
   \multil o_1,\ldots,o_k \multir \in \mwp(\relev(S,x,t)),    \\
   & \qquad  \forall i. \alpha_\an(B_i) = o_i,  \sum_{1\leq i \leq k} \chi(B_i,x) = \sum_{1\leq i \leq k} \chi(B_i,t) \geq k-1 \} \}\\
 =\ &\downclo
   \{ \multisum \multil o_1,\ldots,o_k \multir \mid   k\in \Nat,
   \multil o_1,\ldots,o_k \multir \in \mwp(\relev(S,x,t)), \\
   & \qquad
     \forall i. \alpha_\an(B_i) = o_i, \forall i. \alpha_\an(B'_i) = o_i,
     \sum_{1\leq i \leq k} \chi(B_i,x)  = \sum_{1\leq i \leq k} \chi(B'_i,t)
     \geq k-1  \} \\
   & \qquad \qquad \text{(we discuss later why this is faithful)}  \\
 =\ &\downclo
   \{ \multisum \multil o_1,\ldots,o_k \multir \mid   k\in \Nat,
   \multil o_1,\ldots,o_k \multir \in \mwp(\relev(S,x,t)), n \geq k-1,\\
   & \qquad   n \in \{ \sum_{1\leq i \leq k}  B_i(x) ~|~
      \forall i. \alpha_\an(B_i) = o_i\} \cap \{ \sum_{1\leq i \leq k}
     \chi(B'_i,t)~|~ \forall i. \alpha_\an(B'_i) = o_i\}  \} \\
 =\ &\downclo
   \{ \multisum \multil o_1,\ldots,o_k \multir \mid   k\in \Nat,
   \multil o_1,\ldots,o_k \multir \in \mwp(\relev(S,x,t)),  n \geq k-1\\
   & \qquad  n \in [\sum_{1\leq i \leq k} {o_i}_m(x),\sum_{1\leq i \leq
k} o_i(x)]  \cap \{ \sum_{1\leq i \leq k}
      \chi(B'_i,t)~|~ \forall i. \alpha_\an(B'_i) = o_i\} \} \enspace .
\end{align*}
}

The move  from a single family $\{B_i\}_{1 \leq i \leq k}$ to different families $\{B_i\}_{1 \leq i \leq k}$ and $\{B'_i\}_{1 \leq i \leq k}$ is possible since, if
\[
 \forall i. \alpha_\an(B_i) = o_i \text{ and } \forall i. \alpha_\an(B'_i) = o_i \text{ and }
     \sum_{1\leq i \leq k} \chi(B_i,x)  = \sum_{1\leq i \leq k} \chi(B'_i,t)
     \geq k-1 \enspace ,
\]
we may define a family $\{C_i\}_{1 \leq i \leq k}$ such that $C_i(x)=B_i(x)$ and $C_i(v)=B'(v)$ for each $v \neq x$. It is immediate to check that the $C_i$'s satisfy the condition
\[
 \forall i. \alpha_\an(C_i) = o_i \text{ and }
     \sum_{1\leq i \leq k} \chi(C_i,x)  = \sum_{1\leq i \leq k} \chi(C_i,t)
     \geq k-1 \enspace .
\]

If we denote with $c(\multil o_1, \ldots, o_k \multir,t)$ the set $\big \{
\sum_{1\leq i \leq k} \chi(B_i,t)~|~ \forall i. \alpha_\an(B_i) = o_i\big\}$, what remains
to prove is that
\begin{align*}
 &\downclo
   \{ \Andybin X \mid X \in \mwp(\relev(S,x,t)),
   n \in \chi(X,x) \cap c(X,t). n \geq \card{X}-1 \}\\
=\ &
\downclo
  \{ \Andybin X \mid X \in \mwp(\relev(S,x,t)),
   n \in \chi(X,x)\cap \chi(X,t), n \geq \card{X}-1 \} \enspace , \notag
\end{align*}
where the only difference is that we replaced $c(X,t)$ with $\chi(X,t)$.

We begin by examining the relationship between $c(X,t)$ and $\chi(X,t)$. First of
all, it is obvious that $c(X,t) \subseteq \chi(X,t)$, therefore we only
need to prove half of the equality.

If there exists $o \in X$ such that $\chi_M(o,t)=\infty$, then $c(X,t)$ is an
infinite set. We call $n$ its least element. Under the same conditions, $\chi(X,
t)$ is the interval $[n,\infty]$. If there is no $o \in X$ such
that $\chi_M(o,t)=\infty$, then $c(X,t)=\chi(X,t)$ and they are both singletons.

In the same way, if there exists some $o \in X$ such that $o(x)=\infty$
then $\chi(X,x)$ is an interval of the kind $[n,\infty)$. However, if there is
no such $o$, then $\chi(X,x)$ is a singleton, whose unique element is $\card{\{o
\in X \mid o(x) =1 \}}$.

Assume that we have $X \in \mwp(\relev(S,x,t))$ such that
there exists $n \in \chi(X,x) \cap \chi(X,t)$ with $n \geq \card{X}-1$. We
want to prove that we may find a multiset $Y \in \mwp(\relev(S,x,t))$ such that
there exists $m \geq \card{Y}-1$ with $m \in \chi(Y,x) \cap c(Y,t)$ and
$\Andybin X \leq \Andybin Y$. This is enough to complete the proof of the
theorem.

We distinguish several cases.
\begin{itemize}
\item $\chi(X,x)$ and $\chi(X,t)$ are both infinite. In this case, $c(X,t)$ is
infinite. Moroever, since $\chi(X,x)$ is an interval, there are infinite natural
numbers in $\chi(X,x) \cap c(X,t)$. We may take $Y=X$.
\item $\chi(X,t)$ is infinite and $\chi(X,x)$ is a singleton $\{v\}$,
then $v=\card{\{o \in X \mid o(x)=1\}} \leq k$. Since it must be $v \geq k-1$,
there are only two choices: either $v=k$ or $v=k-1$. We distinguish the two
subcases.
\begin{itemize}
\item $v=k-1$. In this case, there exists $o \in X$ such that
$\chi_m(o,t)=0$ and $o(x)=1$, otherwise it is not possible that $v \geq
\chi_m(X,t)$. Since $\chi(X,t)$ is infinite, the same holds for $c(X,t)$, hence
we may find an $n \in c(X,t)$ such that $n \geq v$. Consider $Y=X \uplus
(n-v)*\multil o \multir$. We have $\chi(Y,x)=\{ v + (n-v) \}=n$, $c(Y,t)=c(X,t)$
and $\card{Y}=\card{X}+n-v=n+1$. Therefore $n \in c(Y,x) \cap c(Y,t)$ and $n
\geq \card{Y}-1$. $\Andybin Y$ is a valid result, and  $\Andybin X \leq \Andybin Y$.

\item $v=k$.  If there is an $o \in X$ such that $\chi_m(o,t)=0$, the proof
proceeds as in the previous case. Otherwise, $\chi_m(X,t) \geq k$ and
since it should be  $v=k \geq \chi_m(X,t)$, we have $\chi_m(X,t)=k$. Therefore
$k \in c(X,t)$ too, since $\min c(X,t)=\min \chi(X,t)$, and we may take $Y=X$.
\end{itemize}
\item if $\chi(X,t)$ is finite, then $\chi(X,t)=c(X,t)$ and we take $Y=X$. \exproofbox
\end{itemize}
\end{proof*}

\renewcommand{\thetheoremproof}{\ref{th:mguandy2}}
\begin{theoremproof}
Given $[S]_U$ in $\ShLinp$ and the binding $x/t$, let $V=\{v_1, \ldots, v_n\}$ be $\vars(x/t) \setminus U$. Then,
\[
  \mgu_\an([S]_U,x/t)= \mgu_\an([S \cup \{ v_1, \ldots, v_n \}]_{U \cup V},x/t) \enspace .
\]
\end{theoremproof}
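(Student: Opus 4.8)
The plan is to reduce the claimed identity to the extension definition of $\mgu_\lp$ on $\Linp$ (Definition~\ref{def:mguomega}), exploiting that $\mgu_\an$ is defined as $\alpha_\an\circ\mgu_\lp\circ\gamma_\an$. First I would unfold both sides through the definition of unification for $\ShLinp$, writing the left-hand side as $\alpha_\an(\mgu_\lp([\gamma_\an(S)]_U,x/t))$ and the right-hand side as $\alpha_\an(\mgu_\lp(\gamma_\an([S\cup\{v_1,\ldots,v_n\}]_{U\cup V}),x/t))$, where throughout I write $\gamma_\an(S)$ as shorthand for $\bigcup\{\gamma_\an(o)\mid o\in S\}$, following the abuse of notation already adopted in the paper.

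The only genuine computation is to identify $\gamma_\an$ of the enlarged argument. Each fresh variable $v_i$, regarded as a $2$-sharing group, has support $\{v_i\}$ with $v_i$ mapped to $1$; hence $o_m(v_i)=o(v_i)=1$ forces $B(v_i)=1$ for every $B\in\gamma_\an(v_i)$, so $\gamma_\an(v_i)=\{\multil v_i\multir\}$. Since $\gamma_\an$ distributes over the union defining $\gamma_\an([\cdot]_U)$, I obtain $\gamma_\an([S\cup\{v_1,\ldots,v_n\}]_{U\cup V})=[\gamma_\an(S)\cup\{\multil v_1\multir,\ldots,\multil v_n\multir\}]_{U\cup V}$. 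I would also verify the well-formedness of $[S\cup\{v_1,\ldots,v_n\}]_{U\cup V}$ as an element of $\ShLinp$: each singleton $\{v_i\}$ is downward closed, since nothing with support $\{v_i\}$ lies strictly below it, and $\emptyset\in S$ whenever $S\neq\emptyset$, so the union stays downward closed and retains the empty sharing group.

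With this in hand, I would invoke Definition~\ref{def:mguomega} applied to $[\gamma_\an(S)]_U$. Since $V=\vars(x/t)\setminus U=\{v_1,\ldots,v_n\}$ and $U\cup V=U\cup\vars(x/t)$, the extension definition gives exactly $\mgu_\lp([\gamma_\an(S)]_U,x/t)=\mgu_\lp([\gamma_\an(S)\cup\{\multil v_1\multir,\ldots,\multil v_n\multir\}]_{U\cup V},x/t)$. The right-hand side of this equation is precisely the inner $\mgu_\lp$ expression obtained above for the right-hand side of the theorem, so applying $\alpha_\an$ to both sides closes the chain of equalities and yields the stated identity.

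I do not expect a genuine obstacle here: the whole argument is a sequence of definitional unfoldings, and all the real content sits in the elementary observation that a singleton $2$-sharing group concretizes to the corresponding singleton multiset, so that adding fresh singletons on the $\ShLinp$ side matches, under $\gamma_\an$, the singletons $\multil v\multir$ that Definition~\ref{def:mguomega} prepends on the $\Linp$ side. The mild point requiring care is the bookkeeping of the sets of variables of interest, namely that $U\cup V$ coincides with $U\cup\vars(x/t)$, which is immediate from $V=\vars(x/t)\setminus U$. Because the left-hand side is, by construction, the best correct abstraction $\alpha_\an\circ\mgu_\lp\circ\gamma_\an$, this identity is exactly the statement that enlarging the set of variables of interest preserves optimality.
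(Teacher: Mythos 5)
Your proposal is correct and follows essentially the same route as the paper's proof: both reduce the identity to showing that concretizing the enlarged abstract object equals extending the concretization (the paper phrases this as $\ext_V(\gamma_\an([S]_U)) = \gamma_\an([S \cup \{v_1,\ldots,v_n\}]_{U \cup V})$), with the key computation being $\gamma_\an(v_i)=\{\multil v_i\multir\}$, and then both invoke Definition~\ref{def:mguomega} on the $\Linp$ side before applying $\alpha_\an$. Your additional checks (well-formedness of the extended object and the bookkeeping $U\cup V=U\cup\vars(x/t)$) are harmless extras the paper leaves implicit.
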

\begin{proof}
  First of all, given a finite set of variables $V$, let us define the extension operator $\ext_V: \Linp \ra \Linp$ such that $\ext_V([S]_U) = [S \cup \{ \multil v \multir \mid v \in V \setminus U \}]_{U \cup V}$. Given $V=\vars(x/t) \setminus U=\{ v_1, \ldots, v_n\}$, we have that
  \begin{multline*}
  \mgu_\an([S]_U,x/t)=\alpha_\an(\mgu_\omega(\gamma_\an([S]_U), x/t)) \\= \alpha_\an(\mgu_\omega(\ext_V(\gamma_2([S]_U)),x/t))\enspace .
  \end{multline*}
  We also know that
  \begin{multline*}
   \mgu_\an([S \cup \{v_1, \ldots, v_n\}]_{U \cup V}, x/t)=\\
   \alpha_\an(\mgu_\omega(\gamma_2([S \cup \{  v_1, \ldots, v_n \}]_{U \cup V}), x/t)) \enspace .
  \end{multline*}
  Hence, it is enough to prove that
  \[
    \ext_V(\gamma_2([S]_U)) = \gamma_2([S \cup \{  v_1, \ldots, v_n \}]_{U \cup V} ) \enspace .
  \]
  By definition of $\gamma_2$, we have that
  \begin{align*}
    & \gamma_2([S \cup \{  v_1, \ldots, v_n \}]_{U \cup V}) \\
    =~ & \bigl[\bigcup \{ \gamma_2(o) \mid o \in S \cup \{ v_1, \ldots, v_n \} \}\bigr]_{U \cup V} \\
    =~ & \bigl[\bigcup \{ \gamma_2(o) \mid o \in S \} \cup \{ \multil v_1 \multir, \ldots, \multil v_n \multir \}\bigr]_{U \cup V} \qquad \text{[since $\gamma_2(v_i) = \multil v_i \multir$]}\\
    =~ & \ext_V(\gamma_2([S]_U)) \enspace ,
  \end{align*}
  which completes the proof.
\end{proof}

\renewcommand{\thetheoremproof}{\ref{th:mguandy}}
\begin{theoremproof}
Given $[S]_U \in \ShLinp$ and the  binding  $x/t$ with $\vars(x/t) \subseteq U$, we have
  \begin{equation*}
     \mgu_\an([S]_U,x/t)  = [(S\setminus S') \cup
      \downclo \bigcup_{X \subseteq S'} \res(X,x,t)]_U \enspace ,
   \end{equation*}
  where $S'=\relev(S,x,t)$ and $\res(X,x,t)$ is defined as follows:
\begin{enumerate}
\item if $X$ is non-linear for $x$ and $t$,  then $\res(X,x,t)=\{ \Andybin X^2\}$;
\item if $X$ is non-linear for $x$ and linear for $t$,
  $\card{X_x} \leq 1$ and $\card{X_t} \geq 1$, then we have
  $\res(X,x,t)=\{ (\Andybin X_x) \uplus (\Andybin X_{xt}^2) \uplus (\Andybin
  X_t^2)\}$;
\item if $X$ is linear for $x$ and strongly non-linear for $t$,
$\card{X_x} \geq 1$ and $\card{X_t} \leq 1$,
then we have
$\res(X,x,t)=\{ (\Andybin X_x^2) \uplus (\Andybin X_{xt}^2) \uplus (\Andybin X_t)\} $;
\item if $X$ is linear for $x$ and not strongly non-linear for $t$,
  $\card{X_t} \leq 1$, then we have
\[
\begin{split}
\res(X,x,t)=\{(\Andybin Z) \uplus
  (\Andybin X_{xt}^2) \uplus (\Andybin X_t) ~|~ & Z \in \mwp(X_x), \\
   & \card{Z}=\chi_M(X_t,t)=\chi_m(X_t,t), \\
   & \supp{Z} = X_x\} \enspace ;
\end{split}
\]
\item otherwise $\res(X,x,t)=\emptyset$.
\end{enumerate}
\end{theoremproof}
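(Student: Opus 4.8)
The plan is to obtain this algorithmic formula as a direct rewriting of the characterisation already proved in Theorem~\ref{th:pre-mguandy}. There $\mgu_\an([S]_U,x/t)$ is given as $[(S\setminus S')\cup \downclo\{\Andybin Y \mid Y\in\mwp(S'),\ \exists n\in\chi(Y,x)\cap\chi(Y,t).\ n\geq\card{Y}-1\}]_U$, with $S'=\relev(S,x,t)$, and the $S\setminus S'$ part is literally the same here. So the whole task reduces to proving
\[
\downclo\bigl\{\Andybin Y \mid Y\in\mwp(S'),\ \exists n\in\chi(Y,x)\cap\chi(Y,t).\ n\geq\card{Y}-1\bigr\}=\downclo\bigcup_{X\subseteq S'}\res(X,x,t)\enspace.
\]
The first move I would make is to pass from multisets to their supports. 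Since $o\andybin o=o^2$ absorbs every further copy, $\Andybin Y$ depends only on $X=\supp{Y}$ and on the subset $D\subseteq X$ of groups occurring at least twice, namely $\Andybin Y=(\Andybin(X\setminus D))\uplus(\Andybin D^2)$. Because both sides are downward closed, only the maximal resultants matter, so the question becomes: for each fixed support $X\subseteq S'$, which delinearisation patterns $D$ admit a feasible assignment of multiplicities, and do the maximal feasible $\Andybin Y$ coincide with $\res(X,x,t)$?

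The second step is to make feasibility explicit. As $x$ is a variable, $\chi_M(o,x)=o(x)\in\{0,1,\infty\}$, so $\chi(Y,x)$ is a singleton unless some $o\in X$ has $o(x)=\infty$, in which case it is an unbounded interval $[m,\infty)$; dually $\chi(Y,t)$ is a singleton unless some $o\in X$ has $\chi_M(o,t)=\infty$. The requirement that $\chi(Y,x)\cap\chi(Y,t)$ contain some $n\geq\card{Y}-1$ therefore splits according to which of the two intervals is unbounded, and adding one copy of a group $o$ raises $\card{Y}$ by one while shifting the endpoints of $\chi(Y,x)$ and $\chi(Y,t)$ by $o(x)$ and by $\chi_M(o,t)$. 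Tracking when such a copy can be added while keeping the overlap above $\card{Y}-1$ produces exactly the partition $X=X_x\uplus X_t\uplus X_{xt}$ and the five cases: if both intervals can be made unbounded every group may be delinearised (Case~1, $\Andybin X^2$); if only one side is unbounded one may delinearise the groups on that side together with $X_{xt}$ while the linear side caps each of its groups at multiplicity one and forces the side conditions $\card{X_x}\leq 1$ or $\card{X_t}\leq 1$ (Cases~2 and~3); and when neither side is unbounded the overlap is the fixed finite value $\chi_M(X_t,t)=\chi_m(X_t,t)$, which pins the total multiplicity of the $X_x$-groups and yields the family indexed by $Z$ (Case~4). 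Supports with no feasible $Y$ give the empty set of Case~5.

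With this in hand, for the inclusion $\bigcup_X\res(X,x,t)\subseteq$ the multiset set I would, in each case, exhibit an explicit witness $Y$ with $\supp{Y}=X$ realising the stated resultant and check by a one-line computation that $\chi(Y,x)\cap\chi(Y,t)$ meets $\card{Y}-1$; the delinearisation pattern of the witness is read off from the $(\_)^2$'s in the formula for $\res$ (for instance $Y=Z\uplus X_{xt}\uplus X_{xt}\uplus X_t$ in Case~4, where the balance $\card{Z}=\chi_M(X_t,t)$ and the bound $\card{X_t}\leq 1$ drop out directly). For the reverse inclusion I would take an arbitrary feasible $Y$, set $X=\supp{Y}$, and argue that feasibility forces its delinearisation pattern to be dominated by the one permitted in the relevant case, so that $\Andybin Y\leq_\an$ the corresponding element of $\res(X,x,t)$ and downward closure absorbs it.

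I expect the genuine difficulty to lie in Case~4 together with the non-linear versus strongly non-linear dichotomy for $t$. The delicate point is that a group $o\in X_t$ with $1<\chi_M(o,t)<\infty$ renders $X$ non-linear yet leaves $\chi(Y,t)$ a singleton, so it does \emph{not} license free delinearisation of the pure-$x$ groups $X_x$; only a group witnessing strong non-linearity --- some $\chi_M(o,t)=\infty$, or some $o\in X_{xt}$ with $\chi_M(o,t)>1$, the latter contributing more to $\chi(Y,t)$ than to $\chi(Y,x)$ per copy --- makes the overlap unbounded and unlocks $\Andybin X_x^2$. Identifying this threshold precisely, proving that in the non-strongly-non-linear case the copies of the $X_x$-groups are forced to number exactly $\chi_M(X_t,t)$ (hence $\card{Z}=\chi_M(X_t,t)=\chi_m(X_t,t)$ with $\supp{Z}=X_x$), and verifying that the cardinality conditions $\card{X_x}\leq 1$ and $\card{X_t}\leq 1$ carve out exactly the infeasible supports, is the combinatorial core of the proof.
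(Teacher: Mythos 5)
Your plan follows the same route as the paper's proof: reduce via Theorem~\ref{th:pre-mguandy} to the equality of the two downward-closed sets, prove the inclusion of $\bigcup_{X}\res(X,x,t)$ by exhibiting witness multisets, and prove the converse by passing from a feasible multiset to its support and arguing domination (your reverse direction matches the paper's almost exactly and is fine). The gap is in the forward direction, in your claim that in each case you can ``exhibit an explicit witness $Y$ with $\supp{Y}=X$'' --- equivalently, in your framing that, support by support, the maximal feasible resultants coincide with $\res(X,x,t)$. This fails in Case~1. Take $t=r(y,y)$, $S=\{\emptyset,x,x^\infty,xy,x^\infty y\}$ (downward closed), $U=\{x,y\}$, and $X=\{x^\infty,x^\infty y\}\subseteq S'$. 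Then $X$ is non-linear for $x$ and for $t$, so $\res(X,x,t)=\{\Andybin X^2\}=\{x^\infty y^\infty\}$. But any $Y=a\multil x^\infty\multir\uplus b\multil x^\infty y\multir$ with $a,b\geq 1$ has $\chi(Y,t)=\{2b\}$ (a singleton, since $\chi_M(x^\infty y,t)=2$ is finite) while $\chi(Y,x)=\{n\mid n\geq 2a+2b\}$, so $\chi(Y,x)\cap\chi(Y,t)=\emptyset$: \emph{no} multiset supported exactly on $X$ is feasible. The problematic configuration is precisely the one your sketch glosses over: every group witnessing non-linearity in $t$ has finite $t$-multiplicity but infinite $x$-multiplicity, so each of its copies inflates the lower endpoint of $\chi(Y,x)$ at least as fast as the singleton $\chi(Y,t)$, and the two sets can never meet.

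The paper closes this case by abandoning the fixed support and using downward closure of $S'=\relev(S,x,t)$: it picks $o'\in X$ with $\chi_m(o',t)>1$ and adjoins extra copies of the group $o\leq o'$ defined by $o(x)=\min(o'(x),1)$ and $o(v)=o'(v)$ for $v\neq x$ (in the example above, $o=xy$). The multiset $Y=X\uplus X\uplus n\multil o\multir$ becomes feasible for suitable $n$, because each copy of $o$ adds at most $1$ to the $x$-side but at least $2$ to the $t$-side, and $\Andybin Y$ is still $\Andybin X^2$ because $\supp{o}=\supp{o'}$ is already delinearized. So correctness of Case~1 genuinely requires witnesses whose support differs from $X$, and the identity you are after holds only after the union over all $X\subseteq S'$ and downward closure, not support by support. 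The witnesses in the other cases (e.g.\ $Z\uplus X_{xt}\uplus X_{xt}\uplus X_t$ in Case~4, and the balancing argument in Case~3, which in the paper needs an exact Diophantine solution rather than just an ``unbounded overlap'') do stay inside $X$, so your sketch needs repair only at this point --- but without that repair the forward inclusion is not proved.
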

\begin{proof*}
By Theorem~\ref{th:pre-mguandy}, we only need to show that:
\begin{equation}
\label{eq:algo}
     \downclo \bigl\{ \Andybin Y \mid Y \in \mwp(S'),
      n \in \chi(Y,x) \cap \chi(Y,t).\ n \geq \card{Y}-1
      \bigr\} = \downclo \bigcup_{X \subseteq S'} \res(X,x,t) \enspace ,
\end{equation}
where $S'=\relev(S,x,t)$. We prove the two different inclusions separately.

\textbf{Left to Right Inclusion.}
Let $\bar{o} \in \res(X,x,t)$ for some $X \subseteq \relev(S,x,t)$. We want to prove
that there exist $Y \in \mwp(S')$ and $n \in \chi(Y,x) \cap \chi(Y,t)$ such that
$n \geq \card{Y}-1$ and $\andybin Y = \bar{o}$. We distinguish several cases:
\begin{itemize}
\item if $X$ is non-linear for $x$ and $t$, it is $\andybin X^2=\bar{o}$. We distinguish
two subcases:
\begin{itemize}
\item if $\chi_M(X,t)=\infty$, it is enough to take $Y=X \uplus X$.
\item if $\chi_M(X,t)$ is finite, since $X$ is non-linear for $t$, there exists $o' \in X$ such that $\chi_m(o',t)>1$. Since $S'$ is downward closed, consider $o \in S$ such that $o(x)=\min(o'(x),1)$ and $o(v)=o'(v)$ if $v \neq x$. We show that there exists a natural number $n$ such that, for $Y=X \uplus X \uplus n \multil o\multir$, we have $\chi_m(Y,t) \geq \chi_m(Y,x)$ and $\chi_m(Y,t) \geq \card{Y}-1$. Since $\chi_m(Y,x) \leq 2\chi_m(X,x) + n$, we need to solve the inequalities $2\chi_m(X,t)+n\chi_m(o,t) \geq 2\chi_m(X,x) + n$ and $2\chi_m(X,t)+n\chi_m(o,t) \geq 2\card{X} + n$. Since $\chi_m(o,t)\geq 2$, there always exists a solution for $n$. Since $\chi_M(X,x)=\infty$, we have that $\Andybin Y = \bar{o}$ is in the left hand side of \eqref{eq:algo}.
\end{itemize}

\item if $X$ is non-linear for $x$ and linear for $t$. We need to find
  $m$ such that, if we take $Y=X_x \uplus 2X_{xt} \uplus 2m X_t$, we
  have $\chi_m(Y,t) \geq \chi_m(Y,x)$. In other words, we need to
  solve the disequation $2 \chi_m(X_{xt},t) + 2m\chi_m(X_t,t) \geq
  \chi_m(X_x,x)+2\chi_m(X_{xt},x)$, which is always possible, since $\card{X_t} \geq 1$.
  Since $\card{Y} \leq 1+2\card{X_{xt}}+2m\card{X_t}$ we have $\chi_m(X,t)
  \geq \card{Y}-1$.

\item if $X$ is linear for $x$ and strongly non-linear for $t$, we distinguish
two subcases:
\begin{itemize}
\item $\chi_M(X,t)=\infty$. Let $n=2\chi_m(X_{xt},t)+ \chi_m(X_t,t)$
  and consider any number $m$ such that $2m\card{X_x}+2\card{X_{xt}}
  \geq n$ (such an $m$ always exists since $\card{X_x} \geq 1$). Then,
  consider the multiset $Y=2m X_x \uplus 2X_{xt} \uplus X_t$, and we
  have that $\chi_m(Y,x)=\chi_M(Y,x)=2m\card{X_x} + 2 \card{X_{xt}}
  \geq \chi_m(Y,t)$ by construction. Moreover $\chi_M(Y,t)=\infty$ and
  $\card{Y} \leq 2m\card{X_x}+2\card{X_{xt}}+1$. Then $\Andybin Y \in
  \res(X,x,t)$ is a valid resultant sharing group.
\item $\chi_M(X,t)$ is finite. Let $o \in X_{xt}$ be a sharing group
  such that $\chi_M(o,t)>1$ and $o'$ be a generic sharing group in
  $X_x$.  We need to find two natural numbers $n$ and $m$ such that,
  if we take $Y=2X_x \uplus 2X_{xt} \uplus X_t \uplus m \multil o
  \multir \uplus n \multil o' \multir$, we obtain
  $\chi_m(Y,x)=\chi_m(Y,t)$ (from this immediately follows
  $\chi_M(Y,x)=\chi_M(Y,t)$) and $\chi_m(Y,x) \geq |Y|-1$. This means
  we need to solve the equations:
\[
2\card{X_x} + 2\card{X_{xt}} + m +n = 2\chi_m(X_{xt},t) + \chi_m(X_t,t) + m \chi_m(o,t)
\]
\[
2\card{X_x} + 2\card{X_{xt}} + m +n \geq 2|X_x| + 2|X_{xt}|+ |X_t|+ m + n -1
\]
Since $|X_t| \leq 1$, the second equation is always satisfied. A solution for the first equation always exists, since the greatest common divisor of $\chi_m(o,t)-1$ and $1$ is $1$.
\end{itemize}

\item if $X$ is linear for $x$ and $X$ is not strongly non-linear for
  $t$, consider the multiset $Y=Z \uplus X_{xt} \uplus X_{xt} \uplus
  X_t$. Then $\chi_m(Y,x)=\chi_M(Y,x)=\card{Z} + 2\card{X_{xt}}$ and
  $\chi_m(Y,t)=\chi_M(Y,t)=2\card{X_{xt}} + \chi_m(X_t,t)$. Since
  $\card{Z}=\chi_m(X_t,t)$, we have that $\chi_m(Y,x)=\chi_m(Y,t)$.
  Moreover, $\card{Y}=\card{Z}+2\card{X_{xt}} + \card{X_t} \leq
  \chi_m(X_t,t) +2\card{X_{xt}} +1 = \chi_m(Y,t) +1$.
\end{itemize}

\textbf{Right to left inclusion.}
Let $o=\Andybin X$ where $X \in \mwp(S')$
and there exists $n \geq \card{X}-1$ such that $n \in \chi(X,x) \cap
\chi(X,t)$.  We show that there exists $Y \subseteq S'$ and $o' \in
\res(Y,x,t)$ such that $o' \geq_\an o$. Let $k=\card{X}$. We partition
 $X$ in three multisets $X_x = X|_{\{o \mid \chi_M(o,t)=0\}}$,
$X_t = X|_{\{o \mid \chi_M(o,x)=0\}}$ and $X|_{xt} = X|_{\{o \mid \chi_M(o,t)>0 \wedge
 \chi_M(o,t)>0 \}}$. Note that $X_x$, $X_t$ and $X_{xt}$ here are
multisets and not ordinary set as in the definition of $\mgu_\an$.
We distinguish several cases:
\begin{itemize}
  \item if $\supp{X}$ is linear for $x$ and strongly non-linear for
  $t$, then $\chi_m(X,x)=\chi_M(X,x) \allowbreak =\card{X_x}+\card{X_{xt}} \leq k$.
  Since $\chi_m(X,x) \geq k-1$, there are two cases: either
  $\card{X_x}+\card{X_{xt}}=k-1$ or $\card{X_x}+\card{X_{xt}}=k$,
  which implies that $\card{X_t} \leq 1 $.

  Since $\supp{X}$ is strongly non-linear for $t$, there exists $o''\in
  X_t \uplus X_{xt}$ such that $\chi_M(o'',t) \geq 2$, and thus
  $\chi_m(X,t) \geq 2$. Therefore $\chi_m(X,t) > | X_{xt}|$. Since
  $\chi_m(X,x) = \chi_M(X,x) \geq \chi_m(X,t)$, we have that
  $\card{X_x} \geq 1$. It  follows that $o =\Andybin (X_x \andybin X_{xt} \andybin X_t)
  \leq_\an   (\Andybin \supp{X_x})^2 \uplus (\Andybin \supp{X_{xt}})^2 \uplus
  (\Andybin \supp{X_t}) \in \res(\supp{X},x,t)$.

\item if $\supp{X}$ is linear for $x$ and not strongly non-linear for
  $t$, then, as in the previous case we have $\card{X_t} \leq 1$.
  Since $X$ is not strongly non-linear for $t$,
  $\chi_M(X,t)=\chi_m(X,t)=|X_{xt}|+\chi_M(X_t,t)$.  Moreover,
  $\chi_M(X,x)=\chi_m(X,x)= |X_x| + |X_{xt}|$. By the condition $n \in
  \chi(X,x) \cap \chi(X,t)$, we get $\chi_M(X_t,t)=\card{X_x}$.
  Therefore $o \leq_\an \Andybin \supp{X_x} \andybin (\Andybin \supp{X_{xt}})^2
  \andybin (\Andybin X_t) \in \res(\supp{X},x,t)$.

\item if $\supp{X}$ is non-linear for $x$ and $t$, then
  $o \leq_\an (\Andybin \supp{X})^2 \in \res(\supp{X},x,t)$.

\item if $\supp{X}$ is non-linear for $x$ and linear $t$, the proof is
  symmetric to the one of the first case. \exproofbox
\end{itemize}
\end{proof*}

\renewcommand{\thetheoremproof}{\ref{th:closedmgu}}
\begin{theoremproof}
\label{th:closedmgu-proof}
Given $[S]_U \in \ShLinp$ and the binding  $x/t$ with $\vars(x/t) \subseteq U$, we have
\[
    \mgu_\an([S]_U,x/t)  = [(S\setminus S') \cup
     \downclo \bigcup_{X \subseteq \max S'} (\res(X,x,t) \cup
     \res'(X,x,t))]_U \enspace ,
\]
where $S'=\relev(S,x,t)$ and
\[
\res'(X,x,t)=\begin{cases}
   \{\Andybin X^2\} & \text{if $X=X_{xt}$ and $l(X)$ is linear for $t$} \enspace ,\\
   \emptyset & \text{otherwise} \enspace .
\end{cases}
\]
\end{theoremproof}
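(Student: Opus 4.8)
The plan is to derive the statement directly from the algorithm of Theorem~\ref{th:mguandy}. Since the summand $S \setminus S'$ with $S'=\relev(S,x,t)$ is literally identical on both sides, everything reduces to proving the equality of downward-closed sets
\[
\downclo \bigcup_{X \subseteq S'} \res(X,x,t) = \downclo \bigcup_{X \subseteq \max S'} \bigl(\res(X,x,t) \cup \res'(X,x,t)\bigr) \enspace .
\]
First I would record the facts that make the passage from $S'$ to $\max S'$ legitimate. Because the order $\leq$ on $2$-sharing groups only compares groups of \emph{equal} support, the set $S'$ is itself downward closed (it is cut out of the downward-closed $S$ by a support condition), it is finite, and for every $o \in S'$ both the linearization $l(o)$ and some maximal $\bar{o} \in \max S'$ again lie in $S'$ and share the support of $o$. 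Since $\Andybin X^2$ depends only on the supports of the elements of $X$, i.e.\ only on $l(X)$, every delinearized resultant group is insensitive to the replacement $o \mapsto \bar{o}$.

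For the inclusion $\supseteq$, the $\res$-part is immediate from $\max S' \subseteq S'$. For the $\res'$-part I would show that whenever $X = X_{xt}$ and $l(X)$ is linear for $t$, the group $\Andybin X^2$ already lies in the left-hand set. Here $X = X_{xt}$ forces $\chi_M(l(o),t)\geq 1$, and linearity of $l(X)$ for $t$ forces $\chi_M(l(o),t)\leq 1$, so each $l(o)$ is linear for both $x$ and $t$ with $\chi(l(o),x)=\chi(l(o),t)=1$. Taking $Y = 2\cdot l(X) \in \mwp(S')$ (two copies of each linearized element, which are in $S'$ by downward closure) gives $\Andybin Y = \Andybin X^2$, and $\chi(Y,x)=\chi(Y,t)=\{2\card{X}\}$ with $2\card{X}\geq \card{Y}-1$, so by Theorem~\ref{th:pre-mguandy} (equivalently Theorem~\ref{th:mguandy}) we get $\Andybin X^2 \in \downclo \bigcup_{Y\subseteq S'}\res(Y,x,t)$.

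For the converse inclusion $\subseteq$ I would fix $X \subseteq S'$ with $o \in \res(X,x,t)$ and pass to maximal representatives $\bar{X}=\{\bar{o}\mid o\in X\}\subseteq \max S'$. The core is a case analysis comparing the linearity classification of $X$ (over $S'$) with that of $\bar{X}$ (over $\max S'$). In the non-linear cases (the first case, and the delinearized blocks $X_x^2$, $X_{xt}^2$, $X_t^2$ occurring in the second and third cases), the resultant group is a delinearization and hence is unchanged under maximization, so $o \leq_\an \Andybin \bar{X}^2$ or the analogous maximal group lies in $\res(\bar{X},x,t)$ and downward closure closes the case. The delicate situation is the genuinely linear behaviour of the third and fourth cases, where the building blocks must be linear for $t$: maximization destroys this exactly when a $t$-variable carrying multiplicity $1$ in $o$ acquires multiplicity $\infty$ in $\bar{o}$. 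In that event $\res(\bar{X},x,t)$ may drop into the fifth case and return $\emptyset$, and it is precisely then that $\bar{X}=\bar{X}_{xt}$ with $l(\bar{X})=l(X)$ linear for $t$, so that $\res'(\bar{X},x,t)=\{\Andybin \bar{X}^2\}$ supplies a group dominating $o$.

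The main obstacle I anticipate is this last matching. I expect the hard part to be verifying, configuration by configuration, that the discrepancy between the classification of $X$ and that of $\bar{X}$ is fully absorbed either by downward closure of an honest $\res(\bar{X},x,t)$ output or by the corrective term $\res'(\bar{X},x,t)$, and in particular that $\res'$ fires in exactly those configurations where a linear-for-$t$ structure is lost under maximization. Once the bookkeeping of $\chi_M$, $\chi_m$ and the cardinality side-conditions ($\card{X_x}$, $\card{X_t}$, $\card{Z}$) is aligned between the two classifications, both inclusions close and the equality of the two downward-closed sets, hence the theorem, follows.
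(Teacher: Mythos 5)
Your overall strategy is exactly the paper's: reduce to the equality of the two downward-closed unions, prove $\supseteq$ by showing each $\res'$ output is reproducible from linearized elements inside $S'$, and prove $\subseteq$ by replacing every element of $X$ with a maximal element of $\max S'$ (legitimate because $\leq_\an$ only compares groups with equal support, so the $X_x/X_t/X_{xt}$ partition membership is preserved) and then comparing the classifications of $X$ and $\bar{X}$. Your $\supseteq$ half is complete and correct, and in fact marginally cleaner than the paper's: you verify the multiset condition of Theorem~\ref{th:pre-mguandy} directly with $Y=2\,l(X)$, whereas the paper exhibits $l(X)$ as a subset of $S'$ falling under the fourth case of Theorem~\ref{th:mguandy}.

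The gap is in the $\subseteq$ half: what you label ``the main obstacle'' and defer as bookkeeping is essentially the entire content of the paper's proof, and it is not routine. Concretely, one must check: (i) when the classifications of $X$ and $\bar{X}$ coincide, the possible collapse of several elements of $X$ onto a single maximal element does not lose non-linearity in the blocks that are \emph{not} delinearized --- this works only because those blocks are either the multiset $Z$ (where repetitions are allowed) or are governed by the side conditions $\card{X_x}\leq 1$, $\card{X_t}\leq 1$; (ii) when the classifications differ, the only delicate source case is $X$ linear for $x$ and not strongly non-linear for $t$, and there one must exclude, for instance, that $\bar{X}$ lands in the second case of Theorem~\ref{th:mguandy} with $\card{\bar{X}_x}\geq 2$ --- the paper does this by contradiction, since $\bar{X}$ linear for $t$ forces $\chi_M(X_t,t)\leq 1$ and hence $\card{X_x}=\card{\supp{Z}}\leq 1$; (iii) the configurations in which $\res(\bar{X},x,t)=\emptyset$ while $\res(X,x,t)\neq\emptyset$ are exactly those with $X=X_{xt}$ and $l(X)$ linear for $t$, i.e.\ exactly the firing condition of $\res'$. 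Your sketch asserts (iii) but proves none of (i)--(iii); moreover your statement that in the second and third cases ``the resultant group is a delinearization'' is inaccurate, since the blocks $\Andybin X_x$ (second case) and $\Andybin X_t$ (third case) keep their actual multiplicities. The plan is the right one and does close --- the paper's case analysis shows it --- but as written your proof of the $\subseteq$ inclusion is a statement of intent rather than an argument.
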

\begin{proof}
It clearly holds that:
\begin{equation}
\mgu_\an([S]_U,x/t)  \supseteq [(S\setminus S') \cup
\downclo \bigcup_{X \subseteq \max S'}
( \res(X,x,t) \cup \res'(X,x,t) )]_U
\end{equation}
since, for each $X \subseteq \max S'$, if $\res'(X,x,t)$ is non-empty then
$\Andybin X^2$ may be generated by the characterization in Theorem \ref{th:mguandy}.  It
is enough to take $X'=\{ l(o) \mid o \in X \}$, hence $\Andybin
X'=\Andybin X^2 \in \res(X',x,t)$ according to the last case of
Theorem \ref{th:mguandy}.

We prove the opposite inclusion.  Let $X\subseteq S'$ and assume that
$X\nsubseteq \max S'$. There exists $X' \subseteq \max S'$ obtained by
replacing each $a \in X$ with $b \in \max S'$ such that $a \leq_\an b$.
We have that $\card{X'} \leq \card{X}$ since two different elements in $X$
may be replaced with the same maximal element in $X'$. We want to prove
that either $\res(X,x,t)=\emptyset$, or $\res(X,x,t) \subseteq \downclo
\res(X',x,t)$ or $\res(X,x,t) \subseteq \downclo \res'(X',x,t)$. Therefore, we
assume that $\res(X,x,t) \neq \emptyset$ and compare the linearity
properties (linear, non linear, strongly non-linear) of $X'$ \wrt
those of $X$.

If they coincide, then it follows that $\res(X,x,t) \subseteq \downclo
\res(X',x,t)$.  This happens because both $\res(X,x,t)$ and
$\res(X',x,t)$ are obtained by the same case of Theorem \ref{th:mguandy}. However, note that $X'$ may have less elements than
$X$ and therefore some variable which is non-linear in $\res(X,x,t)$
could be linear in $\res(X',x,t)$. Actually, this never happens since
the elements in $X'$ which are not explicitly delinearized are either
elements of the multiset $Z$ in the third case of Theorem \ref{th:mguandy}
(and therefore may appear multiple times)
or elements of $X_t$ ($X_x$) subject to the condition $\card{X_t} \leq
1$ ($\card{X_x} \leq 1$).

Assume that the linearity properties of $X$ and $X'$ do not coincide.
The only interesting case is when $X$ is linear for $x$ and not
strongly non-linear for $t$. In all the other cases, it is immediate
from the definition that $\res(X,x,t)\subseteq \downclo \res(X',x,t)$.

If $X'$ is not linear for $x$ and for $t$, then it holds
$\res(X,x,t)\subseteq \downclo \res(X',x,t)$ by definition.

If $X'$ is linear for $x$ and strongly non-linear for $t$, then it is
immediate from the definition that $\res(X,x,t)\subseteq \downclo \res(X',x,t)$,
provided that $\card{X_x} \geq 1$. Otherwise, it must be
$\card{X_t}=0$ and therefore, in order to be $\res(X,x,t) \neq
\emptyset$, we have $X=X_{xt}$ and $\chi_M(X,t)=1$, which means $l(X')=X$
is linear for $t$.  It follows that $\res(X,x,t)=\{\Andybin X^2\}=\res'(X',x,t)$.

If $X'$ is not linear for $x$ and  linear for $t$, we show
that $|X_x|\leq 1$.  Assume, by
contradiction, that $|X_x|> 1$. Since $X'$ is linear for $t$ and
$|X_t|\leq 1$, then $\chi_M(X_t,t)=\chi_m(X_t,t)\leq 1$, while
$|\supp{Z}| = |X_x| >1$, which is a contradiction. Thus it must be
$|X_x|\leq 1$.  If $|X_x|= 0$ then $\card{X_t}=0$, hence $\res(X,x,t)=
\{\Andybin X^2 \}$ and $\res(X,x,t)=\res'(X,x,t)$.  If $|X_x|= 1$,
since $X'$ is linear for $t$, it follows that $|Z|= 1$.  Thus
$\res(X,x,t)\subseteq \downclo  \res(X',x,t)$.
\end{proof}

\renewcommand{\thetheoremproof}{\ref{th:optimal-shl}}
\begin{theoremproof}
\label{th:optimal-shl-proof}
  The operator $\mgu_\shl$ in Definition~\ref{def:mgushl} is correct and optimal \wrt $\mgu$, when $\vars(x/t)\subseteq U$.
\end{theoremproof}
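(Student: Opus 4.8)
The plan is to reduce the whole statement to the single algebraic identity
\[
\mgu_\shl([S,L,U],x/t)=\alpha_\shl(\mgu_\an(\gamma_\shl([S,L,U]),x/t)),
\]
after which correctness and optimality are automatic. Indeed, under the hypothesis $\vars(x/t)\subseteq U$ the operator $\mgu_\lp$ is the best correct abstraction of $\mgu$ (Theorem~\ref{th:lpopt}), so $\mgu_\lp=\alpha_\omega\circ\mgu\circ\gamma_\omega$; unfolding the definition $\mgu_\an=\alpha_\an\circ\mgu_\lp\circ\gamma_\an$ and the Galois insertion $\langle\alpha_\shl,\gamma_\shl\rangle$ of $\ShLin$ into $\ShLinp$ (Section~\ref{sec:practical2}), associativity of composition gives $\alpha_\shl\circ\mgu_\an\circ\gamma_\shl=\alpha_{tot}\circ\mgu\circ\gamma_{tot}$, where $\alpha_{tot},\gamma_{tot}$ is the composite insertion of $\ShLin$ into $\Isubst_\sim$. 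By the Galois-insertion formula for optimal abstractions, this composite is exactly the optimal abstraction of $\mgu$ on $\ShLin$. Hence the entire burden is to check that the explicit formula of Definition~\ref{def:mgushl} computes this composite.

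The first technical step is a multiplicity bridge between the two domains. For $B\in S$ let $B_L$ be its image under $\gamma_\shl$. Directly from the definitions $B_L(v)=\infty$ precisely when $v\in\supp{B}\setminus L$, so for every term $r$ one has $\chi_M^L(B,r)=\chi_M(B_L,r)$: the value is finite (and equal to $\sum_{v\in B}\occ(v,r)$) exactly when $\supp{B}\cap\vars(r)\subseteq L$, which is the guard of $\chi_M^L$, and is $\infty$ otherwise. Because $\supp{B_L}=\supp{B}$, we also get $\relev(\gamma_\shl([S,L,U]),x,t)=\{B_L\mid B\in\relev(S,x,t)\}$, so $X=\relev(S,x,t)$ matches $S'$ elementwise. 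Consequently the partition $X_x,X_t,X_{xt}$ and the linear / non-linear / strongly non-linear classifications of Definition~\ref{def:andyalgo}, transported along $\gamma_\shl$, coincide with the $\ShLin$ notions defined via $\chi_M^L$; in particular the top-level split $x\in L$ versus $x\notin L$ records whether every relevant group is linear in $x$ (cases 2--4 of Theorem~\ref{th:mguandy}) or non-linear in $x$ (cases 1--2).

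Next I would apply Theorem~\ref{th:mguandy} to $\gamma_\shl([S,L,U])$ and then $\alpha_\shl$. Since $\alpha_\shl$ retains only supports, every resultant $2$-sharing group $\Andybin Y$ collapses to $\bigcup_{o\in Y}\supp{o}$, and both delinearization $(\cdot)^2$ and repetition of a group inside the multiset $Y$ become idempotent for supports. Thus $\{\supp{o}\mid o\in\bigcup_{X'}\res(X',x,t)\}$ is produced, case by case, by unions of the supports of the groups in $X_x,X_t,X_{xt}$, which is exactly the $\bin$, $(\cdot)^*$ and $(\cdot)^+$ combinations defining $K$ in Eqs.~(\ref{eq:sl1}) and~(\ref{eq:sl2}). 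I would verify each summand of $K$ against the matching $\res$-case; the delicate summand is the third one of the $x\in L$ branch, whose generating set $\{\{o\}\cup(\cup Z)\mid o\in X_t^{\in\Nat},\,Z\subseteq X_x,\,1\leq\card{Z}\leq\chi_M^L(o,t)\}$ encodes, at the level of supports, the counting constraint $\card{Z}=\chi_m(X_t,t)=\chi_M(X_t,t)$ of case 4 of Theorem~\ref{th:mguandy}.

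The genuinely hard part, which I expect to be the main obstacle, is the linearity component. I must show that $U'\cup L'$ equals the set $\{v\in U\mid o(v)\leq 1 \text{ for every resultant $2$-sharing group } o\}$ that $\alpha_\shl$ extracts. A variable $v$ loses linearity in the result exactly when some generated $o$ has $o(v)=\infty$, which happens either because $v$ was already non-linear ($v\notin L$) or because $v$ is delinearized during the join: namely when $v$ belongs to a group that is squared, or is shared between a group combined on the $x$-side and one combined on the $t$-side. Tracking precisely which variables are forced to $\infty$ in each branch, and reconciling this with the ground variables absorbed into $U'=U\setminus\vars((S\setminus X)\cup K)$, is what yields the four cases of Eq.~(\ref{eq:linsharinglin}), according to whether $(S,L)$ is linear for $x$, for $t$, for both, or for neither. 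Once the multiplicity bridge is in place the sharing-component identity is fairly mechanical, but this exhaustive bookkeeping for $L'$ is where the real care is needed.
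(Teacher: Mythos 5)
Your overall strategy is the paper's: reduce everything to the identity $\mgu_\shl([S,L,U],x/t)=\alpha_\shl(\mgu_\an(\gamma_\shl([S,L,U]),x/t))$, then verify the sharing and linearity components separately against the algorithm for $\ShLinp$. Your multiplicity bridge $\chi_M^L(B,r)=\chi_M(B_L,r)$ is correct and is used tacitly by the paper (it is what justifies the claim that a set of maximal relevant groups is linear for $x$ iff $x\in L$), and your plan for the linearity component (split off the ground variables $U'$ using optimality of the sharing component, then a four-way case analysis matching Eq.~\eqref{eq:linsharinglin}) is also exactly the paper's.

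There is, however, a genuine gap in the sharing component. You identify $\relev(S,x,t)$ ``elementwise'' with $S'=\relev(\gamma_\shl([S,L,U]),x,t)$ and then match each summand of $K$ with ``the matching $\res$-case'' of Theorem~\ref{th:mguandy}. But this elementwise correspondence, and your claim that the linearity classifications transport along $\gamma_\shl$, hold only for the \emph{maximal} $2$-sharing groups $B_L$; the concretization is downward closed, so $S'$ also contains all partially linearized groups below the $B_L$'s, for which the correspondence fails (a set of linearized groups is linear for $x$ even when $x\notin L$). This matters because the summands $(X_{xt}^{U})^+$ in \eqref{eq:sl1} and $(X_{xt}^{=1})^+$ in \eqref{eq:sl2} are generated \emph{only} by such non-maximal subsets: e.g.\ for $x\notin L$ and $B_1,B_2\in X_{xt}^{=1}$, the group $B_1\cup B_2$ lies in $K$, yet any subset of maximal groups with this support is non-linear for $x$ and has empty $X_t$-part, so none of cases 1--4 of Theorem~\ref{th:mguandy} applies and $\res$ returns $\emptyset$; the group arises only from case 4 applied to the strictly smaller linearized copies $l((B_i)_L)$. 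Your parenthetical mapping (``cases 1--2'' when $x\notin L$) shows the plan as written would miss exactly these resultants. The paper avoids this trap by proving the present theorem from Theorem~\ref{th:closedmgu} rather than Theorem~\ref{th:mguandy}: that theorem restricts to subsets of $\max S'$ but adds the correction term $\res'$, and the proof contains a dedicated passage showing that groups in $(X_{xt}^{=1})^+$ (resp.\ $(X_{xt}^{U})^+$) cannot be obtained as $\res(Y,x,t)$ with $Y\subseteq\max S'$ and must come from $\res'$. To repair your argument, either invoke Theorem~\ref{th:closedmgu} as the paper does, or explicitly range over the non-maximal subsets of the downward-closed $S'$ and redo, inside this proof, the bookkeeping that Theorem~\ref{th:closedmgu} encapsulates.
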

\begin{proof*}
  It is enough to prove that $\mgu_\shl$ is correct and optimal \wrt
  $\mgu_\an$, namely, that:
  \[
  \mgu_\shl([S,L,U]),x/t)=\alpha_\shl(
  \mgu_\an(\gamma_\shl([S,L,U]),x/t ) ) \enspace .
  \]
  Let $\gamma_\shl([S,L,U])=[T]_U$. By Theorem \ref{th:closedmgu-proof},
  it holds that:
  \begin{equation*}
  \begin{split}
    &\alpha_\shl( \mgu_\an(\gamma_\shl([S,L,U]),x/t)\\
    =\ &\alpha_\shl([(T\setminus T') \cup \downclo \bigcup_{Y \subseteq \max
      T'} (\res(Y,x,t) \cup \res'(Y,x,t) \})]_U \\
    =\ &\alpha_\shl\big([T\setminus T']_U \sqcup_2 \bigsqcup_{Y \subseteq \max
      T'} ([\downclo \res(Y,x,t)]_U \sqcup_2 [\downclo \res'(Y,x,t)\}]_U)\big) \enspace ,
  \end{split}
  \end{equation*}
  where $T'=\relev(T,x,t)$ and $\sqcup_2$ is the lowest upper bound in $\ShLinp$.
  By additivity of $\alpha_\shl$, this is equivalent to
  \begin{equation}
    \label{eq:shlp1}
    \alpha_\shl([T\setminus T']_U) \lub_\shl \biglub_{Y \subseteq \max
      T'} (\alpha_\shl([\res(Y,x,t)]_U) \lub_\shl
    \alpha_\shl([\res'(Y,x,t)]_U)) \enspace .
  \end{equation}
  Let $X$, $L'$, $U'$ and $K$ as in Definition \ref{def:mgushl}, we have that
  $\mgu_\shl([S,L,U],x/t)$ is equivalent to
  \begin{equation}
    \label{eq:shlp2}
     [ (S\setminus X) \cup K, U' \cup L', U] \enspace .
  \end{equation}
  We need to prove that equations \eqref{eq:shlp1} and
  \eqref{eq:shlp2} do coincide. In the rest of the paper, we assume that
  the result of \eqref{eq:shlp1} is $[S'',L'',U]$.

  \textbf{Sharing.} We first prove that the $\Sharing$ components of the two equations are
  equal, i.e. that $S''=(S \setminus X) \cup K$. Given $B \in S''$,
  there are several cases. If $B=\supp{o}$ with $o \in T \setminus
  T'$, then $B \in S \setminus X$.

  If $B=\supp{o}$, for $o \in \res'(Y,x,t)$ with $Y \subseteq
  \max T'$, then $B= \bigcup \{\supp{o} \mid o \in Y\}$ with
  $Y=Y_{xt}$ and $l(Y)$ is linear for $t$.
  If $x\in L$ then is generated by $(X_{xt}^{U})^+$, since $l(Y)$ is
  linear for $t$. If $x\notin L$ there are two cases: if $Y$ is linear
  for $t$ then it is generated by $(X_{xt}^{=1})^+$, otherwise by
  $\bin( X_t^{>1}\cup X_{xt}^{>1} , X_x \cup X_{xt},X^*)$. Thus $B \in
  K$.

  Now, assume that $B = \supp{o}$ with $o \in \res(Y,x,t)$ and $\emptyset \neq Y
  \subseteq \max T'$. Then $B=\bigcup W$ where $W=\{\supp{o} \mid o
  \in Y \}$. Since $Y$ is made of maximal elements and $[T]_U=
  \gamma_\an([S,L,U])$, we have that  $Y$ is linear for $x$ iff $x \in L$.
  For the same reason, $Y$ is linear for $t$ iff
  $(W,L)$ is linear for $t$.  As a consequence, if $Y$ is non-linear
  for $t$, then $(X,L)$ is non-linear for $t$.

  We proceed by cases:
  \begin{description}
  \item[$Y$ non-linear for $x$ and $t$.] Then $\res(Y,x,t)=\{ \Andybin
    Y^2\}$.  Since $(X,L)$ is non-linear for $x$ and $t$, we have
    $X_t^{>1}\cup X_{xt}^{>1} \neq \emptyset$ and $X_x \cup X_{xt}\neq
    \emptyset$. Thus $B \in \bin( X_t^{>1}\cup X_{xt}^{>1} , X_x\cup
    X_{xt}, X^*) \subseteq K$.
  \item[$Y$ non-linear for $x$ and linear for $t$.] By hypothesis
    $\card{Y_x} \leq 1$ and $\card{Y_t} \geq 1$, hence $o=(\Andybin
    Y_x) \uplus (\Andybin Y_{xt}^2) \uplus (\Andybin Y_t^2)$ and
    \[
    B \in \bin((X_t^{=1})^+, X_x \cup X_{xt}^{=1},(X_{xt}^{=1})^*)
    \subseteq K \enspace .
    \]
    In particular, $B \in \bin((X_t^{=1})^+, X_x,(X_{xt}^{=1})^*)$ when
    $\card{Y_x}=1$, otherwise $B \in \bin((X_t^{=1})^+, X_{xt}^{=1},(X_{xt}^{=1})^*)$.
  \item[$Y$ linear for $x$ and strongly non-linear for $t$.]  In this
    case we have that $o= (\Andybin Y_x^2) \uplus (\Andybin Y_{xt}^2) \uplus
    (\Andybin Y_t)$ with $\card{Y_x} \geq 1$ and $\card{Y_t} \leq 1$.
    By definition of strong non-linearity, we have two cases:
    \begin{itemize}
    \item there exists $o \in Y_{xt}$ such that $\chi_M(o,t) >1$: in
      this case
      \[B \in \bin(X_t\cup \{ \emptyset \}, X_{xt}^{>1},
      X_x^+, X_{xt}^* ) \subseteq K \enspace ;
      \]
    \item there exists $o\in Y_t$ such that $\chi_M(o,t) =\infty$: in
      this case
      \[
      B \in \bin(X_t^{=\infty}, X_x^+, X_{xt}^* ) \subseteq K \enspace .
      \]
    \end{itemize}
  \item[$Y$ linear for $x$ and non strongly non-linear for $t$.] In
    this case
    \[
    o = (\Andybin Z') \uplus (\Andybin Y_{xt}^2) \uplus
    (\Andybin Y_t) \enspace ,
    \]
    with $\card{Y_t}=1$, for some $Z' \in \mwp(Y_x)$ such that
    $\card{Z'}=\chi_m(Y_t,t)$ and $\supp{Z'}=Y_x$. It is obvious
    that
    \begin{equation*}
      B \in \bin (\{ \{o\} \cup (\cup Z)~|~ o\in X_t^{\in \Nat}, Z
      \subseteq X_x, 1 \leq |Z| \leq \chi_M^L(o,t)\}, (X_{xt}^{=1})^* ) \subseteq K \enspace ,
    \end{equation*}
    by choosing $Z=\{ \supp{o} \mid o \in Z' \}$.
  \end{description}

  This proves that if $B \in S''$, then $B \in (S \setminus X) \cup
  K$. Now, we need to prove the converse implication. If $B \in S
  \setminus X$, then $B=\supp{o}$ for some $o \in T$, and it is
  obvious that $o \in T \setminus T'$, hence $B \in S''$.

  Therefore, assume that $B \in K$, and consider the case when $x \in L$
  and $B \in \bin(X_t^{=\infty}, X_x^{+}, X_{xt}^*)$. We have that $B=A
  \cup (\cup A') \cup (\cup A'')$ for some $A \in X_t^{=\infty}$, $A'$
  non-empty subset of $X_x$ and $A'' \subseteq X_{xt}$. We may find
  $o' \in \max T'$, $Y', Y'' \subseteq \max T'$ such that
  $\supp{o'}=A$, $\supp{Y'}=A'$ and $\supp{Y''}=A''$. We have that
  $Y'''=\{o'\} \cup Y' \cup Y''$ is linear for $x$ and strongly
  non-linear for $t$ (due to the element $o'$), with $\card{Y'''_x}
  \geq 1$ and $\card{Y'''_t} \leq 1$. Therefore, we may apply the
  definition of $\res$ to obtain $\res(Y''',x,t)=\{o\}$ with $\supp{o}=B$,
  hence $B \in S''$.

  With similar reasonings, we may prove that for every $B \in K$, we
  have $B \in S''$. In particular: the second line of \eqref{eq:sl1}
  corresponds to the case we choose a $Y'''$ which is linear for $x$
  and strongly non-linear for $t$, due to an element $o \in Y'''_{xt}$
  which $\chi_M(o,t)>1$; the third line of \eqref{eq:sl1} corresponds
  to the case $Y'''$ is linear for $X$ and is not strongly non-linear
  for $t$; the first line of
  $\eqref{eq:sl2}$ corresponds to the case $Y'''$ is non-linear for
  both $x$ and $t$; the second line of \eqref{eq:sl2} corresponds to
  the case $Y'''$ is linear for $t$ and non-linear for $x$.

  Finally, if $x \notin L$ and $B \in (X_{xt}^{=1})^+$, it is possible
  that $B$ cannot be obtained as $\res(Y''',x,t)$ for any $Y'''
  \subseteq \max T'$.  However, $B$ may be obtained as
  $\res'(Y''',x,t)$, choosing $Y'''$ as in the previous cases. The
  same happens if $x \in L$and $B \in (X_{xt}^U)^+$.

  \textbf{Linearity.}
   We want to prove that $L''=L' \cup U'$. First of all, let us define
   $L''_g=U \setminus \vars(\mgu_\an([T]_U,x/t))$ the set of ground
   variables in $\mgu_\an([T]_U,x/t)$, hence $L''_g
   \subseteq L''$.  We are going to prove that $U'=L''_g$ and $L'
   \setminus U'=L'' \setminus L''_g$. The first equality trivially
   follows from the fact that the sharing component of $\mgu_\shl$ is
   optimal, hence a variable occurs in a sharing group of $S \setminus S \cup K$ iff it
   occurs in a 2-sharing group of $\mgu_\an([T]_U,x/t)$.

   Now, we consider a variable $v \in U \setminus U'$, and prove that
   $v \in L'$ iff $v \in L''$. There are several cases. If we assume
   that $v \notin L$, by \eqref{eq:linsharinglin} we have $v \notin L'$.
   Moreover, if $Y \in \max T'$ and $v \in \supp{Y}$, by maximality of
   $Y$ we have $Y(v)=\infty$. Hence, by Theorem
   \ref{th:closedmgu-proof}, we have $v \notin L''$. If we assume that $v \notin
   X$, by \eqref{eq:linsharinglin} we have $v \in L'$ iff $v \in L$.
   Since $\vars(X)=\vars(T)$, we also have $v \in L''$ iff $v \in L$
   and therefore $v \in L'$ iff $v \in L''$.

   The only case it remains to prove is $v \in \vars(X) \cap L$ which,
   combined with the condition $v \notin U'$, gives $v \in \vars(K)
   \cap L$. First of all, note that if $v \in \vars(X_{xt})$ then $v
   \notin L'$ (by definition of $L'$) and $v \notin L''$ (since $X_{xt}$)
   appears delinearized in every 2-sharing group resulting from $\res$ or
   $\res'$. If $v \notin \vars(X_{xt})$, we distinguish four subcases:
   \begin{itemize}
   \item $x \in L$ and $(S,L)$ linear for $t$. Given $Y \subseteq \max
     T'$, checking the forth case of Theorem \ref{th:mguandy} when
     $\chi_M(Y_t,t)=1$, we have that $\res(Y,x,t)$ is not linear for $v$
     iff $v \in \vars(Y_{xt})$ or $v \in \vars(Y_{x}) \cap
     \vars(Y_{t})$.  Note that there exists $Y \subseteq \max T'$ s.t.
     $v \in \vars(Y_{xt}) \cup (\vars(Y_x) \cap \vars(Y_t))$ iff $v
     \in \vars(T'_{xt}) \cup (\vars(T'_x) \cap \vars(T'_t))$.  Finally
     $v \in L''$ iff $v \in \vars(T'_{xt}) \cup (\vars(T'_x) \cap
     \vars(T'_t))$ iff $v \in (X_{xt} \cup (X_x \cap X_t))$ iff $v \in
     L'$.
   \item $x \in L$ and $(S,L)$ not linear for $t$. Given $Y \subseteq
     \max T'$, checking the third and forth cases (when
     $\chi_M(Y,t)>1$) of Theorem \ref{th:mguandy}, we have that
     $\res(Y,x,t)$ non-linear for $v$ implies $v \in \vars(Y_{xt})$ or
     $v \in \vars(Y_x)$, which is equivalent to $v \in X_{xt} \cup
     X_x$, i.e.  $v \notin L'$. On the other hand, if $v \in X_{x}$,
     we distinguish the cases:
     \begin{itemize}
     \item $(S,L)$ strongly non-linear for $t$. There exists $o \in
       T'$ such that $\chi_M(o,t)=\infty$ or $o \in T'_{xt}$ such that
       $\chi_M(o,t)>1$. Moreover, there exists $o' \in T'_{x}$ such
       that $v \in \supp{o'}$. If we take $Y=\{o,o'\}$, we have that
       $\res(Y,x,t)$ is not linear for $v$, hence $v \notin L''$.
     \item $(S,L)$ is not strongly non-linear for $t$. There exists $o
       \in T'_{t}$ such that $1 < \chi_M(o,t) < \infty$. Moreover,
       there exists $o' \in T'_x$ such that $v \in \supp{o'}$.  If we
       take $Y'=\{o,o'\}$, by the fourth case in the definition of
       $\res$, we have $\res(Y,x,t)$ is not linear for $v$, i.e. $v
       \notin L''$.
     \end{itemize}
   \item $x \notin L$ and $(S,L)$ linear for $t$. If $v \notin L''$
     then $v \in \vars(Y_{xt})$ or $v \in \vars(Y_t)$. This implies $v
     \in X_{xt} \cup X_t$, i.e. $v \notin L'$.  On the other hand, if
     $v \in X_t$, there exist $o \in T'_x$ such that
     $\chi_M(o,x)=\infty$ and $o' \in T'_t$ such that $v \in
     \supp{o'}$. By definition of $\res$, we have that
     $\res(\{o,o'\},x,y)$ is not linear for $v$, hence $v \notin L''$.
   \item $x \notin L$ and $(S,L)$ non-linear for $t$. Since $L'=L
     \setminus X$, it is obvious that $v \notin L'$. Moreover, there
     exist $o \in T'$ such that $\chi_M(o,x)=\infty$, $o' \in T'$ such
     that $\chi_M(o,t)>1$ and $o'' \in T'$ such that $v \in
     \supp{o''}$.  Note that it is possible that $o=o'=o''$. By
     definition, we have $\res(\{o,o',o''\},x,t)$ is not linear for
     $v$, hence $v \notin L''$. \exproofbox
   \end{itemize}
\end{proof*}

\renewcommand{\thetheoremproof}{\ref{th:optimal-shl2}}
\begin{theoremproof}
  \label{th:optimal-shl2-proof}
  The operator $\mgu_\shl$ in Definition \ref{def:mgushl2} is the optimal abstraction of $\mgu$.
\end{theoremproof}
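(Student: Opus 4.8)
The plan is to reduce the general statement to the case $\vars(x/t) \subseteq U$, which is already settled by Theorem~\ref{th:optimal-shl}, exactly as the proof of Theorem~\ref{th:mguandy2} reduces the extended characterization for $\ShLinp$ to the non-extended one. Recall that optimality of $\mgu_\shl$ amounts to the identity $\mgu_\shl([S,L,U],x/t) = \alpha_\shl(\mgu_\an(\gamma_\shl([S,L,U]),x/t))$, where on the right $\mgu_\an$ is understood with extension; this expression is the best correct approximation of $\mgu$ on $\ShLin$, because $\langle \alpha_\shl,\gamma_\shl\rangle$ is a Galois insertion and $\mgu_\an$ (with extension) is already the optimal abstraction of $\mgu$ on $\ShLinp$, as established by its definition as $\alpha_\an \circ \mgu_\lp \circ \gamma_\an$ together with Theorem~\ref{th:optimality_extension} and the fact that composition of Galois insertions preserves best correct approximations. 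So it suffices to verify that Definition~\ref{def:mgushl2} computes exactly this expression.

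Writing $V = \vars(x/t) \setminus U = \{v_1,\ldots,v_n\}$, I would chain the following equalities. By Definition~\ref{def:mgushl2},
\[
\mgu_\shl([S,L,U],x/t) = \mgu_\shl([S \cup \{v_1,\ldots,v_n\}, L \cup V, U \cup V], x/t) \enspace .
\]
Since now $\vars(x/t) \subseteq U \cup V$, Theorem~\ref{th:optimal-shl} rewrites the right-hand side as $\alpha_\shl(\mgu_\an(\gamma_\shl([S \cup \{v_1,\ldots,v_n\}, L \cup V, U \cup V]),x/t))$. The crucial step is the commutation lemma
\[
\gamma_\shl([S \cup \{v_1,\ldots,v_n\}, L \cup V, U \cup V]) = \ext_V(\gamma_\shl([S,L,U])) \enspace ,
\]
where $\ext_V$ adjoins the linear singletons $\multil v_i \multir$ to the $\ShLinp$ object and enlarges the variables of interest. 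Granting this, one more application of the extension identity for $\ShLinp$, namely Theorem~\ref{th:mguandy2} read from right to left, collapses $\mgu_\an(\ext_V(\cdot),x/t)$ into $\mgu_\an(\cdot,x/t)$ with extension, yielding $\alpha_\shl(\mgu_\an(\gamma_\shl([S,L,U]),x/t))$, which is precisely the optimal abstraction.

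The commutation lemma is the only place requiring computation, and it is routine: because $V \cap U = \emptyset$, every $B \in S$ satisfies $\supp{B} \cap V = \emptyset$, so $B_{L \cup V} = B_L$, and each new $v_i \in V \subseteq L \cup V$ is linear, so $(v_i)_{L\cup V} = \multil v_i \multir$; hence both sides equal $[\{B_L \mid B \in S\} \cup \{\multil v_1\multir,\ldots,\multil v_n\multir\}]_{U \cup V}$, the downward closures agreeing for the same reason. I would mirror here the corresponding computation for $\gamma_\an$ in the proof of Theorem~\ref{th:mguandy2}.

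The main obstacle is not any single hard estimate but rather the bookkeeping needed to make the commutation lemma precise: one must check that adding the fresh variables as linear singletons behaves identically on $\ShLin$ and on $\ShLinp$, and in particular that $L$ is left unchanged on the old variables while the new variables are recorded as linear (hence not forcing groundness). Once this is in place, the statement follows formally from Theorems~\ref{th:optimal-shl} and~\ref{th:mguandy2} together with the optimality of $\mgu_\an$ with extension.
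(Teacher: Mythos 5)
Your proposal is correct and takes essentially the same route as the paper's own proof: both reduce the statement to Theorem~\ref{th:optimal-shl} via Definition~\ref{def:mgushl2} and the extension identity for $\mgu_\an$ (Theorem~\ref{th:mguandy2}), and both rest on the same commutation lemma $\gamma_\shl([S \cup V, L \cup V, U \cup V]) = \ext_V(\gamma_\shl([S,L,U]))$, proved by the identical computation ($B_{L \cup V} = B_L$ for $B \in S$ since $\supp{B} \cap V = \emptyset$, and each $v_i \in V$ becoming the linear singleton $\multil v_i \multir$). The only difference is cosmetic: you spell out why $\alpha_\shl \circ \mgu_\an \circ \gamma_\shl$ is the best correct approximation of $\mgu$ (composition of Galois insertions with Theorem~\ref{th:optimality_extension}), a fact the paper leaves to its general framework.
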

\begin{proof}
   First of all, given a finite set of variables $V$, let us define the extension operator $\ext_V: \ShLinp \ra \ShLinp$ such that $\ext_V([S]_U) = [S \cup \{ v \mid v \in V \setminus U \}]_{U \cup V}$. Given $V=\vars(x/t) \setminus U$, we have that
  \[
  \alpha_\shl(\mgu_\an(\gamma_\shl([S,L,U], x/t)) = \alpha_\shl(\ext_V(\mgu_\an(\gamma_\shl([S,L,U])),x/t))\enspace .
  \]
  By Theorem \ref{th:optimal-shl-proof} we have that
  \begin{multline*}
   \mgu_\shl([S,L,U],x/t)=
   \mgu_\shl([S \cup V,L \cup V, U \cup V], x/t)=\\
   \alpha_\shl(\mgu_\an(\gamma_\shl([S \cup V, L \cup V, U \cup V],x/t)))  \enspace .
  \end{multline*}
  Hence, it is enough to prove that
  \[
    \ext_V(\gamma_\shl([S,L,U]) = \gamma_2([S \cup V, L \cup V, U \cup V]) \enspace .
  \]
  By definition of $\gamma_2$, we have that
  \begin{align*}
    & \gamma_\shl([S \cup V, L \cup V, U \cup V])= \\
    =~ & [ \{ B_{L \cup V} \mid B \in S\} \cup \{ B_{L \cup V} \mid B \in V\} ]_{U \cup V}\\
    =~ & [ \{ B_{L} \mid B \in S\} \cup V]_{U \cup V} \quad \text{[since $v_{L \cup V} = v$]}\\
    =~ &  \ext_V(\gamma_\shl([S,L,U]) \enspace ,
  \end{align*}
  which completes the proof.
\end{proof}


\end{document}